\theoremstyle{plain}
\newtheorem{lemma}{Lemma}[section]
\newtheorem{corollary}[lemma]{Corollary}
\newtheorem{fact}[lemma]{Fact}
\theoremstyle{definition}
\newtheorem{definition}[lemma]{Definition}
\theoremstyle{remark}
\newtheorem{remark}[lemma]{Remark}
\title{Algorithms for the Generalized Poset Sorting Problem}
\author{
  Shaofeng H.-C. Jiang%
  \thanks{Email: \texttt{shaofeng.jiang@pku.edu.cn}
  }\\
  Peking University
  \and Wenqian Wang%
  \thanks{Email: \texttt{wangwenqian@sjtu.edu.cn}
  }\\
  Shanghai Jiao Tong University
  \and Yubo Zhang%
  \thanks{Email: \texttt{zhangyubo18@pku.edu.cn}
  }\\
  Peking University
  \and Yuhao Zhang%
  \thanks{Email: \texttt{zhang\_yuhao@sjtu.edu.cn}
  }\\
  Shanghai Jiao Tong University
}
\date{}
\newcommand{\algLE}{\textsc{Part-to-LE}}
\newcommand{\algPart}{\textsc{Partition}}
\newcommand{\Anc}{\mathrm{Anc}}
\newcommand{\CAnc}{\mathrm{CAnc}}
\newcommand{\IAnc}{\mathrm{IAnc}}
\newcommand{\sgn}{\mathrm{sgn}}
\newcommand{\nQuery}{\#\mathrm{queries}}
\newcommand{\Troot}{\mathrm{root}}
\newcommand{\Lset}[2]{#1_{\rightarrow #2}}
\newcommand{\Mset}[2]{#1_{\nleftrightarrow #2}}
\newcommand{\Rset}[2]{#1_{\leftarrow #2}}
\newcommand{\chL}{\mathrm{ch}_{\prec}}
\newcommand{\chM}{\mathrm{ch}_{\incom}}
\newcommand{\chR}{\mathrm{ch}_{\succ}}
\newcommand{\rnk}{\mathrm{rank}}
\newcommand{\perm}{\mathrm{perm}}
\newcommand{\pperm}{\mathrm{PathPerm}}
\newcommand{\lp}{\mathrm{lp}}
\newcommand{\Gbase}{\vec{G}_\mathrm{base}}
\newcommand{\Ebase}{\vec{E}_\mathrm{base}}
\newcommand{\uGbase}{G_\mathrm{base}}
\newcommand{\uEbase}{E_\mathrm{base}}
\newcommand{\pvt}{\mathrm{pivot}}
\newcommand{\gIn}{\mathrm{in}}
\newcommand{\gOut}{\mathrm{out}}
\newcommand{\gAdj}{\mathrm{adj}}
\newcommand{\algPartRand}{\textsc{PartitionRG}}
\newcommand{\opt}{\textup{OPT}\xspace}
\newcommand{\eopt}{\widetilde{\textup{OPT}}\xspace}
\newcommand{\alg}{\textup{ALG}\xspace}
\newcommand{\incom}{\nsim}
\newcommand{\p}{\Pr}
\newcommand{\sortchain}{\textsc{SortChain}\xspace}
\newcommand{\sortpartial}{\textsc{GPSC}\xspace}
\newcommand{\sortconstant}{\textsc{SortWeighted}\xspace}
\newcommand{\findj}{\textsc{FindThreshold}\xspace}
\newcommand{\findmin}{\textsc{FindMin}\xspace}
\newcommand{\findkmin}{\textsc{FindLarge}\xspace}
\newcommand{\finda}{\mathcal{E}^A\xspace}
\newcommand{\findb}{\mathcal{E}^B\xspace}
\newcommand{\vmin}{v^*\xspace} \newcommand{\bunion}{\tilde{B}\xspace}
\newcommand{\pivot}{p\xspace}
\newcommand{\W}{W}
\newcommand{\ER}{Erd\H{o}s-R\'{e}nyi\xspace}
\DeclareMathOperator{\poly}{poly}
\newcommand{\colnote}[3]{\textcolor{#1}{$\ll$\textsf{#2}$\gg$\marginpar{\tiny\bf #3}}}
\newcommand{\shaofeng}[1]{\colnote{purple}{#1--Shaofeng}{SJ}}
\begin{document}
    \maketitle

    \begin{abstract}
    We consider a generalized poset sorting problem (GPS),
    in which we are given a query graph $G = (V, E)$ and an \emph{unknown poset} $\mathcal{P}(V, \prec)$ that is defined on the same vertex set $V$,
    and the goal is to make as few queries as possible to edges in $G$
    in order to fully recover $\mathcal{P}$,
    where each query $(u, v)$ returns the relation between $u, v$, i.e., $u \prec v$, $v \prec u$ or $u \not \sim v$.
    This generalizes both the poset sorting problem [Faigle et al., SICOMP 88]
    and the generalized sorting problem [Huang et al., FOCS 11].

    We give algorithms with $\tilde{O}(n \poly(k))$ query complexity when $G$ is a complete bipartite graph or $G$ is stochastic under the \ER model, where $k$ is the \emph{width} of the poset, and these generalize [Daskalakis et al., SICOMP 11] which only studies complete graph $G$.
    Both results are based on a unified framework that reduces the poset sorting to partitioning the vertices with respect to a given pivot element, which may be of independent interest.
    Moreover, we also propose novel algorithms to implement this partition oracle. Notably, we suggest a randomized BFS with vertex skipping for the stochastic $G$, and it yields a nearly-tight bound even for the special case of generalized sorting (for stochastic $G$) which is comparable to the main result of a recent work [Kuszmaul et al., FOCS 21] but is conceptually different and simplified.

    Our study of GPS also leads to a new $\tilde{O}(n^{1 - 1 / (2W)})$ competitive ratio for the so-called weighted generalized sorting problem where $W$ is the number of distinct weights in the query graph.
    This problem was considered as an open question in [Charikar et al., JCSS 02],
    and our result makes important progress as it yields the first nontrivial sublinear ratio for general weighted query graphs (for any  bounded $W$).
    We obtain this via an $\tilde{O}(nk + n^{1.5})$ query complexity algorithm
    for the case where every edge in $G$ is guaranteed to be comparable in the poset, which generalizes a $\tilde{O}(n^{1.5})$ bound for generalized sorting [Huang et al., FOCS 11].
\end{abstract}
     \section{Introduction}
\label{sec:intro}

We consider a generalized poset sorting problem and obtain various new algorithmic results.
In the \emph{generalized poset sorting} problem (GPS), we are given an undirected \emph{query graph} $G = (V, E)$ and an unknown \emph{poset} $\mathcal{P} = (V, \prec)$.
The goal is to fully recover the poset $\mathcal{P}$, that is, to figure out the relation between all $x, y \in V$, through the smallest number of queries to the edges in $G$.
Here, when the algorithm makes a query $(u, v) \in E$, the relation of $u$ and $v$ in the poset, i.e., $u \prec v$, $v \prec u$ or $u \not \sim v$ (which stands for $u$ and $v$ are not comparable), is returned.

When the comparison graph $G$ is a complete graph, GPS 
reduces to a special case called the \emph{poset sorting} problem which was suggested by~\cite{DBLP:journals/siamcomp/FaigleT88}.
This poset sorting is a fundamental problem
since it captures the presence of incomparable elements in a partially ordered set which does not have a linear ordering.
For this problem, an algorithm with optimal $O(nk + n \log n) = \tilde{O}(nk)$  query complexity was given in~\cite{DBLP:journals/siamcomp/DaskalakisKMRV11} where $k$ is the \emph{width} of the poset.\footnote{Throughout, $\tilde{O}(f) := O(f \poly\log f)$.}
However, this $\tilde{O}(nk)$ bound heavily relies on the fact that $G$ is complete, and does not work for our general case where the query graph $G$ can have missing edges $(u, v)$ which forbid the query of the relation between $u$ and $v$ (we shall provide a more detailed technical discussion later).

In fact, the missing edges in the query graph already introduce significant challenges even when the poset $\mathcal{P}$ is a total order (where every two elements are comparable).
This special case (general graph $G$ and total order) is called
\emph{generalized sorting} whose study was initiated by~\cite{DBLP:conf/focs/HuangKK11}.
The state-of-the-art algorithm for this generalized sorting needs to use $\tilde{O}(\sqrt{mn})$ queries~\cite{DBLP:conf/focs/KuszmaulN21} for general graphs $G$, far from matching the $\Theta(n \log n)$ bound for the classic sorting. 
On the other hand, a parallel research theme
aims to explore whether $\tilde{O}(n)$ query complexity can be obtained for generalized sorting on special graph families.
Notably, such algorithms were obtained for complete bipartite graphs~\cite{DBLP:conf/soda/AlonBFKNO94,bradford1995matching,DBLP:journals/ipl/AlonBF96,DBLP:journals/siamdm/KomlosMS98} and \ER stochastic graphs~\cite{DBLP:conf/focs/HuangKK11,DBLP:conf/focs/KuszmaulN21}.

\paragraph{Our focus.}

Thus, a fundamental question is to figure out which families of query graphs $G$ admits algorithms with the optimal $\tilde{O}(nk)$ queries (to match that for the complete graphs~\cite{DBLP:journals/siamcomp/DaskalakisKMRV11}) for GPS, where $k$ is the width of the poset.
An ideal goal is to achieve this $\tilde{O}(nk)$ bound for general query graphs,
but as we mentioned, even for the total order case it is already difficult to improve over $\sqrt{mn}$.
Therefore, we instead focus on complete bipartite and \ER stochastic query graphs, which are fundamental cases and were very well studied in the special case of generalized sorting.
Moreover, we also study how GPS connects to other settings, especially its implications for variants of generalized sorting. 
This connection is plausible since a natural way for sorting is to build a partially sorted solution and then solve the remaining sub-problem, and this sub-problem may often be modeled as a poset sorting problem.

\paragraph{Technical challenges.}
However, designing algorithms for GPS turns out to be nontrivial and requires new approaches. 
Below, we briefly discuss why the existing techniques from tightly related problems, including poset sorting and generalized sorting problems, cannot be readily applied.
\begin{itemize}
    \item \textbf{Techniques from poset sorting.}
    The missing edges in $G$ can increase the query complexity of existing algorithms for poset sorting~\cite{DBLP:journals/siamcomp/FaigleT88,DBLP:journals/siamcomp/DaskalakisKMRV11}.
    In these algorithms, the overall framework is to incrementally add elements to the current sorting, and when an element $x$ is to be added, a binary search is used to figure out the relation between $x$ and every other element added so far.
    A crucial step to bound the query complexity is that there always exists a path cover of size $k$ (which is the width) in the induced subgraph of the added elements, and this ensures only $k$ binary search suffices.
    While this is true for complete graphs, the width $k$ can no longer upper bound the size of the path cover only by using edges in an induced subgraph of a general $G$.
    \item \textbf{Techniques from generalized sorting.}
    Incomparable edges $(u, v) \in E$ ($u \not \sim v$) reveal very little information about ordering, hence algorithms for generalized sorting should avoid querying these edges. 
    However, existing algorithms~\cite{DBLP:conf/focs/HuangKK11,DBLP:conf/sosa/LuRSZ21,DBLP:conf/focs/KuszmaulN21} for generalized sorting (designed for general query graph $G$) relies on a rough estimation of the relation between elements,
    and ``useful'' edges may be wrongly classified as incomparable edges.
    If this happens, then it is very difficult to detect the ``useful'' edge without querying a lot of incomparable edges, making existing algorithms less efficient.
\end{itemize}

\subsection{Our Results}
\label{sec:results}

We give efficient algorithms for GPS that make $\tilde{O}(n\poly(k))$ queries for \ER stochastic query graphs (\Cref{thm:er}) and complete bipartite query graphs (\Cref{thm:bipartite}), where $k$ is the width of the poset throughout (see \Cref{sec:models} for formal definitions of these query graph models).
These are the first results for GPS parameterized by the width of the poset $k$, and the query complexity bound is nearly-optimal.
We obtain our results via a unified framework and it may be of independent interest (will be discussed in \Cref{sec:tech_overview}).
These results are our main technical contributions.

\begin{restatable}[\ER Stochastic Graphs]{theorem}{MainThmRG}
    \label{thm:er}
    There exists an algorithm that solves GPS
    on \ER stochastic query graphs and width-$k$ posets
    using $\tilde{O}(nk^2)$ queries with high probability.
    This holds regardless of the probability parameter $0 < p \leq 1$ in \ER $G(n, p)$.
\end{restatable}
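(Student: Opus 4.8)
We prove \Cref{thm:er} in two stages: a query-graph-independent reduction of GPS to a pivot oracle (Stage~1), and an implementation of that oracle on \ER graphs via a randomized exploration (Stage~2).

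\emph{Stage~1: the partition framework.} The plan is for \sortpartial{} to sort by a quicksort-style recursion built on a \emph{pivot-partition oracle} \algPart{}: given a pivot $x$ and a set $S$ of vertices whose relation to $x$ is not yet known, \algPart{} uses queries to $G$ and returns the partition of $S$ into $\Lset{S}{x}$, $\Mset{S}{x}$, $\Rset{S}{x}$ --- the elements below, incomparable to, and above $x$. Inside each recursion cell we maintain a decomposition of the \emph{already-recovered} sub-poset into at most $k$ chains (possible since an induced sub-poset still has width $\le k$, even though --- as noted in the introduction --- a size-$k$ path cover need not exist using $G$-edges alone), and use the oracle, or an equivalent \findj{}-style threshold search along each chain, to split all chains of a cell; because a width-$k$ cell needs on the order of $k$ pivots to split fully and the recursion has depth $O(\log n)$, the sizes of all partition calls sum to $\tilde{O}(nk)$. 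Every pairwise relation is then either directly queried or forced by transitivity across some pivot (e.g.\ $a \in \Lset{S}{x}$ and $b \in \Rset{S}{x}$ give $a \prec b$), so the full poset is recovered; a \algLE{}-type step outputs it, e.g.\ as an intersection of at most $k$ linear extensions. I would prove this reduction as a standalone lemma so that it also drives \Cref{thm:bipartite}, leaving only the implementation of \algPart{} on \ER graphs.

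\emph{Stage~2: the \ER oracle, a randomized BFS with vertex skipping.} To answer a call $(x, S)$ on $G \sim G(n,p)$, the algorithm \algPartRand{} runs a BFS in $G$ rooted at $x$ in which every reached vertex carries its currently known relation to $x$; when an edge $(u, u')$ is queried with, say, $u \prec x$ already known, the answer either (i) extends the known region for free by transitivity ($u' \prec u \Rightarrow u' \prec x$), (ii) strictly tightens a threshold, or (iii) is inconclusive ($u' \incom u$), in which case $u'$ must be reached again from another neighbor. The ``vertex skipping'' keeps the cost low when $p$ is small: instead of probing every candidate we keep only a sparse set of representatives per chain and, whenever a fresh comparison to a vertex is obtained, binary-search its position among the representatives, so long runs of still-unplaced vertices are pinned down by two transitive sandwiching comparisons. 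The probabilistic input --- obtained via the principle of deferred decisions, revealing an edge of $G$ only the first time the algorithm inspects it so that unrevealed edges remain i.i.d.\ $\mathrm{Bernoulli}(p)$ --- is that, conditioned on the history, the next vertex to be resolved is adjacent to a well-spread portion of the current representative set, so a constant fraction of queries drop a potential (the sum of $\log$-lengths of the active ranges) by $\Omega(1)$; a Chernoff/Azuma bound over the increments then gives the high-probability guarantee. This is uniform in $p$: when $p$ is tiny the chains and representative sets are themselves short (so even direct probing is cheap), and when $p$ is large direct probing already works. For $k = 1$ this specializes to a near-linear bound for generalized sorting on stochastic $G$, comparable to \cite{DBLP:conf/focs/KuszmaulN21}; for general $k$ the per-call cost is $\tilde{O}(|S| k)$, which together with the $\tilde{O}(nk)$ total call size from Stage~1 yields $\tilde{O}(nk^2)$.

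\emph{The main obstacle.} I expect the hard part to be exactly this query analysis of \algPartRand{} on \emph{sparse} \ER graphs, for two intertwined reasons. First, adaptivity: the BFS order, hence which edges get inspected, depends on query answers, which depend on $G$, so making the deferred-decisions argument rigorous requires a carefully specified exploration process whose conditioning never couples an unrevealed edge to the identity of the vertex it is incident to. Second, incomparability --- absent when $k = 1$ --- produces ``dead-end'' queries that return $\incom$ and carry no transitive information along a chain, so it is a priori unclear how often the BFS must restart a vertex from a fresh neighbor; controlling this is where the second factor of $k$ is spent, via the width bound: within a single partition call a fixed vertex can receive an inconclusive answer against only $\tilde{O}(k)$ distinct inferred bounds before being resolved, hence incurs $\tilde{O}(k)$ dead-end queries per call, and summing over the $\tilde{O}(nk)$-total-size family of calls (together with the $O(1)$-amortized progress queries) closes the $\tilde{O}(nk^2)$ bound with high probability.
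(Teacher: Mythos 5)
There is a genuine gap, in both stages.

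\textbf{Stage 1.} Your claim that after the quicksort recursion ``every pairwise relation is either directly queried or forced by transitivity across some pivot'' is false. If a pivot $x$ separates $a\in \Lset{S}{x}$ from $b\in \Mset{S}{x}$, transitivity only rules out $b\prec a$; both $a\prec b$ and $a\incom b$ remain possible, and the recursion never compares $a$ and $b$ again. This is exactly why the paper does \emph{not} recover the poset from the partition tree: it first produces only a linear extension, and then runs a separate pass (\Cref{thm:linearexttoposet}, \Cref{alg:LEtoPoset}) that inserts vertices in linear-extension order and binary-searches each new vertex against a size-$\le k$ path cover of the already-inserted prefix. That pass is legitimate only because of \Cref{lem:LE-prefix}: reachability in $\vec{G}$ restricted to a prefix of a linear extension coincides with $\prec$, so a width-$k$ path cover of the prefix exists \emph{using edges of $G$} — the very property your introduction-quoted caveat says fails for arbitrary induced subgraphs. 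Your ``intersection of $k$ linear extensions'' remark does not substitute for this step. Separately, your accounting (per-call cost $\tilde{O}(|S|k)$ times $\sum|S|=\tilde{O}(nk)$) only happens to land on $\tilde{O}(nk^2)$ because you assume a per-call bound that is a factor $k$ stronger than what the actual \ER partition algorithm achieves; the paper instead proves the weaker per-call bound $O(|X|(k_{X_{\prec v}}+k_{X_{\succ v}})\cdot k)$ and recovers the lost factor by the refined amortization of \Cref{lem:AAnc-k}, which shows $\sum_{p\in\IAnc(v)}(k_{X(p)_\prec}+k_{X(p)_\succ})=O(k\log^2 n)$ along any root-to-leaf path.

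\textbf{Stage 2.} Your partition oracle does not engage with the actual obstacle on sparse \ER graphs: you propose to ``binary-search [a vertex's] position among the representatives'' of a chain, but any individual comparison needed by a binary search is present in $G$ only with probability $p$, so the search cannot be performed; moreover the chain decomposition is not known to the algorithm. The paper's \textsc{Skip-BFS} (\Cref{alg:partition-rand}) is a different mechanism: every vertex carries a counter initialized to $R=k+\Theta(\log N)$, decremented each time an explored vertex in the same BFS level turns out to be larger; a vertex whose counter hits $0$ is skipped, because by then $\Omega(\log N/p)$ larger vertices have been fully explored and their random out-neighborhoods cover all of its predecessors with high probability (this is the correctness argument of \Cref{lem:rg-correctness}). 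Efficiency comes from showing that only $O(k_\prec R\log N/p)$ vertices per level survive skipping (via the random exploration order and \Cref{lem:chain-reduce}), each costing $O(np)$ queries, plus the depth bound of \Cref{lem:rg-short-dist}. Your ``dead-end query'' bound of $\tilde{O}(k)$ inconclusive answers per vertex per call is asserted without an argument and is not how the $p$-independence is obtained. The deferred-decision concern you raise is real, and the paper resolves it by rewriting the algorithm in an explicit oracle model (\Cref{alg:partition-rand-oracle,alg:partition-rand-oracle2}) that delays revealing each edge until it is provably independent of the conditioning event — but identifying the difficulty is not the same as supplying the construction.
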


Although our bound for \ER stochastic query graph does not depend on $p$,
it still relies on the structural property of the \ER graph where edges are i.i.d. generated.
This case of \ER query graph has been well studied in (total order)
generalized sorting (i.e., $k = 1$), where~\cite{DBLP:conf/focs/HuangKK11} and~\cite{DBLP:conf/focs/KuszmaulN21} are milestones.
Compared with~\cite{DBLP:conf/focs/HuangKK11},
our result is significantly better than their $\min\{ np^{-2}, n^{1.5}\sqrt{p} \}$,
especially that our algorithm is $\tilde{O}(n)$ regardless of $p$ and theirs can obtain near-linear query complexity only for a very limited range of $p$.
On the other hand, compared to the more recent work~\cite{DBLP:conf/focs/KuszmaulN21} whose bound is $O(n \log(np))$,
our result is worse by a $\poly\log n$ factor.
However, our slightly worse bound excels in that it is conceptually simpler and technically different, plus it generalizes to poset sorting. See \Cref{sec:tech_overview} for a more detailed discussion.
Finally, we remark that a unique feature of~\cite{DBLP:conf/focs/KuszmaulN21} is that when $p$ is very small, say $p = 1 / n$, then the complexity of sorting can even be better than $O(n \log n)$ which is the well-known lower bound for classic sorting.
We leave it as an open question to figure out if one can achieve a similar bound for GPS with \ER query graphs.

\begin{restatable}[Complete Bipartite Graphs]{theorem}{thmbipartite}
\label{thm:bipartite}
    There exists an algorithm that solves GPS 
    on complete bipartite query graphs and width-$k$ posets  
    using $\tilde{O}(nk)$ queries with high probability.
\end{restatable}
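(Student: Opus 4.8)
The plan is to derive \Cref{thm:bipartite} from the general framework (see \Cref{sec:tech_overview}), which reduces GPS on a width-$k$ poset to a \emph{partition oracle} \algPart: given a pivot $p$ and a vertex subset $S$, \algPart returns the tripartition of $S$ into $\{v\in S: v\prec p\}$, $\{v\in S: v\succ p\}$ and $\{v\in S: v\incom p\}$. The framework is arranged so that an implementation of \algPart costing $q(|S|)$ queries on a set of size $|S|$ yields a GPS algorithm with $\tilde{O}(k\cdot q(n))$ queries (for $q$ at least linear), the extra factor $k$ reflecting that each recursion cell may need to be split against an antichain of up to $k$ pivots. Since this is $\tilde{O}(nk)$ as soon as $q(|S|)=\tilde{O}(|S|)$, the task reduces to implementing \algPart on a complete bipartite query graph with a near-linear number of queries.

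Write the query graph as $G=(A\sqcup B,E)$, so that every edge crosses between $A$ and $B$, and assume without loss of generality that the pivot is $p\in A$. The easy half of \algPart is to classify $S\cap B$: for each $b\in S\cap B$ we query the edge $(p,b)\in E$ directly, which settles all of them using $|S\cap B|\le|S|$ queries. For the hard half --- classifying the \emph{same-side} elements $a\in S\cap A$ against $p$ --- the plan is to maintain throughout the execution a chain decomposition of the already-processed vertices into $O(k)$ chains, together with its restriction to $B$ (giving $O(k)$ sorted $B$-chains). Since $p$ has itself already been located against each $B$-chain, every $B$-chain is split by $p$, at no extra cost, into a prefix below $p$, a middle incomparable to $p$, and a suffix above $p$. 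For each $a\in S\cap A$ we then binary-search $a$ against each of the $O(k)$ $B$-chains, using only cross-side queries, at a cost of $O(k\log n)$ queries; the relation of $a$ to $p$ is read off from these positions and the split points of $p$ --- for instance $a\prec p$ whenever $a$ lies below some $B$-element that lies below $p$, and symmetrically for $a\succ p$. Once all recursion cells are resolved, taking the transitive closure of the discovered relations recovers the full poset.

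The step I expect to be the main obstacle is justifying this hard half. Unlike for a complete query graph, two comparable elements of $A$ need not have any witness $b\in B$ strictly between them, so a single layer of transitive inference through $B$ is a priori insufficient. The crux is a structural lemma asserting that, in the complete bipartite model and using the width bound $k$, the $O(k)$ $B$-chain positions of $a$ together with the split points of $p$ do pin down the relation $a$ versus $p$ --- equivalently, that if $a$ is neither below any $B$-element below $p$ nor above any $B$-element above $p$, then necessarily $a\incom p$ --- and, hand in hand with this, that the chain decomposition can be updated within the same $\tilde{O}(|S|)$ budget and never grows beyond $O(k)$ chains, without spawning spurious chains for same-side elements whose mutual relations are only pinned down later in the recursion. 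Once this structural control is in place, the recursion, the high-probability accounting, and the final assembly all follow from the framework, yielding the $\tilde{O}(nk)$ bound.
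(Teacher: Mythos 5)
Your proposal reduces to the right structural question, but there is a genuine gap exactly at the step you flag, and I don't see how to fill it along the lines you sketch. To binary-search each $a\in S\cap A$ against "$O(k)$ sorted $B$-chains" you first need a chain decomposition of $S\cap B$ (or at least of $B_{\prec p}\cap S$ and $B_{\succ p}\cap S$). These are \emph{same-side} vertices, so their pairwise relations cannot be queried; obtaining such a decomposition means having already inferred, through $A$-side witnesses, the mutual order of $S\cap B$ --- which is essentially a GPS instance on $S\cap B$, not meaningfully easier than the problem you are trying to solve. This is the chicken-and-egg the paper's proof is organized around avoiding. Your fallback phrasing "maintain throughout the execution a chain decomposition of the already-processed vertices" mixes up the linear-extension driver (\Cref{alg:LEtoPoset}, where such incremental binary-search \emph{is} used) with the partition oracle: the oracle must fully classify $S$ \emph{before} any recursion or incremental processing happens, so there is no prior set of "already-processed" $B$-vertices whose chain decomposition is available. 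The structural claim you want ("no $B$-witness on either side implies $a\incom p$") \emph{is} true, and follows from \Cref{lem:induced-subgraph-issue} plus bipartiteness; what is missing is an algorithm to search for the witness within budget. The paper sidesteps the issue entirely with a different idea: a randomized local-search procedure (\findmin, \Cref{section:findmin}) that repeatedly discovers a \emph{local minimal} element of $A_{\succ p}$ (or $A_{\prec p}$) and uses its $B$-neighborhood to grow $B_{\succ p}$, with the chain decomposition appearing only in the \emph{analysis} (to bound the number of iterations by $O((k_{\prec p}+k_{\succ p})\log n)$), never as a data structure the algorithm needs to have.

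A secondary issue: your reduction "$\algPart$ at $q(|S|)$ queries gives $\tilde{O}(k\cdot q(n))$ for GPS" is not the form the paper's framework actually provides, and is lossy. Your oracle, even if the chain decomposition were free, would cost $\Theta(|S|\cdot k\log n)$ (binary search of each $a\in S\cap A$ against $O(k)$ chains), and plugging this into a naive "multiply by $k$" reduction yields $\tilde{O}(nk^2)$, not $\tilde{O}(nk)$. The paper's \Cref{lem:part2LE-main2} is specifically engineered to avoid that extra $k$: it requires the oracle bound to scale with the \emph{local} widths $k_{X_{\prec p}}+k_{X_{\succ p}}$, and then a non-trivial recursion-tree argument (\Cref{lem:AAnc-k}) shows these local widths sum up correctly. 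The partition algorithm of \Cref{lem:bipartitemain} is built to satisfy exactly this refined bound; any replacement you design must do so too, not merely achieve $\tilde{O}(|S|\cdot k)$.
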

This result for complete bipartite graphs is tight up to $\poly(\log n)$ factors, since 
the $\Omega(nk)$ lower bound for complete query graph in~\cite{DBLP:journals/siamcomp/DaskalakisKMRV11} still holds in the complete bipartite case (see \Cref{remark:bipartite}). Our result is also a generalization of a series work of nuts-and-bolts problems~~\cite{DBLP:conf/soda/AlonBFKNO94,bradford1995matching,DBLP:journals/ipl/AlonBF96,DBLP:journals/siamdm/KomlosMS98}, and our bound for $k=1$ nearly matches the state-of-the-art bound for this problem (up to $\poly\log$ factors, compared to the best $O(n\log n)$ result in (total order) generalized sorting problem on complete bipartite graphs by \cite{DBLP:journals/siamdm/KomlosMS98}). 
Note that there is a difference between the nuts-and-bolts problem and the generalized sorting problem on complete bipartite graphs, where the nuts-and-bolts problem has an additional assumption that every node is assigned an edge with the query result ``equal". With the help of the ``equal" edge, a natural randomized quicksort-like algorithm achieves the query complexity $O(n \log n)$, by an $O(n)$ partition algorithm to partition nodes based on a randomly selected pivot. However, if the ``equal" edge is not provided, it is already a non-trivial task to design a partition algorithm. This is also noted by \cite{DBLP:journals/siamdm/KomlosMS98}, and they resolve the missing ``equal" edges and design an $O(n \log n)$ sorting algorithm with some other indirect methods. However,  this indirect method does not yield a partition algorithm, and we find it hard to generalize these indirect methods to the setting of poset.
In our result, we devised a partition algorithm for complete bipartite graphs without ``equal" edges, and this type of partiton algorithm was unknown even for the total order setting.
This partition step turns out to be useful and naturally generalizable to posets.

\paragraph{Weighted generalized sorting.}
Apart from the significance in its own right, another important implication of GPS is that it can be used as an intermediate step for other (total order) sorting problems.
We showcase this idea by presenting new results for the weighted generalized sorting problem.

In the weighted (total order) generalized sorting problem,
the query graph is weighted (with weight function $w : E \to \mathbb{R}_{\geq 0}$),
and each query $(u, v) \in E$ incurs a weighted cost $w(u, v)$ instead of a unit cost.
Since the objective is weighted, we measure the performance of the algorithm using the \emph{competitive ratio}, defined as the total cost incurred by the algorithm divided by $\sum_{i}w(v_i, v_{i+1})$, where $v_1 \prec \ldots \prec v_n$ is the total order.
We obtain the following result for weighted generalized sorting.

\begin{restatable}{theorem}{thmweighted}
\label{thm:weighted}
There exists an algorithm
that solves the weighted (total order) generalized sorting problem with competitive ratio $\displaystyle\tilde{O}(n^{1-1/(2W)})$, where $W$ is the number of distinct weights in the graph.
\end{restatable}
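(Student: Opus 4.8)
\emph{Setup and base case.} I would prove this by induction on $W$, the number of distinct weights controlling both the recursion depth and the ratio. Write $v_1\prec\cdots\prec v_n$ for the true order, $\opt=\sum_i w(v_i,v_{i+1})$, and let $w_1<\cdots<w_W$ be the distinct weights; let $E_{\le w_j}$ be the edges of weight at most $w_j$, let $N(w)$ be the number of path edges $(v_i,v_{i+1})$ of weight $>w$, and set $N_j:=N(w_j)$ for $1\le j\le W$, $N_0:=n-1$. I will use two elementary bounds: every path edge has weight $\ge w_1$, so $\opt\ge(n-1)w_1$; and $\opt\ge w_j\cdot(\#\text{path edges of weight}\ge w_j)=w_jN_{j-1}$ for each $j$. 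The base case $W=1$ is just the $\tilde O(nk+n^{1.5})$-query all-comparable GPS algorithm (established above) run on all of $G$ with width $k=1$: it uses $\tilde O(n^{1.5})$ queries, each costing $w_1\le 2\opt/n$, so its ratio is $\tilde O(\sqrt n)=\tilde O(n^{1-1/2})$.

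\emph{The threshold algorithms.} For the inductive step I would, for each $j\in\{1,\dots,W\}$, define an algorithm $A_j$ as follows. First run the all-comparable GPS algorithm on the subgraph $(V,E_{\le w_j})$ — a legal instance, since all its edges are comparable in the total order — recovering the partial order $\prec^{(j)}$ that $(V,E_{\le w_j})$ determines. Removing the weight-$>w_j$ path edges splits $V$ into at most $1+N_j$ intervals of consecutive vertices joined by weight-$\le w_j$ path edges, each an interval and hence a chain of $\prec^{(j)}$; thus $\operatorname{width}(\prec^{(j)})\le 1+N_j$ and this step costs $\tilde O\bigl(n(1+N_j)+n^{1.5}\bigr)$ queries, each of cost $\le w_j$. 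Now contract each interval to a super-node: the super-nodes inherit a total order; any super-pair joined by a weight-$\le w_j$ edge is already resolved by $\prec^{(j)}$, so the only relevant super-edges have weight $>w_j$ and there are at most $W-j$ distinct super-weights; one query resolves any super-pair (the intervals are totally ordered, so one cross-interval comparison orders them); and the super-graph contains the sorted super-Hamiltonian path (the weight-$>w_j$ path edges between consecutive intervals). Recurse on this super-instance. Finally run $A_1,\dots,A_W$ in parallel under a doubling query-cost budget, halting at the first success; since we never need to know $\opt$ and only $O_W(1)$ (nested) parallel runs occur, the total cost is $\tilde O(1)$ times that of the cheapest $A_j$.

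\emph{Bookkeeping and the choice of threshold.} By the inductive hypothesis the recursion inside $A_j$ costs at most $\tilde O\bigl((1+N_j)^{1-1/(2(W-j))}\bigr)$ times the super-$\opt$, and the super-$\opt$ is at most $\opt$ (each super-Hasse edge costs at most the weight of the corresponding path edge). Using $w_j\le\opt/N_{j-1}$, the competitive ratio of $A_j$ is
\[
\tilde O\!\left(\frac{n(1+N_j)}{N_{j-1}}+\frac{n^{1.5}}{N_{j-1}}+N_j^{\,1-\frac{1}{2(W-j)}}\right),
\]
the last term absent when $N_j=0$. Let $i^\star$ be the least $i\ge1$ with $N_i\le n^{1-i/(2W)}$; it exists since $N_W=0\le n^{1/2}$. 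For $i<i^\star$ we have $N_i>n^{1-i/(2W)}$, which together with $N_0=n-1\ge n/2$ gives $N_{i^\star-1}\ge\tfrac12 n^{1-(i^\star-1)/(2W)}$ and $N_{i^\star}\le n^{1-i^\star/(2W)}$. Substituting into the three terms of $A_{i^\star}$: the middle term is $\le 2n^{1/2+(i^\star-1)/(2W)}\le 2n^{1-1/(2W)}$ (as $i^\star\le W$); the first term is $\le 4n^{\,2-i^\star/(2W)}/n^{\,1-(i^\star-1)/(2W)}=4n^{1-1/(2W)}$; the last term is $\le n^{(1-i^\star/(2W))(1-1/(2(W-i^\star)))}\le n^{1-i^\star/(2W)}\le n^{1-1/(2W)}$, since that exponent is a product of numbers in $(0,1]$ with $i^\star\ge1$. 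Hence $A_{i^\star}$, and therefore the parallel algorithm, has ratio $\tilde O(n^{1-1/(2W)})$, closing the induction.

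\emph{Main obstacle.} The hard part is the simultaneous three-way balance above: the recursion-depth term $N_j^{1-1/(2(W-j))}$ wants the threshold large, whereas $n^{1.5}/N_{j-1}$ (the overhead of missing edges, i.e.\ the $n^{1.5}$ term of the all-comparable subroutine) and $n(1+N_j)/N_{j-1}$ (cost of recovering a width-$(1+N_j)$ poset) want it small; forcing all three below $n^{1-1/(2W)}$ for a common threshold is exactly what the geometric ladder $n^{1-i/(2W)}$ plus the pigeonhole fact that $N_\bullet$ must fall from $\Theta(n)$ to $0$ in $W$ steps deliver via $i^\star$. The rest is routine but must be checked carefully: that the contracted super-instance is again a legal weighted generalized-sorting instance with strictly fewer distinct weights and its sorted Hamiltonian path present, that one query per super-pair genuinely suffices because the contracted blocks are intervals, and that nesting $O_W(1)$ doubling-budget parallel schemes costs only an $O_W(\log n)$ factor absorbed into $\tilde O(\cdot)$.
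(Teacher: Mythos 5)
Your argument is correct in its essentials but takes a genuinely different route in the second half. The first half coincides with the paper: both run the GPSC algorithm (\Cref{thm:gpsc}) on the subgraph of edges of weight at most a threshold, and both exploit the lower bound $\opt\ge w_j\cdot(\#\text{path edges of weight}\ge w_j)$; your width bound $1+N_j$ via the interval chain cover is the same quantity the paper controls via $\opt\ge w_{\tau+1}(k_\tau-1)$. Where you diverge is in (i) how the residual sorting is done and (ii) how the threshold is selected. The paper introduces a standalone subroutine \sortchain that, given \emph{any} chain decomposition into $d$ chains, sorts with cost $O(d\log n)\cdot\opt$ via weight-level probing and a charging argument; it then picks a single threshold $\tau$ deterministically from $\eopt$ and the weight gaps (stopping at the first $\tau$ with $w_\tau/w_{\tau+1}\le n^{-1/(2W)}$), with a single doubling loop on $\eopt$. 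You instead contract the intervals to super-nodes and recurse on a smaller weighted instance with fewer distinct weights, choosing the threshold by a pigeonhole on the counts $N_i$ rather than on the weights. Your contraction crucially uses that your chains are \emph{intervals} of the total order (so one cross-query orders two blocks); it would not work for an arbitrary chain decomposition, whereas \sortchain does. In exchange, your route avoids the charging argument entirely and makes the exponent $1-1/(2W)$ fall out of a clean induction; your three-term balance at $i^\star$ checks out, including the base case.

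One caveat you should make explicit: the nested parallel-doubling wrapper multiplies the cost by roughly $O(W\log(\cdot))$ per recursion level, i.e.\ $(W\log n)^{O(W)}$ overall, which is $\poly\log n$ only for constant $W$. For superconstant $W$ (still small enough that $n^{1-1/(2W)}=o(n)$ is a nontrivial claim) this overhead is not absorbed by $\tilde O(\cdot)$, so as written your bound degrades there. The paper's single-threshold design plus one doubling loop on $\eopt$ sidesteps this entirely; to match it you would need to flatten the recursion (e.g.\ enumerate full threshold sequences under one global budget) or otherwise argue the overhead away. Also remember to union-bound the high-probability guarantee of \Cref{thm:gpsc} over all invocations in the recursion tree, which is again harmless only when that tree has $\poly\log n$ nodes.
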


Indeed, obtaining nontrivial bounds for this weighted generalized sorting has been suggested as an \emph{open question} by~\cite{DBLP:journals/jcss/CharikarFGKRS02},
and it received significant attention in various subsequent works~\cite{DBLP:conf/focs/GuptaK01,DBLP:conf/latin/AngelovKM08,DBLP:journals/corr/BanerjeeR15a,DBLP:journals/corr/abs-2211-04601}.
However, these existing works mostly focus on understanding certain special cases of weights, such as bounded number of distinct weights~\cite{DBLP:journals/corr/BanerjeeR15a,DBLP:journals/corr/abs-2211-04601} or structured/random weights~\cite{DBLP:conf/focs/GuptaK01,DBLP:conf/approx/GuptaK05,DBLP:conf/latin/AngelovKM08}.
For the general case, we are only aware of an $O(n)$ ratio~\cite{DBLP:conf/focs/GuptaK01},
which is trivial in the unweighted case since one can query all edges,
but is already nontrivial in the weighted setting.

Our result makes progress on the weighted generalized sorting problem, 
and our bound implies a strictly sublinear ratio when the number of distinct weights is bounded (and the weights can take any non-negative values).
This improves over the known $O(n)$ ratio in~\cite{DBLP:conf/focs/GuptaK01},
and our ratio also matches ratios for several notable special cases.
When $W = 1$ which reduces to (total order) generalized sorting, 
our bound matches the $\tilde{O}(n^{1.5})$  query complexity
in~\cite{DBLP:conf/focs/HuangKK11} which is the state-of-the-art for dense graphs (for sparse graphs a $\sqrt{mn}$ bound was obtained in~\cite{DBLP:conf/focs/KuszmaulN21}, where $m$ is the number of edges in the query graph).
Moreover, we give in \Cref{cor:weighted_fast_inc} an improved analysis of our algorithm for the case when the weights are well-separated,
and this result matches an $\tilde{O}(n^{3 /4})$
ratio, obtained in a recent work~\cite{DBLP:journals/corr/abs-2211-04601}, for the case when the weights are picked from $\{0, 1, n, \infty\}$ (where $\infty$ weight can be interpreted as ``missing edge'' in the query graph in our case).

\begin{remark}
In several previous works~\cite{DBLP:conf/focs/GuptaK01,DBLP:conf/focs/HuangKK11,DBLP:conf/focs/KuszmaulN21}
it has been mentioned that the ratio for finding a maximum element in a weighted query graph is $\Omega(n)$,
and this seems to suggest that $\Omega(n)$ is also a lower bound for sorting (since sorting implies finding the maximum).
However, this is not true, for two reasons.
One is that the ratio in the maximum-finding problem is defined with respect to the cost of a minimum-weight certificate for the maximum,
which can be much smaller than the $\sum_{i}w(v_i, v_{i + 1})$ cost for identifying total order in sorting.
Secondly, the hard instance in the lower bound of maximum-finding~\cite{DBLP:conf/focs/GuptaK01} only uses three types of weights,
and by our upper bound, this type of instance cannot be hard for sorting.
A similar discussion of this gap was also made in~\cite{DBLP:journals/corr/abs-2211-04601}.
In fact, a further implication of our result is that,
in order to  prove $\Omega(n)$ lower bound for weighted generalized sorting,
one must use at least $\Omega(\log n)$ distinct weights in the hard instance. 
\end{remark}

\paragraph{Auxiliary problem: GPS with comparable edges (GPSC).}
As we mentioned, GPS is used as an important intermediate step for obtaining \Cref{thm:weighted}.
In particular, we consider a special case of GPS whose query graph consists of \emph{comparable} edges only, i.e., $(u, v) \in E$ only if $u \prec v$ or $v \prec u$, and we call this special case the GPS with comparable edges (GPSC).
Due to the fact that GPSC is in between GPS and (total order) generalized sorting, and that it may be useful for other sorting problems,
the result of this problem, stated below, may be of independent interest.

\begin{restatable}[GPSC]{theorem}{thmgpsc}
\label{thm:gpsc}
    There exists an algorithm that solves GPSC 
    on general query graphs and width-$k$ posets  
    using $\tilde{O}(nk + n^{1.5})$ queries with high probability.
\end{restatable}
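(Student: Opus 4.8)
The plan is to generalize the $\tilde{O}(n^{1.5})$ generalized-sorting algorithm of Huang et al.\ by maintaining an evolving partial order and running two interleaved engines: an entropy-style \emph{exploration} engine that pays $\tilde{O}(n^{1.5})$ to route around missing edges (exactly as in the total-order case), and a Dilworth chain-decomposition engine that pays the $\tilde{O}(nk)$ genuinely needed to pin down a width-$k$ poset. Throughout, let $\hat{\mathcal{P}}$ be the partial order of all relations learned so far, closed under transitivity. Since every edge of $G$ is guaranteed comparable, each query orients exactly one new edge of $G$, and under the standard recoverability assumption (e.g.\ $G$ contains a directed path between every comparable pair), once all edges of $G$ are oriented the transitive closure equals $\mathcal{P}$. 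Define the potential $\Phi := \log_2 e(\hat{\mathcal{P}})$, where $e(\cdot)$ counts linear extensions; then $\Phi \le \log_2 n! = O(n\log n)$ initially and $\Phi = \log_2 e(\mathcal{P})$ at termination, so at most $O(n\log n)$ units of ``entropy'' are removed over the whole run, and transitive closure only helps.

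For the $\tilde{O}(n^{1.5})$ part I would port the exploration step of Huang et al.\ almost verbatim. Call an unresolved edge $(u,v)\in E$ \emph{$\delta$-balanced} if a uniformly random linear extension of $\hat{\mathcal{P}}$ orients it each way with probability at least $\delta$; this can be detected up to constants by sampling polynomially many linear extensions, and querying such an edge multiplies $e(\hat{\mathcal{P}})$ by at most $1-\delta$ regardless of the adversary's answer, dropping $\Phi$ by $\Omega(\delta)$. Taking $\delta = \Theta(n^{-1/2})$ affords $\tilde{O}(n^{1.5})$ such queries. When no $\delta$-balanced unresolved edge exists, orienting each unresolved edge by its majority direction yields a candidate digraph, and this is precisely the ``stuck'' regime analyzed by Huang et al.: they show it can be cleared with $\tilde{O}(n^{1.5})$ further queries (roughly, either the candidate digraph has a long directed cycle on which $\mathcal{P}$ must disagree with the majority, whose resolution forces a large drop of $\Phi$, or the digraph is acyclic and its transitive closure $\bar{\mathcal{P}}$ is a fully-oriented candidate). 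I would import their accounting; the only poset-specific change is that incomparable pairs of $\mathcal{P}$ are never edges of $G$ and hence never need to be probed, so the argument only becomes easier.

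It remains to profit from the width once the run holds an acyclic candidate $\bar{\mathcal{P}}$ orienting all of $E$. Here I would compute a minimum chain cover of $\bar{\mathcal{P}}$ (Dilworth), lift each chain to a vertex-disjoint directed path in the candidate digraph, and query those path edges, which costs $O(n)$; non-edges are not an issue because these are genuine edges of the candidate digraph. If any answer contradicts $\bar{\mathcal{P}}$, $\Phi$ drops by $\Omega(\log n)$ and we loop; otherwise the confirmed chains together with $\hat{\mathcal{P}}$ certify a partial order consistent on $G$. The width enters because a width-$k$ poset is covered by $k$ chains and all cross-chain relations are forced by transitivity once each vertex's relation to each chain is known, so only $\tilde{O}(k)$ rounds of this ``commit-and-confirm'' loop occur before $\hat{\mathcal{P}}=\mathcal{P}$ — each round either finishes the instance, shrinks the width of the current candidate, or removes $\Omega(\log n)$ potential. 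This costs $\tilde{O}(nk)$; for $k=1$ it collapses to confirming a single Hamiltonian path and recovers the algorithm of Huang et al.\ exactly. (Alternatively, the $nk$ term can be viewed as the cost of the chain-by-chain binary searches of Daskalakis et al., with every missing edge encountered during a search triggering the exploration engine above.)

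I expect the crux to be this last engine and its interface with the exploration engine: proving that the ``no balanced edge, acyclic candidate'' regime is cleared in $\tilde{O}(k)$ confirmation rounds rather than $\tilde{O}(n)$, given that confirming only the chain spines does not by itself certify all cross-chain relations of $\bar{\mathcal{P}}$. One must argue that any residual disagreement between $\mathcal{P}$ and $\bar{\mathcal{P}}$ either resurfaces as a long bad cycle (handled by the exploration engine) or is caught within these rounds, and this is where the chain-cover structure of width-$k$ posets has to be meshed with the linear-extension potential, presumably via an amortized argument that also tracks the width of the current candidate. A secondary obstacle is the bookkeeping: ensuring that the several ``$\Omega(\log n)$ potential drop'' steps, the per-$(\text{vertex},\text{chain})$ binary searches, and the sampling accuracy used to detect balanced edges all compose to a clean $\tilde{O}(nk + n^{1.5})$ with high probability, which is what pins down the exact choice of $\delta$.
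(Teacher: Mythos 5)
Your proposal takes a genuinely different route from the paper, but it has a real gap precisely where you suspect one: the ``commit-and-confirm'' loop. The bound of $\tilde{O}(k)$ rounds is asserted, not derived. The linear-extension potential $\Phi$ only caps the number of \emph{surprise} answers at $O(n\log n/\log n)=O(n)$, and each round of confirming chain spines costs $\Theta(n)$ queries, so the accounting as written yields $O(n^2)$, not $\tilde{O}(nk)$; nothing in the argument forces the width of the candidate to shrink between rounds. Moreover, even a fully successful round does not finish the job, since (as you note) the spines do not certify cross-chain relations, and the fallback of per-vertex, per-chain binary search breaks in GPSC because the edge between $u$ and the chain element you want to probe may simply be absent from $G$ --- ``triggering the exploration engine'' does not let you learn the relation between $u$ and a non-neighbor. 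A second, softer issue is that the ``stuck regime'' of Huang et al.\ that you propose to import is analyzed there for a total order with a unique Hamiltonian path; the paper explicitly flags that this part does not transfer to posets.

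The paper's actual architecture avoids both problems. It first builds a prediction digraph $\tilde{G}$ with an \emph{everywhere} guarantee --- every vertex has at most $\tilde{O}(\sqrt{n})$ wrongly predicted incident edges --- by testing each vertex with $\sqrt{n}$ random incident queries and re-predicting (via majority over feasible linear extensions) whenever an error is found; your potential $\log e(\hat{\mathcal{P}})$ is used exactly here, to bound the number of re-predictions, and the whole predictor costs $\tilde{O}(n^{1.5})$. It then inserts vertices one at a time into a sorted set $A$, always choosing a vertex $u$ that is minimal among the uninserted ones, so that its correctly predicted in-neighbors lie in $A$ and decompose into $k$ chains of the already-known poset $\mathcal{P}(\vec{G}[A])$, while its wrongly predicted in-neighbors number only $\tilde{O}(\sqrt{n})$ by the everywhere guarantee. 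Each insertion is then $k+\tilde{O}(\sqrt{n})$ binary searches, and --- crucially --- every comparison is between $u$ and an actual $G$-neighbor of $u$, so no missing-edge issue ever arises. This per-vertex error guarantee (stronger than both the in-degree predictor of Huang et al.\ and the total-error predictor of Kuszmaul--Narayanan) is the ingredient your plan is missing; with it the $\tilde{O}(nk+n^{1.5})$ bound falls out of a single incremental pass rather than an interleaving of two engines.
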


As mentioned, \Cref{thm:gpsc} is a crucial subroutine for \Cref{thm:weighted},
but to obtain a sublinear ratio for weighted generalized sorting (provided that the number of distinct weights is bounded),
any $n^{2 - \epsilon}$ query bound for GPSC suffices,
although it may lead to a worse constant in the exponent of $n$ in the ratio (i.e., worse than $O(n^{1 - 1 / (2W)})$ but still $o(n)$).
We give a detailed overview on how this can be used to obtain \Cref{thm:weighted} in \Cref{sec:tech_overview}.

\subsection{Technical Overview}
\label{sec:tech_overview}

We give a highlight of technical challenges and our technical contributions, followed by a more detailed technical overview.
\begin{itemize}
    \item A genaeral framework for GPS. Previous algorithms for generalized sorting~\cite{DBLP:conf/focs/HuangKK11, DBLP:conf/sosa/LuRSZ21, DBLP:conf/focs/KuszmaulN21} 
    use an incremental method to iteratively discover the (nearly-)minimum element,
    but this does not work directly in GPS due to incomparable edges and non-unique minimal elements.
    We develop a general framework for GPS, which reduces GPS to finding a linear extension, 
    and we further show this linear extension can be found by a quicksort-like algorithm proposed by \cite{DBLP:conf/focs/HuangKK11,DBLP:journals/siamcomp/DaskalakisKMRV11},
    but we would need a new analysis to save a factor of $k$ in the query complexity.
    Specifically, our new analysis requires a stronger partition algorithm with a refined query complexity bound that depends on the width of the poset in the subproblem.
Since it does not introduce any additional $\poly(n)$ or $\poly(k)$ factors,
our framework is capable of obtaining (nearly) tight bounds when combined with carefully designed downstream partition algorithms, which may be of independent interest. 
    \item Novel partition algorithms for \ER graphs based on stochastic BFS. Our partition algorithm for \ER graphs is based on a stochastic BFS,
    where the key idea is to skip a vertex from the BFS queue if that vertex has been visited by sufficiently many other vertices.
    This still guarantees the correctness with high probability due to the property of \ER graph. To make sure we trim most vertices in a few iterations, we also run the BFS in a random order of vertices.
    Previously, algorithms for \ER graphs were only known for generalized sorting (without considering a poset),
    and the techniques are not readily applicable.
    In particular, the framework of~\cite{DBLP:conf/focs/HuangKK11} requires an algorithm with a subquadratic query for general query graph, which is not available in GPS.
    Another recent work \cite{DBLP:conf/focs/KuszmaulN21} uses a very different approach,
    but the efficiency of one of its subroutines relies on the uniqueness of the minimal element. Hence it is highly nontrivial to generalize to the poset setting while achieving subquadratic complexity.
\item Weighted generalized sorting via new algorithms for GPSC.
    To achieve the $\tilde{O}(n^{1-\frac{1}{2W}})$ competitive ratio,
    we partition the edge set into cheap and expensive edges according to a threshold,
    and the two cases are balanced and solved by one of the following two algorithms:
    a) a new sorting algorithm that may receive a partially sorted graph (i.e. a partial order) as extra input and the competitive ratio depends on the width of the input partial order; and b) a new GPSC algorithm as in \Cref{thm:gpsc}.
    In our new GPSC algorithm, we employ the framework of sorting with predictions~\cite{DBLP:conf/focs/HuangKK11,DBLP:conf/sosa/LuRSZ21, DBLP:conf/focs/KuszmaulN21} (which was proposed for generalized sorting),
    where we construct a prediction graph that ``guesses'' the direction of the edges,
    and make decisions and refine the prediction in an iterative manner.
    To ensure the ratio is linear in $k$, we devise a stronger predictor that has an ``everywhere'' guarantee for each vertex, as opposed to having a collective bound on the total number of wrongly predicted edges.
    
\end{itemize}

\paragraph{A general framework for generalized poset sorting.}
As mentioned, we obtain algorithms for GPS via a new unified framework.
In this framework, we first reduce the GPS to finding a \emph{linear extension} (\Cref{thm:linearexttoposet}).
A linear extension for the poset $\mathcal{P} = (V, \prec)$ is a total order such that $\forall x, y \in V$, if $x \prec y$ then $x$ appears before $y$ in the total order.
Finding a linear extension is an interesting problem in its own right,
and it has also been studied in~\cite{DBLP:conf/focs/HuangKK11,DBLP:journals/siamcomp/DaskalakisKMRV11}.
However, previous studies did not establish the connection between GPS and linear extension, which we do in our framework.

To find the linear extension, we employ a quicksort-like algorithm
to randomly select a pivot element $v \in V$ and partition the elements into three parts, elements smaller than $v$, elements incomparable with $v$ and elements larger than $v$.
Given this partition, one can compute the linear extension of these three parts recursively and combine them in the order of smaller-incomparable-larger to obtain the linear extension of $\mathcal{P}$.

This quicksort-like algorithm was also used in~\cite{DBLP:journals/siamcomp/DaskalakisKMRV11} to find a linear extension for complete query graphs.
While their analysis may be adapted to the general query graph case,
it only leads to sub-optimal bounds with respect to $k$.
We give new analysis to this quicksort-like algorithm,
and we are able to obtain an improved dependence in $k$  (\Cref{lem:part2LE-main,lem:part2LE-main2})
provided that the partition algorithm additionally satisfies certain properties.
These properties turn out to be natural, and we manage to design partition algorithms satisfying these properties for both \ER and complete bipartite query graphs.

Now we explain our new steps in the analysis to the quicksort-like algorithm.
In~\cite{DBLP:journals/siamcomp/DaskalakisKMRV11},
it is observed that the depth of the recursion tree is $O(k + \log n)$.
This is good enough for complete query graphs, since the partition step can be done in $O(n)$, and this, combined with the depth of the recursion tree, translates to an $O(nk+n\log n)$ bound. 
However, when $G$ is not a complete graph, the partition problem often requires $\Omega(n)$ queries, say $O(nk^c)$ queries, then the analysis in~\cite{DBLP:journals/siamcomp/DaskalakisKMRV11} leads to an $\tilde{O}(nk^{c+1})$ bound, which introduces an additional $k$ factor.
In order to avoid this additional $k$ factor, we require partition algorithms to use $O(n k_v^c)$ queries that depend on $k_v$, which denotes the width of elements comparable with pivot vertex $v$.
A crucial observation is that, if $k_v$ is small, then the partition algorithm uses few queries, and if $k_v$ is large, then the next pivots $v'$ (in the incomparable part) is likely to have a small $k_{v'}$. 

\begin{comment}
If one directly uses the approach in~\cite{DBLP:journals/siamcomp/DaskalakisKMRV11},
one would obtain a bound like $k \cdot f$ where $f$ is the query complexity of the partition algorithm, essentially introducing an addition $k$ factor.
In our bounds (\Cref{lem:part2LE-main,lem:part2LE-main2}), we manage to avoid suffering this additional $k$ factor.
Roughly speaking, we show if the partition algorithm runs in $\tilde{O}(n k^c)$ (as long as $c \geq 1$) then the full algorithm for linear extension runs in nearly the same order with respect to both $n$ and $k$ (up to $\poly \log$ factors).
\shaofeng{We can consider expanding this a bit -- maybe about how we obtain this improved bound?}
\end{comment}

\paragraph{Partition algorithms.}

For the partition step,
if it were the complete graph case, we could directly query the relations between the pivot and every other element using $n-1$ queries.
However, this simple but efficient bound is no longer easily obtainable when the query graph is not complete.
Nonetheless, we introduce novel ideas for this partition step, and we manage to obtain algorithms that use $\tilde{O}(nk^2)$ queries for \ER query graphs and $\tilde{O}(nk)$ queries for complete bipartite query graphs.

\paragraph{Partition algorithms: \ER graphs.}

It is helpful to interpret the problem as a graph problem.
We define a directed graph $\vec{G}$ from $G$,
by defining the direction of every edge $(u, v) \in E$ according to the relation between $u,v$, i.e. the direction is $u\to v$ if and only if $u\prec v$. 
Then for every vertex $u$, $u$ is smaller than the pivot vertex if and only if there exists a path from $u$ to pivot in $\vec{G}$.
Hence, the partition problem reduces to finding all vertices that can be reached from the pivot vertex. 
This graph problem may be solved using BFS, but a vanilla BFS needs to query all edges, which is too costly. To resolve this issue,
we design a variant of BFS that can make use of the structure of \ER graphs, called Skip-BFS.

We start by giving the overall intuition by assuming we are given a chain decomposition of the poset (which is of size $k$, guaranteed by Dilworth's Theorem).
An important property of \ER $G(n,p)$ is that, if we select $O(p^{-1} \log n)$ arbitrary vertices, then every vertex is adjacent to at least one selected vertex\footnote{This does not always happen and only with high probability, but in the following discussions we ignore this and talk about the typical behavior.}.
Hence, if we select the $O(p^{-1} \log n)$-largest vertices from each chain in the chain decomposition of the poset
then every vertex has outgoing edges to at least one selected vertex.
Exploring (the neighbor of) these selected vertices only takes $\tilde{O}(kp^{-1} \cdot np) = \tilde{O}(nk)$ queries, and this finishes the partition. 

However, the chain decomposition is not known to our algorithm a priori.
Thus, we need a method to gauge whether a vertex is worth exploring, i.e., it is sufficiently large in its chain.
To this end,
Skip-BFS maintains a counter $c[v]$ for each vertex $v$, which is initialized as some parameter $R = \Theta(\log n)$.
Then, if some vertex $v$ becomes the current vertex for which we start to explore its neighbor, we decrease the counter $c[u]$ by $1$ for every $v$'s neighbor $u$ such that $u$ is smaller than $v$.
When the counter of some vertex $u$ is decreased to $0$, we skip this point $u$ by removing it from the BFS queue.
Such $u$ can be safely skipped since Skip-BFS has already explored $R \geq \Omega(\log n)$ vertices that are larger than $u$,
and these vertices are likely to cover all incoming vertices of $u$.
To see this, since $u$'s counter is decreased $R \geq \Omega(\log n)$ times,
we already visited $O(p^{-1} \log n)$ vertices that are larger than $u$, and that each such vertex connects to $p$ fraction of vertices smaller than $u$. 
Hence, these already-visited $O(p^{-1} \log n)$ vertices connects to/cover all vertices that are smaller than $u$.
Finally, to guarantee the efficiency of this process, we need to use a random permutation of vertices when we do BFS, in order to trim most $u$'s in only a few steps.
This eventually leads to an $\tilde{O}(nk^2)$ time partition algorithm for \ER query graphs.

Compared with the approach in~\cite{DBLP:conf/focs/KuszmaulN21} who gave 
an algorithm for the total order case that uses $O(n\log (np))$ queries which is tight,
our bound is comparable, but our approach is conceptually different and simplified.
In fact, it is unclear if their approach can be efficiently generalized to the poset case.
In their algorithm, they repeatedly find the minimum vertex of the current graph and remove it.
To find the minimum vertex, they identify a set of candidate vertices and then trim the wrong ones by testing if there is an incoming edge, which requires querying the edges between the candidates and other vertices.
However, in GPS, there are multiple minimal vertices,
and it is nontrivial to bound the number of candidates and the adjacent edges to query since one cannot stop before one is certain that the surviving candidates are minimal.
Hence, it is nontrivial to generalize their approach to GPS using even subquadratic queries.

\paragraph{Partition algorithm: complete bipartite graphs.}

Suppose the two parts of the bipartite graph are $A$ and $B$, and suppose the pivot is $b \in B$.
Let $A_{\succ b}$ and $B_{\succ b}$ be the elements in $A$ and $B$ that are greater than $b$, respectively.
We focus the discussion on finding the elements that are larger than $b$, i.e., $A_{\succ b} \cup B_{\succ b}$.
Since the graph is complete bipartite, it is easy to obtain $A_{\succ b}$,
but it is nontrivial to obtain $B_{\succ b}$ since one cannot directly compare any other point in $B$ with the pivot $b$.
A natural idea to deal with this is to find the minimal elements in $A_{\succ b}$ so that one can figure out $B_{\succ b}$ from these elements.
However, finding the minimal element is technically nontrivial
even in the total order setting (whose minimal is unique),
let alone there may be multiple minimal elements in posets.
Indeed, the existing algorithm for the total order setting does not seem to make progress on this simple and fundamental task, and they solved the problem via other indirect methods~\cite{DBLP:conf/soda/AlonBFKNO94,bradford1995matching,DBLP:journals/ipl/AlonBF96,DBLP:journals/siamdm/KomlosMS98}.
We provide a completely new algorithm to find the minimal elements for poset in bipartite complete graphs, which is a technical contribution to the study of sorting and selection for bipartite graphs.

We first devise a \textsc{FindMin} procedure that finds a ``local'' minimal element in $A_{\succ b}$ (the ``local'' is due to the fact that we may make iterative calls and only run the procedure on an induced subgraph).
Then, we apply this \textsc{FindMin} iteratively to both find the minimal of $A_{\succ b}$ and construct $B_{\succ b}$.
In particular, every time we run \textsc{FindMin} to obtain a vertex $a$,
we try to find $B_{\succ a}$ and expand the currently found $B_{\succ b}$,
and remove $a$ from $A$ to continue.
Each iteration takes $\tilde{O}(n)$ queries.
Then using the property and the randomness of \textsc{FindMin}, the new element $a'$ that we find must be smaller, and this also shrinks the distance from $a$ to the pivot by a constant factor with good probability.
Finally, if one takes one chain in a chain decomposition, this entire process would typically run on a vertex from this chain for $O(\log n)$ iterations.
Summing over $k$ chains, the total query time is $\tilde{O}(nk)$ in the typical case.

\paragraph{GPSC.}

Recall that there are no incomparable edges 
in the query graph (which means every edge $(u, v)\in E$ satisfies either $u \prec v$ or $v \prec u$) of the GPSC problem. 
This conceptually simplifies the problem, since this avoids the issue of gaining essentially no information from querying an incomparable edge.
Technically, this allows us to apply techniques/frameworks developed for generalized sorting problem, which crucially relies on the information gain from querying an edge.
Specifically, we use an idea proposed in~\cite{DBLP:conf/focs/HuangKK11} and further developed in~\cite{DBLP:conf/focs/KuszmaulN21},
where one first constructs a prediction graph which ``guesses'' the direction/relation of all edges in the query graph.
Then, an incremental algorithm that iteratively adds a currently ``minimal'' element, i.e., an element $u$ with small number of ``incoming edges'' (which are the edges $(v, u)$ such that $v \prec u$) in the prediction,
is employed to generate the sorting.

To apply this framework to GPSC, especially to achieve a linear dependence in $k$,
we cannot use~\cite{DBLP:conf/focs/HuangKK11,DBLP:conf/focs/KuszmaulN21} in a black-box way,
since we need a stronger predictor
such that it has a bounded number of wrongly predicted edges \emph{everywhere}:
$\forall v$, there are $\tilde{O}(\sqrt{n})$ wrongly predicted edges among all adjacent edges to $v$.
This is stronger than the previously designed predictors~\cite{DBLP:conf/focs/HuangKK11,DBLP:conf/focs/KuszmaulN21}, since~\cite{DBLP:conf/focs/HuangKK11} only guarantees an $\tilde{O}(\sqrt{n})$ absolute error for the in-degree of every vertex (instead of the edge predictions),
and~\cite{DBLP:conf/focs/KuszmaulN21} only guarantees an overall number of wrong edges (instead of our ``everywhere'' guarantee).
We provide such a stronger predictor in \Cref{lem:wrongedge}.

Then, with this predictor, we iteratively maintain a current set $A$ of sorted vertices,
and we show it is possible to identify a key vertex $v \in A$,
whose incoming vertices (with respect to the prediction) can be partitioned into $X_v \subseteq A$ and $Y_v \cap A = \emptyset$, such that
the poset induced by $X_v$ still has width $k$, and that $|Y_v| = O(\sqrt{n})$ (by the stronger guarantee of the predictor).
This, together with the fact that $X_v$ can be decomposed into $k$ chains, implies that one can verify/discover all incoming edges (from the predictor) to $v$ using $O(k \log n + \sqrt{n})$ queries.
In total, this entire iterative process of updating $A$ happens $O(n)$ times, which leads to our final query bound $\tilde{O}(nk + n^{1.5})$.
Notice that our algorithm completely relies on the information in the prediction,
but this still suffices for the correctness by an argument similar to~\cite{DBLP:conf/sosa/LuRSZ21}.

\paragraph{Weighted generalized sorting.}
We start with designing an $O(k \poly\log n)$-competitive algorithm $\mathcal{A}$,
whose input consists of a chain decomposition (of the total order) of size $k$ in addition to the weighted query graph (and the underlying total order), for weighted generalized sorting.
Notice that one can always feed a trivial chain decomposition of size $n$ to  $\mathcal{A}$ and obtain $\tilde{O}(n)$ competitive ratio, which is already nontrivial as we mention in \Cref{sec:results}.
Although the algorithm by~\cite{DBLP:conf/focs/GuptaK01} can also achieve an $O(n)$ ratio for weighted generalized sorting,
it only works for the case when all chains are single nodes (i.e., $k=n$), hence it is not useful for obtaining a sublinear ratio. 

Next, we employ a threshold algorithm to ``combine'' $\mathcal{A}$ with \Cref{thm:gpsc} to obtain a sublinear ratio when the number of distinct weights is bounded.
Suppose the weights are $w_1 <  \ldots < w_\ell$.
We use a threshold parameter $1 \leq \tau \leq \ell$, and define  $G_\tau$ as the subgraph of the query graph with edge weights \emph{at most} $w_\tau$.
We also consider the poset $\mathcal{P}_\tau$ induced by $G_\tau$, and let $k_{\tau}$ denote its width.
We start with running \Cref{thm:gpsc} on $(G_\tau, \mathcal{P}_\tau)$ and ignoring the weight, which takes $\tilde{O}(n k_\tau)$ queries (assuming $k_\tau \geq \sqrt{n}$ in this discussion),
and it generates a chain decomposition of $\mathcal{P}_{\tau}$.
Notice that this chain decomposition of $\mathcal{P}_{\tau}$ is also a chain decomposition of $\mathcal{P}$ since they are supported on the same element set.
Then, we feed this chain decomposition to $\mathcal{A}$, and use the output of $A$ as the result.
The entire algorithm achieves an $\tilde{O}(\frac{n k_\tau \cdot w_\tau}{\opt} + k_\tau)$ ratio, and we can further show this ratio is at most $O(\frac{n w_\tau}{w_{\tau + 1}})$ (assuming that $k_\tau$ is not the dominating factor), which depends on the ``gap'' between two adjacent weights.
The final result can be achieved by fine-tuning of $\tau$ to minimize this ratio: if all weights are of a small gap, then one can view it as the unweighted case and run \Cref{thm:gpsc} directly, and otherwise, we have a significant gap which still allows a sublinear ratio.

\subsection{Related Work}

Parameterization other than the width of the poset which we use was also considered in the literature,
and they are generally not comparable to our results.
In~\cite{DBLP:conf/swat/BanerjeeR16,DBLP:conf/caldam/BiswasJ017},
the GPS is parameterized by the number of missing edges $q = \binom{n}{2} - m$ (where $m$ is the number of edges in the query graph) while there is no restriction on the poset,
and nearly-tight bounds were obtained with respect to $q$.
In a recent work~\cite{DBLP:journals/corr/abs-2205-15912}, the query graph can be general but the poset is assumed to be a tree and is parameterized by the maximum degree $d$, and they also obtained nearly tight query complexity bounds.

In addition to generalized sorting problems, other related problems were also considered.
Examples include noisy sorting/selection~\cite{DBLP:journals/siamcomp/FeigeRPU94,DBLP:conf/soda/BravermanM08,noisy_sorting} and generalized/weighted selection~\cite{DBLP:conf/focs/GuptaK01,DBLP:conf/soda/KannanK03,DBLP:conf/latin/AngelovKM08,DBLP:journals/siamcomp/DaskalakisKMRV11}.     \section{Preliminaries}\label{sec:preliminary}

Throughout, we use $\mathcal{P}=(V,\prec)$ to denote a poset, and we let $k_{\mathcal{P}}$ denote the width of $\mathcal{P}$. 
By Dilworth's Theorem, a poset of width $k$ can be decomposed to $k$ chains, say $\mathcal{C}=\{C_1,C_2,\dots, C_k\}$, where the elements are comparable to each other on each chain. 
Suppose $\mathcal{P} = (V,\prec)$ is the underlying poset of GPS problem, we directly use $k$ to denote $k_\mathcal{P}$. 
For every $X\subseteq V$, let $k_X$ denotes the width of poset $(X,\prec)$. 
For every $X\subseteq V, v\in X$, let $X_{\prec v}, X_{\incom v}, X_{\succ v}$ denote the elements that are smaller than $v$, incomparable with $v$ and larger than $v$ respectively. Recall that $x\incom y$ denotes ``$x$ is incomparable with $y$''. For some set $X$, denote the set of permutations of $X$ by $\perm(X)$. For every set $X$ with a total order $(X,\prec)$ and every $x\in X$, define $\rnk_X(x) = |\{ y\in X \mid y \prec x \}| + 1$ as the rank of $x$. Similarly, let $\rnk_{\mathbf{p}}(x)$ be the rank of $x$ in permutation $\mathbf{p}$. 

For a graph $G = (V, E)$ and a vertex subset $S \subseteq V$, denote $G[S]$ as the induced subgraph of $G$ on $S$, whose vertex set is $S$ and the edge set is $\{ (u, v) \in E : u, v \in S \}$.
Given a directed acyclic graph (DAG) $\vec{G}(V,\vec{E})$,
let $\mathcal{P}(\vec{G})$, which stands for induced poset of $\vec{G}$, be a poset $\mathcal{P}'(V, \prec)$,
such that $\forall u, v \in V$, $u \prec v$ if and only if there is a directed path from $u$ to $v$ in $\vec{G}$.
This implies that $u \not \sim v$ if and only if $u$ cannot reach $v$ and $v$ cannot reach $u$ in $\vec{G}$. 
By Dilworth's Theorem, a DAG can be covered by $k$ paths, and these paths form a path cover of $\vec{G}$, where $k$ is the width of $\mathcal{P}(\vec{G})$, i.e., every vertex is contained in at least one path (and may be contained in multiple paths).

\subsection{Models}
\label{sec:models}
\paragraph{Generalized poset sorting (GPS).}
Formally, in the GPS problem, we are given an $n$-element underlying (unknown) poset $\mathcal{P}=(V,\prec)$ and a graph $G = (V, E)$.
An oracle receives queries of the form $(u, v) \in E$, and returns the relation of $u, v$ in $\mathcal{P}$.
The goal of GPS is to use the minimum number of queries to fully recover $\mathcal{P}$, i.e., $\forall u, v \in V$, correctly determine the relation of $u, v$ in $\mathcal{P}$.

\paragraph{Model of query graphs in GPS.}
To make sure the problem is well-defined, e.g., $G$ has sufficient edges to recover $\mathcal{P}$,
we need to add some further constraints on the query graph $G$.
Specifically, we enforce the following:
let $\vec{E} := \{ (u, v) \in E : u \prec v \}$ and $\vec{G} = (V, \vec{E})$ (noting that $\vec{G}$ is defined with respect to both $G$ and $\mathcal{P}$)
    then
    \begin{equation}
        \label{eqn:p_induce}
        \mathcal{P} = \mathcal{P}(\vec{G}).
    \end{equation}
    This is well-defined if $G$ is deterministic (for instance $G$ is a complete bipartite graph),
    but for stochastic case enforcing this directly may cause randomness issues.
    Hence, we discuss how we define the \ER stochastic query graph in more detail in the following.

\paragraph{Model of query graphs in GPS: \ER stochastic case.}
    Let $G(n, p)$ denote the \ER random graph with $n$ vertices and probability parameter $0 \leq p \leq 1$.
    Specifically, this $G(n, p)$ is generated by independently adding an undirected edge $(u, v)$ with probability $p$ for every vertex pair $u \neq v$.
Clearly, this random graph is unlikely to be able to uniquely identify $\mathcal{P}$.
Hence, we still wish to enforce the property stated in \eqref{eqn:p_induce}.
Specifically,
we need to add to the \ER graph a \emph{minimal} DAG $\uGbase$ which is a ``base graph''.
Here, we say a DAG $\vec{G}$ is minimal if there is no redundant edge in $\vec{G}$, where we call an edge $u\rightarrow v$ redundant if we have $\exists x \notin \{u, v\}$, $u\rightarrow x$ and $x \rightarrow v$.
Formally, we have the following definition, and it indeed satisfies \eqref{eqn:p_induce} (stated in \Cref{fact:p_induce_er}).
\begin{definition}
    \label{def:er}
    Fix some minimal DAG $\Gbase$ such that $\mathcal{P} = \mathcal{P}(\Gbase)$,
    denoting its underlying undirected graph as $\uGbase$,
    the \ER stochastic query graph $G$ is a union of $\uGbase$ and $G(n,p)$.
\end{definition}
\begin{fact}
    \label{fact:p_induce_er}
    The random query graph $G = (V, E)$ defined in \Cref{def:er} satisfies \eqref{eqn:p_induce} with probability $1$, namely, $\Pr[ \mathcal{P} = \mathcal{P}(\vec{G}) ] = 1$ where $\vec{G} := \{ (u, v) \in E \mid u \prec v \}$.
\end{fact}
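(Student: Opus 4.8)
The plan is to show that the conclusion $\mathcal{P} = \mathcal{P}(\vec{G})$ in fact holds for \emph{every} realization of the \ER graph $G(n,p)$, not merely almost surely, from which the probability-$1$ statement is immediate. The two facts I would combine are: (i) $\vec{G}$ contains the base DAG $\Gbase$ as a directed subgraph, and (ii) every edge of $\vec{G}$ is oriented consistently with $\prec$.

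For (i): since $\mathcal{P} = \mathcal{P}(\Gbase)$, every edge $u \to v$ of $\Gbase$ is a directed path of length one, so $u \prec v$. Because $\uGbase \subseteq G$ by \Cref{def:er}, the undirected edge $(u,v)$ lies in $E$, and since $u \prec v$ it lies in $\vec{E} = \{(u,v)\in E : u \prec v\}$, oriented as $u \to v$; thus $\Ebase \subseteq \vec{E}$, i.e.\ $\Gbase$ is a subgraph of $\vec{G}$. For (ii): directly by definition of $\vec{E}$, each edge $u \to v$ of $\vec{G}$ satisfies $u \prec v$. In particular $\vec{G}$ has no directed cycle (a cycle would yield $u \prec u$), so $\vec{G}$ is a DAG and $\mathcal{P}(\vec{G})$ is well defined; moreover, chaining edges along any directed $u$-to-$v$ path in $\vec{G}$ and using transitivity of $\prec$ shows that $u$ reaching $v$ in $\vec{G}$ implies $u \prec v$.

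It then remains to verify the two inclusions between the relations of $\mathcal{P}$ and $\mathcal{P}(\vec{G})$. If $u \prec v$ in $\mathcal{P}$, then $\mathcal{P} = \mathcal{P}(\Gbase)$ yields a directed $u$-to-$v$ path in $\Gbase$, which by (i) is also a path in $\vec{G}$, so $u \prec v$ in $\mathcal{P}(\vec{G})$. Conversely, if $u \prec v$ in $\mathcal{P}(\vec{G})$, then there is a $u$-to-$v$ path in $\vec{G}$, and by (ii) this forces $u \prec v$ in $\mathcal{P}$. Hence the $\prec$-relations of the two posets coincide, and since both are posets on the same ground set $V$, the incomparable pairs coincide as well, giving $\mathcal{P}(\vec{G}) = \mathcal{P}$.

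Since none of this uses any property of the sampled edges of $G(n,p)$ beyond ``$G$ is a supergraph of $\uGbase$ and the extra edges are oriented by $\prec$ in $\vec{G}$'', the equality holds pointwise over the probability space, so $\Pr[\mathcal{P} = \mathcal{P}(\vec{G})] = 1$. There is essentially no obstacle here; the only point that deserves a line of care is checking that $\vec{G}$ is acyclic so that $\mathcal{P}(\vec{G})$ is even defined, which follows because all of its edges respect the strict partial order $\prec$. Conceptually, adding the i.i.d.\ edges can only enlarge the reachability relation of $\vec{G}$, and since every added edge already respects the transitively closed relation $\prec$, no spurious comparability is ever created, which is why the statement is deterministic.
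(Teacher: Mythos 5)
Your proof is correct; the paper states this as a \emph{Fact} without supplying a proof, and your argument is exactly the routine verification one would write down: $\Ebase\subseteq\vec{E}$ gives one inclusion of the reachability relations, orientation-consistency of $\vec{E}$ with $\prec$ plus transitivity gives the other, and nothing depends on which random edges of $G(n,p)$ are present, so the equality is pointwise rather than merely almost sure. The one step worth flagging explicitly — that $\vec{G}$ is acyclic so $\mathcal{P}(\vec{G})$ is well defined — you handle correctly.
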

Indeed, this definition can be viewed as a generalization of the stochastic setting in the (total order) generalized sorting~\cite{DBLP:conf/focs/HuangKK11,DBLP:conf/focs/KuszmaulN21}, 
here they use a directed Hamiltonian Path as a base graph (which is the only minimal choice in the total order setting).

\begin{comment}
\paragraph{Linear extension.}
As in \Cref{thm:linearexttoposet}, it suffices to construct a \emph{linear extension} in order to recover $\mathcal{P}$.
A linear extension of the post is an ordered sequence $(v_1,v_2,\dots,v_n)$ where we cannot have $u_i \prec u_j$ if $i > j$. Remark that it is also a topological order of $\vec{G}$.  

\paragraph{Partition with respect to a pivot.}
As in \Cref{thm:PartToLE}, the problem of constructing linear extension further reduces to 
generating with respect to a given pivot point $\pivot \in V$ a $3$-partition of $V$
that corresponds to the elements that are smaller, larger and incomparable to $V$, respectively.
Formally speaking, we define 
\begin{align*}
V_{\succ \pivot} = \left\{ v \in V \mid \ v \succ  \pivot\right\}, \quad 
V_{\prec \pivot} = \left\{ v \in V \mid \ v \prec  \pivot\right\}, \quad 
V_{\nsim\pivot} = \left\{ v \in V \mid \ v \nsim \pivot\right\}.
\end{align*}
\end{comment}

\paragraph{Generalized poset sorting with comparable edges (GPSC).}
In this model, all edges in $G$ are comparable edges.
Specifically, when an edge in $E$ is queried, the answer will only be $v\prec u$ or $u \prec v$, corresponding to $\mathcal{P}$. We remark that when $\mathcal{P}$ is a total order set, then the model draws back to the generalized sorting model, so GPSC is already a generalization of generalized sorting.

\paragraph{Weighted generalized (total order) sorting.}
In this model, the poset is total order, and the query graph is weighted by a weight function $w : E \to \mathbb{R}_{\geq 0}$.
We aim to minimize the sum of costs we pay to solve the GPS under this setting.
We evaluate our algorithms by the competitive ratio, which is the maximum ratio taken over all possible inputs, measured by the cost of the algorithm, denoted as $\alg$, divided by $\opt$ which is the sum of costs $\sum_{i}w(v_i, v_{i+1})$ where $(v_1, \ldots, v_n)$ is the total order defined by $\mathcal{P}$.

     \section{A General Framework for Generalized Poset Sorting}
\label{sec:framework}

In this section, we present our framework for GPS.
As mentioned, this framework consists of two steps: it first reduces GPS to finding a linear extension, and eventually reducing the task of finding linear extension to constructing a partition oracle.
We start with formally define the mentioned linear extension problem and the partition problem.
We establish two sets of technical lemmas that relate GPS with
linear extension \Cref{thm:linearexttoposet} and the partition problem \Cref{lem:part2LE-main,lem:part2LE-main2}, respectively.

\paragraph{Linear extension problem.} In linear extension problem, 
there is an underlying poset $\mathcal{P} = (V,\prec)$ and query graph $G=(V,E)$.
The algorithm receives $G$ as input and has access to an oracle, which accepts queries $u,v$ such that $(u,v)\in E$ and answers the relation of $u,v$ in $\mathcal{P}$. The algorithm needs to compute a linear extension of $\mathcal{P}$ by making queries as few as possible. We call $p_1,p_2,\ldots ,p_n \in \perm(V)$ a linear extension of $\mathcal{P}$ if and only if for every $1\leq i<j\leq n$, $p_i\nsucc p_j$ (i.e. either $p_i \prec p_j$ or $p_i\incom p_j$). 

\begin{lemma}[Linear Extension to Poset]
\label{thm:linearexttoposet}
There exists an algorithm that given a linear extension of the underlying width-$k$ poset $\mathcal{P}$ solves GPS in $\tilde{O}(nk)$ queries.
\end{lemma}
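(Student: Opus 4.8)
The plan is to use the given linear extension $p_1, p_2, \ldots, p_n$ as a backbone and then recover, for each $p_j$, the set of elements among $p_1, \ldots, p_{j-1}$ that are actually comparable to it (everything else among the earlier elements must be incomparable, since if $p_j \prec p_i$ for $i < j$ we would contradict the linear extension property). Concretely, I would process the elements in the order $p_1, \ldots, p_n$, maintaining the partial order restricted to $\{p_1, \ldots, p_{j-1}\}$, and when inserting $p_j$ I want to determine $\{p_i : i < j, \ p_i \prec p_j\}$ (the down-set of $p_j$ in the prefix). The key structural fact is Dilworth: the prefix poset $(\{p_1,\ldots,p_{j-1}\}, \prec)$ has width at most $k$, so it decomposes into at most $k$ chains. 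If I maintain such a chain decomposition incrementally, then finding the down-set of $p_j$ amounts to doing, on each chain $C$, a binary search for the largest element of $C$ that is $\prec p_j$: every element of $C$ below that point is $\prec p_j$ (by transitivity within the chain), and everything above is not (again by transitivity, since $C$ is a chain and the linear extension forbids $p_j \prec$ an earlier element). That is $O(\log n)$ queries per chain, $O(k \log n)$ per insertion, hence $\tilde{O}(nk)$ total, provided the binary searches are on edges of $G$.

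The subtlety I need to address is that the binary search on a chain $C$ must query pairs $(x, p_j)$ with $x \in C$, and such pairs need not be edges of $G$. This is exactly the obstacle the paper flags in the introduction (``the width $k$ can no longer upper bound the size of the path cover only by using edges in an induced subgraph of a general $G$''). The fix is to recall the model constraint \eqref{eqn:p_induce}: $\mathcal{P} = \mathcal{P}(\vec{G})$, i.e. $x \prec p_j$ iff there is a directed path from $x$ to $p_j$ in $\vec{G}$. So rather than a direct comparison, determining whether $x \prec p_j$ is a reachability question in the (partially revealed) DAG. I would therefore not literally binary-search with single queries, but instead exploit that once we know the relation of $p_j$ to one representative per chain, transitivity and the already-known prefix poset propagate the rest for free --- the queries we actually spend are on $G$-edges incident to $p_j$, and the chain structure guarantees only $O(k)$ ``new'' comparisons worth of information are needed. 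An alternative and perhaps cleaner route: observe that we also must decide, for the $j$-th step, which earlier elements $p_j$ is \emph{greater} than; but we don't query that directly either --- we instead query enough $G$-edges at $p_j$ to place $p_j$ into the chain decomposition, and update. I expect the honest accounting here to be the main work: one must argue that maintaining a size-$\le k$ chain decomposition of the prefix and inserting $p_j$ into it costs only $\tilde{O}(k)$ queries, using that the new element's position in the linear extension restricts where it can attach.

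The cleanest way to make this rigorous is probably the following two-phase argument. \textbf{Phase 1:} reveal, for every $j$ from $1$ to $n$ and every chain $C$ in the current decomposition of $\{p_1,\ldots,p_{j-1}\}$, the relation between $p_j$ and the maximum element of $C$ that is a candidate (using that $G$ restricted to comparable pairs together with the base-graph structure of \eqref{eqn:p_induce} lets us resolve reachability with $O(\log n)$ queries along a chain of $\vec{G}$-edges); this identifies the down-set of $p_j$. \textbf{Phase 2:} having the down-set, extend the chain decomposition to include $p_j$ (append $p_j$ to some chain whose current top is $\prec p_j$, or start a new chain --- and the width bound ensures we never exceed $k$ chains over the whole run, by a standard exchange/augmenting-path argument à la Dilworth's algorithmic proof). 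Summing, $n$ insertions times $O(k \log n)$ queries each gives $\tilde{O}(nk)$. Finally, correctness of the recovered poset: for $i < j$, $p_i \prec p_j$ iff $p_i$ is in the computed down-set of $p_j$; $p_i \succ p_j$ never happens; and otherwise $p_i \incom p_j$ --- so all $\binom{n}{2}$ relations are determined. The main obstacle, to restate, is Phase 1: turning ``$k$ chains'' into ``$O(k\log n)$ queries'' despite $G$ not containing all the needed comparison edges, which forces us to phrase comparisons as reachability in $\vec{G}$ and lean on the model assumption \eqref{eqn:p_induce}.
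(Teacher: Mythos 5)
You have the paper's algorithm in outline --- insert elements in linear-extension order, keep a width-$\le k$ chain decomposition of the prefix, one binary search per chain, $O(nk\log n)$ total --- and you correctly locate the one nontrivial point. But you then leave that point open: your ``Phase 1'' never specifies how a binary-search probe is implemented when the pair it wants to compare is not an edge of $G$, and the two fixes you sketch do not work as stated. Knowing $p_j$'s relation to ``one representative per chain'' does not propagate to the rest of the chain by transitivity (you need the threshold position, not a single sample), and ``resolving reachability with $O(\log n)$ queries along a chain of $\vec{G}$-edges'' is not a mechanism --- reachability to $p_j$ goes through edges incident to $p_j$, which is exactly what you are trying to learn. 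Since you yourself flag this as ``the main work'' and do not carry it out, the proposal as written has a genuine gap at its central step.

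The actual resolution is simpler than what you are reaching for and has two ingredients. First, the algorithm only ever needs to \emph{orient the edges of $G$}, not to query arbitrary relations: once every edge of $G[X_j]$ is oriented we know $\vec{G}[X_j]$, and \Cref{lem:LE-prefix} (a consequence of the linear-extension property: any $\vec{G}$-path between two prefix elements stays inside the prefix) gives $\mathcal{P}(\vec{G}[X_j])=(X_j,\prec)$, so $p_j$'s relation to every non-neighbor is recovered by reachability with zero further queries. In particular the prefix width is $\le k$ and a size-$\le k$ path cover of $\vec{G}[X_{j-1}]$ can be recomputed for free each round (no augmenting-path maintenance needed). Second, the binary search on a chain $C$ is performed over the subsequence $C\cap N_G(p_j)$ of chain elements that are $G$-adjacent to $p_j$: transitivity makes $\{x\in C: x\prec p_j\}$ downward closed and $\{x\in C: x\succ p_j\}$ upward closed in $C$, hence also in any subsequence, so binary search over $C\cap N_G(p_j)$ locates the thresholds with $O(\log n)$ probes, each of which is a legal edge query, and this orients every $G$-edge between $C$ and $p_j$. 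With these two observations your outline closes into the paper's proof.
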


\Cref{thm:linearexttoposet} shows that given any linear extension of $\mathcal{P}$, we can solve $\mathcal{P}$ using $\tilde{O}(nk)$ queries. Specifically, our algorithm maintains a vertex set $X$ such that all directions of edges in $G[X]$ is determined. Initially, $X$ is empty. The vertices are added to $X$ by their order in the linear extension. When vertex $v$ is added to $X$, our algorithm needs to determine all directions of edges between $X$ and $v$. In the proof of \Cref{thm:linearexttoposet}, we show that $X$ can always be decomposed into at most $k$ paths. For every vertex $v$ and every path $P=p_1\prec p_2\prec \ldots \prec p_s$, there exists $a,b$ such that
\begin{itemize}
    \item For every $1\leq i\leq a$, $p_i\prec v$. 
    \item For every $a<i<b$, $p_i\incom v$. 
    \item For every $b\leq i\leq s$, $p_i\succ v$. 
\end{itemize}

$a,b$ can be solved by binary search. Hence the directions of edges between $X$ and $v$ can be determined in $O(k\log n)$ probes. Our algorithm is shown as \Cref{alg:LEtoPoset}. 

\begin{algorithm}
\caption{Constructing GPS using Linear Extension}
\label{alg:LEtoPoset}
\begin{algorithmic}[1]
\Procedure{GPS}{$p$}
\State input: $p$, a linear extension of $\mathcal{P}$
\For{$i\gets 1,2,\ldots ,n-1$}
    \State let $X_i$ be $\{p_1,p_2,\ldots ,p_i\}$
    \State compute a path cover of induced subgraph $\vec{G}[X_i]$, denoted by $P_1,P_2,\ldots ,P_{w_i}$ ($w_i$ denotes the width of poset $\mathcal{P}(\vec{G}[X_i])$)
    \State determine the directions of edges between $X_i$ and $p_{i+1}$ by applying binary search between $P_j$ and $p_{i+1}$ for every $1\leq j\leq w_i$
\EndFor
\EndProcedure
\end{algorithmic}
\end{algorithm}

\begin{lemma}\label{lem:LE-prefix}
    For every linear extension $p_1,p_2,\ldots ,p_n$ of $\mathcal{P}$ and every $1\leq i\leq n$, let $X_i=\{p_1,p_2,\ldots ,p_i\}$,
    $\mathcal{P}(\vec{G}[X_i])=(X_i,\prec)$. 
\end{lemma}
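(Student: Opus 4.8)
\textbf{Plan.} The statement to prove is that for a linear extension $p_1,\dots,p_n$ of $\mathcal{P}$ and $X_i=\{p_1,\dots,p_i\}$, the induced poset $\mathcal{P}(\vec{G}[X_i])$ equals the restriction $(X_i,\prec)$ of $\mathcal{P}$ to $X_i$. Here $\vec{G}[X_i]$ is the induced subgraph of $\vec{G}=(V,\vec{E})$ on $X_i$, and recall $\vec{G}$ satisfies $\mathcal{P}=\mathcal{P}(\vec{G})$ by \eqref{eqn:p_induce}. One inclusion is immediate: if $u,v\in X_i$ and there is a directed path from $u$ to $v$ in $\vec{G}[X_i]$, then this is also a path in $\vec{G}$, so $u\prec v$ in $\mathcal{P}$ by $\mathcal{P}=\mathcal{P}(\vec{G})$. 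The content is the reverse direction: if $u,v\in X_i$ and $u\prec v$ in $\mathcal{P}$, then there is a directed path from $u$ to $v$ that stays entirely inside $X_i$.

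\textbf{Key steps.} First I would fix $u,v\in X_i$ with $u\prec v$. Since $\mathcal{P}=\mathcal{P}(\vec{G})$, there is a directed path $u=x_0\to x_1\to\cdots\to x_\ell=v$ in $\vec{G}$. Each edge $x_{j}\to x_{j+1}\in\vec{E}$ means $x_j\prec x_{j+1}$ in $\mathcal{P}$, so by transitivity $u\preceq x_j\preceq v$ for every $j$ along the path; in particular $u\prec x_j$ for $j\ge 1$ and $x_j\prec v$ for $j\le \ell-1$, and trivially $x_j\preceq v$, $u\preceq x_j$ for all $j$. Now I use the defining property of a linear extension: for every $1\le a<b\le n$ we have $p_a\nsucc p_b$, equivalently, if $x\prec y$ in $\mathcal{P}$ then $x$ appears strictly before $y$ in the sequence $p_1,\dots,p_n$, i.e.\ $\rnk_{\mathbf p}(x)<\rnk_{\mathbf p}(y)$. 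Applying this to $u\preceq x_j$ and $x_j\preceq v$ gives $\rnk_{\mathbf p}(u)\le \rnk_{\mathbf p}(x_j)\le \rnk_{\mathbf p}(v)$ for every intermediate vertex $x_j$. Since $u,v\in X_i$ means $\rnk_{\mathbf p}(u)\le i$ and $\rnk_{\mathbf p}(v)\le i$, we conclude $\rnk_{\mathbf p}(x_j)\le i$, hence $x_j\in X_i$ for all $j$. Therefore the whole path lies in $\vec{G}[X_i]$, so $u\prec v$ in $\mathcal{P}(\vec{G}[X_i])$. The same rank argument shows incomparability is preserved: if $u\nsim v$ in $\mathcal{P}$, then neither can reach the other in $\vec{G}$, hence a fortiori neither can reach the other in the subgraph $\vec{G}[X_i]$, so $u\nsim v$ in $\mathcal{P}(\vec{G}[X_i])$ as well. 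Combining the two inclusions gives $\mathcal{P}(\vec{G}[X_i])=(X_i,\prec)$.

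\textbf{Main obstacle.} There is no deep obstacle here; the lemma is essentially a bookkeeping fact. The one point that needs care is making sure that \emph{every} intermediate vertex on the path, not just the endpoints, stays inside $X_i$ — this is exactly where the linear-extension property (monotonicity of rank along $\prec$) is used, and it is the reason the statement is true for prefixes of a linear extension rather than for arbitrary subsets. I would also double-check the edge case $i=n$ (trivial, $X_n=V$) and note that the argument does not require $\vec G$ to be minimal, only that $\mathcal{P}=\mathcal{P}(\vec G)$, which holds by \eqref{eqn:p_induce} (and by \Cref{fact:p_induce_er} in the \ER case). This lemma then feeds directly into \Cref{thm:linearexttoposet}: it guarantees that $\vec G[X_i]$ has width exactly $w_i\le k$, so the path cover computed in \Cref{alg:LEtoPoset} has at most $k$ paths and the binary searches cost $O(k\log n)$ per insertion.
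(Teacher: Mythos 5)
Your proposal is correct and follows essentially the same route as the paper's proof: one direction is immediate from $\mathcal{P}=\mathcal{P}(\vec{G})$, and for the other you take a directed path from $u$ to $v$ in $\vec{G}$ and use the linear-extension property (each intermediate vertex precedes $v$ in the ordering, hence has rank at most $i$) to conclude the whole path lies in $X_i$. The paper phrases this last step as a contradiction rather than via ranks, but the argument is identical.
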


\begin{proof}
    To prove $\mathcal{P}(\vec{G}[X_i])=(X_i,\prec)$, we show that for every $u,v\in X_i$, $u\prec v$ if and only if $u$ can reach $v$ in $\vec{G}[X_i]$. 
    \begin{itemize}
        \item If $u$ can reach $v$ in $\vec{G}[X_i]$, then $u$ can also reach $v$ in $\vec{G}$, which implies $u\prec v$. 
        \item If $u\prec v$, then there exists a path $u\to q_1 \to q_2 \to \ldots \to q_\ell \to v$ in $\vec G$. For every $1\leq i\leq \ell$, suppose $q_i \notin X_i$, let $q_i = x_a, v=x_b$, we have $x_a\prec x_b$ and $b \leq i<a$, which contradicts the definition of linear extension. Hence we have $q_1,q_2,\ldots ,q_\ell \in X_i$, which implies $u$ can reach $v$ in $\vec{G}[X_i]$. 
    \end{itemize}
\end{proof}

\begin{proof}[Proof of \Cref{thm:linearexttoposet}]
By \Cref{lem:LE-prefix}, for every $1\leq i<n$, $\mathcal{P}(\vec{G}[X_i])=(X_i,\prec)$. The width of poset $(X_i,\prec)$ is no more than $k$ since $X_i\subseteq V$. 
 Hence, for every $i$, all directions of edges between $p_1,p_2,\ldots ,p_i$ and $p_{i+1}$ can be determined by applying at most $k$ binary searches on the path cover of $\vec{G}[X_i]$.
Clearly, this entire process takes $O(nk\log n)$ queries in total, which finishes the proof. 
\end{proof}

\paragraph{Partition problem.} The partition algorithm is defined on an underlying DAG $\vec{G} = (V,\vec{E})$ and a query graph $G=(V,E)$. Notice that partition algorithm is a pure graph problem (there is no poset in the problem definition). The algorithm receives $G$ and vertex $p\in V$ as input and has access to an oracle, which accepts queries $u,v$ such that $(u,v)\in E$ and answers the relation of $u,v$ in $\vec{G}$. There are three possible relations, $u$ can reach $v$, $v$ can reach $u$, neither $u$ nor $v$ can reach each other. 
The algorithm needs to compute $\Lset{V}{p}, \Mset{V}{p}, \Rset{V}{p}$ by making queries as few as possible, where
\begin{itemize}
    \item $\Lset{V}{p}=\{u\in V\mid u\neq p,u\text{ can reach }p\}$
    \item $\Rset{V}{p}=\{u\in V\mid u\neq p,p\text{ can reach }u\}$
    \item $\Mset{V}{p}=\{u\in V\mid u\neq p,\text{neither $u$ nor $p$ can reach each other}\}$
\end{itemize} 

The following two lemmas reduce the problem of finding linear extensions to finding a partition of the elements with respect to a given pivot.
These two versions of lemmas are essentially the same, except that one allows the partition oracle to make mistakes and with a randomized query complexity (but needs to succeed with high probability),
and the other requires the correctness (with probability $1$) and a good query complexity in expectation.
We need these two since we find it is not trivial to convert one to the other,
and our downstream algorithms may need both of them.

\begin{lemma}\label{lem:part2LE-main}
    If for every $X\subseteq V$ and $p\in X$, 
    $\algPart(G[X], p)$ correctly outputs $\Lset{X}{p},\Mset{X}{p},\Rset{X}{p}$ within $O(|X|(k_{X_{\prec v}}+k_{X_{\succ v}})f(n,k))$ queries with probability $1-\varepsilon$ ($f$ is some function of $n,k$),
    then $\algLE(V)$ outputs a linear extension of $\mathcal{P}$ in $O(nkf(n,k)\log^2 n)$ queries with probability of at least $1-n\varepsilon - 2n^{-6}$. 
\end{lemma}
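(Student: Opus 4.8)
The plan is to analyze the quicksort-like algorithm $\algLE$ that recursively picks a random pivot, partitions the vertex set into the ``smaller'', ``incomparable'', and ``larger'' parts via $\algPart$, recurses on each part, and concatenates the results in the order smaller–incomparable–larger. Correctness follows easily once we know the partition oracle is correct on the relevant induced subgraphs: the concatenation of linear extensions of the three parts is a linear extension of the whole poset, since no element of $\Lset{X}{p}$ can be above an element of $\Mset{X}{p}$ or $\Rset{X}{p}$, and likewise for the incomparable part relative to the larger part (any violating comparability would contradict the definition of the three sets). By a union bound over the at most $O(n)$ recursive calls, all invocations of $\algPart$ are correct with probability at least $1 - n\varepsilon$, and conditioned on that the output is a valid linear extension; the extra $2n^{-6}$ slack is reserved for the query-complexity bound, which I describe next.

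The heart of the argument is bounding the total number of queries by $O(nkf(n,k)\log^2 n)$ with high probability. I would set up the recursion tree $T$ whose nodes are the recursive calls, each labeled by a vertex subset $X$ and a pivot $p$ chosen uniformly at random from $X$. The cost charged at a node is $O(|X|(k_{X_{\prec p}} + k_{X_{\succ p}})f(n,k))$. The first ingredient is the standard randomized-quicksort depth bound: with probability $1 - n^{-6}$ (say), every root-to-leaf path in $T$ has length $O(\log n)$ as far as size-halving is concerned — more precisely, along any such path the subset size drops by a constant factor every $O(\log n)$ steps, so each vertex $v$ appears in $O(\log n)$ distinct subsets before being isolated. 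This already gives an $O(nf(n,k)\log n)$-type bound if we could replace $k_{X_{\prec p}} + k_{X_{\succ p}}$ by the global $k$, but that would cost an extra factor of $k$, which is exactly what the refined bound is designed to avoid. The second, crucial ingredient is the observation highlighted in the overview: when the pivot $p$ has a large ``comparable width'' $k_{X_{\prec p}} + k_{X_{\succ p}}$, the incomparable part $\Mset{X}{p}$ shrinks in a way that controls future costs. Concretely, I would argue that for a fixed vertex $v$, as we follow the chain of subsets $X_0 \supseteq X_1 \supseteq \cdots$ containing $v$, the quantity $k_{(X_j)_{\prec p_j}} + k_{(X_j)_{\succ p_j}}$ — summed over the $O(\log n)$ steps — is $O(k \log n)$ in expectation (and w.h.p. after a Chernoff/Azuma-type concentration over the $O(n)$ vertices), because the randomly chosen pivot, restricted to a single chain of a size-$k$ chain decomposition, behaves like a random element of that chain and hence contributes a geometric-type telescoping sum bounded by $k$. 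Summing the per-vertex contribution $O(k\log n)\cdot O(\log n) \cdot f(n,k)$ over all $n$ vertices yields the claimed $O(nkf(n,k)\log^2 n)$.

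To make the pivot argument precise I would fix a chain decomposition $\mathcal{C} = \{C_1, \dots, C_k\}$ of $\mathcal{P}$ and, for each call with subset $X$ and pivot $p$, bound $k_{X_{\prec p}} + k_{X_{\succ p}}$ by counting, chain by chain, how many chains $C_i$ have an element in $X_{\prec p}$ or in $X_{\succ p}$; this is at most $2k$ trivially, but in the recursion a chain $C_i$ can only keep contributing to the ``comparable'' side of $v$'s subproblems for $O(\log n)$ rounds in expectation before the pivot lands inside $X \cap C_i$ and splits it, after which $C_i$ is cut. Thus across $v$'s entire root-to-leaf path, each of the $k$ chains is ``charged'' $O(\log n)$ times in expectation, giving $O(k\log n)$ total comparable-width charge per vertex; then the $O(\log n)$ factor from the number of subsets $v$ lives in and the $f(n,k)$ factor give the final bound after a union/concentration bound over vertices (the $2n^{-6}$ term). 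The main obstacle I anticipate is making this ``each chain is charged $O(\log n)$ times'' accounting rigorous simultaneously with the size-halving recursion — the two sources of $\log n$ (depth of recursion and chain-splitting rate) interact, and one must be careful that the randomness of the pivot at each node can be reused for both the size-halving and the chain-splitting arguments (they are in fact consequences of the same uniform choice), and that conditioning on the high-probability depth event does not spoil the expectation bound on the summed comparable widths. I would handle this by fixing a single vertex $v$ and a single chain $C_i$, analyzing the indicator that step $j$ of $v$'s path ``charges'' $C_i$, showing its conditional probability of continuing is bounded below by a constant given the history, hence a geometric tail, and then summing over $i$, $j$, and $v$ with a final Chernoff bound for concentration.
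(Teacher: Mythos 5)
Your high-level plan matches the paper's: analyze the quicksort-like recursion, charge the cost $|X|(k_{X_{\prec p}}+k_{X_{\succ p}})f$ to the vertices in $X$, fix a chain decomposition, and bound the per-vertex charge by following $v$'s root-to-leaf path in the recursion tree. But several of your claimed steps are either missing a key subtlety or are stated more strongly than what can actually be proved, and the hard part of the argument — which you flag yourself — is left as a heuristic rather than a proof.

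First, the depth bound as you state it is false. You claim ``each vertex $v$ appears in $O(\log n)$ distinct subsets before being isolated,'' but a pivot $p$ that is \emph{incomparable} to $v$ sends $v$ into $X_{\incom p}$, and nothing forces $|X_{\incom p}|$ to shrink by a constant factor. The paper therefore separates the ancestors of $v$ into comparable ancestors $\CAnc(v)$ (for which the size-halving/quicksort argument does apply, giving $|\CAnc(v)|=O(\log n)$ w.h.p.) and incomparable ancestors $\IAnc(v)$, which form an antichain and hence satisfy only $|\IAnc(v)|<k$. So $v$ lives in up to $O(k+\log n)$ subsets, not $O(\log n)$, and without the CAnc/IAnc split the whole accounting is off by a factor of $k$ in the worst case.

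Second, your central claim — that $\sum_{p \in \Anc(v)} (k_{X(p)_{\prec}}+k_{X(p)_{\succ}}) = O(k\log n)$ in expectation because ``each chain is charged $O(\log n)$ times'' — is both unproved and apparently stronger than what the paper establishes. The paper proves only $O(k\log^2 n)$ for this per-vertex sum, via a rather delicate argument: it characterizes exactly which permutations of $\Anc(v)$ can occur as the root-to-$v$ path (\cref{lem:pathperm-form}), observes that the comparable ancestors appear in a rigidly constrained order and partition the incomparable ancestors into $O(\log n)$ ``slots,'' proves that within each slot the incomparable ancestors are uniformly permuted (\cref{lem:perm-prob}), and then applies a max-record-counting bound (\cref{lem:sum-max}) chain by chain within each slot. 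Two logs arise unavoidably: one from the $O(\log n)$ slots, one from the $O(\log n)$ record-breaks per chain per slot. Your ``each chain cut after $O(\log n)$ rounds'' intuition does not engage with the slot structure at all and, as written, it is not even clear the conditional randomness you would need for a martingale/geometric argument is available, since the position of $v$'s incomparable ancestors in the path is strongly correlated with the comparable ones. You also then multiply your (alleged) $O(k\log n)$ per-vertex bound by an additional $O(\log n)$ that has no source in your own accounting — the number of ancestors is already integrated into that sum — which happens to give the right final exponent but by incorrect arithmetic.

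Finally, you wave at correctness with ``once we know the partition oracle is correct on the relevant induced subgraphs,'' but that is not automatic: the oracle $\algPart(G[X],p)$ computes reachability in $\vec G[X]$, and a priori $u\prec v$ need not imply $u$ reaches $v$ in the \emph{induced} subgraph $\vec G[X]$, since witnessing paths may leave $X$. The paper proves (\cref{lem:induced-subgraph-issue}) that for every node $X(p)$ of the recursion tree, $\mathcal{P}(\vec G[X(p)]) = (X(p),\prec)$, using the ternary-search-tree structure to show all intermediate vertices on a $\Gbase$-path between two members of $X(p)$ stay inside $X(p)$. This structural lemma is what makes the recursion well-posed, and it is not addressed in your proposal.
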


\begin{lemma}\label{lem:part2LE-main2}
    If for every $X\subseteq V$ and $p\in X$, 
    $\algPart(G[X], p)$ always correctly outputs $\Lset{X}{p},\Mset{X}{p},\Rset{X}{p}$ and uses $O(|X|(k_{X_{\prec v}}+k_{X_{\succ v}})f(n,k))$ queries in expectation,
    then $\algLE(V)$ outputs a linear extension of $\mathcal{P}$ in $O(nkf(n,k)\log^2 n)$ queries in expectation. 
\end{lemma}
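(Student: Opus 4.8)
The plan is to define $\algLE$ as the natural quicksort-style procedure and then establish correctness by induction and the query bound by a refined accounting. Concretely, $\algLE(V)$ picks a uniformly random pivot $p\in V$, invokes $\algPart(G,p)$ to obtain $\Lset{V}{p},\Mset{V}{p},\Rset{V}{p}$ (which, since $\mathcal{P}=\mathcal{P}(\vec{G})$, coincide with $V_{\prec p},V_{\incom p},V_{\succ p}$), recursively computes linear extensions $L_{\prec},L_{\incom},L_{\succ}$ of $G[V_{\prec p}],G[V_{\incom p}],G[V_{\succ p}]$, and returns the concatenation $L_{\prec}\cdot(p)\cdot L_{\incom}\cdot L_{\succ}$. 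The recursion is legitimate because of a ``convexity'' fact generalizing \Cref{lem:LE-prefix}: for each of the three parts $X\in\{V_{\prec p},V_{\incom p},V_{\succ p}\}$, any $\vec{G}$-path between two vertices of $X$ uses only vertices of $X$, so $\mathcal{P}(\vec{G}[X])=(X,\prec)$ and each recursive instance is again a valid GPS/partition instance on which $\algPart$ returns the relation corresponding to $\prec$. Granting this, it suffices to check that the concatenation order is a linear extension: within a block this is the inductive hypothesis, and for two vertices in different blocks it follows from transitivity through $p$ (for instance, if $u\in V_{\prec p}$ precedes $w\in V_{\incom p}$ we only need to rule out $w\prec u$, which would give $w\prec p$). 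I would first pin down these routine steps; they do not use randomness.

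The heart of the proof is the query bound, and the obstacle is avoiding an extra factor of $k$. Let $T$ be the (random) recursion tree; a node $u$ has vertex set $S_u$ and pivot $v_u$, and its partition call costs $O(|S_u|(k_{(S_u)_{\prec v_u}}+k_{(S_u)_{\succ v_u}})f(n,k))$, so the total query count is $O(f(n,k)\cdot\Phi)$ with $\Phi:=\sum_{u\in T}|S_u|(k_{(S_u)_{\prec v_u}}+k_{(S_u)_{\succ v_u}})$, and the goal is $\Phi=O(nk\log^2 n)$, either with high probability or in expectation. I would fix one global chain decomposition $C_1,\dots,C_k$ of $\mathcal{P}$ and bound $k_{(S_u)_{\prec v_u}}+k_{(S_u)_{\succ v_u}}\le 2\kappa_u$, where $\kappa_u$ is the number of chains $C_j$ that have an element of $C_j\cap S_u$ comparable to $v_u$; note this can be much smaller than the crude bound $2k$. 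Then $\Phi\le 2\sum_{x\in V}\sum_{j}M_{x,j}$, where $M_{x,j}$ is the number of nodes $u$ on the root-to-leaf path of $x$ with $x\in S_u$ and some element of $C_j\cap S_u$ comparable to $v_u$. For the unique chain $C_j$ containing $x$, I would bound $M_{x,j}$ by the length of $x$'s path, which is $O(k+\log n)$ with high probability by the analysis of \cite{DBLP:journals/siamcomp/DaskalakisKMRV11}; this contributes only $O(nk)$ overall. The crux is the remaining case $x\notin C_j$, where I claim $M_{x,j}=O(\log^2 n)$: one tracks the interval $I_u:=C_j\cap S_u$, which is contiguous in the chain and only shrinks down the path; a ``hit'' either has the pivot inside $I_u$ (a uniformly random split point, so there are $O(\log n)$ such hits by the standard quicksort/binary-search martingale argument), or it strictly shrinks $I_u$, or else the pivot is comparable to all of $I_u$ and lies on $x$'s side of it — and hits of this last type can be charged to $x$'s own recursion on that side, contributing another $\log$ factor that is absorbed into the concentration, giving the $\log^2 n$ in the statement. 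Establishing this $O(\log^2 n)$ bound, in particular taming the pivots that are comparable to all of $I_u$, is the main technical obstacle; it is exactly the ``if $k_v$ is large, the next pivots $v'$ have small $k_{v'}$'' phenomenon from the overview, made quantitative per chain.

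Finally I would assemble the two versions. For \Cref{lem:part2LE-main}: $\algLE$ makes at most $n$ calls to $\algPart$ (pivots are distinct and removed from all descendants), so a union bound over the sequence of calls shows that all of them are simultaneously correct and within their query bounds with probability at least $1-n\varepsilon$; on that event the partition outputs equal the $\prec$-based partitions, so the recursion tree is the ``ideal'' one determined by the pivot choices, $\algLE$ is correct by the induction above, and the query count is $O(f(n,k)\cdot\Phi)$ with $\Phi$ a function of the pivots alone; concentration of $\Phi$ (and of the $O(k+\log n)$ path-length bounds used inside) around $O(nk\log^2 n)$ then fails with probability only $O(n^{-6})$, yielding the claimed $1-n\varepsilon-2n^{-6}$. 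For \Cref{lem:part2LE-main2}: $\algPart$ is always correct, so $\algLE$ is always correct; by the tower rule the expected number of queries is $O(f(n,k))\cdot\mathbb{E}[\Phi]$, where now I only need $\mathbb{E}[\Phi]=O(nk\log^2 n)$, i.e.\ $\mathbb{E}[M_{x,j}]=O(\log^2 n)$ for $x\notin C_j$ and $\mathbb{E}[M_{x,j}]=O(k+\log n)$ for $x\in C_j$, which follows from the same interval/quicksort estimates in expectation rather than with high probability. The only content genuinely beyond \cite{DBLP:journals/siamcomp/DaskalakisKMRV11} is the per-chain accounting that replaces the bound $k$ on the per-node partition width by $\kappa_u$.
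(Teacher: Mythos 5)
Your overall architecture is sound and close to the paper's: the quicksort recursion, the convexity fact that $\mathcal{P}(\vec{G}[X(p)])=(X(p),\prec)$ on every recursion node (the paper's \Cref{lem:induced-subgraph-issue}), the inductive correctness of the concatenation, and the accounting $\Phi\le 2\sum_{x}\sum_j M_{x,j}$ after bounding the per-node width term by the number of chains that the pivot "touches" inside $S_u$. The $x\in C_j$ case via the $O(k+\log n)$ path length, and the $O(\log n)$ bound on pivots comparable to $x$ (the paper's $\CAnc$), are also fine.

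The gap is exactly where you flag "the main technical obstacle," and your trichotomy does not close it. In your case analysis for $M_{x,j}$ with $x\notin C_j$, the hits where the pivot $v_u$ is \emph{incomparable to $x$} but comparable to some (not all) elements of $I_u=C_j\cap S_u$ fall into your case (b) "strictly shrinks $I_u$," for which you give no bound. A strict shrink can remove a single element of the interval at a time, so per chain this can a priori happen up to $\min(k,|C_j|)$ times (each such pivot is an incomparable ancestor of $x$, and \Cref{lem:AAnc} only caps their total number at $k$). Summing over $j$, case (b) contributes $\sum_{p\in\IAnc(x)}\bigl(k_{X(p)_\prec}+k_{X(p)_\succ}\bigr)$ per vertex $x$ --- which is precisely the quantity the lemma is about, and the trivial bound on it is $k\cdot k$, giving back the $O(nk^2 f)$ bound you are trying to beat. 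These pivots are neither uniformly-random split points of $I_u$ (they are uniform in $S_u$, and conditioning on them being incomparable to $x$ destroys the quicksort argument) nor chargeable to $\CAnc(x)$. The paper's resolution is \Cref{lem:AAnc-k}: conditioned on the ancestor set and the (forced) order of the comparable ancestors, the incomparable ancestors split into $O(\log n)$ insertion classes each of which is a \emph{uniformly random permutation} (\Cref{lem:perm-prob}); then, since the sets $X(z)_\prec$ of successive incomparable ancestors are disjoint, a chain $C_a$ is touched only when a new "record" $L_a(z_j)>\max_{j'<j}L_a(z_{j'})$ occurs, and the running-maximum bound (\Cref{lem:sum-max}) gives $O(\log n)$ records per chain per class, hence $O(k\log^2 n)$ in total. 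Some argument of this kind --- identifying the residual randomness of the incomparable pivots and converting "touches chain $C_j$" into a record event --- is indispensable; without it the proof does not deliver the claimed $O(nkf(n,k)\log^2 n)$ bound for either the high-probability or the expectation version.
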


The proof of \Cref{lem:part2LE-main} is left to \Cref{sec:proof_lem:part2LE-main}. \Cref{lem:part2LE-main2} can be proved by slightly modifying the proof of \Cref{lem:part2LE-main}. 
	
\subsection{Proof of \Cref{lem:part2LE-main}}
\label{sec:proof_lem:part2LE-main}

To prove \Cref{lem:part2LE-main}, we give a quicksort-like algorithm for computing linear extension.
	Roughly speaking, we start with picking a random element as a pivot, and then try to \emph{partition} the input into three parts according to how the element compare with the pivot.
	These three parts can be solved recursively and combining them yields a linear extension.
The detailed algorithm is shown in \Cref{alg:part-to-LE}.
Analysis of this quicksort-like algorithm for the special case of complete graph $G$ was given by~\cite{DBLP:journals/siamcomp/DaskalakisKMRV11}.
While it is possible to adapt their proof to our case in a straightforward way,
unfortunately, it introduces an additional factor $k$ on top of the complexity of the partition algorithm.
We employ a more careful analysis in the generalized setting, and eventually we can save this factor $k$, achieving an improved complexity.

\begin{comment}
    \begin{lemma}\label{lem:part2LE-main}
    If for every $X\subseteq V$ and $p\in X$, 
    $\algPart(G[X], p)$ outputs $L_{X,p},M_{X,p},R_{X,p}$ in $O(|X|(k_{X_{\prec v}}+k_{X_{\succ v}})f(n,k))$ queries with probability $1-\delta$ ($f$ is some function of $n,k$),
    where
   \begin{itemize}
    \item $L_{X,p}=\{u\in X\mid u\neq p,u\text{ can reach }p\}$
    \item $R_{X,p}=\{u\in X\mid u\neq p,p\text{ can reach }u\}$
    \item $M_{X,p}=X\backslash (L_{X,p}\cup R_{X,p}\cup \{p\})$
\end{itemize} 
    then $\algLE(V)$ outputs a linear extension of $\mathcal{P}$ in $O(nkf(n,k)\log^3 n)$ queries with probability of at least XXX. 
\end{lemma}
\end{comment}

Notice that the partition algorithm only receives the induced graph $G[X]$ as input, instead of having access to the entire graph. This is due to the recursive nature of the algorithm, where we wish to solve the subproblem entirely inside an induced subgraph.
Moreover, it only finds sets $\Lset{X}{p}, \Mset{X}{p}, \Rset{X}{p}$ (instead of $X_{\prec p},X_{\incom p},X_{\succ p}$). 
This could cause issues,
since ideally, if $\mathcal{P}(\vec{G}[X])=(X,\prec)$, then we have $\Lset{X}{p}=X_{\prec p},\Mset{X}{p}=X_{\incom p},\Rset{X}{p}=X_{\succ p}$ as we expected, but this does not hold for every induced subgraph $G[X]$ ($u\prec v$ does not imply $u$ can reach $v$ in $\vec{G}[X]$).
Luckily, we can show in \Cref{lem:induced-subgraph-issue} that $\mathcal{P}(\vec{G}[X])=(X,\prec)$ always holds in every recursive call,
and this makes sure even assuming such weaker input and output of the partition algorithm still works.

    \begin{algorithm}
\caption{Partition to Linear Extension}
\label{alg:part-to-LE}
\begin{algorithmic}[1]
\Function{Part-to-LE}{$X$}
\If{$X=\varnothing$}
	\State \textbf{return} an empty sequence
\EndIf
\State randomly select a pivot vertex $p$ from $X$
\State $L,M,R\gets \textsc{Partition}(G[X],p)$\Comment{the return value are expected to be $X_{\prec p},X_{\incom p},X_{\succ p}$ respectively}
\State $\mathrm{LE}\gets \algLE(L)||p||\algLE(M)||\algLE(R)$ \Comment{$A||B$ denotes the concatenation of $A,B$}
\State \textbf{return} $\mathrm{LE}$
\EndFunction
\end{algorithmic}
\end{algorithm}

To prove \Cref{lem:part2LE-main}, we need a tree structure called ternary search tree, which is used for query complexity analysis in~\cite{DBLP:journals/siamcomp/DaskalakisKMRV11}. 

\begin{definition}[Ternary Search Tree]
	A ternary search tree $T$ of poset $\mathcal{P}$ is defined as below. 
	
	\begin{itemize}
		\item $T$ consists of $n$ nodes. Each node corresponds to a vertex in $V$. We name each node by its corresponding vertex for convenience. 
        \item Let $X(v)$ denotes vertices in the subtree of $v$. For convenience, we omit the $v$ in subscript when we write $X(v)_{\prec v}$, i.e. we write $X(v)_\prec$ instead, the same for $X(v)_\incom, X(v)_\succ$. 
        \item The tree structure is defined in a recursive way, similar to $\algLE$. 
        \begin{itemize}
            \item[$\circ$] The root node is picked arbitrarily from $V$. We have $X(\Troot)=V$. 
            \item[$\circ$] Each node $v$ has at most three children, denoted by $\chL(v), \chM(v), \chR(v)$, which are picked arbitrarily from $X(v)_\prec,X(v)_\incom,X(v)_\succ$ respectively. We have $X(\chL(v))=X(v)_\prec,X(\chM(v))=X(v)_\incom,X(\chR(v))=X(v)_\succ$. If $X(v)_\prec=\varnothing$ then $v$ does not have a $\prec$-child, and the same for $X(v)_\incom, X(v)_\succ$. 
        \end{itemize}
        \item Denote the set of all possible ternary search trees of poset $\mathcal{P}$ by $\mathcal{T}_\mathcal{P}$. 
	\end{itemize}
\end{definition}

After defining the ternary search tree, we are ready to settle the issue mentioned before. 
\begin{lemma}\label{lem:induced-subgraph-issue}
    For every ternary search tree $T$ and $p\in V$, we have $\mathcal{P}(\vec{G}[X(p)])=(X(p),\prec)$. 
\end{lemma}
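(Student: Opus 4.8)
The plan is to prove $\mathcal{P}(\vec{G}[X(p)]) = (X(p), \prec)$ by induction on the depth of $p$ in the ternary search tree $T$, following the same template as the proof of \Cref{lem:LE-prefix}. Concretely, what must be shown is that for every $u, v \in X(p)$, we have $u \prec v$ if and only if $u$ can reach $v$ in the induced subgraph $\vec{G}[X(p)]$. The ``only if'' direction of reachability implying $\prec$ is immediate: a directed path inside $\vec{G}[X(p)]$ is also a directed path in $\vec{G}$, and $\mathcal{P} = \mathcal{P}(\vec{G})$ by the model assumption \eqref{eqn:p_induce}, so reachability in $\vec{G}$ gives $u \prec v$. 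The content is in the forward direction.

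For the forward direction, I would fix $u, v \in X(p)$ with $u \prec v$, take a witnessing directed path $u \to q_1 \to q_2 \to \cdots \to q_\ell \to v$ in $\vec{G}$ (which exists since $\mathcal{P} = \mathcal{P}(\vec{G})$), and argue that every intermediate vertex $q_i$ lies in $X(p)$, so the path lives entirely inside $\vec{G}[X(p)]$. To see $q_i \in X(p)$, I would use the structure of the ternary search tree together with the induction hypothesis applied to the parent (or relevant ancestor) of $p$. The key structural fact is that each set $X(p)$ is an ``interval-like'' block: for the root, $X(\Troot) = V$ and the claim is vacuous; for a non-root $p$ which is, say, the $\prec$-child of its parent $p'$, we have $X(p) = X(p')_{\prec p'}$, and by the induction hypothesis $\mathcal{P}(\vec{G}[X(p')]) = (X(p'), \prec)$, so inside $X(p')$ the set $X(p)$ consists exactly of the elements $\prec p'$. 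Now since $u, v \prec p'$ and $u \prec v$, I claim every $q_i$ on a shortest-enough path from $u$ to $v$ must also be $\prec p'$: indeed $q_i \prec v \prec p'$ because $q_i$ reaches $v$ in $\vec{G}$ (take the path to be within $\vec G$, so $q_i \prec v$), hence $q_i \prec p'$ by transitivity, and also $q_i \in X(p')$ — wait, that last point needs care, since a priori $q_i$ need only be in $V$.

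The cleanest route around that subtlety is to strengthen the inductive argument slightly: I would argue by induction that $X(p)$ is \emph{downward/upward/incomparably closed within its parent block} in the appropriate sense, but more robustly, I would observe that the path vertices $q_i$ can be assumed to lie in $X(\Troot) = V$, and then descend the tree from the root, at each ancestor $a$ of $p$ checking that all $q_i$ remain in the child-block containing $p$. At ancestor $a$ with $X(a)$ known (by induction) to satisfy $\mathcal{P}(\vec{G}[X(a)]) = (X(a), \prec)$: if $p$ is in the $\prec$-child subtree, then $u, v \prec a$, and each $q_i$ satisfies $u \prec q_i \prec v$ or lies between; since $q_i \prec v \prec a$ we get $q_i \prec a$, and since $u \prec q_i$ with $u \in X(a)$ and... the membership $q_i \in X(a)$ follows because $X(a)$ applied at the previous level already contained all path vertices (base case: $X(\Troot) = V \ni q_i$). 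So $q_i \in X(a)$ and $q_i \prec a$, hence $q_i \in X(a)_{\prec a} = X(\chL(a))$, i.e.\ $q_i$ is in the same child-block as $p$. The $\incom$-child and $\succ$-child cases are symmetric, using that $u \incom a$ and $v \incom a$ forces $q_i \incom a$ (if $q_i \prec a$ then $u \prec q_i \prec a$ contradicts $u \incom a$; if $q_i \succ a$ then $a \prec q_i \prec v$ contradicts $v \incom a$), and similarly for the $\succ$-child. Iterating down to $p$ itself yields $q_i \in X(p)$ for all $i$, so $u$ reaches $v$ in $\vec{G}[X(p)]$, completing the induction.

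The main obstacle I anticipate is precisely this membership bookkeeping: one cannot directly conclude $q_i \in X(p)$ from $q_i \prec p'$ alone, because $X(p)$ is defined as a subtree-vertex-set and a priori is just \emph{some} subset of elements $\prec p'$ (the tree construction picks children ``arbitrarily'', but the $X(\cdot)$ sets are determined: $X(\chL(v)) = X(v)_{\prec v}$, so actually $X(p)$ \emph{equals} $X(p')_{\prec p'}$, not merely a subset). Once one uses this exact equality $X(\chL(v)) = X(v)_{\prec v}$ (and its analogues) together with the induction hypothesis at the parent to translate ``$\prec v$ within $X(v)$'' correctly, the argument goes through cleanly. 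So the real care is in threading the induction hypothesis $\mathcal{P}(\vec G[X(a)]) = (X(a), \prec)$ through each level so that the relation $\prec$ used to define the child-blocks agrees with the true poset relation — which is exactly what the lemma asserts, so the induction is self-consistent. I would present this as a single induction on depth, handling the three child cases symmetrically, and remark that the $X=\varnothing$ leaf case is trivial.
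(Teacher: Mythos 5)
Your proposal is correct and follows essentially the same route as the paper: take a witnessing directed path from $u$ to $v$ in $\vec{G}$, and show by descending from the root that every intermediate vertex has the same relation ($\prec$, $\succ$, or $\incom$) to each ancestor $y$ of $p$ as $u$ and $v$ do, hence stays in the same child-block and ends up in $X(p)$. The only difference is presentational: since the ternary search tree's blocks $X(\chL(y)) = X(y)_{\prec y}$ etc.\ are defined directly via the true poset relation $\prec$, the paper needs no induction hypothesis of the lemma's own statement at the ancestors — only the trivial membership descent from $X(\Troot)=V$ — which you ultimately recognize as well.
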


\begin{proof}
Fix some $p\in V$. 
We prove $\mathcal{P}(\vec{G}[X(p)])=(X(p),\prec)$ by showing that for every $u,v\in X(p)$, $u\prec v$ if and only if $u$ can reach $v$ in $\vec G[X(p)]$. 

For every $u,v\in X(p),u\neq v$ such that $u$ can reach $v$ in $\vec{G}[X(p)]$, $u$ can also reach $v$ in $\vec{G}$, which implies $u\prec v$. 

    For every $u,v\in X(p),u\prec v$, there exists a path $u \to x_1\to x_2\to \ldots \to x_s \to v$ in $\Gbase$. We show that $X$ contains all vertices in this path, which directly implies $u$ can reach $v$ in $\vec{G}[X(p)]$. 
    For every ancestor $y$ of $p$ in $T$, $u,v$ are in the same subtree of $y$ as $p$, hence there are three possible cases. 
    \begin{itemize}
        \item $u\prec y,v\prec y$. By $v\prec y$ we have $\forall 1\leq i\leq s,x_i\prec y$. 
        \item $u\succ y,v\succ y$. By $v\succ y$ we have $\forall 1\leq i\leq s,x_i\succ y$. 
        \item $u\incom y,v\incom y$. For every $x_i$, $x_i\prec y$ implies $u\prec y$, $x_i\succ y$ implies $v\succ y$, hence $x_i$ must be incomparable with $y$. 
    \end{itemize}

    For every $1\leq i\leq s$ and every ancestor $y$ of $p$, $x_i$ is in the same subtree of $y$ as $p$. This implies $X(p)$ contains $x_1,x_2,\ldots ,x_s$, which concludes our proof. 
\end{proof}

\Cref{lem:induced-subgraph-issue} implies that for every $T$ and $p\in V$, $X(p)_{\prec p}=\Lset{X(p)}{p}$, $X(p)_{\incom p}=\Mset{X(p)}{p}$, $X(p)_{\succ p}=\Rset{X(p)}{p}$. Now we can say that the ternary search tree is defined in exactly the same recursive way as $\algPart$. For each tree node, there is a corresponding recursive call $\algLE(X(p))$ in which $p$ is selected as pivot vertex, vice versa. 
\begin{comment}
\begin{corollary}\label{clry:bijection}
    There exists a bijection between the divide and conquer process of $\algLE$ and ternary search trees of $\mathcal{P}$. To be specific, we claim that
    \begin{itemize}
        \item For each divide and conquer process of $\algLE$, we construct a tree as follows. For every recursion call ``$\algLE(X)$ calls $\algLE(Y)$'', let $p_X,p_Y$ be the pivot vertices of $X,Y$, then we let $p_X$ be the parent node of $p_Y$. This tree is a ternary search tree. 
        \item For each ternary search tree $T$, there exists a divide and conquer process such that, every time $\algLE(X')$ is called for some $X'$, there exists $p\in V$ such that $X'=X(p)$ and $p$ is selected as the pivot vertex in $\algLE(X')$. 
    \end{itemize}
\end{corollary}

\Cref{clry:bijection} is actually implied by \Cref{lem:induced-subgraph-issue}, 
because $\algLE(X)$ recursively calls $\algLE(L_{X,p})$, $\algLE(M_{X,p})$, $\algLE(R_{X,p})$, but in ternary search tree the three subtrees of children of $p$ is $X(p)_{\prec p}$, $X(p)_{\incom p}$, $X(p)_{\succ p}$. The statement of \Cref{clry:bijection} becomes an assumption in the following discussion and will not be mentioned explicitly. 
\end{comment}

Before proving \Cref{lem:part2LE-main}, there is one more issue for random graphs. In \Cref{sec:random_graph}, we propose a partition algorithm for random graphs. However, this is based on the assumption that the input graph, i.e. the induced subgraph $G[X(p)]$, is a random graph under our model. We show that this assumption always holds in \Cref{lem:random-subgraph}.

\begin{lemma}\label{lem:random-subgraph}
    For every ternary search tree $T\in \mathcal{T}_\mathcal{P}$, every $v\in V$ and every minimal DAG $\Gbase$ such that $\mathcal{P}(\Gbase) = \mathcal{P}$, 
    let $G$ be a random query graph under \Cref{def:er}, then
    $G[X(v)]$ is also a random query graph. 
\end{lemma}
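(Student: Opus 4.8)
The plan is to unfold \Cref{def:er}, which says $G = \uGbase \cup G(n,p)$ for some fixed minimal DAG $\Gbase$, and argue that the induced subgraph $G[X(v)]$ decomposes in the same way: $G[X(v)] = \uGbase[X(v)] \cup G(n,p)[X(v)]$. The second piece $G(n,p)[X(v)]$ is obviously an \ER random graph on the vertex set $X(v)$ (independence of edge-presence is inherited by restriction to a subset, and the probability parameter is unchanged). So the crux is to check that the first piece $\uGbase[X(v)]$ is the undirected graph of a \emph{minimal} DAG $\vec{G}'$ on $X(v)$ with $\mathcal{P}(\vec{G}') = (X(v), \prec)$ — i.e., that $X(v)$ is an ``admissible'' base-graph vertex set in the sense of \Cref{def:er}. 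I would take $\vec{G}' := \Gbase[X(v)]$, the induced sub-DAG, as the witness.

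First I would establish that $\mathcal{P}(\Gbase[X(v)]) = (X(v), \prec)$. This is essentially the content of \Cref{lem:induced-subgraph-issue} (which was stated for $\vec{G}$, but whose proof only uses a directed path in $\Gbase$ between two comparable elements of $X(v)$, then shows every intermediate vertex of that path also lies in $X(v)$ using the ternary-search-tree ancestor structure) — so I can cite that argument essentially verbatim, or invoke \Cref{lem:induced-subgraph-issue} together with the observation that $\Gbase \subseteq \vec{G}$ and $\mathcal P(\Gbase)=\mathcal P$. The key point carried over is: if $u, w \in X(v)$ and $u \prec w$, the $\Gbase$-path from $u$ to $w$ stays inside $X(v)$, so $u$ reaches $w$ already in $\Gbase[X(v)]$; conversely any directed path in $\Gbase[X(v)]$ is a directed path in $\Gbase$, hence witnesses $\prec$.

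Next I would verify \emph{minimality} of $\Gbase[X(v)]$: no edge $u \to w$ of $\Gbase[X(v)]$ is redundant within $\Gbase[X(v)]$. Suppose for contradiction some $x \in X(v) \setminus \{u, w\}$ satisfies $u \to x$ and $x \to w$ in $\Gbase[X(v)]$; then the same two edges are present in $\Gbase$, so $u \to w$ is redundant in $\Gbase$, contradicting the minimality of $\Gbase$ assumed in \Cref{def:er}. (Note minimality is about the existence of a \emph{single} intermediate vertex, not an arbitrary path, so it is trivially preserved under taking induced subgraphs — this is the easy direction.) Combining, $\vec{G}' = \Gbase[X(v)]$ is a minimal DAG on $X(v)$ with $\mathcal{P}(\vec{G}') = (X(v), \prec)$, so $G[X(v)] = \uGbase[X(v)] \cup G(n,p)[X(v)] = \vec{G}'_{\text{undirected}} \cup G(|X(v)|, p)$ is exactly a random query graph on $X(v)$ per \Cref{def:er}.

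The only genuine subtlety — and the step I would be most careful about — is a \emph{measurability/independence} concern rather than a combinatorial one: $X(v)$ is itself determined by the ternary search tree $T$, which a priori could depend on the realized query answers and hence on the randomness of $G(n,p)$. However, the lemma quantifies over a \emph{fixed} tree $T \in \mathcal{T}_{\mathcal P}$ and a fixed base DAG $\Gbase$, and the ternary search tree is a purely poset-theoretic object (the pivots are chosen ``arbitrarily'', the children's vertex sets are the comparability classes $X(v)_\prec, X(v)_\incom, X(v)_\succ$ which depend only on $\mathcal{P}$, not on $G$). So $X(v)$ is a deterministic function of $\mathcal{P}$ alone, and conditioning on it does not disturb the i.i.d. structure of the \ER edges on $X(v)$. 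I would state this explicitly to close the argument cleanly.
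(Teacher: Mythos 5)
Your proposal is correct and follows essentially the same route as the paper's proof: both verify the three requirements of Definition~\ref{def:er} for $G[X(v)]$ by (i) applying the argument of Lemma~\ref{lem:induced-subgraph-issue} to $\Gbase$ to get $\mathcal{P}(\Gbase[X(v)]) = (X(v),\prec)$, (ii) observing that redundancy of an edge in the induced sub-DAG would imply redundancy in $\Gbase$ so minimality is inherited, and (iii) noting the \ER part restricts to $G(|X(v)|,p)$. Your closing remark on why $X(v)$ is independent of the $G(n,p)$ randomness is a sensible extra precaution the paper leaves implicit.
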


\begin{proof}
    To show that $G[X(v)]$ is a random query graph, we verify three requirements of random query graph in \Cref{def:er}.
    \begin{itemize}
        \item Notice that $\uGbase$ also satisfies the requirement of input query graph $G$, i.e. $\mathcal{P}(\Gbase) = \mathcal{P}$, which means we can also apply \Cref{lem:induced-subgraph-issue} on $\Gbase$ to imply $\mathcal{P}(\Gbase[X(v)]) = (X(v), \prec)$. 
        \item For every edge in $\vec{G}[X(v)]$, since it is not a redundant edge in $\vec{G}$, it must not be a redundant edge in $\vec{G}[X(v)]$. Hence $\vec{G}[X(v)]$ is a minimal DAG. 
        \item By \Cref{def:er}, $G$ is a union of $\uGbase$ and $G(n,p)$ (here $p$ is a parameter of random graph), then $G[X(v)]$ is also a union of $\uGbase[X(v)]$ and $G(|X(v)|, p)$. 
    \end{itemize}
    This concludes our proof. 
\end{proof}

The statement of \Cref{lem:part2LE-main} can be separated into two parts, correctness part (i.e. $\algLE(V)$ outputs a linear extension of $\mathcal{P}$) and efficiency part (i.e. $\algLE(V)$ ends within $\tilde{O}(nkf(n,k))$ queries). We prove the correctness part first. 

In the correctness proof, we expect $\algLE(V)$ to output a linear extension of $\mathcal{P}$. For other vertex sets $X\subset V$, we expect $\algLE(X)$ to output a vertex sequence $(\ell_1,\ell_2,\ldots ,\ell_{|X|}) \in \perm(X)$ such that for every $1\leq i<j\leq |X|$, $\ell_i \nsucc \ell_j$. These sequences are called linear extensions of $X$. 

\begin{lemma}\label{lem:LE-recur}
For every $X\subseteq V$ and $p\in X$, let $L_\prec,L_\incom,L_\succ$ be some linear extension of $X_{\prec p},X_{\incom p},X_{\succ p}$ respectively, then $L = L_\prec || p || L_\incom || L_\succ$ is a linear extension of $X$ (here $A||B$ denotes the concatenation of sequence $A,B$, $p$ represents a sequence that only contains vertex $p$). 
\end{lemma}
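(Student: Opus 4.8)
The plan is to verify directly from the definition of linear extension that $L = L_\prec \| p \| L_\incom \| L_\succ$ has the required property: for every pair of positions $i < j$ in $L$, the element $\ell_i$ at position $i$ satisfies $\ell_i \nsucc \ell_j$. I would organize the argument by a case analysis on which of the four ``blocks'' (namely $L_\prec$, the singleton $p$, $L_\incom$, $L_\succ$) the indices $i$ and $j$ fall into. Since $i < j$, there are ten cases in principle (the ordered pairs of blocks with the first block not strictly after the second), but they collapse into a handful of genuinely distinct situations.

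First I would handle the cases where $i$ and $j$ lie in the same block: then $\ell_i \nsucc \ell_j$ is immediate because $L_\prec, L_\incom, L_\succ$ are assumed to be linear extensions of $X_{\prec p}, X_{\incom p}, X_{\succ p}$ respectively (and the singleton block is trivial). Next, the cases where one of $i,j$ is the position of $p$: if $\ell_i = p$ and $\ell_j \in X_{\incom p} \cup X_{\succ p}$, then by definition of these sets $p \nsucc \ell_j$; symmetrically if $\ell_j = p$ and $\ell_i \in X_{\prec p}$, then $\ell_i \prec p$, so $\ell_i \nsucc p$. The remaining cases are the cross-block cases between two of $L_\prec, L_\incom, L_\succ$, with $\ell_i$ from the earlier block and $\ell_j$ from the later one; the pairs are $(X_{\prec p}, X_{\incom p})$, $(X_{\prec p}, X_{\succ p})$, and $(X_{\incom p}, X_{\succ p})$.

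For these three cross-block cases the key is transitivity of $\prec$ together with the defining membership relations. For $\ell_i \in X_{\prec p}$ and $\ell_j \in X_{\succ p}$: if $\ell_i \succ \ell_j$ then $\ell_j \prec \ell_i \prec p$, contradicting $\ell_j \succ p$, so $\ell_i \nsucc \ell_j$. For $\ell_i \in X_{\prec p}$ and $\ell_j \in X_{\incom p}$: if $\ell_i \succ \ell_j$ then $\ell_j \prec \ell_i \prec p$, so $\ell_j \prec p$, contradicting $\ell_j \incom p$. For $\ell_i \in X_{\incom p}$ and $\ell_j \in X_{\succ p}$: if $\ell_i \succ \ell_j$ then $p \prec \ell_j \prec \ell_i$, so $p \prec \ell_i$, contradicting $\ell_i \incom p$. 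In each case a single application of transitivity produces the contradiction, so no position pair violates the linear-extension property, and $L$ is a linear extension of $X$.

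There is no serious obstacle here; the lemma is essentially a bookkeeping exercise. The only point requiring a moment of care is making sure the case analysis is exhaustive and that in the cross-block arguments I correctly invoke transitivity of $\prec$ in the right direction (e.g.\ from $\ell_j \prec \ell_i$ and $\ell_i \prec p$ conclude $\ell_j \prec p$). I would also note at the outset that $\{X_{\prec p}, \{p\}, X_{\incom p}, X_{\succ p}\}$ partitions $X$, so $L$ is indeed a permutation of $X$, which is needed for the statement to typecheck.
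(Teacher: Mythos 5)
Your proposal is correct and follows essentially the same approach as the paper's proof: a direct case analysis on the blocks containing positions $i<j$, using transitivity of $\prec$ to derive contradictions in the cross-block cases. The only difference is that you explicitly handle the same-block cases (immediate from the hypotheses on $L_\prec$, $L_\incom$, $L_\succ$), which the paper leaves implicit; this is a small gain in completeness but not a different argument.
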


\begin{proof}
    Let $L = (\ell_1,\ell_2,\ldots ,\ell_{|X|})$. For every $1\leq i<j\leq |X|$, there are four cases. 
    \begin{enumerate}
        \item $\ell_i = p$ or $\ell_j = p$: $\ell_i \nsucc \ell_j$ can be derived directly from the definition of $L$. 
        \item $\ell_i\in L_\prec, \ell_j \in L_\succ$: We have $\ell_i \prec p \prec \ell_j$. 
        \item $\ell_i \in L_\prec,\ell_j\in L_\incom$: $\ell_i \succ \ell_j$ implies $\ell_j \prec \ell_i \prec p$, which contradicts $\ell_j\in L_\incom$. Hence we have $\ell_i\nsucc \ell_j$. 
        \item $\ell_i\in L_\incom,\ell_j\in L_\succ$: Similar to Case 3. 
    \end{enumerate}

    In summary, for every $1\leq i<j\leq |X|$, $\ell_i\succ \ell_j$ holds. This implies $L$ is a linear extension of $X$. 
\end{proof}

\begin{lemma}[Correctness]\label{lem:LE-correctness}
    If for every $X\subseteq V$ and $p\in X$, 
    $\algPart(G[X], p)$ outputs $\Lset{X}{p},\Mset{X}{p},\Rset{X}{p}$ with probability $1-\varepsilon$, then $\algLE(V)$ outputs a linear extension of $\mathcal{P}$ with probability of at least $1-n\varepsilon$. 
\end{lemma}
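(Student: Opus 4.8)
The plan is to show that the only way $\algLE(V)$ can fail is for some call to $\algPart$ to return an incorrect answer, and that at most $n$ such calls are ever made, so that a union bound gives failure probability at most $n\varepsilon$. Concretely, let $\mathcal{E}$ be the event that every invocation of $\algPart$ during the run of $\algLE(V)$ returns the correct triple $(\Lset{X}{p},\Mset{X}{p},\Rset{X}{p})$ for its argument $(G[X],p)$. The structural fact I would use is that, conditioned on every $\algPart$ call made so far having succeeded, the recursion of $\algLE$ coincides with a ternary search tree $T\in\mathcal{T}_{\mathcal{P}}$: the root is the pivot drawn in $\algLE(V)$, the subtrees of its children span $V_{\prec p},V_{\incom p},V_{\succ p}$, and so on recursively; here we use \Cref{lem:induced-subgraph-issue} (which gives $\mathcal{P}(\vec{G}[X(p)])=(X(p),\prec)$ at every node of any ternary search tree) to conclude that a successful $\algPart(G[X(p)],p)$ returns precisely $X(p)_{\prec p},X(p)_{\incom p},X(p)_{\succ p}$, so the recursive structure really is a ternary search tree. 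Since a ternary search tree has exactly $n$ nodes and each vertex is used as a pivot exactly once, the number of successful $\algPart$ calls occurring in any prefix of the execution is at most $n$; hence $\Pr[\neg\mathcal{E}]\le\sum_{j\ge1}\Pr[j\text{-th call fails and the first }j-1\text{ calls succeed}]\le n\varepsilon$, because for $j>n$ the event ``the first $j-1$ calls all succeed'' is impossible, and each surviving term is at most $\varepsilon$ by the hypothesis (the $\algPart$ calls use fresh independent randomness).

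Conditioned on $\mathcal{E}$, I would finish with a bottom-up induction along the ternary search tree $T$ realized by the recursion, showing that every recursive call $\algLE(X(p))$ outputs a linear extension of $(X(p),\prec)$. At a leaf $p$ we have $X(p)=\{p\}$, the call $\algPart(G[\{p\}],p)$ returns $(\varnothing,\varnothing,\varnothing)$, and $\algLE(\{p\})$ outputs the one-element sequence $(p)$, which is a linear extension of $(\{p\},\prec)$. At an internal node $p$: by \Cref{lem:induced-subgraph-issue} the successful call $\algPart(G[X(p)],p)$ returns $X(p)_{\prec p},X(p)_{\incom p},X(p)_{\succ p}$; by the induction hypothesis the three recursive calls return linear extensions of these three sub-posets; and \Cref{lem:LE-recur} then says that their concatenation with $p$ inserted between $X(p)_{\prec p}$ and $X(p)_{\incom p}$ is a linear extension of $(X(p),\prec)$. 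Applying this at the root, where $X(\Troot)=V$ and $(V,\prec)=\mathcal{P}$ by \eqref{eqn:p_induce}, shows that $\algLE(V)$ outputs a linear extension of $\mathcal{P}$ on the event $\mathcal{E}$. Therefore $\Pr[\algLE(V)\text{ outputs a linear extension of }\mathcal{P}]\ge\Pr[\mathcal{E}]\ge1-n\varepsilon$.

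The step I expect to be the main obstacle is the accounting in the first paragraph: a priori, if some $\algPart$ call returns sets that do not form a partition of $X\setminus\{p\}$, the recursion tree could be far larger than $n$ nodes, so the union bound must be organized so that one only ever sums over execution prefixes in which all prior $\algPart$ calls succeeded — it is precisely those prefixes that are consistent with a genuine ternary search tree and hence of length at most $n$. Everything else (translating $\algPart$'s reachability-based output into the poset partition via \Cref{lem:induced-subgraph-issue}, and the leaves-to-root application of \Cref{lem:LE-recur}) is routine once this bookkeeping is in place. An essentially equivalent alternative, which sidesteps reasoning about the global call count, is a direct strong induction on $|X|$: assuming $\mathcal{P}(\vec{G}[X])=(X,\prec)$, prove $\Pr[\algLE(X)\text{ outputs a linear extension of }(X,\prec)]\ge1-|X|\varepsilon$, propagating the invariant $\mathcal{P}(\vec{G}[\,\cdot\,])=(\,\cdot\,,\prec)$ to $X_{\prec p},X_{\incom p},X_{\succ p}$ by the same path-confinement argument used in the proof of \Cref{lem:induced-subgraph-issue}, and combining a $\varepsilon$ failure probability for the top $\algPart$ call with $(|X_{\prec p}|+|X_{\incom p}|+|X_{\succ p}|)\varepsilon=(|X|-1)\varepsilon$ from the recursive calls; I would adopt whichever of the two write-ups is shorter.
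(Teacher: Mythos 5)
Your proposal is correct and follows essentially the same route as the paper's proof: a union bound over the at most $n$ partition calls (organized in the paper as a sum over ternary search trees and a union bound over their $n$ pivots, and in your write-up as a sum over call indices restricted to prefixes where all earlier calls succeeded), followed by a bottom-up induction along the realized ternary search tree using \Cref{lem:LE-recur} and \Cref{lem:induced-subgraph-issue}. Your explicit handling of the bookkeeping issue — that a failed partition call could blow up the recursion, so one must only sum over prefixes consistent with a genuine ternary search tree — is a careful articulation of what the paper leaves implicit, not a different argument.
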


\begin{proof}
    Let $\mathcal{E}_{X,p}$ be the event that $\algPart(G[X],p)$ outputs $\Lset{X}{p},\Mset{X}{p},\Rset{X}{p}$. 
    $$\Pr_{T\sim \mathcal{T}_{\mathcal{P}}}[\forall p\in V,\mathcal{E}_{X(p),p}\text{ happens}] \geq \sum_{T\in \mathcal{T}_\mathcal{P} }\Pr[T]\left(1-\sum_{p\in V}\Pr[\mathcal{E}_{X(p),p}\text{ does not happen}] \mid T]\right) \geq 1-n\varepsilon$$

    Next we prove for every $T$, conditioning on $\forall p\in V,\mathcal{E}_{X(p),p}$ happens, for every $p\in V$, $\algLE(X(p))$ outputs a linear extension with probability 1. 

    Fix some $T$. We prove by induction on the tree structure $T$ that $\algLE(X(p))$ outputs a linear extension of $X(p)$. We denote $X(p)$ by $X$ for convenience. 
    
    \paragraph{Base case.}
    If $p$ is a leaf node of $T$, then $X_{\prec p},X_{\incom p},X_{\succ p}$ are all empty. Hence $\algPart(X,p)=p$, which is a linear extension of $X=\{p\}$. 
    
    \paragraph{Inductive case.}
    Otherwise, let $L_\prec =\algLE(X_{\prec p})$, $L_\incom=\algLE(X_{\incom p})$, $L_\succ =\algLE(X_{\succ p})$.
    By induction hypothesis, $L_\prec ,L_\incom,L_\succ$ are linear extensions of $X_{\prec p},X_{\incom p},X_{\succ p}$, respectively. By Lemma \ref{lem:LE-recur}, $\algLE(X) = L_\prec || p || L_\incom || L_\succ$ is a linear extension of $X$. 
\end{proof}

We start our query complexity analysis with a weaker bound. Let $\#\text{queries}$ denotes the total number of queries used by $\algLE(V)$. 
Conditioning on $\algPart$ always produces the correct output, we have
\begin{align*}
    \#\text{queries} &\leq f(n,k)\sum_{p \in V}(k_{X(p)_{\prec}}+k_{X(p)_{\succ}})|X(p)| \\
    &\leq f(n,k)\cdot k\sum_{p\in V}|X(p)| \\
    &=f(n,k)\cdot k\sum_{v\in V}(|\Anc(v)|+1)
\end{align*}
where $\Anc(v)$ denotes the set of ancestors of $v$ in $T$ ($v$ itself is not included).

We divide $\Anc(v)$ into the following two type of ancestors and give upper bounds for them respectively. 
\begin{itemize}
	\item Ancestors comparable with $v$ are called comparable ancestors, denoted by $\CAnc(v)$. 
    \item Ancestors incomparable with $v$ are called incomparable ancestors, denoted by $\IAnc(v)$. 
\end{itemize}

\begin{lemma}\label{lem:AAnc}
	For every ternary search tree $T$ and $v\in V$, $|\IAnc(v)|<k$. 
\end{lemma}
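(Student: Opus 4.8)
The plan is to show that the incomparable ancestors of $v$ in any ternary search tree form an antichain in $\mathcal{P}$, so the bound $|\IAnc(v)| < k$ follows immediately from the definition of width $k$ (an antichain has size at most $k$). So the entire content of the lemma is that statement about antichains.

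First I would set up notation: let $y_1, y_2$ be two distinct incomparable ancestors of $v$, so $y_1 \incom v$ and $y_2 \incom v$, and without loss of generality $y_1$ is an ancestor of $y_2$ (one must be an ancestor of the other since both lie on the root-to-$v$ path in $T$). The key structural fact I would invoke is how the subtrees are nested: since $y_2$ lies in the subtree of $y_1$, and $y_2 \incom y_1$, we must have $y_2 \in X(y_1)_{\incom y_1}$, i.e. $y_2$ lies in the $\incom$-subtree $X(\chM(y_1))$. Likewise $v$ lies in the subtree of $y_2$, and since $v \incom y_1$ as well, $v$ must also lie in $X(\chM(y_1))$ — consistent with $v$ being a descendant of $y_2$.

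The main step is then to argue $y_1 \incom y_2$. Suppose instead $y_1$ and $y_2$ were comparable. Since $y_2$ is a descendant of $y_1$ and landed in the $\incom$-subtree of $y_1$, we have $y_2 \incom y_1$ by the very definition of that subtree — contradiction. Hence $y_1 \incom y_2$. Now iterating this over all pairs: for any two distinct incomparable ancestors, the one that is the deeper (descendant) ancestor sits in the $\incom$-subtree of the shallower one, forcing the two to be incomparable with each other. Therefore $\IAnc(v)$ is pairwise incomparable, i.e. an antichain.

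Finally, since $\mathcal{P}$ has width $k$, every antichain has at most $k$ elements, so $|\IAnc(v)| \leq k$; and the inequality is strict because $v$ itself is incomparable with no element of $\IAnc(v)$ — wait, more carefully: $\IAnc(v) \cup \{v\}$ is not quite an antichain containing $v$ since $v$ is incomparable to each $y \in \IAnc(v)$ by definition of $\IAnc$, so actually $\IAnc(v) \cup \{v\}$ \emph{is} an antichain, giving $|\IAnc(v)| + 1 \leq k$, hence $|\IAnc(v)| < k$. The only subtle point I expect to need care with is the bookkeeping of which ancestor is whose descendant and the clean invocation that a node placed into a $\chM$-subtree is incomparable to the parent defining that subtree; everything else is a direct application of Dilworth-type width reasoning.
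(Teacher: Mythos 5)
Your proof is correct and is essentially the paper's own argument: every pair in $\IAnc(v)\cup\{v\}$ with one an ancestor of the other has the descendant sitting in the $\incom$-subtree of the ancestor (because $v$ itself does), so the set is an antichain of size at most $k$, giving $|\IAnc(v)|<k$. The only cosmetic issue is that your setup sentence states the implication backwards (``since $y_2\incom y_1$, we must have $y_2\in X(\chM(y_1))$''); the direction you actually need, and correctly use in the main step, is that $y_2$ lies in $X(\chM(y_1))$ because $v$ does and $y_2$ is on the path from $y_1$ to $v$, which then \emph{forces} $y_2\incom y_1$.
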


\begin{proof}
	For any vertex pair $x,y$ in $\IAnc(v)\cup \{v\}$, suppose $x$ is the ancestor of $y$, then $y$ is in the $\incom$-subtree of $x$, i.e. $x\incom y$. Hence $\IAnc(v)\cup \{v\}$ contains at most $k$ vertices since it is an antichain of poset $\mathcal{P}$. 
\end{proof}

\begin{restatable}{lemma}{CAncUB}\label{lem:CAnc}
	For every $v\in V$, $|\CAnc(v)|\leq 60\log n$ holds with probability at least $1-n^{-10}$. 
\end{restatable}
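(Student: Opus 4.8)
The plan is to bound $|\CAnc(v)|$ by analyzing, for each chain $C_i$ in a fixed chain decomposition $\mathcal{C} = \{C_1,\ldots,C_k\}$ of $\mathcal{P}$, how many ancestors of $v$ can come from $C_i$, and then taking a union bound over the $k$ chains. Fix $v$ and a chain $C_i$. Every comparable ancestor $y$ of $v$ is either $\prec v$ or $\succ v$, so it suffices to separately bound the number of ancestors from $C_i$ that are smaller than $v$ and the number that are larger than $v$; by symmetry I focus on the smaller ones. Let these be $y_1,\ldots,y_t$ listed in the order they appear on the root-to-$v$ path (so $y_1$ is closest to the root and is an ancestor of all the others). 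The key structural observation is that each $y_{j+1}$ lies in the $\prec$-subtree of $y_j$ (since $y_{j+1} \prec v$ and $v$ is in the subtree of $y_j$, and $y_{j+1}$ being comparable with $v$ and with all the earlier $y$'s forces it into the $\prec$-branch), hence $X(y_{j+1}) \subseteq X(y_j)_{\prec y_j} \subsetneq X(y_j)$, and moreover $v$ together with all of $C_i \cap \{z : z \prec v\}$ that survive stays inside each $X(y_j)$. The crucial point: when $y_j$ is chosen as pivot uniformly at random from $X(y_j)$, conditioned on it being a smaller comparable ancestor on $C_i$, it is the largest among the not-yet-separated elements of $C_i$ that are $\preceq$ it and $\prec v$ — actually the cleaner formulation is that among the relevant chain elements still present, the pivot being on $C_i$ and smaller than $v$ means it sits at a random position, and with constant probability it "cuts off" a constant fraction of them.

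Concretely, I would set up the following potential argument. Restrict attention to the elements of $C_i$ that are $\prec v$; call this set (as it evolves down the recursion) $S$. Initially $|S| \le n$. At the node $y_j$, the whole of $S$ currently present lies in $X(y_j)$, and $v \in X(y_j)$ too. Since $C_i$ is a chain, the current $S \cup \{v\}$ is totally ordered with $v$ on top. If the random pivot $p$ drawn from $X(y_j)$ happens to be the median element of this totally ordered set $S$ (or within the middle half), then after partitioning, everything in $S$ that is $\succ p$ goes to $X(p)_{\succ p}$ while $v$ also goes there (since $v \succ p$), so the $\prec$-child — which is where the next chain-ancestor $y_{j+1}$ must live to still be an ancestor of $v$... wait, I need $y_{j+1} \prec v$, and $y_{j+1}$ must be in the subtree containing $v$, which is the $\succ p$ subtree. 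So the surviving portion of $S$ relevant for future ancestors is $S \cap X(p)_{\succ p} = \{z \in S : z \succ p\}$. If $p$ is in the middle half of $S$'s order, this has size at most $\tfrac{3}{4}|S|$. Thus each smaller-comparable-ancestor-on-$C_i$ step either makes no progress or, with probability $\ge 1/2$ (pivot lands in middle half of $S$, using that $|S| \le |X(y_j)|$ so the uniform pivot restricted to $S$ is uniform on $S$... actually the pivot is uniform on all of $X(y_j)$, so the event "pivot $\in$ middle half of $S$" only has probability $\ge |S|/(2|X(y_j)|)$, which can be tiny).

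So the naive potential argument fails because the pivot is drawn from the big set $X(y_j)$, not from $S$. The fix — and the main obstacle — is the same device used to bound the recursion depth / $|\CAnc|$ in the complete-graph analysis of \cite{DBLP:journals/siamcomp/DaskalakisKMRV11}: condition on the \emph{sequence of pivots that are comparable ancestors of $v$}. The point is that, conditioned on the event that a given pivot $p \in X(y_j)$ is a comparable ancestor of $v$ and $\prec v$ and lies on $C_i$, the conditional distribution of $p$ is uniform over $S = \{z \in C_i : z \prec v\} \cap X(y_j)$ (all such elements are equally likely to be the pivot a priori, and all of them yield "comparable ancestor smaller than $v$ on $C_i$"). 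Under this conditioning the middle-half event has probability $\ge 1/2$, so the number of smaller comparable ancestors on $C_i$ is stochastically dominated by a sum of geometric-type variables and is $O(\log n)$ except with probability $n^{-c}$ for a large constant $c$ (by a Chernoff bound on the number of "successful" cuts needed to drive $|S|$ from $n$ down below $1$). The same bound holds for larger comparable ancestors on $C_i$. Taking a union bound over the $2k \le 2n$ chain/direction pairs and over the $n$ choices of $v$ gives $|\CAnc(v)| \le O(\log n) \cdot$ (a union-bound loss); tuning constants yields the stated $|\CAnc(v)| \le 60\log n$ with probability $\ge 1 - n^{-10}$. I would present the single-$v$, single-chain, single-direction bound as the core sublemma, prove it by the conditioning-plus-Chernoff argument, and then assemble the final statement by two union bounds.
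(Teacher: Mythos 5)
There is a genuine gap in the aggregation step. Your core mechanism --- a pivot landing in the ``middle'' of a totally ordered set cuts it by a constant factor with probability $\ge 1/2$, combined with a Chernoff bound for dependent indicators --- is exactly the right engine (the paper uses the same device, via \Cref{lem:LE-recur-half} and \Cref{lem:chernoff-non-independent}). But you apply it chain by chain: for each chain $C_i$ of a decomposition and each direction you bound the number of comparable ancestors of $v$ lying on $C_i$ by $O(\log n)$. A union bound over the $2k$ chain/direction pairs only controls the \emph{failure probability}; it does not shrink the \emph{count}. Since $\CAnc(v)$ is covered by the $k$ chains, your argument yields $|\CAnc(v)| = O(k\log n)$, not the claimed $60\log n$. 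Note that $\CAnc(v)$ is itself a chain of $\mathcal{P}$ (any two comparable ancestors, one an ancestor of the other in $T$, must be comparable), but that chain can weave through all $k$ chains of the decomposition, so nothing forces it into a single $C_i$. An $O(k\log n)$ bound here would reintroduce precisely the extra factor of $k$ that \Cref{lem:part2LE-main} is designed to avoid.

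The fix is to measure the potential on the whole current subtree rather than on $C_i \cap V_{\prec v}$. The paper's proof tracks $|X(p_{i_j})|$ along the root-to-$v$ path: taking a linear extension $x_1,\ldots,x_\ell$ of the current set $X$, a uniformly random pivot lands in the middle half with probability $\ge 1/2$, in which case both $X_{\prec p}$ and $X_{\succ p}$ have size at most $\tfrac34|X|$; since a \emph{comparable} ancestor sends $v$ into one of these two parts, each comparable ancestor independently-enough shrinks the subtree containing $v$ by a factor $3/4$ with probability $\ge 1/2$. The subtree size can only shrink $\log_{4/3} n$ times, so \Cref{lem:chernoff-non-independent} caps the number of comparable ancestors at $60\log n$ with probability $1-n^{-10}$, with no dependence on $k$ and no per-chain union bound. (Your conditioning step --- that conditioned on the pivot being a smaller comparable ancestor it is uniform over the relevant candidate set --- is sound; it is only the choice of candidate set, and hence the summation over chains, that needs to change.)
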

\begin{proof}
    The detailed proof can be found in \Cref{sec:proof_lem:CAnc}.
\end{proof}

The proof of \Cref{lem:CAnc} is based on the following observation. For every $X\subseteq V$, let $x_1,x_2,\ldots ,x_\ell$ be a linear extension of $X$. We have $X_{\prec x_i}\subseteq \{x_1,x_2,\ldots ,x_{i-1}\}$, $X_{\succ x_i}\subseteq \{x_{i+1},x_{i+2},\ldots ,x_\ell\}$. Roughly speaking, the size of $X_{\prec p}$ and $X_{\succ p}$ is about half of $|X|$. The remaining deduction is similar to the analysis of quicksort. We leave the full proof in appendix for novelty. 

Combining \Cref{lem:LE-correctness}, \Cref{lem:AAnc} and \Cref{lem:CAnc}, we have $\nQuery\leq f(n,k)\cdot k\sum_{v\in V}(|\Anc(v)|+1)=O(nk^2 f(n,k))$ with high probability. We want to improve this bound to $\tilde{O}(nk f(n,k))$, so that our algorithm is nearly optimal when $f(n,k)=\tilde{O}(1)$. 

Recall we assumed that the partition algorithm uses $O((k_{X(p)_\prec}+k_{X(p)_\succ})|X(p)|f(n,k))$ queries. For some $p\in V$, $k_{X(p)_\prec}+k_{X(p)_\succ}$ can be much smaller than $k_\mathcal{P}$. We capture this key property by the following lemma. The proof is left to next section. 

\begin{lemma}\label{lem:AAnc-k}	
 For every $v\in V$, $\sum_{p\in \IAnc(v)}(k_{X(p)_\prec}+k_{X(p)_\succ })=O(k\log^2 n)$ with probability at least $1-n^{-7}$. 
\end{lemma}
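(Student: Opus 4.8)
The plan is to control, chain by chain, how often an incomparable ancestor of $v$ can ``carve off'' part of a fixed chain of $\mathcal P$. Fix a chain decomposition $C_1,\dots ,C_k$ of $\mathcal P$ and fix $v$, and write the (random) root-to-$v$ path in the ternary search tree as $V=X_1\supsetneq X_2\supsetneq\cdots$, where $q_t\in X_t$ is the pivot at level $t$ (uniform on $X_t$) and $X_{t+1}$ is the part of $\{(X_t)_{\prec q_t},(X_t)_{\incom q_t},(X_t)_{\succ q_t}\}$ containing $v$ (the path ends when $q_t=v$). In this notation $\IAnc(v)=\{q_t:q_t\incom v\}$ and, for $p=q_t\in\IAnc(v)$, $X(p)_{\prec p}=(X_t)_{\prec q_t}$ and $X(p)_{\succ p}=(X_t)_{\succ q_t}$. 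Since the restrictions of $C_1,\dots ,C_k$ to any set form a chain cover of that set, $k_{(X_t)_{\prec q_t}}\le\#\{j:C_j\cap(X_t)_{\prec q_t}\neq\emptyset\}$ and similarly for $(X_t)_{\succ q_t}$; swapping the order of summation, it suffices to bound, for each fixed $j$, the number $A_j$ of levels $t$ with $q_t\incom v$ and $C_j\cap(X_t)_{\prec q_t}\neq\emptyset$ (and the symmetric quantity $A_j'$ for the $\succ$-side) by $O(\log^2 n)$ with probability $1-n^{-10}$, after which a union bound over the $\le k$ chains and the two sides yields the lemma.

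For the per-chain bound I would exploit the following structure. Because $C_j$ is totally ordered, $C_j\cap X_t$ is an interval of $C_j$, and passing to any of the three parts keeps a prefix, a middle, or a suffix of it; moreover, if $q_t\incom v$ then any $c\prec q_t$ satisfies $c\not\succ v$ (otherwise $v\prec c\prec q_t$). Hence at a level $t$ counted by $A_j$ the set $X_{t+1}=(X_t)_{\incom q_t}$ drops a \emph{nonempty} prefix of $C_j\cap X_t$ (together with a possibly empty suffix), so $|C_j\cap X_{t+1}|<|C_j\cap X_t|$ and $A_j$ is at most the number of strict decreases of the nonincreasing sequence $|C_j\cap X_1|\ge |C_j\cap X_2|\ge\cdots$ caused by such levels. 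Call a counted level \emph{heavy} for $C_j$ if the dropped prefix has size at least $\tfrac12|C_j\cap X_t|$; then $|C_j\cap X_{t+1}|\le\tfrac12|C_j\cap X_t|$, so there are only $O(\log n)$ heavy levels for $C_j$ along the whole path, contributing $O(\log n)$ to $A_j$. The remaining counted levels are the \emph{light} ones, where $q_t$ dominates only a bottom prefix of $C_j\cap X_t$ of size $<\tfrac12|C_j\cap X_t|$ while being incomparable to a constant fraction of it, so $C_j\cap X_t$ barely shrinks.

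Bounding the number of light levels is the crux, and this is exactly where the randomness---specifically the \emph{order} in which pivots are drawn---is essential: combinatorially the light levels alone could number $\Theta(n)$ for a single chain (each need only remove one element of $C_j$), so a purely dyadic/combinatorial argument cannot work. The intuition (matching the overview's ``if $k_v$ is large then the next pivots are likely small'') is that whenever many vertices of $X_t$ dominate a large chunk of $C_j\cap X_t$, a uniformly random pivot is likely to hit one of them soon, making that level heavy and shrinking $C_j\cap X_t$ geometrically; hence a long run of light levels for $C_j$ can occur only while few such dominating vertices remain, and when few remain the light levels are ``rare events'' that can be controlled by a concentration bound over the $O(k+\log n)$ levels on $v$'s path (using \Cref{lem:AAnc} and \Cref{lem:CAnc} for the path length). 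I would make this precise via a potential/amortization argument that charges each light level either to a later heavy level for $C_j$ or to one of the $O(\log n)$ comparable ancestors of $v$ (\Cref{lem:CAnc}), combined with a Chernoff bound. The main obstacle is precisely this last step: turning the heuristic ``likely'' into an $O(\log^2 n)$ high-probability bound while coping with the fact that the pivot is uniform on the potentially much larger set $X_t$ rather than on the small set $C_j\cap X_t$; this is also where the extra $\log n$ factor beyond the naive $O(k\log n)$ comes from.
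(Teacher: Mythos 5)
Your reduction to a per-chain count $A_j$ is sound, and the heavy-level argument (each heavy level halves $|C_j\cap X_t|$, so there are $O(\log n)$ of them) is correct and deterministic. But the light-level bound --- which you yourself flag as the crux --- is not established, so this is a genuine gap, not a completed proof. The proposed ``potential/amortization argument that charges each light level either to a later heavy level or to a comparable ancestor, combined with a Chernoff bound'' is only a hope: no concrete potential function or charging map is given, it is unclear how a light level would be charged to a comparable ancestor, and the claim that ``a long run of light levels can occur only while few dominating vertices remain'' is precisely the kind of conditioning-heavy statement that requires a careful argument against the adaptive, level-by-level revelation of pivots rather than a one-shot Chernoff bound. (Your own observation that the light levels can be $\Theta(n)$ deterministically shows the argument truly cannot be combinatorial, so everything hinges on this missing step.)

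The paper takes a structurally dual and cleaner route. Rather than classifying \emph{levels} by how much of $C_j$ is shaved off, it classifies the \emph{incomparable ancestors} into $O(\log n)$ groups $Z_0,\dots,Z_\ell$, where $Z_i$ consists of those sandwiched between the $i$-th and $(i{+}1)$-th comparable ancestor (using $\ell=|\CAnc(v)|=O(\log n)$ from \Cref{lem:CAnc}). \Cref{lem:perm-prob} shows that within each group the pivots appear in a uniformly random order. Then, for a fixed chain, one has $X(z_j)_\prec$ intersecting that chain only when the chain-level index $L_a(z_j)$ sets a new record among $L_a(z_1),\dots,L_a(z_j)$ --- because consecutive $z_j$'s in a group form a nested $\incom$-chain so the sets $X(z_j)_\prec$ are pairwise disjoint and, on each chain, occupy disjoint intervals. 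The number of records in a uniform permutation is $O(\log n)$ w.h.p.\ by \Cref{lem:sum-max}, and summing over $k$ chains and $O(\log n)$ groups gives $O(k\log^2 n)$. So the nontrivial probabilistic content is the record-count within each fixed group, not a per-level amortization; if you want to pursue your chain-interval framework, you would still need to discover an analogue of the uniform-within-group structure to make the randomness usable.
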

Now we are ready to prove our main lemma. 

\begin{proof}[Proof of \Cref{lem:part2LE-main}]
Conditioning on $\algPart$ always produces the correct output, we have
\begin{align*}
	\#\mathrm{queries} &=
	\sum_{p\in V}\left(k_{X(p)_\prec}+k_{X(p)_\succ}\right)|X(p)| f(n,k)\\
	&\leq f(n,k)\sum_{v\in V}\left(k+\sum_{p\in \Anc(v)}\left(k_{X(p)_\prec}+k_{X(p)_\succ}\right)\right) \\
	&\leq f(n,k)\sum_{v\in V}\left(k\cdot (|\CAnc(v)|+1)+\sum_{p\in \IAnc(v)}\left(k_{X(p)_\prec}+k_{X(p)_\succ}\right)\right)
\end{align*}

Combining \Cref{lem:LE-correctness}, \Cref{lem:CAnc} and \Cref{lem:AAnc-k}, $\algLE(V)$ outputs a linear extension of $\mathcal{P}$ in $O(nkf(n,k)\log^2 n)$ queries with probability of at least $1-n\varepsilon - 2n^{-7}$.

\end{proof}

\begin{comment}
In , we will introduce a partition algorithm for random graph. Here we want to highlight that both Algorithm \ref{alg:part-to-LE} and the ternary search tree are defined on the poset $\mathcal{P}$ and are not affected by the comparison graph $G$. 
\end{comment}

\subsection{Proof of Lemma \ref{lem:AAnc-k}}

Fix $v$ and the ancestors of $v$, denoted by set $P$, i.e. $P=\Anc(v)$. 
Let the path from $\Troot$ to $v$ (in tree $T$) be $p_1\to p_2\to \ldots \to p_d \to v$ ($p_1=\Troot$). $\mathbf{p} = (p_1,p_2,\ldots ,p_d)$ is a random permutation of $P$ (not uniformly at random). It is natural to consider which permutations of $P$ are possible values of $\mathbf{p}$, denoted by set $\mathrm{PathPerm}_v(P)$. Define 
$$\sgn(x,y)=\begin{cases}
	-1 & x\prec y \\
	+1 & x\succ y \\
	0 & x\incom y
\end{cases}$$
If $x$ is ancestor of $y$ in the ternary search tree, then $\sgn(x,y)$ indicates which subtree $y$ is in ($\prec,\incom$ or $\succ$). Then for every $1\leq i<d$, we can define $\tilde{X}(q_1,q_2,\ldots ,q_i)$ as the set $X(p_{i+1})$, which is uniquely determined by $p_1=q_1,p_2=q_2,\ldots ,p_i=q_i$. The formal definition is
$$\tilde{X}(q_1,q_2,\ldots ,q_i)=\{u\in V\mid \forall 1\leq j\leq i,\sgn(q_j,u)=\sgn(q_j,v)\}.$$
In particular, let $\tilde{X}(\text{empty list})=V$, which corresponds to the special case of $i=0$. 

Now we can give a formal definition of $\mathrm{PathPerm}_v(P)$ by the following lemma, which is equivalent to the previous definition. 
\begin{lemma}\label{lem:pathperm-form}
    For every $\mathbf{q}\in \perm(P)$, $\mathbf{q}\in \pperm_v(P)\iff \forall 1\leq i<j\leq d,\sgn(q_i,q_j)=\sgn(q_i,v)$. 
\end{lemma}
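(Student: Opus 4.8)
The plan is to prove both implications of the stated equivalence, relying only on the basic nested structure of ternary search trees together with the set $\tilde X(\cdot)$ already introduced. Recall that in any ternary search tree $T$ and for any $w\in V$, the three subtrees hanging off $w$ partition $X(w)\setminus\{w\}$ into $X(w)_\prec$, $X(w)_{\incom}$, $X(w)_\succ$, and a vertex $u$ lies in the part indexed by $\sgn(w,u)$. The single fact I will use repeatedly is: if $w$ is a proper ancestor of $v$ in $T$ and $w'$ is the child of $w$ on the root-to-$v$ path, then $X(w')=X(w)_{\sgn(w,v)}$, because $v$ itself must fall into the child-subtree indexed by $\sgn(w,v)$.

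For necessity, I would take $\mathbf q=(q_1,\dots,q_d)\in\pperm_v(P)$, witnessed by a ternary search tree $T$ whose root-to-$v$ path is $q_1\to\cdots\to q_d\to v$, and fix $1\le i<j\le d$. Then $q_i$ is a proper ancestor of $v$, so the fact above gives $X(q_{i+1})=X(q_i)_{\sgn(q_i,v)}$; on the other hand $q_{i+1}$ is the $\sgn(q_i,q_{i+1})$-child of $q_i$, so $X(q_{i+1})=X(q_i)_{\sgn(q_i,q_{i+1})}$, forcing $\sgn(q_i,q_{i+1})=\sgn(q_i,v)$. Since $q_j$ is a descendant of $q_{i+1}$ (or equals it), $q_j\in X(q_{i+1})=X(q_i)_{\sgn(q_i,v)}$, and as $q_j$ also lies in $X(q_i)_{\sgn(q_i,q_j)}$ by definition and the three parts are disjoint, we conclude $\sgn(q_i,q_j)=\sgn(q_i,v)$, as required.

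For sufficiency, given $\mathbf q\in\perm(P)$ with $\sgn(q_i,q_j)=\sgn(q_i,v)$ for all $i<j$, I would explicitly build a ternary search tree realizing the path: set $\Troot=q_1$; for $1\le i<d$ make $q_{i+1}$ the child of $q_i$ in direction $\sgn(q_i,q_{i+1})$; make $v$ the child of $q_d$ in direction $\sgn(q_d,v)$; and complete all remaining subtrees arbitrarily (possible since every poset admits a ternary search tree). The correctness is an induction establishing $X(q_{i+1})=\tilde X(q_1,\dots,q_i)$, with base case $X(q_1)=V=\tilde X(\varnothing)$, and inductive step $X(q_{i+1})=X(q_i)_{\sgn(q_i,q_{i+1})}=\{u\in X(q_i):\sgn(q_i,u)=\sgn(q_i,v)\}$, which together with $X(q_i)=\tilde X(q_1,\dots,q_{i-1})$ yields $\tilde X(q_1,\dots,q_i)$; the hypothesis $\sgn(q_j,q_{i+1})=\sgn(q_j,v)$ for all $j\le i$ is exactly what makes $q_{i+1}\in X(q_i)$, so every child assignment is legitimate, and $v\in\tilde X(q_1,\dots,q_d)$ holds trivially so $v$ may indeed be placed below $q_d$. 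Since by construction the ancestors of $v$ are precisely $q_1,\dots,q_d$, we get $\mathbf q\in\pperm_v(P)$.

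I do not expect a serious obstacle: both directions are structural bookkeeping driven by the $\tilde X$ characterization. The one point to state with some care is that completing the partially specified tree into a full ternary search tree is always possible and does not disturb the prescribed path — this holds because the off-path pieces are independent recursive ternary-search-tree constructions on the sets $X(q_i)_\prec,X(q_i)_{\incom},X(q_i)_\succ$ other than the one on the path, together with a construction on the subtree hanging below $v$.
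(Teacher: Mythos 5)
Your proposal is correct and follows essentially the same route as the paper: the forward direction is the observation that $q_j$ and $v$ lie in the same child-subtree of $q_i$, and the reverse direction verifies $q_{i+1}\in\tilde X(q_1,\dots,q_i)$ and builds the tree by selecting the $q_i$ as pivots in order. You merely spell out the bookkeeping (the induction $X(q_{i+1})=\tilde X(q_1,\dots,q_i)$ and the completion of off-path subtrees) in more detail than the paper does.
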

\begin{proof}
    For every $\mathbf{q}\in \pperm_v(P)$ and $1\leq i<j\leq d$, $v$ and $q_j$ are in the same subtree of $q_i$, hence we have $\sgn(q_i,q_j)=\sgn(q_i,v)$. 

    For every $\mathbf{q}\in \perm(P)$ such that $\forall 1\leq i<j\leq d,\sgn(q_i,q_j)=\sgn(q_i,v)$, we can verify that for every $1\leq i\leq d,q_i\in \tilde{X}(q_1,q_2,\ldots ,q_{i-1})$. 
    By selecting $q_i$ as the pivot vertex of $\tilde{X}(q_1,q_2,\ldots ,q_{i-1})$ for every $1\leq i\leq d$, we can construct a ternary search tree such that $p_1=q_1,p_2=q_2,\ldots ,p_d = q_d$, which implies $\mathbf{q} \in \pperm_v(P)$. 
\end{proof}

For every $x,y\in P, x\neq y$, $\sgn(x,y),\sgn(x,v)$ and $\sgn(y,v)$ are all determined, only the relative order of $x,y$ in $p$ is unknown. By \Cref{lem:pathperm-form}, for some of the vertex pairs $x,y\in P$, it is possible to determine the relative order of $x,y$ in $\mathbf{p}$ according to $\sgn(x,y),\sgn(x,v)$ and $\sgn(y,v)$. 
In specific, there are three possible cases. 

\paragraph{Both $x,y$ are comparable ancestors.} Let set $C$ denotes the comparable ancestors of $v$, i.e. $\CAnc(v) = C$. We sort the comparable ancestors by the order they occur in $\mathbf{p}$, denoted by sequence $c_1,c_2,\ldots ,c_\ell$ ($\ell = |C|$). We first notice that $(C,\prec)$ forms a total order. This is because $\forall 1\leq i<j\leq \ell$, we have $\sgn(c_i,c_j)=\sgn(c_i,v)\neq 0$, i.e. $c_i,c_j$ is comparable. Moreover, we observe that for every $1\leq i\leq \ell$, $c_i$ is either the minimum element or the maximum element of $\{c_i,c_{i+1},\ldots ,c_\ell\}$. 

\paragraph{$x$ is comparable ancestor while $y$ is not.} By \Cref{lem:pathperm-form}, we can verify that $\sgn(x,y)\neq 0 \iff \mathrm{rank}_\mathbf{p}(x)<\mathrm{rank}_\mathbf{p}(y)$. 
If we fix some $z\in \IAnc(v)$, then all comparable ancestors that are comparable with $z$ should be placed before $z$ in $\mathbf{p}$ while all comparable ancestors that are incomparable with $z$ should be placed after $z$ in $\mathbf{p}$.
Combining this result with the observation in previous case, we have the following corollary. For every incomparable ancestor $z\in \IAnc(v)$, let $i=|C_{\prec z}|,j=|C_{\succ z}|$, then $z$ should be placed in some position between $c_{i+j}$ and $c_{i+j+1}$ in $\mathbf{p}$, i.e. $\rnk_\mathbf{p}(c_{i+j})< \rnk_\mathbf{p}(z)<\rnk_\mathbf{p}(c_{i+j+1})$. Moreover, we have $\{c_1,c_2,\ldots ,c_{i+j}\}=C_{\prec z}\cup C_{\succ z}$. 

\paragraph{Both $x,y$ are incomparable ancestors.} Just like we discussed before, $\IAnc(v)$ forms an antichain of $\mathcal{P}$, i.e. we always have $\sgn(x,y)=0$. In this case, both $\rnk_\mathbf{p}(x)<\rnk_\mathbf{p}(y)$ and  $\rnk_\mathbf{p}(x)>\rnk_\mathbf{p}(y)$ are possible. 

According to the discussion above, we fix $c_1,c_2,\ldots ,c_\ell$, i.e. the comparable ancestors of $v$ and the order they occur in $\mathbf{p}$, such that
\begin{itemize}
    \item For every $1\leq i\leq \ell$, $c_i$ is either the minimum element or the maximum element of $\{c_i,c_{i+1},\ldots ,c_\ell\}$. 
    \item For every $z\in Z$, $\{c_1,c_2,\ldots ,c_t\}=C_{\prec z}\cup C_{\succ z}$, where $t=|C_{\prec z}|+|C_{\succ z}|$. 
\end{itemize}
If no such order exists, then $\Anc(v)$ must not be $P$. For incomparable ancestors, we divide them into $\ell+1$ classes $Z_0,Z_1,\ldots ,Z_\ell$ according to $|C_{\prec z}|+|C_{\succ z}|$, i.e. $Z_i:=\{z\in Z \mid |C_{\prec z}|+|C_{\succ z}|=i\}$. For every $0\leq i\leq \ell$, vertices in $Z_i$ must be inserted between $c_i$ and $c_{i+1}$. In particular, vertices in $Z_0$ must be inserted before $c_1$, vertices in $Z_\ell$ must be inserted after $c_\ell$. 
For every $1\leq i\leq \ell$, let $r_i=i +\sum_{0\leq j<i} |Z_j|$, $c_i$ must be placed in the $r_i$-th position in $\mathbf{p}$, i.e. $p_{r_i}=c_i$. In particular, let $r_0=0,r_{\ell+1}=d+1$. 

\begin{lemma}\label{lem:perm-prob}
    For every $\hat{Z}_0\in \mathrm{perm}(Z_0),\ldots ,\hat{Z}_\ell\in \mathrm{perm}(Z_\ell)$, let $(q_1,q_2,\ldots ,q_d)=\hat{Z}_0 || c_1 || \hat{Z}_1 || \ldots || c_\ell || \hat{Z}_\ell$. We have
    $$\Pr[p_1=q_1,p_2=q_2,\ldots ,p_d=q_d \mid \Anc(v)=P,c_1,c_2,\ldots ,c_\ell] = \prod_{i=0}^\ell \frac{1}{|Z_i|!}.$$
\end{lemma}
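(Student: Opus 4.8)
\textbf{Proof proposal for \Cref{lem:perm-prob}.}

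The plan is to compute the conditional probability by following the random choices that $\algLE$ (equivalently, the ternary search tree) makes along the root-to-$v$ path, one level at a time, and to argue that the only \emph{free} randomness left after conditioning on $\Anc(v)=P$ and on the ordered list $c_1,\dots,c_\ell$ is the relative order of the incomparable ancestors inside each block $Z_i$. Concretely, I would set up a telescoping product
\[
\Pr[p_1=q_1,\dots,p_d=q_d \mid \Anc(v)=P, c_1,\dots,c_\ell]
= \prod_{i=1}^{d} \Pr[p_i=q_i \mid p_1=q_1,\dots,p_{i-1}=q_{i-1},\, \Anc(v)=P,\, c_1,\dots,c_\ell],
\]
and then evaluate each factor. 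By the recursive construction, once $p_1=q_1,\dots,p_{i-1}=q_{i-1}$ is fixed, the subproblem at level $i$ is $\algLE$ run on $\tilde X(q_1,\dots,q_{i-1})$, whose pivot is chosen uniformly at random from that set; so the unconditioned factor would be $1/|\tilde X(q_1,\dots,q_{i-1})|$ for the choice $p_i = q_i$ and $v$ must then lie in the appropriate sub-block, but we must further condition on the whole event $\{\Anc(v)=P\} \cap \{c_1,\dots,c_\ell \text{ are the comparable ancestors in this order}\}$. The key point is that this conditioning only restricts \emph{which} element of $\tilde X(q_1,\dots,q_{i-1})$ may be chosen as the next pivot: it must be an ancestor of $v$ (to have $\Anc(v)=P$ we cannot pick any non-ancestor that would separate $v$ prematurely), and, as established in the three-case analysis preceding the lemma, among the ancestors the comparable ones are \emph{forced} into the slots $r_1<\dots<r_\ell$ in the fixed order $c_1,\dots,c_\ell$, whereas an incomparable ancestor in block $Z_i$ may appear in any of the $|Z_i|$ positions $r_i+1,\dots,r_{i+1}-1$.

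So I would split the $d$ levels into the \emph{forced} levels $i \in \{r_1,\dots,r_\ell\}$ and the \emph{free} levels (the positions inside the blocks). At a forced level $r_t$, conditioning on all the above events, $p_{r_t}$ is determined to be $c_t$ with probability $1$ — there is genuinely no choice left, because $c_t$ is the unique comparable ancestor that must come next and the remaining incomparable ancestors of $\tilde X$ are all deferred to later blocks; hence the factor is $1$. At a free level inside block $Z_i$, say it is the $s$-th position of that block, the conditioning tells us that the next pivot (that keeps us on the $v$-path and respects $\Anc(v)=P$ and the fixed $c$-order) must be one of the $|Z_i|-s+1$ elements of $Z_i$ not yet placed, and by the symmetry of the uniform-pivot rule restricted to this conditioning each of them is equally likely; so the factor is $1/(|Z_i|-s+1)$. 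Multiplying the free factors over $s=1,\dots,|Z_i|$ gives $1/|Z_i|!$ per block, and multiplying over $i=0,\dots,\ell$ yields $\prod_{i=0}^{\ell} 1/|Z_i|!$, as claimed; the forced levels contribute only factors of $1$.

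The step I expect to be the main obstacle is making the ``equally likely'' claim at the free levels fully rigorous — i.e., justifying that after conditioning on the global events $\{\Anc(v)=P\}$ and $\{c_1,\dots,c_\ell\}$ the conditional law of the next pivot among the not-yet-placed elements of $Z_i$ is uniform, rather than skewed by the sizes of the three sub-blocks that different choices would induce. The clean way to handle this is to appeal to \Cref{lem:pathperm-form}: the conditioning event is exactly ``$\mathbf p \in \pperm_v(P)$ with the prescribed comparable-ancestor subsequence,'' which is a \emph{symmetric} constraint under permuting the elements of any single $Z_i$ (since $\sgn(z,z')=0$ and $\sgn(z,v)$ is the same for all $z,z'\in Z_i$, no constraint in \Cref{lem:pathperm-form} distinguishes them), so the bijection on tree-realizations obtained by swapping two elements of $Z_i$ at the same pair of positions preserves the probability weight. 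Combining this symmetry with the observation that the product formula is the same for every ordering of each $Z_i$ gives the result, and it also lets me avoid tracking the sub-block sizes explicitly. I would present the swap-symmetry argument first, deduce that every $\mathbf q$ of the stated form is equally likely given the conditioning, and then simply count: there are $\prod_i |Z_i|!$ such orderings and they partition the conditioning event, so each has probability $\prod_{i=0}^{\ell} 1/|Z_i|!$.
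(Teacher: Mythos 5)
Your overall skeleton is the same as the paper's: a telescoping product over the $d$ levels of the root-to-$v$ path, with the comparable ancestors forced into the slots $r_1<\dots<r_\ell$ (each contributing a factor $1$) and the $s$-th free slot of block $Z_i$ contributing $1/(|Z_i|-s+1)$. The paper's proof does exactly this, writing $U_i=\{c_{j^*}\}$ at forced levels and $U_i=Z_{j^*}\setminus\{q_{i'}\mid r_{j^*-1}<i'<i\}$ at free levels, and asserting that $p_i$ is conditionally uniform over $U_i$.

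However, the justification you offer for the step you yourself flag as the obstacle --- the swap-symmetry argument --- does not go through. It is true that the feasibility constraint of \Cref{lem:pathperm-form} is invariant under permuting the elements of a single block $Z_t$ (since $\sgn(z,z')=0$ and $\sgn(z,v)=0$ for all $z,z'\in Z_t$), but the probability of a path realization $\mathbf p=\mathbf q$ is \emph{not}: it equals $\prod_{i=1}^{d}1/|\tilde X(q_1,\dots,q_{i-1})|$ times the final factor for picking $v$ itself, and the intermediate sets are not permutation-invariant. Swapping two adjacent $z,z'\in Z_t$ occupying positions $i$ and $i+1$ leaves $\tilde X(q_1,\dots,q_{i-1})$ and $\tilde X(q_1,\dots,q_{i+1})$ unchanged, but replaces the middle cardinality $|\tilde X(q_1,\dots,q_{i-1},z)|=|\{u\in\tilde X(q_1,\dots,q_{i-1}):u\incom z\}|$ by the corresponding quantity for $z'$, and these differ in general (for instance when some $u$ satisfies $u\prec z$ but $u\incom z'$). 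So the bijection on tree realizations does not preserve probability weight, and ``all valid orderings are equally likely given the conditioning'' does not follow from the symmetry of the constraint set alone. Note that the paper's own proof asserts the conditional uniformity of $p_i$ over $U_i$ without further argument, so you have correctly located the crux of the lemma; but the fix you propose does not close it --- any complete argument has to account for the dependence of the intermediate subtree sizes $|\tilde X(q_1,\dots,q_j)|$ on the within-block order, not merely for the symmetry of the feasibility condition.
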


\begin{proof}
    For every $1\leq i\leq d$, conditioning on $\Anc(v)=P$, $p_1=q_1,p_2=q_2,\ldots ,p_{i-1}=q_{i-1}$ and $c_1,c_2,\ldots ,c_\ell$, $p_i$ is uniformly distributed over some vertices in $\tilde{X}(q_1,q_2,\ldots ,q_{i-1})$, denoted by set $U_i$. For every $1\leq i\leq d$, let $j^*=\min \{ j \mid i<r_j \}-1$, we have
    $$U_i=\begin{cases}
        \{c_{j^*}\} & i = r_{j^*} \\
        Z_{j^*} \setminus \{ q_{i'} \mid r_{j^*-1} < i' < i\} & i \neq r_{j^*}
    \end{cases}$$
    For every $1\leq i\leq d$, let $\mathcal{E}_i$ be the event that $p_1=q_1,p_2=q_2,\ldots ,p_i=q_i$. 
    We can verify that 
    \begin{align*}
    \Pr[\mathcal{E}_d \mid \Anc(v)=P,c_1,c_2,\ldots ,c_\ell] &= \prod_{i=1}^d \Pr[\mathcal{E}_i \mid \Anc(v)=P,c_1,c_2,\ldots ,c_\ell,\mathcal{E}_{i-1}] \\
    &=E\left[\prod_{i=1}^d \frac{1}{|U_i|} \mid \Anc(v)=P,c_1,c_2,\ldots ,c_\ell \right] \\
    &=\prod_{i=0}^\ell \frac{1}{|Z_i|!}.
    \end{align*}
    
\end{proof}

\begin{proof}[Proof of Lemma \ref{lem:AAnc-k}]
    Fix some $0\leq i\leq \ell$. We show that $\sum_{z\in Z_i}(k_{X(z)_\prec} + k_{X(z)_\succ}) = O(k\log k)$ with high probability. 

    We sort the vertices in $Z_i$ by the order they occur in $\mathbf{p}$, denoted by $z_1,z_2,\ldots ,z_{t_i}$, where $t_i = |Z_i|$. By \Cref{lem:perm-prob}, $z_1,z_2,\ldots ,z_{t_i}$ is uniformly distributed over $\perm(Z_i)$. 

    Pick an arbitrary chain decomposition of $\mathcal{P}$, let it be $\mathcal{S}=\{S_1,S_2,\ldots ,S_k\}$. For each chain $S_i$, we write its vertices by ascending order $s_{i,1}\prec s_{i,2}\prec \ldots \prec s_{i,|S_i|}$. 
    Let $L_1(u),L_2(u),\ldots ,L_k(u)$ be integers such that $V_{\prec u}=\bigcup_{1\leq a\leq k}\{s_{a,b}\mid 1\leq b\leq L_i(u)\}$. 

    Recall that $z_1,z_2,\ldots ,z_{t_i}$ are all incomparable ancestors of $v$. Hence we have $X(z_j)_\prec \subseteq V\backslash \bigcup_{1\leq j'<j}X(z_{j'})_\prec$ for every $1\leq j\leq t_i$. 
In each chain $S_a$, we have 
$$X(z_j)_\prec\cap S_a \subseteq \left\{s_{a,b}\mid b \in \left(\max_{1\leq j'<j} L_a(z_{j'}) , L_a(z_j)\right] \right\}.$$ 
    That is to say, $X(z_j)_\prec\cup S_a \neq \varnothing$ only if $L_a(z_j) > \max_{1\leq j'< j} L_a(z_{j'})$. Then we have
$$\sum_{j=1}^{t_i}k_{X(z_j)_\prec} \leq \sum_{a=1}^k\sum_{j=1}^{t_i} \mathbb{I}(X(z_j)_\prec\cup S_a \neq \varnothing) \leq \sum_{a=1}^k\sum_{j=1}^{t_i} \mathbb{I}\left(L_a(z_j)>\max_{1\leq j'< j} L_a(z_{j'})\right).$$
    where $\mathbb{I}$ is the indicator function. 

    $L_a(z_1),L_a(z_2),\ldots ,L_a(z_{t_i})$ is a uniform permutation of $\{L_a(z_j) \mid 1\leq j\leq t_i\}$. Although $\{L_a(z_j) \mid 1\leq j\leq t_i\}$ may contain duplicative elements, we can break tie arbitrarily for the equal elements so that we can apply a well-known bound, stated as below. 
    
    \begin{restatable}{lemma}{SumMax}\label{lem:sum-max}
        Let $A$ be a set of $m$ real numbers. $p$ is a uniform permutation of $A$. 
    With probability at least $1-\varepsilon$ ($\varepsilon \leq 1/m$), we have
    $$\sum_{i=1}^m \mathbb{I}\left(p_i > \max_{1\leq i'<i} p_{i'}\right) \leq 3\ln(1/\varepsilon).$$
    \end{restatable}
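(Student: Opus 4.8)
The plan is to analyze the indicator random variables $I_i := \mathbb{I}(p_i > \max_{1 \le i' < i} p_{i'})$ (the "record" or "left-to-right maximum" indicators) and show their sum concentrates. The classical fact is that for a uniform permutation of $m$ distinct reals, $\Pr[I_i = 1] = 1/i$, because among the first $i$ positions each is equally likely to hold the maximum of those $i$ values; moreover, the $I_i$ are \emph{mutually independent}. This independence is the standard fact underlying the analysis of the number of cycles in a random permutation (the Chinese restaurant / Feller coupling picture), and it is what makes concentration clean.

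First I would reduce to the distinct case: if $A$ has repeated values, break ties by an arbitrary fixed rule (e.g., perturb by distinct infinitesimals consistent with a fixed labeling), which can only weakly increase each $I_i$ compared to treating equal elements as "not a new maximum," so proving the bound in the perturbed distinct model proves it for the original. Then I would establish, or cite as well known, that $I_1, \dots, I_m$ are independent Bernoulli variables with $\mathbb{E} I_i = 1/i$. Hence $X := \sum_i I_i$ has mean $H_m = \sum_{i=1}^m 1/i \le 1 + \ln m \le 2\ln(1/\varepsilon)$ whenever $\varepsilon \le 1/m$ (using $\ln(1/\varepsilon) \ge \ln m$).

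Next I would apply a multiplicative Chernoff bound for sums of independent Bernoullis: for $\mu \ge \mathbb{E} X$ and $t > 1$, $\Pr[X \ge t\mu] \le e^{-(t\ln t - t + 1)\mu}$, or more conveniently $\Pr[X \ge a] \le 2^{-a}$ once $a \ge 2e\,\mathbb{E} X$ (a standard corollary, e.g., $\Pr[X \ge a] \le (e\mathbb{E}X/a)^a$). Taking $a = 3\ln(1/\varepsilon)$ and using $\mathbb{E} X \le 1 + \ln m \le 1 + \ln(1/\varepsilon)$, one checks $a \ge 2e\,\mathbb{E}X$ fails to hold in general for tiny $\mathbb{E}X$, so instead I would run the bound $\Pr[X \ge a] \le (e \mathbb{E}X / a)^a$ directly with $\mathbb{E}X \le 1 + \ln m$; since $\varepsilon \le 1/m$ gives $\ln(1/\varepsilon) \ge \ln m$, we get $\mathbb{E} X \le 1 + \ln(1/\varepsilon)$, and with $a = 3\ln(1/\varepsilon)$ a short computation shows $(e\mathbb{E}X/a)^a \le \varepsilon$ for all $\varepsilon$ small enough (and for the boundary regime where $\ln(1/\varepsilon)$ is $O(1)$ one handles it by noting the bound $3\ln(1/\varepsilon)$ is a comfortably large constant multiple of the mean). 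Alternatively, the cleanest route is the Chernoff bound in the form $\Pr[X \ge (1+\delta)\mu] \le (e^\delta/(1+\delta)^{1+\delta})^\mu$ with $\mu = H_m$ and $(1+\delta)\mu = 3\ln(1/\varepsilon)$.

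The main obstacle is purely bookkeeping: getting the constant $3$ to work across the full range $\varepsilon \le 1/m$, including both the small-$\varepsilon$ tail (where Chernoff is very strong) and the boundary $\varepsilon \approx 1/m$ (where $3\ln(1/\varepsilon)$ is only a modest multiple of the mean $H_m \approx \ln m$). I expect this to go through because at $\varepsilon = 1/m$ we need $\Pr[X \ge 3\ln m] \le 1/m$, and since $\mathbb{E} X \le 1 + \ln m < 1.1 \ln m$ for $m$ not too small, the ratio $a/\mu \gtrsim 2.7$ is enough for the multiplicative Chernoff bound $(e/(a/\mu))^{a}$-type estimate to beat $1/m$; the finitely many small-$m$ cases can be checked directly. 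No conceptual difficulty arises — the only real content is the independence-of-records fact, which I would either prove in two lines via the sequential "insert elements one at a time" construction of a uniform permutation or simply invoke as classical.
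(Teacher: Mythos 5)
Your proposal is correct and takes essentially the same route as the paper: both rely on the classical fact that the record indicators $X_i$ are mutually independent Bernoulli with $\mathbb{E}[X_i]=1/i$ (the paper proves it via the same backwards sequential construction you allude to), compute $\mu=H_m\le 1+\ln m\le 1+\ln(1/\varepsilon)$, and apply a multiplicative Chernoff bound. The only real difference is that the paper dispatches the final numerical step with a bare ``By Chernoff bound,'' whereas you (rightly) flag that verifying the constant $3$ uniformly over the range $\varepsilon\le 1/m$, including the boundary $\varepsilon\approx 1/m$ and small $m$, requires a short case check; your concern is legitimate since the raw Chernoff estimate is loose for very small $m$ and one must fall back on the trivial bound $X\le m$ there.
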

    \begin{proof}
        The proof can be found in \Cref{sec:proof_lem:sum-max}.
    \end{proof}

    By applying union bound over $i=0,1,\ldots ,\ell, a=1,2,\ldots ,k$, together with \Cref{lem:sum-max} (take $\varepsilon = n^{-9}$), we prove that with probability of $1-n^{-7}$, 
$$\sum_{p\in \IAnc(v)}k_{X(p)_\prec} \leq \sum_{i=1}^\ell \sum_{j=1}^{t_i}\sum_{a=1}^k \mathbb{I}\left(L_a(z_j)>\max_{1\leq j'< j} L_a(z_{j'})\right) \leq 27k\ell \log n.$$

    We can prove that $\sum_{p\in \IAnc(v)} k_{X(p)_\succ} \leq 27k\ell \log n$ in the same way. 

    Combine this inequality with Lemma \ref{lem:CAnc} which says $\ell \leq 60\log n$ with probability at least $1-n^{-7}$, then apply union bound over all vertices $v$. 
    We conclude our proof. 
\end{proof}

\section{Partition Algorithms for \ER Query Graphs}
\label{sec:random_graph}

We prove \Cref{thm:er} in this section.
Due to the reductions introduced in \Cref{sec:framework} (\Cref{thm:linearexttoposet} and \Cref{lem:part2LE-main,lem:part2LE-main2}),
it suffices to design a partition algorithm for \ER query graphs.

\MainThmRG*

In our partition algorithm, given in \Cref{alg:partition-rand},
we do a graph traversal to identify the vertices smaller and larger to the pivot (and the remaining vertices are those not comparable to the pivot).
While this can be trivially done by a vanilla BFS, it suffers an efficiency issue as it needs to examine all edges in the graph (which is $O(n^2p)$ w.h.p.).
To resolve this issue, we propose a variant of BFS that can skip vertices in the queue.
Roughly speaking, we assign a health-point (HP) to every vertex with an initial value $R \geq 1$, and the HP of a vertex is decreased every time it is hit by a BFS exploration of other vertices. A dead vertex, i.e., whose HP reaches $0$, is skipped in the BFS queue and cannot explore its neighbors.
We show that for every parameter $R$, the complexity of this BFS is only $\tilde{O}(nRk)$.

In fact, the vanilla BFS may be viewed as $R = n$ case where no vertex is skipped.
On the other hand, we can achieve a better performance when $R$ is smaller, but we may not find the correct distances to the pivot.
Hence, it is crucial to find a suitable value of $R$.
To this end, we show that setting $R = \tilde{O}(k)$, which reduces the linear dependence in $n$ to $\poly\log n$ and the main structure parameter $k$ of the poset, actually yields the \emph{same} result (i.e., distances to pivot) as in $R = n$, with high probability.

Since our partition algorithm does not always produce the correct output, 
in order to satisfy the requirement of \Cref{lem:part2LE-main} (the lemma that translates partition oracle to linear extension),
the probability that our algorithm fails must be upper bounded by a parameter $\varepsilon$ that is independent of $n$.
To this end, we need to introduce another parameter $N$, which denotes the number of vertices of the original input graph in GPS problem. 
 
	\begin{algorithm}[h]
\caption{Partition Algorithm: Vertex-skipping BFS (\textsc{Skip-BFS})}\label{alg:partition-rand}
\begin{algorithmic}[1]
\Function{Skip-BFS}{$\pvt$}
\State $\hat{D}_0\gets \{\pvt\}$
\State $\hat{D}_1\gets \{\mathrm{in}(\mathrm{pivot})\}$
\State for every $\ell\geq 2$, initialize $\hat{D}_\ell$ as $\varnothing$

\State let parameter $R\gets k+18\log N$
\State for all $v\in V$, initialize $c[v]$ as $R$
\For{$\ell=1,2,\ldots ,n$}
    \State permute $\hat{D}_\ell$ in a random order
    \For{$v\in \hat{D}_\ell$ \label{line:org-for2}}
        \If{$c[v]>0$ \label{line:org-if}}
            \State for every vertex $u\in V\setminus \hat{D}_{\leq \ell-1}$, query edge $(u,v)$
            \State for every vertex $u\in \hat{D}_\ell,(u,v)\in \vec{E}$, decrease $c[u]$ by 1
            \State add all vertices $u\in V\setminus \hat{D}_{\leq \ell}$ such that $(u,v)\in \vec{E}$ to $\hat{D}_{\ell+1}$
\EndIf
    \EndFor
\EndFor
\State \textbf{return} $\bigcup_{\ell=1}^n \hat{D}_\ell$ \Comment{$\bigcup_{\ell=1}^n \hat{D}_\ell$ is expected to be $V_\prec$}
\EndFunction
\end{algorithmic}
\end{algorithm}

Here we define some notations for discussion. 
\begin{definition}
    For a DAG $\vec{G} = (V,\vec{E})$, for every vertex $v\in V$, define $\gIn_{\vec G}(v):=\{u\mid (u,v)\in \vec{E}\}$ as the predecessors of $v$, define $\gOut_{\vec G}(v):=\{u\mid (v,u)\in \vec{E}\}$ as the successors of $v$. 
    For every vertex $u,v\in V$, let $\vec{E}(u,v)$ denote the event that $(u,v)\in \vec{E}$. 
    For every vertex $u,v\in V$ such that $u\prec v$, define $d_{\vec G}(u,v)$ as the length of the shortest path from $u$ to $v$. 

    For an undirected graph $G=(V,E)$, for every vertex $v\in V$, define $\gAdj_G(v):=\{u \mid (u,v)\in V\}$. For every vertex $u,v\in V$, let $E(u,v)$ denote the event that $(u,v)\in E$. 

    For $0\leq \ell \leq n$, define $D_\ell:=\{v\mid d_{\vec G}(v,\pvt) = \ell\}$, where $\vec G$ is the underlying directed graph in GPS problem. Define $D_{\leq \ell} = \bigcup_{0\leq i\leq \ell} D_i$, $\hat{D}_{\leq \ell} = \bigcup_{0\leq i\leq \ell} \hat{D}_i$. 

    $V_\prec, V_\succ, k_\prec, k_\succ$ denotes $V_{\prec \pvt}, V_{\succ \pvt}, k_{V_{\prec \pvt}}, k_{V_{\succ \pvt}}$, respectively. 
\end{definition}
	
\subsection{Correctness}

In the correctness proof, we want to ensure that our algorithm does not miss any vertex of $V_\prec$ in the exploration. Ideally, for every vertex $v\in V_\prec$, $v$ should be explored when $\ell = d_{\vec G}(v,\pvt)$, i.e. $\forall v\in V_\prec, v\in \hat{D}_{d_{\vec G}(v,\pvt)}$. This is formally stated by \Cref{lem:rg-correctness}.

\begin{lemma}[Correctness] \label{lem:rg-correctness}
With probability $1-N^{-4}$, for every $1\leq \ell < n$, $\hat{D}_\ell = D_\ell$. 
\end{lemma}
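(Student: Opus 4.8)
The plan is to prove \Cref{lem:rg-correctness} by induction on the layer index $\ell$, controlling the bad event by a union bound over vertices and layers. For $\ell=0$ and $\ell=1$ the claim is immediate from the initialization: $\hat D_0=\{\pvt\}=D_0$, and $\hat D_1=\gIn_{\vec G}(\pvt)=D_1$ since a vertex is at $\vec G$-distance $1$ from $\pvt$ exactly when it has an edge into $\pvt$. Assume $\hat D_i=D_i$ for all $i\le\ell$. The inclusion $\hat D_{\ell+1}\subseteq D_{\ell+1}$ is then \emph{deterministic}: any vertex the algorithm adds to $\hat D_{\ell+1}$ is a $\vec G$-predecessor of some $v\in\hat D_\ell=D_\ell$ and lies outside $\hat D_{\le\ell}=D_{\le\ell}$, hence has distance exactly $\ell+1$ to $\pvt$. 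So the whole content is the reverse inclusion: every $u\in D_{\ell+1}$ must be \emph{discovered}, meaning $u\in\gIn_{\vec G}(v)$ for some $v\in D_\ell$ that the algorithm actually processes (does not skip because $c[v]$ hit $0$ first).

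Fix $u\in D_{\ell+1}$ and let $Q_u:=\{v\in D_\ell : v\succ u\}$; it is nonempty, since the second vertex $x$ on a shortest $u$-to-$\pvt$ path satisfies $u\to x$ and $x\in D_\ell$. The argument rests on two structural facts. First, the skipping dynamics \emph{restricted to $Q_u$} are self-contained: if $v\in Q_u$, then every out-neighbor of $v$ that lies in $D_\ell$ is $\succ v\succ u$ and hence in $Q_u$, so $c[v]$ is only ever decremented by vertices of $Q_u$, and the processing order restricted to $Q_u$ is a uniform permutation of $Q_u$ that is independent of the edge set. Second, by minimality of $\Gbase$ the set $\gOut_{\Gbase}(v)$ is an antichain, so $|\gOut_{\Gbase}(v)|\le k$; consequently $v\in Q_u$ can be skipped only if it has at least $R-k=18\log N$ \emph{\ER} out-edges landing inside $Q_u$, which is precisely how the ``$+k$'' in $R=k+18\log N$ accounts for the base graph.

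I would then argue, in two regimes according to the magnitude of $p\,|Q_u|$, that the set $P_u$ of vertices of $Q_u$ that the algorithm processes is large, namely $|P_u|=\Omega(\log N/p)$ with high probability. When $p\,|Q_u|$ is small (a constant times $\log N$), a Chernoff bound, union-bounded over $v\in Q_u$, shows that w.h.p.\ no vertex of $Q_u$ even has $18\log N$ \ER out-edges inside $Q_u$, so nothing there is skipped and $P_u=Q_u$; note then $u$ is discovered through $x$. When $p\,|Q_u|$ is large, one shows that the expected number of \emph{skipped} vertices of $Q_u$ is a small fraction of $|Q_u|$ --- the easy direction is that every $v$ whose up-set within $Q_u$ is short (size $O(\log N/p)$) has too few out-neighbors in $Q_u$ to ever be skipped, and in a width-$\le k$ poset such vertices together with the degree/count bookkeeping force $|P_u|=\Omega(\log N/p)$ --- followed by a concentration inequality for the random processing order. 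Having $|P_u|$ fixed (by conditioning on the edges among $Q_u$ and on the order, which is independent of the \ER edges from $u$ into $Q_u$), each vertex of $P_u$ is a query-neighbor of $u$ independently with probability at least $p$, so $u$ fails to be discovered with probability at most $(1-p)^{|P_u|}\le (1-p)^{\Omega(\log N/p)}\le N^{-\Omega(1)}$. Since $\sum_\ell|D_{\ell+1}|\le N$, a union bound over all (vertex, layer) pairs that could fail, together with an appropriate choice of the constants, yields the stated $N^{-4}$.

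The main obstacle is the interplay between the algorithm's adaptivity and the \ER randomness: the set of processed vertices is a complicated function of exactly the random edges one wishes to ``re-randomize'' when arguing that $u$ connects into it. Localizing to $Q_u$ is what breaks this circularity, because the processed set \emph{inside} $Q_u$ depends only on edges strictly above $u$ (and on the independent random order), which are independent of the edges incident to $u$; and the ``safe'' part of $Q_u$ must be extracted from the poset structure rather than from the execution. The second delicate point is making the two-regime analysis uniform over all $p\in(0,1]$ --- the range in which skipping actually occurs is exactly the range in which $Q_u$ and its safe part are large enough to absorb the loss --- so that the single threshold $R=k+18\log N$ suffices; this is where one must be careful about the constants in the Chernoff bounds and in the lower bound on $|P_u|$.
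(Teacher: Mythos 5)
Your plan has the right skeleton — the induction on $\ell$, the observation that $\hat D_{\ell+1}\subseteq D_{\ell+1}$ is deterministic, the localization to $Q_u$ and the fact that the skipping dynamics restricted to $Q_u$ are self-contained, and the $R=k+18\log N$ accounting via $|\gOut_{\Gbase}(v)|\le k$ (which requires reading ``minimal'' as transitive reduction, as the paper clearly intends). The idea of conditioning on edges internal to $Q_u$ plus the random order, and then using freshness of the edges from $u$ into $P_u$, is also the right independence argument, and corresponds to the paper's oracle-model bookkeeping in \Cref{alg:partition-rand-oracle}.

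However, the heart of your proposal — the two-regime argument that $|P_u|=\Omega(\log N/p)$ when $p|Q_u|$ is large — has a genuine gap, and the sketch you offer does not fill it. Your ``easy direction'' says that vertices of $Q_u$ with up-set of size $O(\log N/p)$ cannot be skipped, and that in a width-$k$ poset there are enough of them. But the number of elements with up-set below a threshold $s$ in a width-$k$ poset can be as small as $\Theta(s/k)$ (peel off at most $k$ maximal elements at a time; after $t$ rounds the remaining up-sets are bounded by $(t-1)k$), so this only gives $|P_u|=\Omega\bigl(\log N/(pk)\bigr)$, which would leave a discovery-failure probability of the form $N^{-\Omega(1/k)}$ rather than $N^{-\Omega(1)}$. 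The ``degree/count bookkeeping'' you allude to as closing this factor-$k$ gap is also delicate, since the quantity $\sum_{v\in P_u}|\gIn_{\vec G}(v)\cap Q_u|$ involves the random set $P_u$, which is itself a function of exactly the \ER edges you would like to concentrate; you cannot union-bound over all possible $P_u$, and bounding via $\sum_{v\in Q_u}|\gIn_{\vec G}(v)\cap Q_u|$ is only useful when $p|Q_u|<R$.

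The paper's argument sidesteps this entirely: it never tries to lower-bound $|P_u|$ from poset structure. Instead it takes the skipped vertex itself (for you, the deterministic neighbor $x$ of $u$ in case $x$ is skipped) and observes that skipping \emph{by itself} forces at least $\hat S_x\ge R-k=18\log N$ revealed \ER out-edges from $x$ to earlier-processed vertices; a Chernoff bound in the delayed-revelation oracle model then forces the number $S_x$ of processed vertices strictly above $x$ to be at least $10\log N/p$, and these are automatically above $u$. This gives the needed $\Omega(\log N/p)$ processed witnesses with no case split on $p|Q_u|$ and no dependence on $k$ in the count. If you replace your regime analysis with this one-line inversion (``skipped $\Rightarrow$ many processed above, via Chernoff''), the rest of your outline — union bound over $u$ and $\ell$, freshness of the edges from $u$ into these witnesses — goes through and essentially coincides with the paper's proof.
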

	
\paragraph{Oracle model for random graph.} In our analysis, we need to prove the independence between certain random events and the randomness of edges $E(u,v)$. Although \Cref{alg:partition-rand} gives a clear description of Skip-BFS, it is hard to see whether a random event depends on a certain edge $E(u,v)$. 
To clearly demonstrate the independence between events and edges, we give equivalent descriptions of Algorithm \ref{alg:partition-rand} under an \emph{oracle model}. In this oracle model, the undirected graph $G=(V,E)$ is given by an oracle. Our algorithm may \textbf{reveal} the existence of edge $(u,v)$ by asking the oracle. 
	For every vertex pair, $E(u,v)$ is only allowed to be revealed once. If the value of some random variable $x$ is always determined before $E(u,v)$ is revealed, then  $x$ is independent of $E(u,v)$. 

To prove \Cref{lem:rg-correctness}, we give an equivalent description of \Cref{alg:partition-rand} under this oracle model in \Cref{alg:partition-rand-oracle}. The proof of \Cref{lem:rg-correctness} is based on \Cref{alg:partition-rand-oracle} and use the notations therein.

Below are some definitions and explanations about \Cref{alg:partition-rand-oracle}. 
\begin{itemize}
    \item There are three for-loops in \Cref{alg:partition-rand-oracle}. We use ``$\ell$-th iteration'' or iteration $\ell = x$ to refer to the for-loop in line \ref{line:oracle1-for1}, use ``$i$-th iteration'' or iteration $i = x$ to refer to the for-loop in line \ref{line:oracle1-for2}. We do not refer to a certain iteration of the for-loop in line \ref{line:oracle1-for3}. 
    \item In \Cref{alg:partition-rand}, we keep tracking on all HP counters throughout the $|\hat{D}_\ell|$ iterations (of the for-loop in line \ref{line:org-for2}). But these counters are only used in line \ref{line:org-if}. For each counter $c[v_i^{(\ell)}]$, only the value at the beginning of iteration $i$ is used to check whether $v_i^{(\ell)}$ should be skipped. We use $\hat{c}_i$ to record this value in 
 \Cref{alg:partition-rand-oracle}. 
    \item In the following discussion, the superscript $(\ell)$ is sometimes omitted when $\ell$ is clear. 

    \item The main difference between \Cref{alg:partition-rand} and \Cref{alg:partition-rand-oracle} is that the evaluation of $\hat{c}_i$ and $\hat{D}_{\ell+1}$ are delayed to the moment they are used. In oracle model, revealing the edges later helps our analysis. 
\end{itemize}

	\begin{algorithm}[h]
\caption{Skip-BFS under Oracle Model (used in \Cref{lem:rg-correctness})}\label{alg:partition-rand-oracle}
\begin{algorithmic}[1]
\Function{Oracle-Skip-BFS1}{$\pvt$}
\State $\hat{D}_0\gets \{\pvt\}$
\State $\hat{D}_1\gets \{\mathrm{in}(\mathrm{pivot})\}$
\State let parameter $R\gets k+18\log N$

\For{$\ell\gets 1,2,\ldots ,n$ \label{line:oracle1-for1}}
    \State let $v_1^{(\ell)},v_2^{(\ell)},\ldots ,v_{W_\ell}^{(\ell)}$ be a random permutation of $\hat{D}_\ell$
    \For{$i\gets 1,2,\ldots ,W_\ell$ \label{line:oracle1-for2}}
    	\State reveal and query all edges between $\left\{v_j^{(\ell)}\mid j<i,\hat{c}_j^{(\ell)}>0\right\}$ and $v_i^{(\ell)}$
    	\State compute $\hat{c}_i^{(\ell)} \gets R-\sum_{j\in [i-1],\hat{c}_j^{(\ell)}>0}\vec{E}(v_i^{(\ell)},v_j^{(\ell)})$\Comment{$\hat{c}_i^{(\ell)}$ does not change after it is computed}
    \EndFor
    \State define $I^{(\ell)}$ as $\{i:\hat{c}_i>0\}$
    \For{$i\in I^{(\ell)}$ \label{line:oracle1-for3}}
\State reveal and query all edges between $V\setminus \hat{D}_{\leq \ell}$ and $v_i^{(\ell)}$ \Comment{$\hat{D}_{\leq \ell}$ is defined as $\hat{D}_0 \cup \hat{D}_1 \cup \ldots \cup \hat{D}_\ell$}
\State add all vertices $u\in V\setminus \hat{D}_{\leq \ell}$ such that $(u,v_i^{(\ell)})\in \vec{E}$ to $\hat{D}_{\ell+1}$
	\EndFor
\EndFor
\State \textbf{return} $\bigcup_{\ell=1}^n \hat{D}_\ell$
\EndFunction
\end{algorithmic}
\end{algorithm}

	\begin{proof}[Proof of \Cref{lem:rg-correctness}]
    In order to bound the probability that $\forall \ell$, $\hat{D}_\ell = D_\ell$,
    we consider the opposite and examine the first $\ell$ such that $\hat{D}_\ell \neq D_\ell$.
    Then we have the following
	\[
    \Pr\left[\exists 1\leq \ell\leq n,\hat{D}_\ell\neq D_\ell\right]\leq \sum_{\ell=0}^{n-1} \Pr\left[\hat{D}_{\ell+1}\neq D_{\ell+1}\mid \hat{D}_{1}=D_{1},\ldots ,\hat{D}_{\ell}=D_{\ell}\right].
    \]
	
	Fix some $\ell \in [1,n]$. 	
	In Skip-BFS, only vertices in $\{v_i^{(\ell)}\mid i\in \mathcal{I}^{(\ell)}\}$ are explored in iteration $\ell$. 
 Conditioning on $\hat{D}_{1}=D_{1},\ldots ,\hat{D}_{\ell}=D_{\ell}$, we have $\hat{D}_{\ell+1} = D_{\ell+1}$ if and only if for every omitted vertex $v_i^{(\ell)}$ ($i\in [W_\ell]\setminus \mathcal{I}^{(\ell)}$), every its predecessor $u\in \gIn_{\vec G}(v_i^{(\ell)})$ is explored by some $v_j^{(\ell)}$ ($j\in \mathcal{I}^{(\ell)}$), i.e. 
	$$\hat{D}_{\ell+1} = D_{\ell+1} \iff \forall i\in [W_\ell]\setminus \mathcal{I}^{(\ell)},u\in \gIn_{\vec G}(v_i^{(\ell)})\setminus D_{\leq \ell} ,\exists j\in \mathcal{I}^{(\ell)},(u,v_j^{(\ell)})\in \vec{E}.$$ 

    For every omitted vertex $v_i$ ($i\in [W_\ell]\setminus \mathcal{I}$), there are at least $R$ edges between $\mathcal{I}_{<i}:=\{v_j\mid j\in \mathcal{I},j<i\}$ and $v_i$. 
    We count the number of random edges (edges not in $\Gbase$) by $\hat{S}_i=\sum_{j\in \mathcal{I}_{<i},(v_i,v_j)\notin \Ebase} \vec{E}(v_i,v_j)$. We have $\hat{S}_i \geq R-k=18 \log N$ (with probability $1$). 

    In \Cref{alg:partition-rand-oracle}, all edges between $\{v_j\mid j\in \mathcal{I},j<i\}$ and $v_i$ remains unrevealed until we calculate $\hat{c}_i$. Hence $\hat{S}_i$ is sum of $S_i$ i.i.d. Bernouli variables, where $S_i = \sum_{j\in \mathcal{I}_{<i},(v_i,v_j)\notin \Ebase} \mathbb{I}(v_i<v_j)$. By Chernoff bound, for every $i\in [W_\ell], S_i \geq 18\log N$,
	$$\Pr\left[\hat{S}_i>1.1pS_i \mid v_1,v_2,\ldots ,v_{i-1} \right]\leq \exp(-1.21pS_i/2.1) \leq N^{-6}.$$

 \begin{comment}
	Define $\hat{T}_i:=\{v_j\mid (v_i,v_j)\in \vec{E},j<i,(v_i,v_j)\notin \vec{E}_{base}\}$. 
 By $i\notin \mathcal{I}^{(\ell)}$, we have $|\hat{T}_i|\geq R-k$.

	Define $T_i:=\{v_j \mid v_i<v_j,j<i,(v_i,v_j)\notin\vec{E}_{base}\}$. 
	
	According to our oracle model [how?], all edges between $\{v_j\mid j\in \mathcal{I},j<i\}$ and $v_i$ remains unrevealed until the calculation of $c_{i-1}(v_i)$. 
\end{comment}

For other $i$ such that $S_i < 18\log N$, we have $\hat{S}_i \leq S_i < 18\log N$, hence $i$ must be in $\mathcal{I}$. 

Define the bad event $\mathcal{E}_1$ as $\exists i\in [W_\ell], S_i \geq 18\log N, \hat{S}_i>1.1pS_i$. 
By taking union bound over every $i\in [W_\ell]$ such that $S_i \geq 18\log N$, 
for every permutation $\mathbf{v}=(v_1,v_2,\ldots ,v_{W_\ell})$,
$\Pr[\mathcal{E}_1\mid \mathbf{v}] \leq N^{-5}$. 

For every $i\in [W_\ell]\setminus \mathcal{I}$, $\hat{S}_i\leq 1.1pS_i$ implies $S_i \geq \frac{\hat{S}_i}{1.1p}\geq 10\log N/p$, which means there are at least $10\log N/p$ vertices $v_j$ ($j\in \mathcal{I}$) such that $v_i<v_j<\pvt$. Notice all edges between $\hat{D}_\ell$ and $V\setminus \hat{D}_{\leq \ell}$ remains unrevealed until the last step of level $\ell$. For every $u<v_i$, 
 $$\Pr[\exists u<v_i,\forall j\in \mathcal{I},(u,v_j)\notin \vec{E}\mid S_i\geq 10\log N/p] \leq n(1-p)^{10\log N/p} \leq N^{-9}.$$

 Define the bad event $\mathcal{E}_2$ as $\exists i\in [W_\ell]\setminus \mathcal{I}$, 
 such that $S_i \geq 10\log N/p$ and $\exists u<v_i,\forall j\in \mathcal{I},(u,v_j)\notin \vec{E}$. We have for every $\mathbf{v}=(v_1,v_2,\ldots ,v_{W_\ell})$, $\Pr[\mathcal{E}_2 \mid \mathbf{v}] \leq N^{-9}$.

 Let $\mathrm{EQ}_\ell$ be the event that $\hat{D}_{1}=D_{1},\ldots ,\hat{D}_{\ell}=D_{\ell}$. Let $\mathcal{E}$ be the event that $\exists i\in [W_\ell]\setminus \mathcal{I}^{(\ell)},u\in \gIn_{\vec G}(v_i^{(\ell)})\setminus D_{\leq \ell} ,\forall j\in \mathcal{I}^{(\ell)},(u,v_j^{(\ell)})\in \vec{E}$.
 We conclude that
 \begin{align*}
     \Pr\left[\exists 1\leq \ell\leq n,\hat{D}_\ell\neq D_\ell\right]&\leq \sum_{\ell=0}^{n-1} \Pr\left[\hat{D}_{\ell+1}\neq D_{\ell+1}\mid \mathrm{EQ}_\ell\right] \\
      &\leq \sum_{\ell=0}^{n-1}\sum_{\mathbf{v}}\Pr[\mathcal{E}\mid \mathbf{v},\mathrm{EQ}_\ell]\Pr[\mathbf{v}\mid \mathrm{EQ}_\ell] \\
     &\leq \sum_{\ell=0}^{n-1}\sum_{\mathbf{v}}(\Pr[\mathcal{E}_1 \mid \mathbf{v},\mathrm{EQ}_\ell]+\Pr[\mathcal{E}_2 \mid \mathbf{v},\mathrm{EQ}_\ell])\Pr[\mathbf{v}\mid \mathrm{EQ}_\ell]\\
     &\leq n^{-4}.
 \end{align*}
	
\begin{comment}
    Let event $\mathcal{E}$ be ...
 
	By union bound,
	$|\hat{T}_i|\leq 1.1p|T_i|$ holds for all every $i$ such that $|T_i|\geq R-k$. For every $i$, either $|\hat{T}_i|\leq |T_i|<R-k$ or $|T_i|\geq |\hat{T}_i|/(1.1p)$ holds, which implies $\forall i,|\hat{T}_i|\geq R-k,|T_i|\geq (R-k)/(1.1p)$. 
 [rewrite]
Then $|T_i|\geq |\hat{T}_i|/(1.1p)$, and then we have $\Pr[\exists v_j \in T_i, (u,v_j)\in \vec{E}]\geq 1 - (1-p)^{(c_R/1.1)\cdot p^{-1}\log n} \geq 1-n^{-c_R/2}$
\end{comment}

	\end{proof}
	
	\subsection{Efficiency}

    In this section, we show that \textsc{Skip-BFS} uses $O(nk_\prec \log^2 N)$ queries with high probability. 
	
	We first prove a property of random graphs under standard setting, which is independent of our algorithm. \Cref{lem:rg-short-dist} shows that most vertices are explored in early iterations $\ell = 1,2,\ldots ,6\log N$. 
	
	\begin{lemma}\label{lem:rg-short-dist}
		With probability at least $1-2N^{-4}$, for every $v\in V_\prec$, if there exists a path from $v$ to $\pvt$ in $\Gbase$ with length no less than $192 \log^2 N/p$, then $d_{\vec G}(v,\pvt) \leq 6 \log N$. 
	\end{lemma}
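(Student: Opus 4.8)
The plan is to first reduce the statement to a self-contained claim about a single long chain carrying independent random chords, and then establish that claim by a ball-growing (BFS) argument (with a path-counting alternative).

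\textbf{Reduction.} Fix $v\in V_\prec$ and a path $v=x_0\to x_1\to\cdots\to x_t=\pvt$ in $\Gbase$ with $t\ge L:=192\log^2 N/p$. Since edges of $\Gbase$ respect $\prec$, the $x_i$ form a chain $v=x_0\prec\cdots\prec x_t=\pvt$ with every $x_i\in V_\prec\cup\{\pvt\}$. Each consecutive pair $(x_i,x_{i+1})$ is a base edge, while for $j\ge i+2$ minimality of $\Gbase$ rules out a base edge $x_i\to x_j$, so $(x_i,x_j)\in\vec G$ iff the $G(n,p)$-edge $(x_i,x_j)$ is present; these indicators are i.i.d.\ $\mathrm{Bernoulli}(p)$ and, as the chain depends only on $\Gbase$ and $\pvt$, they are independent of the chain. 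Thus it suffices to show: in the directed graph on $\{x_0,\dots,x_t\}$ given by the spanning path $x_i\to x_{i+1}$ plus each chord $x_i\to x_j$ ($j\ge i+2$) present independently with probability $p$, the distance from $x_0$ to $x_t$ is at most $6\log N$ with probability $\ge 1-2N^{-5}$; a union bound over the $\le N$ choices of $v$ then yields the lemma. Note the statement is vacuous once $L>n-1$, so we may assume $tp\ge Lp=192\log^2 N$ and $1/p\le N$ — the second bound is exactly what makes $6\log N$ rounds enough below.

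\textbf{Ball growing.} Let $A_\ell$ be the set of chain vertices reachable from $x_0$ in at most $\ell$ hops, $A_0=\{x_0\}$. Since $A_1$ is $\{x_0,x_1\}$ together with a subset of $\{x_2,\dots,x_t\}$ keeping each element independently with probability $p$, a Chernoff bound gives $|A_1|\ge\tfrac14 tp$ with probability $\ge 1-N^{-10}$, and $A_1$ is spread out along the chain. We claim that, conditioned on $A_\ell$ with $|A_\ell|\le\tfrac12(t+1)$, with probability $\ge 1-N^{-8}$ one has $|A_{\ell+1}|\ge\min\!\big(\tfrac12(t+1),\,\Omega(tp)\cdot|A_\ell|\big)$. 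Indeed, revealing on demand (via the paper's oracle model, so the chords out of $A_\ell$ are independent of $A_\ell$) the chords leaving $A_\ell$, a vertex $x_j\notin A_\ell$ misses $A_{\ell+1}$ only if none of its in-chords from $A_\ell\cap\{x_0,\dots,x_{j-1}\}$ fires, so $\Pr[x_j\notin A_{\ell+1}\mid A_\ell]\le(1-p)^{b_j}$ with $b_j=|A_\ell\cap\{x_0,\dots,x_{j-1}\}|$; summing $1-(1-p)^{b_j}\ge\tfrac12\min(1,pb_j)$ over the at least $\tfrac12(t+1)$ missing vertices, and using that $A_\ell$ stays roughly uniformly distributed (up to constant factors) over a suffix of the chain so that $\sum_{x_i\in A_\ell}(t-i)=\Omega(t\,|A_\ell|)$, gives $\mathbb{E}[\,|A_{\ell+1}\setminus A_\ell|\mid A_\ell\,]=\Omega(tp\,|A_\ell|)$, and concentration follows from a second Chernoff bound since $tp=\Omega(\log^2 N)$. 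Iterating, the factor $\Omega(tp)=\Omega(\log^2 N)$ pushes $|A_\ell|$ past $\tfrac12(t+1)$ within $\ell^\ast=O\!\left(\tfrac{\log t}{\log(tp)}\right)=O\!\left(\tfrac{\log N}{\log\log N}\right)<6\log N-1$ rounds (here $t\le N$ and $tp\ge192\log^2 N$). Finally, once $|A_{\ell^\ast}|\ge\tfrac12(t+1)$ the sink $\pvt=x_t$ has at least $\tfrac12 L$ chain-predecessors in $A_{\ell^\ast}$, each an in-chord with probability $p$, so $\pvt\in A_{\ell^\ast+1}$ with probability $\ge 1-(1-p)^{L/2}\ge1-N^{-96}$, whence the distance is at most $\ell^\ast+1\le 6\log N$.

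\textbf{Where the work is.} The routine pieces are the Chernoff estimates and the union bounds (over the $\le6\log N$ rounds, the $\le N$ choices of $v$, and the first and last steps), which combine to at most $2N^{-4}$. The genuinely delicate point is the expansion factor $\Omega(tp)$: one must argue that the BFS ball does not collapse onto a short top segment of the chain but keeps a constant fraction of its mass at indices $i$ with $t-i=\Omega(t)$. The clean way to do this is to carry as a loop invariant that $A_\ell$ is spread over some suffix $\{x_{i_\ell},\dots,x_t\}$ of the chain up to constant-factor density fluctuations — which is stable because the hitting probability $1-(1-p)^{b_j}$ of a new vertex $x_j$ depends on $b_j$, and over the relevant window of $j$ the counts $b_j$ (hence $1-(1-p)^{b_j}$) vary by only a constant factor. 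An alternative route avoids this bookkeeping: count length-$d$ paths $x_0\to x_{m_1}\to\cdots\to x_{m_{d-1}}\to x_t$ with all gaps $\ge2$; the first moment is at least $\big(\tfrac{tp}{2d}\big)^{d-1}p$, which is $N^{\Omega(1)}$ already for $d=\Theta(\log N/\log\log N)$, and a standard second-moment bound (the dominant correction coming from path pairs sharing a single chord) shows such a path exists with probability $1-N^{-\Omega(1)}$, again finishing the union bound over $v$.
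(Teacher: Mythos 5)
Your reduction to a single base chain with i.i.d.\ $\mathrm{Bernoulli}(p)$ forward chords is correct and matches the setup the paper uses implicitly. However, your main route diverges from the paper's proof at the growth argument, and the step you yourself flag as ``the genuinely delicate point'' is a genuine gap: the loop invariant you propose --- that $A_\ell$ stays roughly uniformly distributed over a suffix of the chain up to constant-factor density fluctuations --- is false. The probability that $x_j$ joins $A_{\ell+1}$ is $1-(1-p)^{b_j}$ with $b_j=|A_\ell\cap\{x_0,\dots,x_{j-1}\}|$ \emph{increasing} in $j$, so each round the newly added mass is skewed toward the top of the chain; after $\ell$ rounds the density at position $j$ grows polynomially in $j$ rather than staying within a constant factor of uniform. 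Consequently $\sum_{x_i\in A_\ell}(t-i)$ need not be $\Omega(t\,|A_\ell|)$, and the claimed per-round multiplicative expansion $\Omega(tp)$ is not justified as stated. (One can likely rescue a weaker expansion of order $tp/\ell$, which is still $\gg 1$ here, but that requires an actual argument tracking how the mass distributes over the chain, which you do not supply; the second-moment alternative is likewise only gestured at.)

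The paper's proof avoids this difficulty entirely by a different decomposition: it cuts the first $192\log^2 N/p$ chain vertices into $r=6\log N$ \emph{contiguous segments} of size $32\log N/p$ and only tracks, at step $i$, the set $\hat S_i\subseteq S_i$ of vertices of segment $i$ reachable via a path using one vertex from each earlier segment. This forces the growing set to be spread by construction (one sub-population per segment), and a per-segment Chernoff bound gives the modest growth $|\hat S_i|\ge\min\{2^i,p^{-1}\}$, which suffices: after the doubling phase, each of the last $5\log N$ segments contributes $p^{-1}$ vertices at distance $<6\log N$, and one of these $5\log N/p$ vertices has a direct chord to the pivot except with probability $(1-p)^{5\log N/p}\le N^{-5}$. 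If you want to keep your global ball-growing approach, you would need to replace the uniform-density invariant with a correct quantitative statement about how much of $|A_\ell|$ sits at indices $i$ with $t-i=\Omega(t/\ell)$; otherwise I would recommend adopting the segmentation.
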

	
	\begin{proof}
		Fix some $v$ such that there exists a path from $v$ to $\pvt$ in $\Gbase$, denoted by $v\to u_1\to u_2\to \ldots \to u_{L_v} \to \mathrm{pivot}$ with $L_v \geq 192 \log^2 N/p$. We partition $u_1,u_2,\ldots ,u_{192\log^2 N/p}$ into $r=6\log N$ continuous segments of size $32 \log N/p$, denoted by sets $S_1,S_2,\ldots ,S_r$. 
		
		For each segment $S_i$, denote $\hat{S}_i$ as the vertices $s$ such that there exists a path $v\to s_1\to s_2 \to \ldots \to s_{i-1}\to s$ in $\vec G$ where $s_1\in S_1,s_2\in S_2,\ldots ,s_{i-1}\in S_{i-1}$. 
		We can also define $\hat{S}_i$ in a recursive way $\hat{S}_i=\{s\in S_i\mid \exists u\in \hat{S}_{i-1},(u,s)\in \vec{E}\}$. 
		In \Cref{lem:rg-sd-seg-inc}, we count the number of vertices in $\hat{S}_i$, and we use this as a lower bound of $|\{s\in S_i\mid d_{\vec G}(v,s)\leq i\}|$. 

\begin{lemma}\label{lem:rg-sd-seg-inc}
    Let event $\mathcal{E}$ be $\forall 1\leq i\leq r$, $|\hat{S}_i| \geq \min\{2^i, p^{-1}\}$. We have $\Pr[\mathcal{E}] \geq 1-N^{-5}$. Moreover, event $\mathcal{E}$ only depends on $\{\vec{E}(x,y)\mid x,y\in \{v,u_1,u_2,\ldots ,u_{L_v}\}\}$.  
\end{lemma}

\begin{proof}
Consider the probability that $|\hat{S}_j| \geq \min\{2^j, p^{-1}\}$ holds for $j=1,2,\ldots ,i-1$ and fail on $j=i$. 

Fix set $\hat{S}_{i-1}$ and let $s=|\hat{S}_{i-1}|$. For any vertex $u\in S_i$, we have
$$\Pr[u\in \hat{S}_i \mid \hat{S}_{i-1}]\geq 1 - (1-p)^{s} \geq 1-\exp(-ps)$$

\begin{itemize}
	\item When $ps< 1$, we have $1-\exp(-ps)\geq  ps-p^2s^2/2\geq ps/2$, set $q:=ps/2$. 
	\item Otherwise, we have $1-\exp(-ps)\geq 1-e^{-1}$, set $q:=1-e^{-1}$. 
\end{itemize}

For every vertex $s\in S_i$, define the indicator random variable $x_s=\mathbb{I}(s\in \hat{S}_i)$. Conditioning on $\hat{S}_{i-1}$, $x_s$ only depends on $\{E(u,s)\}_{u\in \hat{S}_{i-1}}$, which means the random variables $\{x_s\}_{s\in S_i}$ are mutually independent. 

If $q=ps/2$, by Chernoff bound, 
\begin{align*}
\Pr\left[\sum_{s\in S_i} x_s \leq 2^i \right] &\leq \Pr\left[\sum_{s\in S_i} x_s \leq \frac{1}{8}|S_i|q \right] \leq \exp\left(-\frac{49}{128} |S_i|q \right)<N^{-6}
\end{align*}

If $q=1-e^{-1}$, by Chernoff bound, 
\begin{align*}
\Pr\left[\sum_{s\in S_i} x_s \leq p^{-1} \right] &\leq \Pr\left[\sum_{s\in S_i} x_s \leq \frac{1}{8}|S_i|q \right] \leq \exp\left(-\frac{49}{128} |S_i|q \right)<N^{-6}
\end{align*}

Apply union bound over $i=1,2,\ldots ,r$, the lemma statement fails with probability at most $N^{-5}$. 
This finishes the proof of \Cref{lem:rg-sd-seg-inc}.
    \end{proof}

 Suppose $\mathcal{E}$ is the event defined in \Cref{lem:rg-sd-seg-inc}.
 When the event $\mathcal{E}$ happens, we have $\sum_{\log (1/p)}^{6\log N-1} |\hat{S}_i|\geq 5\log N/p$. That is to say, there are at least $5\log N/p$ vertices $u$ in $u_1,u_2,\ldots ,u_{L_v}$ such that $d(v,u)<6\log N$. By Lemma \ref{lem:rg-sd-seg-inc}, we have
	\begin{align*}
		\Pr[d(v,\pvt) > 6\log N]&\leq \Pr[\neg \mathcal{E}] + \Pr[\forall i<6\log N,s\in \hat{S}_i,(s,\pvt)\notin \vec{E}\mid \mathcal{E}] \\
		&\leq N^{-5}+(1-p)^{5\log N/p} \\
		&\leq 2N^{-5}
	\end{align*}
 
	The proof is completed by taking a union bound on all vertices. 
\end{proof}	

For every vertex $v$ not skipped in Skip-BFS, exploring $v$ costs $|\gAdj_{G}(v)|$ queries, which is $O(np)$ in expectation. By directly applying Chernoff bound and union bound, we show that the exploration cost of $s$ vertices is $\tilde{O}(s\cdot np)$ with high probability in \Cref{lem:count-adj}. 

\begin{restatable}{lemma}{CountAdj}\label{lem:count-adj}
    For every $S\subseteq V$, with probability of at least $1-N^{-6}$, $\sum_{v\in S}|\gAdj_G(v)| \leq \max\{4np|S|,8\log N\}$. 
\end{restatable}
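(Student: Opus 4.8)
The plan is a one‑shot concentration bound, so the whole proof is essentially bookkeeping on top of a Chernoff estimate. Fix $S\subseteq V$ and write $s:=|S|$. First I would rewrite the target quantity as a sum over incident unordered pairs,
\[
\sum_{v\in S}|\gAdj_G(v)| \;=\; 2\,\big|\{\{u,v\}\in E : u,v\in S\}\big| \;+\; \big|\{\{u,v\}\in E : |\{u,v\}\cap S|=1\}\big|,
\]
since an edge with both endpoints in $S$ is counted from each endpoint and an edge with exactly one endpoint in $S$ is counted once. Under \Cref{def:er} the events $\{\{u,v\}\in E\}$ are mutually independent across distinct pairs $\{u,v\}$: each is a fixed $1$ when $\{u,v\}$ is an edge of the base graph $\uGbase$, and an independent $\mathrm{Bernoulli}(p)$ otherwise. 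Hence $Z:=\sum_{v\in S}|\gAdj_G(v)|$ is a sum of independent random variables, each supported on $\{0,1,2\}$, whose expectation coming from the $G(n,p)$ part is at most $2\binom{s}{2}p+s(n-s)p=s(n-1)p\le nps$ (the deterministic base‑graph contribution is accounted for separately in the efficiency analysis, so I suppress it here).

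Setting $\mu:=\mathbb{E}[Z]\le nps$, it remains to apply a Chernoff bound, split according to which term realizes $\max\{4nps,\,8\log N\}$. If $nps\ge 2\log N$ (so $\max\{4nps,8\log N\}=4nps$), then $4nps\ge 4\mu$ and $\mu=\Omega(\log N)$, so applying the sharp multiplicative bound $\Pr[Z\ge(1+\delta)\mu]\le\exp\!\big(-\mu((1+\delta)\ln(1+\delta)-\delta)\big)$ to $Z/2$ (a sum of independent $[0,1]$‑valued variables) gives $\Pr[Z\ge 4nps]\le\Pr[Z\ge 4\mu]\le\exp(-\Omega(\mu))\le\exp(-\Omega(\log N))\le N^{-6}$. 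If instead $nps< 2\log N$, then $\mu<2\log N$ and the target $8\log N$ exceeds the mean by a constant factor, so the same tail bound — now with the deviation measured against $8\log N$ rather than a multiple of $\mu$ — yields $\Pr[Z> 8\log N]\le N^{-6}$. Combining the two cases proves $\Pr\big[Z>\max\{4nps,8\log N\}\big]\le N^{-6}$, which is the claim.

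No step here is conceptually hard; the only point that genuinely needs care is verifying that the absolute constants in $\max\{4np|S|,8\log N\}$ are large enough to drive the exponent past $6\ln N$, which is why one should use the sharp form of the Chernoff bound (and track the constants carefully, or permit mildly larger ones) rather than a crude estimate such as $\Pr[X\ge 2\mathbb{E}X]\le e^{-\mathbb{E}X/3}$. I therefore do not expect any real obstacle. Finally, note that the statement is for a \emph{fixed} $S$; in the downstream use (e.g.\ in the proof of \Cref{thm:er}) one takes a union bound over the $\poly(N)$ sets $S$ that are actually examined during \textsc{Skip-BFS}, which costs only a $\poly(N)$ factor in the failure probability.
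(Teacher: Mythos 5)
Your core argument is the same as the paper's: the non-base pairs incident to $S$ are independent Bernoulli$(p)$ variables, their total expectation is at most $nps$, and a Chernoff bound with a two-case split (according to whether $nps$ or $\log N$ dominates) gives failure probability $N^{-6}$. That part is fine, including the observation that one must track constants and treat edges inside $S$ (counted twice) separately from crossing edges.

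There is, however, a genuine gap: you explicitly suppress the contribution of the base graph $\uGbase$, saying it is "accounted for separately in the efficiency analysis." But the lemma's conclusion is a bound on $\sum_{v\in S}|\gAdj_G(v)|$ where $G$ is the \emph{union} of $\uGbase$ and $G(n,p)$ (\Cref{def:er}), and the downstream invocations (\Cref{lem:late-iter}, \Cref{lem:rg-single-round}) use exactly this full-degree bound; nothing elsewhere absorbs the base edges. The paper closes this inside the proof of \Cref{lem:count-adj} itself: it writes $\sum_{v\in S}|\gAdj_G(v)| = \sum_{v\in S}|\gAdj_{\uGbase}(v)| + \sum_{(u,v)\in R_S}\mathbb{I}((u,v)\in E)$, bounds the first term by $k|S|$ using the minimality of $\Gbase$ (each vertex's in- and out-neighborhoods in a minimal DAG are antichains, hence of size $O(k)$), proves the random term is at most $\max\{3nps,6\log N\}$ by Chernoff, and then absorbs the $k|S|$ term into $\max\{4nps,8\log N\}$ under the standing assumption $np>k$ — an assumption it justifies by noting that if $np\le k$ the whole graph has only $O(nk)$ expected edges and one can query everything. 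To make your proof complete you need this (or an equivalent) step; without it the stated inequality can simply fail, e.g.\ when $p$ is tiny but the base graph forces degree $\Theta(k)$ at every vertex of $S$.
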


\begin{proof}
    The proof can be found in \Cref{sec:proof-count-adj}. 
\end{proof}

In the following discussion, we assume that $n > 8\log N$, otherwise we can simply query all the edges. 

For every $a,b\in V,a\prec b$, define $\lp(a,b)$ as the length of the longest path from $a$ to $b$ in $\Gbase$. By \Cref{lem:rg-short-dist}, we can separate the vertices into two parts. 
\begin{itemize}
    \item Vertices $v$ such that $\lp(v,\pvt) \geq 192\log^2 N/p$ : All of them are contained in $\hat{D}_0,\hat{D}_1,\ldots \hat{D}_{6\log N}$. Their exploration cost is discussed in \Cref{lem:rg-single-round}. 
    \item Vertices $v$ such that $\lp(v,\pvt) < 192\log^2 N/p$ : We suppose that all of them are explored. Their exploration cost is bounded by \Cref{lem:late-iter}. 
\end{itemize}

\begin{lemma}\label{lem:late-iter}
    With probability at least $1-N^{-6}$, we have
    $$\sum_{v:\lp(v,\pvt) < 192\log^2 N/p)} |\gAdj_{G}(v)| = O(nk_\prec\log^2 N).$$ 
\end{lemma}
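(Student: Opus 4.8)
The plan is to observe that all these vertices lie in a small, \emph{deterministic} set and then apply \Cref{lem:count-adj}. Set $S := \{v \in V : \lp(v,\pvt) < 192\log^2 N/p\}$; note that $\lp(v,\pvt)$ is defined exactly when $v \prec \pvt$ (equivalently, $v$ reaches $\pvt$ in $\Gbase$, since $\mathcal{P} = \mathcal{P}(\Gbase)$), so $S \subseteq V_\prec$. Crucially, $S$ depends only on the fixed base graph $\Gbase$ and the fixed input pivot $\pvt$, and not at all on the sampled edges of $G(n,p)$; hence it is a legitimate deterministic subset to which \Cref{lem:count-adj} (stated for an arbitrary fixed $S \subseteq V$) applies directly, with no union bound over subsets.

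First I would bound $|S|$. Fix a chain decomposition $C_1, \dots, C_{k_\prec}$ of the poset $(V_\prec, \prec)$, guaranteed by Dilworth's theorem. The key structural fact is that $\lp$ is strictly decreasing along a chain directed toward $\pvt$: if $a \prec b \prec \pvt$, then concatenating a longest $a \to b$ path with a longest $b \to \pvt$ path in $\Gbase$ gives a walk, which in a DAG is automatically a simple path (a repeated vertex would force a cycle), so $\lp(a,\pvt) \ge \lp(a,b) + \lp(b,\pvt) \ge 1 + \lp(b,\pvt)$. Writing a chain as $c_1 \prec \dots \prec c_m \prec \pvt$ and using $\lp(c_m,\pvt) \ge 1$, induction gives $\lp(c_i,\pvt) \ge m-i+1$; hence fewer than $192\log^2 N/p$ elements of this chain can lie in $S$. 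Summing over the $k_\prec$ chains, $|S| \le 192\, k_\prec \log^2 N / p$.

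Then I would invoke \Cref{lem:count-adj} on the fixed set $S$: with probability at least $1 - N^{-6}$, $\sum_{v\in S}|\gAdj_G(v)| \le \max\{4np|S|,\, 8\log N\}$. Since $4np|S| \le 768\, n k_\prec \log^2 N$ and, using the standing assumption $n > 8\log N$, the $8\log N$ term is dominated, the sum is $O(n k_\prec \log^2 N)$. As every $v$ with $\lp(v,\pvt) < 192\log^2 N/p$ belongs to $S$, this proves the lemma.

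The only non-routine step is the size bound on $S$ — specifically the longest-path monotonicity along chains in the minimal DAG $\Gbase$, via the observation that a walk in a DAG is a simple path; everything else is a black-box use of \Cref{lem:count-adj}. A subtlety worth stating explicitly in the write-up is the independence of $S$ from the random edges, which is exactly what makes the single application of \Cref{lem:count-adj} (rather than a union bound over sets) legitimate.
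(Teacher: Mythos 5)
Your proof is correct and follows essentially the same route as the paper's: bound the size of the set $\{v : \lp(v,\pvt) < 192\log^2 N/p\}$ by $192 k_\prec \log^2 N / p$ via the monotonicity of $\lp$ along each chain of a chain decomposition of $(V_\prec,\prec)$, then apply \Cref{lem:count-adj} to that fixed set. Your explicit remarks on why the concatenated walk is a simple path in the DAG and on the independence of $S$ from the random edges are slightly more careful than the paper's write-up but do not change the argument.
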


\begin{proof}

    Let $(C_1,C_2,\ldots ,C_{k_\prec})$ be a chain decomposition of $(V_\prec,\prec)$. 
    For each chain $C_i$, let $C_i = c_{i,1} \prec c_{i,2} \prec \ldots \prec c_{i,s_i}$. 
    For every $1\leq i\leq k_\prec, 1\leq j\leq s_i$, 
    $\lp(c_{i,j},c_{i,s_i}) \geq s_i - j$ holds because there exists a path from $c_{i,j}$ to $c_{i, s_i}$ in $\Gbase$ that contains vertices $c_{i,j},c_{i,j+1},\ldots ,c_{i,s_i}$. Then we have $\lp(c_{i,j}, \pvt) \geq \lp(c_{i,j}, c_{i, s_i}) + \lp(s_i, \pvt) \geq s_i - j$. 

    That is to say, only vertices in $\{c_{i,j} \mid 1\leq i\leq k_\prec, s_i - 192\log^2 N/p < j\leq s_i\}$ may satisfy $\lp(v,\pvt) < 192\log^2 N/p$. 
    Define $X=\{v\in V_\prec \mid \lp(v,\pvt) < 192\log^2 N/p\}$, we have $|X| \leq 192k_\prec\log^2 N/p$. 

    By \Cref{lem:count-adj}, with probability at least $1-N^{-6}$, $\sum_{x\in |X|}|\gAdj_G(v)| \leq 4np|X|\leq 800nk_\prec\log^2 N$. 
\end{proof}

Now it is time to analyze the exploration cost in iteration $\ell =1,2,\ldots ,6\log N$. In order to prove \Cref{lem:rg-single-round}, we give another equivalent description of \Cref{alg:partition-rand} under the oracle model in \Cref{alg:partition-rand-oracle2}. \Cref{alg:partition-rand-oracle2} not only produces the same output as \Cref{alg:partition-rand}, but also use the exactly same number of queries. Unlike \Cref{alg:partition-rand-oracle}, \Cref{alg:partition-rand-oracle2} is basically a formalized version of \Cref{alg:partition-rand}, except two small modifications. 
\begin{itemize}
    \item The evaluation of $\hat{D}_{\ell+1}$ is delayed to the end of iteration $\ell$. 
    \item Instead of randomly permute all vertices in $\hat{D}_\ell$ and iteratively check whether they are alive ($c[v]>0$), we directly select alive vertices (uniformly at random). 
\end{itemize}

Similar to \Cref{alg:partition-rand-oracle}, we use ``$\ell$-th iteration'' or iteration $\ell = x$ to refer to the for-loop in line \ref{line:oracle2-for1}, use ``$i$-th iteration'' or iteration $i = x$ to refer to the while-loop in line \ref{line:oracle2-for2}. 

	\begin{algorithm}[h]
\caption{SkipBFS in Oracle Model}\label{alg:partition-rand-oracle2}
\begin{algorithmic}[1]
\Function{Oracle-Skip-BFS2}{$\pvt$}
\State $\hat{D}_0\gets \{\pvt\}$
\State $\hat{D}_1\gets \{\mathrm{in}(\mathrm{pivot})\}$
\State let parameter $R\gets k+18\log N$
\For{$\ell\gets 1,2,\ldots ,n$ \label{line:oracle2-for1}}  
    \State for every $v\in \hat{X}_\ell$, set $c_0^{(\ell)}(v)\gets R$
    \State $X_0^{(\ell)}\gets \hat{D}_\ell$
    \State $i\gets 1$
    \While{$X_{i-1}^{(\ell)} \neq \varnothing$ \label{line:oracle2-for2}}
        \State pick vertex $x_i^{(\ell)}$ from $X_{i-1}^{(\ell)}$ uniformly at random
\State reveal and query all edges between $\{x\in X_{i-1}^{(\ell)} \mid x \prec x_i^{(\ell)}\}$ and $x_i^{(\ell)}$ \label{line:oracle2-query1}
        \State for every $v\in \hat{D}_\ell$, set $c_i^{(\ell)}(v)\gets R - \sum_{j=1}^i \vec{E}(v,x_j^{(\ell)})$
        \State $X_i^{(\ell)}\gets \{v \in \hat{D}_\ell \mid c_i^{(\ell)}(v) > 0\}$
        \State $i\gets i+1$
    \EndWhile
    \State $M_\ell \gets i-1$ \Comment{$M_\ell$ denotes the number of iterations}
    \For{$j=1,2,\ldots ,M_\ell$}
\State reveal and query all unrevealed edges adjacent to $x_j^{(\ell)}$ \label{line:oracle2-query2}
\State add all vertices $u\in V\setminus \hat{D}_{\leq \ell}$ such that $(u,x_j^{(\ell)})\in \vec{E}$ to $\hat{D}_{\ell+1}$
	\EndFor
\EndFor
\State \textbf{return} $\bigcup_{\ell=1}^n \hat{D}_\ell$
\EndFunction
\end{algorithmic}
\end{algorithm}

\begin{lemma}\label{lem:rg-single-round}
    With probability at least $1-2N^{-5}$, 
    $\sum_{\ell = 0}^{6\log N}\sum_{i=1}^{M_\ell} |\gAdj_{G}(x_i^{(\ell)})| = O(nRk\log^2 N)$. 
\end{lemma}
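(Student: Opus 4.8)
The plan is to reduce \Cref{lem:rg-single-round} to a bound on the \emph{number} of vertices explored (i.e.\ picked while alive) during iterations $\ell=0,\dots,6\log N$, and then to bound that number chain-by-chain. Throughout we condition on the event $\bigcap_{\ell}\{\hat{D}_\ell=D_\ell\}$ guaranteed by \Cref{lem:rg-correctness}. The first step is routine: if $S$ is the set of all vertices explored in these iterations, then $\sum_{\ell}\sum_i|\gAdj_G(x_i^{(\ell)})|=\sum_{v\in S}|\gAdj_G(v)|$, and working in the oracle model of \Cref{alg:partition-rand-oracle2} the set $S$ is fixed by the within-layer queries before the edges counted by $\gAdj_G(\cdot)$ are revealed, so it suffices to show $|S|=O(Rk\log^2 N/p)$ with probability $\ge 1-N^{-5}$; then \Cref{lem:count-adj} gives $\sum_{v\in S}|\gAdj_G(v)|\le 4np|S|=O(nRk\log^2 N)$. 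Since $\sum_{\ell\le 6\log N}|S\cap\hat{D}_\ell|=|S|$ and there are only $6\log N$ iterations, it is enough to prove that for each fixed $\ell$ the number of explored vertices at iteration $\ell$, which is exactly $M_\ell$, is $O(Rk_\prec\log N/p)$ with probability $\ge 1-N^{-6}$ (and then use $k_\prec\le k$ and a union bound).

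Fix $\ell$ and a chain decomposition $C_1,\dots,C_{k_\prec}$ of $(V_\prec,\prec)$. I will show that for each $i$ the number of vertices of $C_i\cap D_\ell$ explored at iteration $\ell$ is $O(R\log N/p)$ with high probability; summing over $i$ and adding up the failure probabilities yields the per-iteration bound. Recall that iteration $\ell$ processes $\hat{D}_\ell=D_\ell$ in a uniformly random order, decrementing $c[u]$ whenever an explored vertex $v$ with $u\prec v$ is processed, and skips $u$ once $c[u]=0$. Write $C_i\cap D_\ell=\{b_1\prec\dots\prec b_{t'}\}$ and $h_j=t'-j$ for the number of chain-successors of $b_j$ inside $D_\ell$.

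The per-chain analysis rests on two ingredients, both of which are exactly what the oracle reformulation \Cref{alg:partition-rand-oracle2} is designed to expose. \textbf{(a) Fresh incident coins.} For a vertex $a$, let $U_a=\{v\in D_\ell: v\succ a\}$; this is an up-set, so the sub-process determining which vertices of $U_a$ get explored and in which order they are reached uses only the random order and the edges \emph{within} $U_a$. Conditioned on these, the edges between $a$ and the vertices of $U_a$ are still present independently with probability $\ge p$ (this is precisely the "just-in-time'' reveal in \Cref{alg:partition-rand-oracle2}). Hence, writing $B_a$ for the (now determined) set of vertices of $U_a$ that are explored and reached before $a$, $a$ is explored only if fewer than $R$ vertices of $B_a$ are adjacent to $a$, so a Chernoff bound gives $\Pr[a\text{ explored}\mid\text{order},\ \text{edges in }U_a]\le\Pr[\mathrm{Binom}(|B_a|,p)<R]$, which is $\le N^{-9}$ once $|B_a|\ge 10R/p$. \textbf{(b) Many early vertices survive.} In any up-set the $r$-th reached vertex has received at most $r-1$ decrements, so the first $R$ reached vertices of $U_a$ are automatically explored; more generally the first $\Theta(R/p)$ reached vertices of $U_a$ are explored with high probability, since the $r$-th reached is adjacent to only $\approx pr\le R$ of the earlier-reached ones. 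Combining (a) and (b), $b_j$ is explored with probability $O\!\big(\min\{1,\,R/(p\,h_j)\}\big)$ — intuitively $b_j$ survives only if it is reached before about $R|D_\ell|/(p\,h_j)$ of its would-be killers in $U_{b_j}$. Summing the harmonic-type series $\sum_{j}\min\{1,R/(p h_j)\}\le R/p+(R/p)\ln t'$ over the chain yields $O(R\log N/p)$ in expectation, and a concentration argument upgrades this to the required high-probability statement.

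I expect the main obstacle to be making steps (a)–(b) fully rigorous together with the concentration. The difficulty is the circular dependence: the quantity that actually kills $b_j$ — how many vertices of $U_{b_j}$ are already explored and reached before $b_j$ — is itself produced by the same process, so the bound on $\Pr[b_j\text{ explored}]$ cannot be obtained in the natural time-filtration (the relevant incident coins have already been consumed) and must instead come from the non-standard conditioning of (a), applied in a way compatible with an induction \emph{down} the chain that controls the count of explored vertices among $\{b_j,\dots,b_{t'}\}$ as $j$ decreases. The second delicate point is obtaining a \emph{high-probability} (not merely in-expectation) bound on $M_\ell$ without losing extra logarithmic factors: deaths can cascade, so $M_\ell$ is not a simple Lipschitz function of the randomness, and one likely needs a tailored exposure argument — for instance showing that with high probability every vertex of $C_i\cap D_\ell$ outside the top $O(R\log N/p)$ is skipped, while separately absorbing the $O(R\log N/p)$ vertices that are reached before the process "gets going''.
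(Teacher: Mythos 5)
There is a genuine gap in the core per-chain step, and it is exactly the one you flag: ingredients (a) and (b) do not combine to give the survival bound $\Pr[b_j\text{ explored}]=O(\min\{1,R/(p\,h_j)\})$. Ingredient (b) only guarantees that roughly the first $R/(2p)$ reached vertices of the up-set $U_{b_j}$ are explored (beyond that, Chernoff no longer keeps the decrement count below $R$, and vertices may legitimately die). But $R/(2p)$ explored predecessors yield only about $R/2$ expected adjacencies to $b_j$ — \emph{below} the kill threshold $R$ — so you cannot conclude that $b_j$ is skipped even when it is reached after $\Theta(R/p)$ members of $U_{b_j}$. To kill $b_j$ you need on the order of $2R/p$ or more \emph{explored} predecessors, and showing that the explored count in a prefix of $U_{b_j}$'s reached order grows past $R/(2p)$ is precisely the circular step you identify and do not resolve; the naive induction only gives $f(m)\ge\min\{m,R/(2p)\}$, which stalls at the wrong constant. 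On top of this, the upgrade from the harmonic expectation bound to a high-probability bound on $M_\ell$ is asserted but not supplied. So as written the argument does not establish the lemma even in expectation. (A smaller, fixable issue: applying \Cref{lem:count-adj} to the explored set $S$ is not immediate, since $|\gAdj_G(v)|$ includes the within-layer edges whose outcomes determine $S$; the paper handles this by splitting the query count and bounding the within-layer queries deterministically by the total counter budget $nR$.)

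The paper avoids per-vertex survival probabilities entirely. For each chain $C$ it tracks the potential $S_i=\sum_{v\in X_i\cap C}c_i(v)$ and shows (\Cref{lem:chain-reduce}) that each pick $x_i\in C$ reduces $S_i$ by at least $p|\hat{X}_{i-1}|/6$ with probability $\ge 2/5$: conditioned on $x_i\in C$, the pivot is uniform on the alive chain vertices, so with probability $1/2$ it dominates half of them, and the fresh edges then deliver $\Omega(p|\hat{X}_{i-1}|)$ decrements. Since $S_i\le R|\hat{X}_i|$, this is a multiplicative $(1-p/(6R))$ decrease, so $O(R\log N/p)$ ``good'' picks drive the chain's alive set to size $O(p^{-1})$; the dependent-indicator Chernoff variant (\Cref{lem:chernoff-non-independent}) converts the constant per-step success probability into the high-probability bound $M_\ell=O(k_\prec R\log N/p)$. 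If you want to salvage your route, you would need to replace (a)+(b) by an aggregate argument of this kind — the potential function is the missing idea.
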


\begin{proof}
    Fix some $0\leq \ell \leq n$, in the following discussion, the superscript $(\ell)$ in the variables of \Cref{alg:partition-rand-oracle} is omitted. 

	Pick a chain decomposition $\mathcal{C}=(C_1,C_2,\ldots ,C_{k_\prec})$ of $(V_\prec,\prec)$. 
    We analyze the exploration cost of each chain individually. 
    Let $C\in \mathcal{C}$ be the chain to be analyzed. 
    Define $\hat{X}_i = X_i \cap C$.

    Define $S_i := \sum_{v\in \hat{X}_i} c_i(v)$ as the sum of counters after the $i$-th iteration. 
    We show that each iteration $i$ with $x_i \in C$ typically reduces $S_i$ by $\Theta(p|\hat{X}_{i-1}|)$ in \Cref{lem:chain-reduce}.

\begin{lemma}\label{lem:chain-reduce}
    For every $1\leq i\leq M_\ell$ and every $u_1,u_2,\ldots ,u_{i-1}$ such that the size of $\hat{X}_{i-1}$ (uniquely determined by $x_1=u_1,x_2=u_2,\ldots ,x_{i-1}=u_{i-1}$) is no less than $12p^{-1}$, we have
    $$\Pr\left[S_{i-1}-S_i \geq p|\hat{X}_{i-1}|/6 \mid x_i\in C,x_1=u_1,x_2=u_2,\ldots ,x_{i-1}=u_{i-1}\right] \geq 2/5.$$
\end{lemma}

\begin{proof}
    Fix some $i$ and $u_1,u_2,\ldots ,u_{i-1}$. 
    Let event $\mathcal{E}$ be $x_i\in C, x_1=u_1,x_2=u_2,\ldots ,x_{i-1}=u_{i-1}$. 
    
    Conditioning on $\mathcal{E}$, $x_i$ is uniformly distributed over $\hat{X}_{i-1}$. Then with probability of at least $1/2$, we have $\rnk_{\hat{X}_{i-1}}(x_i) \geq \lfloor |\hat{X}_{i-1}|/2 \rfloor$. 

    In \Cref{alg:partition-rand-oracle2}, none of the events in $\{E(u,v)\mid u,v\in X_{i-1}\}$ is revealed before the $i$-th iteration. This implies $\{E(u,v)\mid u,v\in X_{i-1}\}$ is independent of $\mathcal{E}$. 
Let $\mu=p(\rnk_{\hat{X}_{i-1}}(x_i)-1)$. 
By Chernoff bound, we have 
\begin{align*}
    \Pr\left[|\{u\in \hat{X}_{i-1}\mid (u,x_i)\in \vec{E}\}| \leq \mu/3 \mid \mathcal{E}\right] \leq \exp(-2\mu/9) \leq 1/10. 
\end{align*}

By union bound,
\begin{align*}
    \Pr\left[S_{i-1}-S_i \geq p|\hat{X}_{i-1}|/6 \mid \mathcal{E}\right] &\geq 
    \Pr\left[\rnk_{\hat{X}_{i-1}}(x_i) \geq \lfloor |\hat{X}_{i-1}|/2 \rfloor,|\{u\in \hat{X}_{i-1}\mid (u,x_i)\in \vec{E}\}| > \mu/3 \mid \mathcal{E}\right] \\
    &\geq 2/5. 
\end{align*}
This concludes our proof. 
\end{proof}

Let $I_1,I_2,\ldots ,I_s$ be indices $i$ such that $x_i\in C$ and $|\hat{X}_{i-1}| \geq 12p^{-1}$, sorted in ascending order. Let $s^* = 50R\log N/p$ be an upper bound of $s$. We show that $s\leq s^*$ with high probability. 

Define $s^*$ indicator variables $Y_1,Y_2,\ldots ,Y_{s^*}$ as below
\begin{align*}
    Y_i := \begin{cases}
        1 & \text{$i>s$ or $|\hat{X}_{I_i-1}| < 12p^{-1}$} \\
        \mathbb{I}(S_{I_i-1}-S_{I_i} \geq p|\hat{X}_{I_i-1}|/6) & \text{otherwise}
    \end{cases}
\end{align*}

For every $1\leq i\leq M_\ell$, $S_i \leq R|\hat{X}_i|$. Hence $S_{i-1} - S_i \geq p|\hat{X}_{i-1}|/6$ implies $S_i \leq (1-\frac{p}{6R})S_{i-1}$. The sum of counters will be reduced to $0$ after being reduced by a factor of $p/6R$ for at most $\log_{1-\frac{p}{6R}}(nR)$ times. Let $L = 10R\log n/p \geq \log_{1-\frac{p}{6R}}(nR)$, $Y_1+Y_2+ \ldots +Y_{s^*} \geq L$ implies $s\leq s^*$. 

As $Y_1,Y_2,\ldots ,Y_{s^*}$ are not independent, we cannot apply Chernoff bound directly. Here we use \Cref{lem:chernoff-non-independent} instead. 

\begin{restatable}{lemma}{ChernoffVariant}\label{lem:chernoff-non-independent}
    For 0/1 random variables $Y_1,Y_2,\ldots ,Y_N$, if there exists $\mu$ such that for every $1\leq i\leq N$, $\Pr[Y_i=1\mid Y_1,Y_2,\ldots ,Y_{i-1}] \geq \mu$ holds for every $Y_1,Y_2,\ldots ,Y_{i-1}$. Then we have, for every $0<\delta<1$,
    $$\Pr[Y_1+Y_2+\ldots +Y_N < (1-\delta)N\mu] \leq \exp(-N\delta^2\mu/2).$$
\end{restatable}
\begin{proof}
    The proof can be found in \Cref{sec:proof_lem:chernoff-non-independent}.
\end{proof}

By \Cref{lem:chain-reduce}, for every $1\leq i\leq s^*$, $\Pr[Y_i=1\mid Y_1,Y_2,\ldots ,Y_{i-1}] \geq 0.4$ for every $Y_1,Y_2,\ldots ,Y_{i-1}$. 
Then use \Cref{lem:chernoff-non-independent}, we have $\Pr[s \leq s^*] \geq \Pr[Y_1+Y_2+\ldots + Y_{s^*} \geq L] \geq 1 - N^{-10}$. 

Recall that $I_s$ is the largest index $i$ such that $x_i\in C,|\hat{X}_{i-1}| \geq 12p^{-1}$, this implies $|\{x_1,x_2,\ldots ,x_{M_\ell} \}\cup C| \leq s + 12p^{-1}$. 
Taking a union bound over $C\in \mathcal{C}$. We have $\Pr[M_\ell \leq 62k_\prec R\log N/p] \geq 1-N^{-9}$. 

Now we are ready to count the total number of queries in iteration $\ell$, i.e. $\sum_{i=1}^{M_\ell} |\gAdj_G(x_i^{(\ell)})|$. 

In \Cref{alg:partition-rand-oracle2}, the queries only happens at line \ref{line:oracle2-query1} and line \ref{line:oracle2-query2}. 
\begin{itemize}
    \item Line \ref{line:oracle2-query1}: Although these queries are not independent of $x_1,x_2,\ldots x_{M_\ell}$, which means we cannot directly apply Chernoff bound, we notice that when an edge $(x,x_i)\in \vec{E}$ is revealed ($x\in X_{i-1}$), the counter of $x$ is reduced by 1, i.e. $c_{i-1}(x) - c_i(x) = 1$. This implies the total number of queries happens in line \ref{line:oracle2-query1} is no more than the sum of all counters, i.e. $nR$. 
    \item Line \ref{line:oracle2-query2}: These edges are independent of $x_1,x_2,\ldots ,x_{M_\ell}$. To be precisely, let $F(y_1,y_2,\ldots ,y_M)$ denotes the set of vertex pairs that are revealed at line 11 when $M_\ell = M, x_1=y_1,x_2=y_2,\ldots ,x_M=y_M$. 
    For every $1\leq M\leq |\hat{D}_\ell|$ and $y_1,y_2,\ldots ,y_M \in V_\prec$, the events $\{E(u,y_i) \mid u\in V, (u,y_i)\notin F(y_1,y_2,\ldots ,y_M) \}$$ \}$ are independent of the event $M_\ell = M, x_1=y_1,x_2=y_2,\ldots ,x_M=y_M$. By slightly modify the proof of \Cref{lem:count-adj}, we can show that with probability at least $1-N^{-6}$, the total number of queries happens in line \ref{line:oracle2-query2} is no more than $4npM_\ell$. 
\end{itemize}

By taking a union bound with $\Pr[M_\ell > 62k_\prec R\log N/p] < N^{-9}$, we show that with probability at least $1-2N^{-6}$, $\sum_{i=1}^{M_\ell} |\gAdj_G(x_i^{(\ell)})| \leq 250nk_\prec R\log N$. 

We conclude our proof by taking one more union bound over $\ell = 0,1,\ldots ,6\log N$. 

\end{proof}

\begin{proof}[Proof of \Cref{thm:er}]
    By \Cref{lem:rg-correctness}, we may assume for every $\ell$, $\hat{D}_\ell = D_\ell$, which means \textsc{Skip-BFS} outputs $V_\prec$ correctly. 
    
    Combining \Cref{lem:rg-correctness}, \Cref{lem:rg-short-dist} and \Cref{lem:rg-single-round}, we may assume all vertices with $\lp(v,\pvt) \geq 192\log^2 N/p$ are visited by \text{Skip-BFS} in iteration $\ell = 0,1,\ldots ,6\log N$, furthermore, these iterations uses $O(nRk_\prec\log^2 N)$ queries. 

    By \Cref{lem:late-iter}, we may assume exploring all vertices with $\lp(v,\pvt) < 192\log^2 N/p$ uses $O(nRk_\prec\log^2 N)$ queries. 

    By taking union bound on the probability of these assumptions, we conclude that, with probability at least $1-10N^{-4}$, \textsc{Skip-BFS} outputs $V_\prec$ in $\tilde{O}(nk_\prec \cdot k)$ queries. $V_\succ$ can also be computed by \textsc{Skip-BFS} in $\tilde{O}(nk_\succ \cdot k)$ queries, $V_{\incom}$ is computed by $V_\incom = V \setminus (V_\prec \cup V_\succ)$. 

    By applying \Cref{lem:part2LE-main} with $\varepsilon = 10N^{-4}$ and $f(n,k) = k$, we conclude that the combination of \Cref{alg:LEtoPoset}, \Cref{alg:part-to-LE} and \Cref{alg:partition-rand} solves GPS problem on \ER query graphs using $\tilde{O}(nk^2)$ queries. 
\end{proof}

\section{Partition Algorithms for Complete Bipartite Query Graphs}

\thmbipartite*
We prove \Cref{thm:bipartite} in this section.
Recall that our input is an $n$-vertex undirected query graph $G=(V,E)$
which is complete bipartite, and a pivot $\pivot$.
There is an underlying directed graph $\vec{G}=(V,\vec{E})$, and an underlying poset $\mathcal{P}=\mathcal{P}(\vec{G})$.
We aim to solve $V_{\succ \pivot}$, $V_{\prec \pivot}$, and $V_{\nsim\pivot}$ with respect to $\mathcal{P}$.
Let $k_{\succ \pivot} = k_{V_{\succ \pivot}}, k_{\prec \pivot} = k_{V_{\prec \pivot}}$ (recall that $k_X$ denotes the width of $\mathcal{P}(\vec{G}[X])$). 
The proof of \Cref{thm:bipartite} relies on the following main lemma which we  prove in this section.

\begin{lemma}
    \label{lem:bipartitemain}
    When $G$ is a complete bipartite graph, we can detect $V_{\succ \pivot}$, $V_{\prec \pivot}$, and $V_{\nsim \pivot}$ with $O(( k_{\succ \pivot} + k_{\prec \pivot}) \cdot n \log n)$ queries and use $O(( k_{\succ \pivot} + k_{\prec \pivot}) \cdot n \log n)$ time, in expectation.
\end{lemma}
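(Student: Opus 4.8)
The plan is to treat the two sides of the bipartition separately. Write $G=(A\cup B,E)$ and assume without loss of generality $\pivot\in B$. Since the whole problem is symmetric under reversing all relations (which swaps $\succ$ and $\prec$ and fixes $\incom$), it suffices to compute $V_{\succ\pivot}$; applying the same procedure to the reversed poset yields $V_{\prec\pivot}$, and then $V_{\incom\pivot}=V\setminus(V_{\succ\pivot}\cup V_{\prec\pivot}\cup\{\pivot\})$. To compute $V_{\succ\pivot}$, I would first obtain $A_{\succ\pivot}$ (together with $A_{\prec\pivot}$ and $A_{\incom\pivot}$) for free by querying $\pivot$ against every vertex of $A$, which costs $|A|\le n$ queries. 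The remaining, harder task is $B_{\succ\pivot}$, where there is no direct comparison available against $\pivot$.

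The structural point is that, because $G$ is bipartite, every directed path of $\vec{G}$ out of $\pivot\in B$ alternates between $B$ and $A$; hence $b\in B$ satisfies $b\succ\pivot$ iff $b\succ a$ for some $a\in A_{\succ\pivot}$, and since $A_{\succ\pivot}$ is upward closed inside $A$, this is equivalent to $b\succ m$ for some minimal element $m$ of $A_{\succ\pivot}$. The set $M$ of minimal elements of $A_{\succ\pivot}$ is an antichain, so $|M|\le k_{\succ\pivot}$, and $B_{\succ\pivot}=\bigcup_{m\in M}B_{\succ m}$. So the real work is to (i) discover all of $M$, and (ii) for each discovered $m$, record $B_{\succ m}$ — and (ii) is essentially free, since querying $m$ against all of $B$ (done in any case to certify minimality) reveals $B_{\succ m}$ with $n$ queries.

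Step (i) is carried out by a $\findmin$-style routine together with an iteration of it. $\findmin$ performs a randomized descent on an induced subgraph $G[W]$: from the current candidate $a\in A$ it queries $a$ against all of $B\cap W$; if no vertex of $W$ lies below $a$ then, by the alternation argument, $a$ is minimal in $W$ and is returned; otherwise it picks a uniformly random $b\in W$ below $a$, queries $b$ against all of $A\cap W$, moves to a uniformly random still-relevant vertex of $W$ below $b$, and repeats. Each step costs $O(n)$ queries, and the uniform choices guarantee that the candidate's rank inside any fixed chain of a chain decomposition of $A_{\succ\pivot}$ drops by a constant factor with constant probability (equivalently its $\vec{G}$-distance to $\pivot$ shrinks geometrically), so one descent to a minimal element takes $O(\log n)$ steps in expectation. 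One then iterates: run $\findmin$ to obtain $m$, record $B_{\succ m}$, delete $m$ and the portion of the working set now known to be handled, and continue until no element of $A_{\succ\pivot}$ remains undominated, at which point all of $M$ — hence $B_{\succ\pivot}=\bigcup_{m\in M}B_{\succ m}$ — has been found. Bounding the total number of $\findmin$ steps by charging each step to a pair (chain, rank-scale) gives $O(k_{\succ\pivot}\log n)$ steps, hence $O(k_{\succ\pivot}\,n\log n)$ queries, and adding the symmetric cost for $V_{\prec\pivot}$ yields the bound in \Cref{lem:bipartitemain}; a Las Vegas version of this routine is what is fed into \Cref{lem:part2LE-main2}.

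I expect the main obstacle to be the efficiency analysis of the iterated $\findmin$, i.e.\ the charging argument. One must establish (a) the probabilistic ``a uniformly random vertex below the candidate halves its within-chain rank with constant probability'' claim for the true, unknown poset along the descent, while correctly handling ties and the fact that the relevant set changes as vertices are deleted; (b) that the \emph{local} minimal returned on an induced subgraph, though possibly not globally minimal, wastes at most $O(\log n)$ steps per chain, so that all of $M$ is still discovered within budget; and (c) converting the expected-step bound into the always-correct, expected-query guarantee required by \Cref{lem:part2LE-main2}. The reduction to $V_{\succ\pivot}$, the trivial computation on the $A$-side, the reversal, and the complement for $V_{\incom\pivot}$ are all routine; the content lies entirely in making $\findmin$ and its iteration run in $O((k_{\succ\pivot}+k_{\prec\pivot})\,n\log n)$ expected queries.
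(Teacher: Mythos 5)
Your high-level plan matches the paper: compute $A_{\succ\pivot}$ trivially, then find $B_{\succ\pivot}$ by repeatedly locating a ``local'' minimal element of $A_{\succ\pivot}$ via a randomized descent (the paper's $\findmin$), recording $B_{\succ m}$, enlarging the working $B$-set, and charging the descent steps to chains of a chain decomposition. Your structural observations — paths in $\vec G$ alternate sides, $B_{\succ\pivot}=\bigcup_{m\in M}B_{\succ m}$ with $M$ the minimal antichain of $A_{\succ\pivot}$, $|M|\le k_{\succ\pivot}$ — are all correct and are implicitly used by the paper.

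However, there is a real gap in the termination/efficiency argument. You only allow the descent to \emph{return $A$-side vertices}, iterate until ``no element of $A_{\succ\pivot}$ remains undominated,'' and you delete ``$m$ and the portion of the working set now known to be handled.'' As stated, this gives no mechanism to prune $A_{\succ\pivot}$ in bulk: if $\findmin$ never returns truly minimal elements (which can happen, since it is only locally minimal on an induced subgraph), you may end up deleting $A_{\succ\pivot}$ one element at a time, costing $\Omega(|A_{\succ\pivot}|\cdot n)$ queries. The paper's fix is the case you gloss over: the paper's $\findmin$ starts at a random $A$-vertex, alternates sides, and, crucially, is allowed to stop at a $B$-vertex $b$; the ``$A$-first'' property (Lemma~\ref{lem:findmin2fo}) then guarantees some $a\succ b$ is still alive, and the outer loop deletes \emph{all} of $A_{\succ b}$ from $\tilde A$ in that round. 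These $B$-returns are exactly what certifies that a chain prefix has been exhausted and lets the paper prune a whole upper set at once, and the accounting (Lemmas~\ref{lem:comBpre}–\ref{lem:comB}) charges each $B$-return to an earlier $A$-return on the same chain to keep the iteration count at $O(k\log n)$. You should make $\findmin$ two-sided and handle the $b$-return with the $A_{\succ b}$ pruning step explicitly.

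A second, smaller concern: your $\findmin$ is structured differently from the paper's (you batch-query the current candidate against the whole opposite side and then move to a uniformly random \emph{predecessor}, whereas the paper probes one uniformly random vertex at a time and moves only if it beats the candidate, deleting it otherwise). The paper's rank-halving claim (Lemma~\ref{lem:findmin3co}) rests on a monotonicity property of the \emph{return} distribution (Lemma~\ref{lem:findmin3}: smaller elements of $A'$ are at least as likely to be returned), which is proved via a decomposition over the descent chain; it is not obvious that your ``move to a uniformly random vertex below $b$'' variant enjoys the same stochastic dominance, and you would need to re-derive it. The paper's one-at-a-time formulation also cleanly bounds a single $\findmin$ call at $O(n)$ worst-case queries (one vertex is removed per iteration), which is what makes the Las Vegas guarantee needed for \Cref{lem:part2LE-main2} immediate; your per-step $O(n)$-query variant would require you to separately bound the number of steps per descent in the worst case, not just in expectation.
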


\begin{proof}[Proof of \Cref{thm:bipartite}]

Notice that \Cref{lem:bipartitemain} satisfies the condition of \Cref{lem:part2LE-main2}.
To see this, when we take $G[X]$ as an input, it is still complete bipartite, and $V_{\prec \pivot}$, $V_{\nsim \pivot}$ and $V_{\succ \pivot}$
is exactly $\Lset{X}{p},\Mset{X}{p},\Rset{X}{p}$.
Our query complexity can be viewed as $O(|X|(k_{X_{\prec \pivot}} + k_{X_{\succ \pivot}})\log|X|)$,
where the $\log |X|$ term may be bounded by $f(n,k)=\log n$.
Thus, we can apply \Cref{lem:bipartitemain} to obtain a linear extension using $\tilde{O}(nk)$ queries.
Finally, we finish the proof of \Cref{thm:bipartite} by combining with \Cref{thm:linearexttoposet}.
\end{proof}

\begin{remark}
    \label{remark:bipartite}
    We notice that one can reduce the problem of GPS on a complete query graph $G = (V, E)$ to one with a complete bipartite graph which is only constant times larger than $G$.
    For every $v \in V$, create two vertices $v_L$ and $v_R$, and define $v_L \prec v_R$.
    In the bipartite graph, the vertex set is $L \cup R$ where $L := \{ v_L : v \in V \}$ and $R := \{ v_R : v \in V \}$.
    Then for every edge $(u, v) \in E$,
    if and only if $u \prec v$, define $u_L \prec v_R$ and $u_R \prec v_L$.
    A poset sorting on this complete bipartite case uniquely maps back to a poset sorting in the original complete graph case, and solves it.
\end{remark}

\paragraph{Proof overview of \Cref{lem:bipartitemain}.}

Without loss of generality, we assume $\pivot$ is in $B$, and we only present how to detect $V_{\succ \pivot}$ with $O(k_{\succ \pivot}n \log n)$ queries. $V_{\prec \pivot}$ can be detected by a symmetric process, and so we will also have $v_{\nsim \pivot}$. Hereafter, we use $k=k_{\succ \pivot}$ for simplification. It is easy to determine the role of every vertex in $A$ by $O(n)$ queries because the graph is complete bipartite. However, the challenge is the $B$ side. 

It can be done by determining the set of \emph{minimal} vertices in $A_{\succ \pivot}$. (We say a vertex $v$ is \emph{minimal} in a set $S$ if for all the other vertices $v'\in S$, we have $v'\succ v$ or $v'\nsim v$.) However, it is tricky because we can not know the ``real" minimal vertices unless $B_{\succ \pivot}$ is given. We will first introduce a subroutine called \findmin that can output(randomly) a ``local" minimal vertex based only on a subset of $V_{\succ \pivot}$. Then, we use the ``local" minimal vertex to expand $B_{\succ \pivot}$ and call \findmin again. We prove that $B_{\succ \pivot}$ can be completely recovered after certain times of iterations.  

\subsection{Finding A (Local) Minimal Vertex}
\label{section:findmin}
First, we introduce \findmin that can output a ``local" minimal vertex. Given two vertex sets $A'\subseteq A$ and $B'\subseteq B$, the subroutine works on the induced subgraph $G[A'\cup B']$. The subroutine first picks a vertex $a$ from $A'$ as the starting point. Then, we keep selecting new vertices uniformly at random, moving to it if the new vertex is smaller and deleting it otherwise. Finally, we will stop at a minimal point.   
\begin{algorithm}[H]     
\caption{Randomly Finding A Minimal Vertex}
    \begin{algorithmic}[1]
        \Function{\findmin}{$A',B'$} \Comment{we work on $G'=G[A' \cup B']$}
            \State pick a starting vertex $\vmin$ in $A'$ uniformly at random 
            \State delete $\vmin$ from $A'$ 
            \While{$A' \neq \emptyset$ and $B' \neq \emptyset$}
                \If{$\vmin\in A'$}
                    \State uniformly pick a random vertex $u$ in $B'$
                    \State $B'\leftarrow B' \setminus \{u\}$
\Else
                    \State uniformly pick a random vertex $u$ in $A'$
                    \State $A'\leftarrow A' \setminus \{u\}$
                \EndIf
            \If{$u\prec \vmin$} \Comment{by the result of querying the edge $(u,\vmin)$}
                \State $\vmin\leftarrow u$
            \EndIf
            \EndWhile
            \State \textbf{return} $\vmin$.
        \EndFunction
    \end{algorithmic}
\end{algorithm}  

{

Since the old $\vmin_i$ is eliminated only if some vertex $u$ (which will be $\vmin_{i+1}$) is comparable and smaller than it, we regard $\vmin_{0}\succ \vmin_{1}...\succ \vmin_{N}=\vmin$ as a smaller chain, where $N$ is the length of the smaller chain. Then we discuss properties of \findmin.

\begin{lemma}[Cost]
    \findmin costs $O(n)$ queries and runs in $O(n)$. 
\end{lemma}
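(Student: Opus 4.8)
The plan is to bound the number of iterations of the \textbf{while} loop, and then bound the work done in each iteration and in the initialization. First I would observe that every iteration of the loop removes exactly one vertex from $A' \cup B'$: in the branch where $\vmin$ is on the $A$-side we delete the sampled vertex $u$ from $B'$, and in the other branch we delete it from $A'$, and in neither branch is any vertex inserted back. Since the starting vertex $\vmin$ is removed from $A'$ before the loop begins, and every later vertex that is promoted to $\vmin$ has already been removed from its side at the moment it was sampled, the quantity $|A'| + |B'|$ is strictly decreasing over iterations, starting from at most $|V| = n$. Consequently the loop runs at most $n$ times.

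Next I would bound the per-iteration cost. Each iteration makes exactly one query, namely the comparison on the edge $(u, \vmin)$; this edge is present because $u$ and $\vmin$ lie on opposite sides of the bipartition (when $\vmin$ is an $A$-vertex the sampled $u$ is taken from $B'$, and symmetrically otherwise) and $G$ is complete bipartite. Apart from the query, an iteration only samples a uniformly random element of $A'$ or $B'$, deletes it, and conditionally reassigns $\vmin$, all of which take $O(1)$ time if $A'$ and $B'$ are stored as dynamic arrays supporting uniform sampling and deletion via the standard swap-with-last trick. Combining the $O(1)$ per-iteration cost with the $O(n)$ bound on the number of iterations, and adding the $O(n)$ cost of setting up the data structures and sampling the initial $\vmin$, gives the claimed $O(n)$ query complexity and $O(n)$ running time.

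There is essentially no real obstacle in this argument; the only point that warrants care is confirming the monotone decrease of $|A'| + |B'|$, i.e.\ that the vertex currently stored in $\vmin$ never reappears in $A' \cup B'$ and is therefore never re-sampled or double-counted. This holds because each vertex — the initial $\vmin$ included — is deleted from its side exactly at the step in which it is chosen.
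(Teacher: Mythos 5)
Your argument is correct and essentially identical to the paper's: both bound the number of while-loop iterations by $|A'|+|B'| \le n$ since each iteration deletes one vertex, and both note that each iteration spends one query plus $O(1)$ additional work. The extra care you take about the monotone decrease of $|A'|+|B'|$ and the data-structure details is fine but not needed beyond what the paper already states.
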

\begin{proof}
Because each round a vertex is deleted from $A'\cup B'$, \findmin terminates in at most $O(|A'| +|B'| )=O(n)$ rounds. We spend one query on edge $(u,\vmin)$ in each round. Besides, the other operations also run in $O(1)$ in each round. 
    We conclude that \findmin costs $O(n)$ queries and runs in $O(n)$. 
\end{proof}

\begin{lemma}[Minimal]
\label{lem:findmin2} 
The returned vertex $\vmin$ is minimal, i.e., 
\begin{itemize}
    \item If the $\vmin \in A'$, $\forall b\in B'$, we have $b\nsim \vmin$ or $b\succ \vmin$.
    \item If the $\vmin \in B'$, $\forall a\in B'$, we have $a\nsim \vmin$ or $a\succ \vmin$.
\end{itemize}

\end{lemma}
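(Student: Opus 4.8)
The plan is to track how the pivot variable $\vmin$ evolves over the run of \findmin and to show that, at return, $\vmin$ lies below or incomparable to every vertex on the opposite side of the bipartition; since $G$ is complete bipartite, the last edge of any directed path into a vertex of $A$ comes from $B$ (and symmetrically), so this opposite-side property is exactly minimality, which is why the two bullets suffice. Concretely, I would first record the structure of the smaller chain $\vmin_0 \succ \vmin_1 \succ \cdots \succ \vmin_N = \vmin$: here $\vmin_0$ is the starting vertex, and the transition $\vmin_i \to \vmin_{i+1}$ occurs exactly in the iteration where the freshly sampled vertex $u = \vmin_{i+1}$, compared against the current pivot $\vmin_i$ via the single query on edge $(u,\vmin_i)$, is found to satisfy $u \prec \vmin_i$. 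By transitivity of $\prec$ we then have $\vmin \preceq \vmin_i$ for every $i$, and in particular each $\vmin_i$ with $i<N$ is strictly larger than the returned $\vmin$.

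Next I would isolate the key local fact. Consider any iteration in which the sampled vertex $u$ is \emph{rejected}, i.e.\ the pivot does not change; then the query result asserts $u \nprec \vmin_i$ for the current pivot $\vmin_i$, so $u \succ \vmin_i$ or $u \nsim \vmin_i$. Since $\vmin \preceq \vmin_i$, transitivity forbids $u \prec \vmin$ — a path $u \rightsquigarrow \vmin$ could be composed with $\vmin \rightsquigarrow \vmin_i$ — hence $u \succ \vmin$ or $u \nsim \vmin$. Thus every vertex that is ever sampled-and-rejected is incomparable to or strictly larger than $\vmin$, and by the previous paragraph the same holds for every vertex that ever becomes a pivot other than the final one.

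It then remains to show that, assuming without loss of generality the returned $\vmin$ lies in $A$ (the case $\vmin \in B$ is symmetric with $A'$ and $B'$ exchanged), every $b \in B'$ is either sampled-and-rejected at some point or becomes a pivot; equivalently, that $B'$ is fully consumed before \findmin returns. Whenever the pivot is on the $A$ side the algorithm draws (and deletes) a vertex of $B'$, and symmetrically on the $B$ side it draws from $A'$. I would verify, by a short case analysis on which of $A',B'$ is emptied at termination and on whether the last iteration was an acceptance or a rejection, that the loop cannot halt with the pivot on the $A$ side while $B'$ still contains an un-sampled vertex. Combining this with the previous paragraph gives: every $b \in B'$ satisfies $b \nsim \vmin$ or $b \succ \vmin$, which is the first bullet; the second bullet is the mirror image.

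I expect this last step to be the main obstacle: one must pin down that the opposite side of the returned pivot is exhausted exactly when the loop stops, and the delicate case is when the pivot crosses to a new side on the very last iteration — reconciling this with the loop's termination rule is the crux. I note that \Cref{lem:findmin2} is entirely deterministic; the randomness of \findmin plays no role here and is only exploited later, when bounding the number of iterations across the repeated calls to \findmin in the proof of \Cref{lem:bipartitemain}.
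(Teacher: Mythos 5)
Your strategy coincides with the paper's: follow the descending pivot chain $\vmin_0\succ\vmin_1\succ\cdots\succ\vmin_N=\vmin$, note that any vertex sampled and rejected against some $\vmin_i$ satisfies $u\succ\vmin_i$ or $u\nsim\vmin_i$ and hence (by transitivity through $\vmin\preceq\vmin_i$) cannot be $\prec\vmin$, note that every non-final pivot is $\succ\vmin$, and then argue that when the algorithm returns an $A$-side pivot every vertex of the input $B'$ was actually sampled. The paper's proof is exactly this, compressed to the single assertion that ``all $B'$ vertices have been compared to some vertex on the smaller chain and out''; your first two paragraphs supply the transitivity step that assertion leaves implicit.

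The exhaustion claim you defer is the only real content of the lemma, and your suspicion about the ``pivot crosses sides on the very last iteration'' case is well founded: the case analysis you propose does \emph{not} close for the pseudocode as printed. With the guard ``while $A'\neq\emptyset$ and $B'\neq\emptyset$'', the loop can halt with the pivot on the $A$ side while $B'$ still contains unsampled vertices. Concretely: (i) if $|A'|=1$, the starting vertex is deleted, the loop never executes, and that vertex is returned with all of $B'$ unexamined; (ii) if the pivot sits at some $b^*\in B$ and the last remaining $a\in A'$ satisfies $a\prec b^*$, the pivot moves to $a$, $A'$ becomes empty, and the loop exits with $B'\neq\emptyset$. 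In either run an unexamined $b\in B'$ with $b\prec\vmin$ is consistent with every query made (take the chain $b_2\prec a_2\prec b_1\prec a_1$ with start $a_1$ and sampling order $b_1,a_2$), so the first bullet would be false; a mirror-image failure hits the second bullet when the pivot moves to the $B$ side on the iteration that empties $B'$. The lemma is true --- and your argument completes --- only under the reading that the loop runs while the side \emph{opposite} the current pivot is nonempty, so that returning an $A$-side pivot forces the variable $B'$ to be empty, whence every original $B'$ vertex was either rejected against some $\vmin_i$ or became a non-final pivot. You should make that reading (or correction of the guard) explicit; as it stands, the step ``I would verify that the loop cannot halt with the pivot on the $A$ side while $B'$ still contains an un-sampled vertex'' is a verification that would fail.
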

\begin{proof}
We prove case $\vmin\in A'$ as an example.
    When the algorithm returns $\vmin\in A'$, all $B'$ vertices have been compared to some $A'$ vertex on the smaller chain and out (incomparable or larger). Therefore, all the $B'$ vertices are incomparable or larger than $\vmin$. The proof for $\vmin\in B'$ is symmetric.   
\end{proof}
Let us discuss the property more. When we input the original graph $G$ to \findmin, \findmin can actually output a ``real" minimal vertex. However, when the input is only an induced subgraph, the output of \findmin may only be a ``local" minimal vertex. Consider if we select $a_1$ and $a_2$ included in $A'$, and the real relation between them is $a_1 \prec  a_2$, we may not know it only by querying $G[A'\cup B']$ because we miss some vertices in $B'$. We can only promise the minimal property based on the edges in the induced subgraph, as \Cref{lem:findmin2} claims. 

\begin{lemma}[$A$-first]
\label{lem:findmin2fo}
    If $v^* \in B'$, there must exist a vertex $a\in A'$, such that $a\succ v^*$.  
\end{lemma}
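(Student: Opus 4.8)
The plan is to exploit bipartiteness: in a bipartite graph any two comparable vertices lie in different parts, so the ``smaller chain'' $\vmin_0 \succ \vmin_1 \succ \cdots \succ \vmin_N = \vmin$ built by \findmin must alternate between $A$ and $B$. First I would observe that whether the algorithm draws its next candidate $u$ from the $A$-side or the $B$-side is governed solely by which part the current $\vmin$ lies in, and a swap $\vmin \leftarrow u$ occurs only when $u \prec \vmin$; hence consecutive chain elements $\vmin_i, \vmin_{i+1}$ are comparable and therefore on opposite sides of the bipartition. Since $\vmin_0$ is the starting vertex, which is picked from $A'$ and so lies in $A$, a trivial induction gives $\vmin_{2j}\in A$ and $\vmin_{2j+1}\in B$ for every valid index $j$.

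Now suppose the returned vertex $\vmin = \vmin_N$ belongs to $B'$, so in particular $\vmin_N \in B$. Then $N \neq 0$, because otherwise $\vmin = \vmin_0 \in A$, a contradiction; and by the alternation established above, $N$ must be odd, so $N \ge 1$ and $\vmin_{N-1}\in A$. I would then note that $\vmin_{N-1}$ is in fact a vertex of the input set $A'$: it is either the initial vertex $\vmin_0$, drawn from $A'$, or it was selected as some random $u$ from (a snapshot of) $A'$ during the while loop; either way $\vmin_{N-1}\in A'$. Finally, $\vmin_{N-1} \succ \vmin_N = \vmin$ holds by construction of the chain, so taking $a = \vmin_{N-1}$ yields the desired vertex and finishes the proof.

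The argument is short once the alternation property is in hand, so there is no serious obstacle. The only subtlety worth being careful about is the bookkeeping of which set a chain vertex belongs to: one must distinguish the \emph{current} sets $A', B'$ (which shrink as the algorithm proceeds) from the \emph{input} sets, read the branching condition of \findmin as a test of which part $\vmin$ currently lies in, and verify that the dominating vertex $\vmin_{N-1}$ is genuinely a member of the input $A'$.
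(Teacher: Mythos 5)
Your proof is correct, but it takes a more roundabout route than the paper's one-sentence argument. The paper simply observes that the starting vertex $\vmin_0$ is drawn from $A'$, so $\vmin_0\in A'$; if the returned vertex $\vmin$ lies in $B'$ then $\vmin\neq\vmin_0$, hence $N\geq 1$ and the strict ``smaller chain'' $\vmin_0\succ\vmin_1\succ\cdots\succ\vmin_N=\vmin$ gives $\vmin_0\succ\vmin$ immediately; one takes $a=\vmin_0$ and stops. You instead take $a=\vmin_{N-1}$, which requires the extra alternation argument ($\vmin_{2j}\in A$, $\vmin_{2j+1}\in B$) to pin down which side $\vmin_{N-1}$ is on. That alternation does hold (and the reason is the algorithm's choice of which set to sample from, not ``comparable $\Rightarrow$ opposite sides'' as a standalone implication — two vertices on the same side can be comparable in the poset; what forces alternation is that $u$ is drawn from the part opposite the current $\vmin$). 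Your version does buy you slightly more: $\vmin_{N-1}$ is actually adjacent to $\vmin$ in the bipartite query graph, whereas $\vmin_0$ in general is not. But for the lemma as stated, only existence is needed, so the paper's shorter argument suffices. Both are valid.
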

\begin{proof}
Because we always choose $a\in A'$ as the starting vertex, if $\vmin \in B'$, we must have $a \succ  \vmin$.
\end{proof}

{We don't have similar lemma for $\vmin\in A'$, because $\vmin$ may equal to $\vmin_{0}$.}

Finally, we discuss the randomness of the process. We observe that a ``real" smaller vertex should have more chance to be returned by \findmin even if we only work on an induced subgraph $G'$. 
\begin{lemma} 
\label{lem:findmin3} For $a_1,a_2\in A'$, if $a_1\prec a_2$ , $\p[\vmin = a_1] \geq \p[\vmin = a_2]$.
\end{lemma}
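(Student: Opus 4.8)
The plan is to prove the lemma by an explicit measure‑preserving coupling. First I would reformulate the randomness of \findmin: its entire execution on input $(A',B')$ is determined by a pair $(\rho,\tau)$, where $\rho$ is a uniformly random permutation of $A'$ and $\tau$ a uniformly random permutation of $B'$; here $\rho(1)$ is the starting vertex, and thereafter the $i$-th time the algorithm picks from $A'$ (resp.\ $B'$) it picks the next unused element of $\rho$ (resp.\ $\tau$), where which side is used at each step is dictated by whether the current $\vmin$ lies in $A'$ or $B'$ and is therefore a deterministic function of the history. Write $f(\rho,\tau)$ for the returned vertex, and let $\phi$ be the involution on permutations of $A'$ that swaps the positions of $a_1$ and $a_2$; then $(\rho,\tau)\mapsto(\phi(\rho),\tau)$ is a measure‑preserving bijection of the sample space. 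So it suffices to prove the claim: \emph{for every $(\rho,\tau)$, if $f(\rho,\tau)=a_2$ then $f(\phi(\rho),\tau)=a_1$.} Indeed, applying the claim with $\phi(\rho)$ in place of $\rho$ and using $\phi^2=\mathrm{id}$ gives $\{(\rho,\tau):f(\phi(\rho),\tau)=a_2\}\subseteq\{(\rho,\tau):f(\rho,\tau)=a_1\}$, hence $\p[\vmin=a_2]=\p[f(\phi(\rho),\tau)=a_2]\le\p[f(\rho,\tau)=a_1]=\p[\vmin=a_1]$.

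To prove the claim, fix $(\rho,\tau)$ with $f(\rho,\tau)=a_2$, and let $p_1,p_2$ be the positions of $a_1,a_2$ in $\rho$ (with position $1$ meaning the starting vertex). Since $\vmin$ only ever moves to strictly smaller comparable vertices, $a_2$ can become $\vmin$ only at the single step where $a_2$ is consumed from $\rho$ (or at the start, if $p_2=1$), and once it does it is never dethroned. The first observation is that $p_2<p_1$: if $a_1$ were consumed while the current vertex is $y$, then either $a_2$ becomes $\vmin$ after that step — so $y$ is earlier in the decreasing $\vmin$-chain, forcing $y\succ a_2\succ a_1$ — or $a_2$ has already become $\vmin$ and stayed, forcing $y=a_2\succ a_1$; in both cases $a_1\prec y$, so the algorithm would move to $a_1\prec a_2$ and could never recover $a_2$, contradicting $f(\rho,\tau)=a_2$. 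Hence $a_1$ is never consumed, while $a_2$ is, which forces $p_2<p_1$ since positions are consumed in increasing order.

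Now I would run the two executions $\mathcal{A}$ on $(\rho,\tau)$ and $\mathcal{B}$ on $(\phi(\rho),\tau)$ in parallel. Because $\rho$ and $\phi(\rho)$ agree except at positions $p_1,p_2$, and $p_2<p_1$, the two executions are identical — same current vertex, same pointers into $\rho$ and $\tau$ — up to the step at which position $p_2$ is consumed; at that step $\mathcal{A}$ sees $a_2$ and $\mathcal{B}$ sees $a_1$ with a common current vertex $v^\ast$ (or this is the start step). Since $f(\rho,\tau)=a_2$ we have $a_2\prec v^\ast$, hence $a_1\prec a_2\prec v^\ast$, so $\mathcal{A}$ moves to $a_2$ and $\mathcal{B}$ moves to $a_1$. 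From here the executions stay in lockstep with $\vmin_{\mathcal{A}}=a_2$ and $\vmin_{\mathcal{B}}=a_1$ forever: whenever $\mathcal{A}$ examines a vertex $x$ it must delete it (as $a_2$ is never dethroned), i.e.\ $x\succ a_2$ or $x\nsim a_2$, which since $a_1\prec a_2$ implies $x\not\prec a_1$, so $\mathcal{B}$ deletes $x$ too; thus both executions consume the same vertices in the same order and terminate together, and in particular neither ever consumes position $p_1$. Therefore $f(\phi(\rho),\tau)=a_1$, which proves the claim.

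I expect the main obstacle to be the careful bookkeeping rather than any deep idea: one must verify that the permutation model faithfully reproduces \findmin (including the starting vertex and the ``$A'\neq\emptyset$ and $B'\neq\emptyset$'' termination test), and must check the lockstep step‑by‑step, i.e.\ that the deletion decisions of $\mathcal{A}$ and $\mathcal{B}$ always agree after the branch point. All of this ultimately rests on the two elementary implications ``$a_1\prec a_2$ and $a_2\prec v^\ast$ imply $a_1\prec v^\ast$'' and ``$a_1\prec a_2$ and $x\not\prec a_2$ imply $x\not\prec a_1$'', so the remaining work is just organizing the cases cleanly, notably the start case $p_2=1$, in which $\vmin_{\mathcal{A}}$ never leaves $A'$ and $\mathcal{A}$ simply scans all of $B'$ while $\mathcal{B}$ does the same with $\vmin=a_1$.
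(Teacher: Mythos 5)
Your proof is correct, but it takes a genuinely different route from the paper's. The paper argues directly on distributions: it decomposes $\p[\vmin=a]$ by conditioning on the predecessor $b=\vmin_{N-1}$ of $a$ in the decreasing $\vmin$-chain, claims that conditionally on $\vmin_{N-1}=b$ the next chain element is uniform over $A'_{\prec b}$, and then compares the conditional probabilities for $a_1$ and $a_2$ term by term (equal when $b\in B'_{\succ a_2}$, and strictly in $a_1$'s favor when $b\in B'_{\succ a_1}\setminus B'_{\succ a_2}$), after first disposing of the case where $a_2$ dominates some element of $B'$ via the minimality lemma. You instead realize the randomness as a pair of pre-drawn permutations and exhibit an explicit measure-preserving involution (swapping the positions of $a_1$ and $a_2$ in the $A'$-permutation) that injects the event $\{\vmin=a_2\}$ into $\{\vmin=a_1\}$; the heart of your argument is the lockstep analysis showing that once the branch point is reached, transitivity ($x\not\prec a_2$ and $a_1\prec a_2$ imply $x\not\prec a_1$) keeps the two executions synchronized. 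What each approach buys: the paper's computation is shorter but leans on a conditional-uniformity claim that it does not fully justify (one must check that no element of $A'_{\prec b}$ can be consumed before $b$ becomes $\vmin$, and that the consumption order is still exchangeable given the conditioning); your coupling avoids all conditional-distribution reasoning and is self-contained and checkable step by step, at the price of more bookkeeping, and it also subsumes the paper's separate degenerate case ($a_2$ larger than some $b\in B'$) without any case split. One cosmetic caveat: like the paper, you read the test ``$\vmin\in A'$'' as ``$\vmin$ lies on the $A$-side,'' which is the intended semantics since $\vmin$ is always deleted from the set it is drawn from; it is worth saying this explicitly when you formalize the permutation model.
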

\begin{proof}

By \Cref{lem:findmin2}, if $a_2\in A'$ is larger than some $b\in B'$, $\p[\vmin = a_1]\geq\p[\vmin = a_2]=0$. The lemma holds. Then, we discuss the situation when $a_1,a_2$ are smaller than $\forall b\in B'$.

We write $\p[\vmin=a]$ in the following form according to the length $N$ of the smaller chain $\vmin_{0}\succ \vmin_{1}...\succ \vmin_{N}=\vmin$.
\begin{equation*}
\begin{aligned}
\p[\vmin=a]&=\p[\vmin=a,N=0]+\p[\vmin=a,N> 0]\\
&=\p[\vmin=a\mid N=0]\cdot\p[N=0]+\sum_{b\in B'_{\succ a}}\sum_{N}\p[\vmin=a\mid \vmin_{N-1}=b]\cdot\p[\vmin_{N-1}=b].
\end{aligned}
\end{equation*}

When $N=0$, because we choose $\vmin_0$ from $A'$ uniformly at random, $$\p[\vmin=a_1\mid N=0]=\p[\vmin=a_2\mid N=0].$$

When $N> 0$, because the next vertex $\vmin_{N}$ after $\vmin_{N-1}=b$ on the smaller chain is chosen uniformly at random in $A'_{\prec b}$, $\p[\vmin_{N}\notin A'_{\prec b}]=0$ while $\p[\vmin_{N}\in A'_{\prec b}]=1/|A'_{\prec b}|$. Then, we have: 
\begin{itemize}
    \item For $\vmin_{N-1}=b\in B'_{\succ a_2}\subseteq B'_{\succ a_1}$, because $a_1$ and $a_2$ are both in $A'_{\prec b}$, $$\p[\vmin=a_1\mid \vmin_{N-1}=b]=\p[\vmin=a_2\mid \vmin_{N-1}=b].$$
    \item For $\vmin_{N-1}=b\in B'_{\succ a_1}\setminus B'_{\succ a_2}$, because only $a_1$ is in $A'_{\prec b}$, $$\p[\vmin=a_1\mid \vmin_{N-1}=b]> \p[\vmin=a_2\mid \vmin_{N-1}=b]=0.$$
\end{itemize}
Therefore $\p[\vmin=a_1]\geq \p[\vmin=a_2]$.

\end{proof}
}

\subsection{Determining $B_{\succ \pivot}$}

{
Then, we present in detail how we construct $B_{\succ \pivot}$ iteratively by keep calling \findmin and appending vertices into $B_{\succ \pivot}$. In the first step, we use \findmin on $G[A_{\succ \pivot}]$. Because no vertices in $B$ are included, by the property of \findmin, we will get a vertex $a$ in $A_{\succ \pivot}$ uniformly at random. Then, we have that $B_{\succ a}$ must be a subset of $B_{\succ \pivot}$, and we union it into $\bunion$ (the currently discovered subset of $B_{\succ \pivot}$). After that, we move to the next round and call \findmin again on $G[A_{\succ \pivot} \setminus \{a\} \cup \bunion]$. Because we include all $B$-side vertices larger than $a$ and delete $a$ from the induced subgraph, \findmin will not return a vertex larger than $a$ (including $a$) again. We will find a ``smaller" minimal vertex in $A$ and use it to expand $\bunion$ again. 

Intuitively, let us focus on one chain in the chain decomposition of the poset. By the property of \findmin, smaller vertices should have more chance to be selected. Therefore, we can move across at least half of the vertices on this chain with at least half probability. As a result, if we repeat the process for $O(k\log n)$ rounds (we pay $O(n)$ cost in each round), we have visited all ``real'' minimal points on all the chains (at most $k$). Ideally, we are done in this state because all vertices in $B_{\succ \pivot}$ are included in $\bunion$. 

The second problem is how to figure out whether we are in such a good state so that we can terminate the iteration. Let us focus on one chain again. We observe that \findmin returns a $B$-side vertex if and only if all $A$-side vertices smaller than $b$ have been returned before and have been deleted, which means that this chain is completed. So, next, we propose to delete all the $A$-side vertices larger than $b$. Because \findmin is an $A$-first (by \Cref{lem:findmin2fo}) process, every $B$-side vertices larger than $b$ (including $b$) will not be visited again. Finally, the algorithm will terminate when all $A$-side vertices are deleted, and we prove the extra iterations where we get a $B$-side vertex can also be bounded in $O(k\log n)$. 
The formal description of this subroutine is presented in \Cref{alg:findlarge}.
}

\begin{algorithm}[H]
    \caption{Recover $V_{\succ \pivot}$}
    \label{alg:findlarge}
    \begin{algorithmic}[1]
        \Function{\findkmin}{G, $\pivot$}
\State $\Tilde{A} \leftarrow A_{\succ \pivot} $  \Comment{get $A_{\succ \pivot}$ by at most $n$ queries}
            \State $\bunion \gets \{\}$ \Comment{used to maintain the current detected subset of $B_{\succ \pivot}$}
            \While{$\Tilde{A}\neq\emptyset$}
            	\State $\vmin \gets \findmin(\Tilde{A}, \bunion)$ \Comment{$O(n)$ queries by \findmin}
                \If{$\vmin\in A$}
                    \State $\Tilde{A} \gets \Tilde{A} \setminus \{\vmin\}$
                    \State $\bunion\leftarrow \bunion\cup B_{\succ \vmin}$
                \EndIf
                \If{$\vmin\in B$}
                    \State delete $A_{\succ \vmin}$ from $\Tilde{A}$\Comment{get $A_{\succ \vmin}$ by at most $n$ queries}
                \EndIf
            \EndWhile
            \State \textbf{return} $A_{\succ \pivot} \cup \bunion$
        \EndFunction
    \end{algorithmic}
\end{algorithm}  

We first show that our algorithm will correctly recover $B_{\succ \pivot}$ when it terminates. 
\begin{lemma}
$\bunion= B_{\succ \pivot}$ when $\Tilde{A}=\emptyset$.
\end{lemma}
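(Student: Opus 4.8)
The plan is to prove the two inclusions $\bunion \subseteq B_{\succ \pivot}$ and $B_{\succ \pivot} \subseteq \bunion$ separately. The first one I would maintain as an invariant throughout the execution of \findkmin, and the second one I would obtain by a proof-by-contradiction argument that exploits the termination condition $\Tilde{A} = \emptyset$ together with the bipartite structure of $G$ (working, as in the proof overview, under the assumption $\pivot \in B$).

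For $\bunion \subseteq B_{\succ \pivot}$, I would induct on the iterations of the while loop. Initially $\bunion = \emptyset$, and $\bunion$ changes only in the branch $\vmin \in A$, where we set $\bunion \gets \bunion \cup B_{\succ \vmin}$; since the starting vertex of \findmin is drawn from $\Tilde{A} \subseteq A_{\succ \pivot}$, the returned $\vmin$ satisfies $\vmin \succ \pivot$, so $B_{\succ \vmin} \subseteq B_{\succ \pivot}$ by transitivity and the invariant is preserved. This argument also shows that $\bunion$ is monotone non-decreasing over the run, a fact I will reuse below.

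For $B_{\succ \pivot} \subseteq \bunion$, suppose towards a contradiction that some $b \in B_{\succ \pivot}$ is never placed in $\bunion$. The structural ingredient I would use is that $G$ is complete bipartite with parts $A, B$ and $\pivot, b \in B$: any directed path from $\pivot$ to $b$ in $\vec{G}$ (one exists because $b \succ \pivot$ and $\mathcal{P} = \mathcal{P}(\vec{G})$) alternates between $B$ and $A$ and has even length $\ge 2$, so its second vertex is some $a_1 \in A$ with $\pivot \prec a_1 \prec b$. Hence the witness set $W_b := \{\, a \in A : \pivot \prec a \prec b \,\}$ is nonempty, and $W_b \subseteq A_{\succ \pivot}$, which is the initial value of $\Tilde{A}$. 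Since the loop ends with $\Tilde{A} = \emptyset$, every witness is eventually removed from $\Tilde{A}$; I would fix an iteration at which some witness $a^* \in W_b$ leaves $\Tilde{A}$ and let $\vmin$ be the vertex \findmin returns there. If $\vmin \in A$, then $\vmin = a^*$ (this branch deletes only the returned vertex), so $\vmin \prec b$ and $b \in B_{\succ \vmin} \subseteq \bunion$ — contradiction. If $\vmin \in B$, then $\vmin \in \bunion$ (it is returned by \findmin run on $G[\Tilde{A} \cup \bunion]$ and $\Tilde{A} \subseteq A$), and $a^*$ is deleted as part of $A_{\succ \vmin}$, so $\vmin \prec a^* \prec b$.

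The main obstacle is finishing the case $\vmin \in B$, and the resolution is a provenance argument: a $B$-vertex can have entered $\bunion$ only at some earlier iteration whose returned vertex $\vmin' \in A$ satisfies $\vmin \in B_{\succ \vmin'}$, i.e.\ $\vmin' \prec \vmin$; then $\vmin' \prec \vmin \prec a^* \prec b$ yields $b \in B_{\succ \vmin'}$, so $b$ was added to $\bunion$ already at that earlier iteration, contradicting (by monotonicity of $\bunion$) that $b$ is never in $\bunion$. The only remaining loose end is that the loop actually exits with $\Tilde{A} = \emptyset$: in the $\vmin \in A$ branch one vertex leaves $\Tilde{A}$, and in the $\vmin \in B$ branch the $A$-first property (\Cref{lem:findmin2fo}) guarantees $A_{\succ \vmin} \cap \Tilde{A} \neq \emptyset$, so $\Tilde{A}$ strictly shrinks each iteration and the algorithm terminates.
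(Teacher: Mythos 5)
Your proof is correct and takes essentially the same approach as the paper's: the first inclusion is a routine invariant, and the second goes by contradiction, tracing back how a witness vertex $a \prec b$ could leave $\Tilde{A}$ and showing that in every case $b$ would have been placed in $\bunion$ (the ``provenance'' step). You spell out a few details the paper leaves implicit (nonemptiness of $W_b$ via the alternating path from $\pivot$ to $b$, and termination via \Cref{lem:findmin2fo}), but the core argument is identical.
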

\begin{proof}
    Obviously, $\bunion\subseteq B_{\succ \pivot}$ because we never include any $b\prec \pivot$ into $\bunion$.
    We prove $B_{\succ \pivot}\subseteq \bunion$ by contradiction.
    Assumed that $\exists b\in B_{\succ \pivot}$, such that $b\notin\bunion$. It means that \findmin never returns any vertices $a$ in $A$ such that $a\prec b$. Notice that these vertices cannot be deleted by other $b' \in \bunion$, because otherwise, $b$ will also be included in $\bunion$ in the round we include $b'$. Thus, these $a\prec b$ will never be deleted unless they are returned by \findmin. It concludes the lemma. 
\end{proof}

Then, we move to the complexity. 
The complexity of the algorithm relies on the number of iterations it makes. The intuition is that we can promote each chain quickly. Let us focus on a chain decomposition $\mathcal{C}=\{C_1,C_2,\dots\}$. For each chain, we recover it to a real path on the complete bipartite graph starting from $\pivot$. Then, $C_i$ becomes $\{\pivot = b_{i0}\prec a_{i0}\prec b_{i1}\prec a_{i1}\prec b_{i2}\prec \dots\}$. Remark that it may let a vertex appear on multiple chains, but it does not matter in our analysis. For a vertex $a_{ij}$, we use $r_i(a_{ij})=j$ to denote its \emph{rank} on $i$. We also define $r_i(b_{ij}) = j$ symmetrically. 
For each chain, we use $\pi_i$ to denote its current state:
$
    \pi_i = \max\{ j \mid b_{ij} \not\in \bunion \}.
$
For each iterative round in our algorithm, we define two kinds of events as follows
\begin{itemize}
	\item $\mathcal{E}_i^A$: the \findmin function returns a vertex $v$ in $A$ and on $C_i$, and we call $\mathcal{E}_i^A$ happens on $v$.
	\item $\mathcal{E}_i^B$: the \findmin function returns a vertex $v$ in $B$ and on $C_i$, and we call $\mathcal{E}_i^B$ happens on $v$.
 
\end{itemize}
We remark that one call of \findmin can trigger more than one event because one vertex can appear on multiple chains. 
Consider an event $\finda_i$ that happens on vertex $v$ and on chain $C_i$ where $C_i$'s state is currently $\pi_i$. We call $\finda_i$ \emph{good} if the rank of $v$ is at most $\pi_i/2$. The following lemma shows why we promote each chain efficiently. 
\begin{lemma}
\label{lem:findmin3co}
The probability that $\finda_i$ is good is at least $1/2$. 
\end{lemma}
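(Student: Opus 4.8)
The plan is to condition on the entire execution history of \Cref{alg:findlarge} up to the start of the round under consideration. This fixes $\Tilde{A}$ and $\bunion$, hence the recovered path of $C_i$ and the current state $\pi_i$, and leaves only the internal randomness of the single \findmin call made in that round; note that \Cref{lem:findmin3} is a statement about one such call on a fixed input, so it becomes directly applicable once we condition like this. It then suffices to show, for every history for which $\mathcal{E}_i^A$ has positive probability, that $\Pr[\,r_i(v)\le \pi_i/2 \mid \mathcal{E}_i^A\,]\ge 1/2$, where $v$ is the vertex \findmin returns; averaging over histories recovers the unconditional bound. Writing $J:=\{\,j : a_{ij}\in\Tilde{A}\ \text{and}\ \Pr[v=a_{ij}]>0\,\}$, and using that \findmin only outputs vertices of its input, we get $\Pr[\mathcal{E}_i^A]=\sum_{j\in J}\Pr[v=a_{ij}]$ while the ``good'' case contributes $\sum_{j\in J,\,j\le \pi_i/2}\Pr[v=a_{ij}]$, so the goal reduces to the inequality
\[
\sum_{j\in J,\ j\le \pi_i/2}\Pr[v=a_{ij}] \ \ge\ \sum_{j\in J,\ j>\pi_i/2}\Pr[v=a_{ij}].
\]

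The argument rests on two structural facts about $J$. First, $J\subseteq\{0,1,\dots,\pi_i\}$: on the recovered path $a_{ij}$ is immediately preceded by $b_{ij}$, so $b_{ij}\prec a_{ij}$; if $j>\pi_i$ then $b_{ij}\in\bunion$ by the definition of $\pi_i$, and \Cref{lem:findmin2} then forbids $a_{ij}$ from being the minimal vertex returned by \findmin on $G[\Tilde{A}\cup\bunion]$, contradicting $j\in J$. Second, $J$ is downward closed: if $j\in J$ and $0\le j'<j$, then $a_{ij'}$ must still lie in $\Tilde{A}$ --- otherwise it was deleted in an earlier round, either as the vertex $\vmin$ returned by some \findmin call, in which case $b_{i,j'+1}\in B_{\succ \vmin}\subseteq\bunion$ and, since $b_{i,j'+1}\prec a_{ij}$, \Cref{lem:findmin2} is again violated for $a_{ij}$; or via a ``delete $A_{\succ\vmin}$'' step with $\vmin\in B$, in which case $a_{ij}\succ a_{ij'}\succ\vmin$ forces $a_{ij}$ to have been deleted too, contradicting $a_{ij}\in\Tilde{A}$. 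Once $a_{ij'}\in\Tilde{A}$, \Cref{lem:findmin3} applied to $a_{ij'}\prec a_{ij}$ gives $\Pr[v=a_{ij'}]\ge\Pr[v=a_{ij}]>0$, so $j'\in J$. Combining the two facts, $J=\{0,1,\dots,j_{\max}\}$ for some $j_{\max}\le\pi_i$.

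What remains is a short counting step. If $j_{\max}\le \pi_i/2$ the right-hand sum is empty, so assume $j_{\max}>\pi_i/2$; then $L:=\{0,\dots,\lfloor\pi_i/2\rfloor\}$ and $H:=\{\lfloor\pi_i/2\rfloor+1,\dots,j_{\max}\}$ are both contained in $J$, and they are exactly the index sets appearing on the left and right of the target inequality. The shift $\phi(j):=j-(\lfloor\pi_i/2\rfloor+1)$ is injective, satisfies $\phi(j)<j$, and maps $H$ into $L$ (using $j_{\max}\le\pi_i$). For each $j\in H$, the vertices $a_{i,\phi(j)}\prec a_{ij}$ lie on the same chain $C_i$ and both belong to $\Tilde{A}$, so \Cref{lem:findmin3} gives $\Pr[v=a_{i,\phi(j)}]\ge\Pr[v=a_{ij}]$; summing over $j\in H$ and using injectivity of $\phi$ into $L$ yields $\sum_{j\in H}\Pr[v=a_{ij}]\le\sum_{j\in L}\Pr[v=a_{ij}]$, which is exactly the inequality above.

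I expect the main obstacle to be the second structural fact: it is the only place where one has to reason about the dynamics of \Cref{alg:findlarge} across rounds, in particular combining the observation that returning an $A$-vertex pushes the next $B$-vertex of its chain into $\bunion$ with the minimality guarantee of \Cref{lem:findmin2} to rule out the harmful deletion patterns. A secondary point needing care is the conditioning setup --- checking that $\Tilde{A}$, $\bunion$, the recovered path of $C_i$, and $\pi_i$ are all measurable with respect to the history --- so that \Cref{lem:findmin3} may legitimately be invoked and the per-history bound transferred to the unconditional one.
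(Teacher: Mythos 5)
Your proposal is correct and follows essentially the same route as the paper: bound the returned rank by $\pi_i$ via the minimality guarantee of \Cref{lem:findmin2}, argue that no vertex of rank at most $\pi_i$ on $C_i$ has been deleted, and then invoke the monotonicity of \Cref{lem:findmin3} to compare the low-rank and high-rank probability masses. The only difference is that you make explicit two steps the paper leaves implicit — the conditioning on the execution history and the injective-shift counting argument that turns \Cref{lem:findmin3} into the inequality $\Pr[r_i(\vmin)\le\pi_i/2]\ge\Pr[r_i(\vmin)>\pi_i/2]$ — which is a faithful elaboration rather than a different proof.
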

\begin{proof}
Consider $\finda_i$ happens on chain $C_i = \{ \pivot=b_{i0}\prec a_{i0}\prec b_{i1}\prec ...\}$ with state $\pi_i$. Assume $v^*$ is the returned vertex in $\tilde{A}$. By \Cref{lem:findmin2}, we have $r_i(\vmin) \leq \pi_i$.

Then we claim that vertices in $\{a\mid r_i(a)\leq\pi_i\}$ must haven't been deleted from $\Tilde{A}$, or $\pi_i$ has decreased to that rank. Therefore vertices in $\{a\mid r_i(a)\leq\pi_i\}$ are all possible to be the returned vertex. 

According to \Cref{lem:findmin3}, 
we have
$
\p[r_i(\vmin) \leq \pi_i /2] \geq \p[r_i(\vmin) >  \pi_i/2]
$. 
The probability that $\finda_i$ is good equals $\p[r_i(\vmin) \leq \pi_i /2]$, which is larger than half.

\end{proof}

{
We directly have the following corollary.
\begin{corollary}
\label{lem:comA}
The number of rounds where $\finda_i$ happens for all $i$ is $O(k\log n)$ in expectation. 
\end{corollary}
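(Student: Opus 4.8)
The plan is to argue chain by chain and then sum over the $k$ chains of the decomposition. Fix a chain, recovered to a path $C_i=\{\pivot=b_{i0}\prec a_{i0}\prec b_{i1}\prec a_{i1}\prec\cdots\}$, and track throughout the execution of \Cref{alg:findlarge} its state $\pi_i=\max\{j:b_{ij}\notin\bunion\}$. Since $\bunion$ only grows, and since (as already observed in the proof of \Cref{lem:findmin3co}) $b_{ij}\in\bunion$ forces $b_{ij'}\in\bunion$ for all $j'>j$, the quantity $\pi_i$ is non-increasing over time, and $\bunion\cap C_i$ is exactly the ``upper tail'' $\{b_{ij}:j>\pi_i\}$.

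The first step is to record what a \emph{good} $\finda_i$ event does to $\pi_i$. Suppose $\finda_i$ occurs on $v=a_{i,r}$. Then $b_{i,r}\prec a_{i,r}=v$; if we had $r>\pi_i$ this would put $b_{i,r}\in\bunion$, contradicting the minimality of $\vmin$ in $G[\Tilde A\cup\bunion]$ given by \Cref{lem:findmin2}. Hence $r\le\pi_i$ at the moment of the event. Processing $v$ then adds $B_{\succ v}\supseteq\{b_{i,r+1},b_{i,r+2},\dots\}$ to $\bunion$ and deletes $v$ from $\Tilde A$, so immediately afterwards $\pi_i\le r$. In particular, if the event is good, i.e.\ $r\le\pi_i/2$, then $\pi_i$ is (at least) halved. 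Because $\pi_i$ starts at most $n$ and is a non-negative integer, at most $\lceil\log_2 n\rceil$ good $\finda_i$ events can occur while $\pi_i\ge1$; at most one further good $\finda_i$ event can occur while $\pi_i=0$, since such an event must fall on $a_{i,0}$, which is then deleted, after which \Cref{lem:findmin2} forbids any subsequent $\finda_i$ on $C_i$. Thus the number of good $\finda_i$ events is at most $M:=\lceil\log_2 n\rceil+1=O(\log n)$, and moreover no $\finda_i$ event happens after the $M$-th good one.

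The second step converts this into a bound on the expected \emph{total} number of $\finda_i$ events. Enumerate the $\finda_i$ events in the order they occur and let $Y_j$ indicate that the $j$-th is good; by \Cref{lem:findmin3co}, $\Pr[Y_j=1\mid\mathcal F_{j-1}]\ge 1/2$ where $\mathcal F_{j-1}$ is the history preceding the event, and the stream of $\finda_i$ events terminates once $\sum_j Y_j=M$. Coupling each $Y_j$ from below with an i.i.d.\ $\mathrm{Ber}(1/2)$ sequence, the number of trials until $M$ successes is stochastically dominated by a negative binomial variable of mean $2M$, so the expected number of $\finda_i$ events is at most $2M=O(\log n)$. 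Summing over the $k$ chains by linearity of expectation (no cross-chain independence is needed, which also handles the fact that one call to \findmin may trigger several $\finda_i$ at once) yields that the expected number of rounds on which some $\finda_i$ fires, counted with multiplicity in $i$, is $O(k\log n)$.

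The main obstacle is the first step: showing cleanly that a good $\finda_i$ event halves $\pi_i$ and that the chain is effectively ``closed'' for $A$-events afterward. This requires combining the minimality guarantee of \Cref{lem:findmin2} with the upward-closedness of $\bunion$ along each recovered chain, and a careful treatment of the boundary case $\pi_i=0$. The concluding stopping-time/coupling argument is routine, but one must respect that the ``$\ge1/2$'' estimate of \Cref{lem:findmin3co} is conditional on the current state — which is exactly the form the coupling needs.
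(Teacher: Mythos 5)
Your proof is correct and follows the same route as the paper: the paper's own proof of this corollary is a two-line appeal to \Cref{lem:findmin3co} ("each chain needs $O(\log n)$ events in expectation, sum over $k$ chains"), and you have simply filled in the details of exactly that argument — the halving of $\pi_i$ by good events via \Cref{lem:findmin2}, and the negative-binomial/stochastic-domination step to pass from "good with probability $\ge 1/2$ conditionally" to an $O(\log n)$ expected count per chain.
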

}
\begin{proof}
For a fixed $i$, the number of $\finda_i$ is $O(\log n)$ in expectation. Summing up $k$ chains, it is $O(k\log n)$.    
\end{proof}

Then, we bound the times of event $\findb_i$ happens.
{
\begin{lemma}
\label{lem:comBpre}
If for some $i$, $\findb_i$ happens on $b$, $\findb_j$ will not happen on $b' \succeq b$ in later rounds for all $j$.  

\end{lemma}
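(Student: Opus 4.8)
The plan is to argue by contradiction, combining the $A$-first property of \findmin (\Cref{lem:findmin2fo}) with the observation that the set $\tilde{A}$ maintained by \Cref{alg:findlarge} is monotonically non-increasing across iterations. Suppose $\findb_i$ happens on $b$ at some round $t$, and, towards a contradiction, that at a strictly later round $t' > t$ the event $\findb_j$ happens on some $b' \succeq b$, for some $j$.

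First I would extract what round $t$ guarantees. Since $\findb_i$ happens on $b$, the call $\findmin(\tilde{A}, \tilde{B})$ at round $t$ returns $\vmin = b \in B$, so \Cref{alg:findlarge} takes the branch ``delete $A_{\succ \vmin}$ from $\tilde{A}$'', i.e.\ it removes every $A$-vertex that is $\succ b$ from $\tilde{A}$. Inspecting the while-loop of \Cref{alg:findlarge}, the only updates to $\tilde{A}$ are deletions (either $\tilde{A} \gets \tilde{A}\setminus\{\vmin\}$ or deleting $A_{\succ \vmin}$), so $\tilde{A}$ never grows. Consequently, after round $t$ and at the start of every subsequent round, $\tilde{A}$ contains no vertex of $A_{\succ b}$; in particular the snapshot $\tilde{A}^{(t')}$ passed into \findmin at round $t'$ satisfies $\tilde{A}^{(t')}\cap A_{\succ b}=\emptyset$.

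Next I would examine round $t'$. Since $\findb_j$ happens on $b'$, the call $\findmin(\tilde{A}^{(t')},\tilde{B}^{(t')})$ returns $\vmin = b' \in B$, so by \Cref{lem:findmin2fo} there is a vertex $a\in\tilde{A}^{(t')}$ with $a\succ b'$. Using $b'\succeq b$ and transitivity of $\prec$, we get $a\succ b$, i.e.\ $a\in A_{\succ b}$, hence $a\in \tilde{A}^{(t')}\cap A_{\succ b}$, contradicting the conclusion of the previous paragraph. This establishes the lemma. I do not expect a genuine obstacle here; the only points needing care are (i) spelling out the monotonicity of $\tilde{A}$ by checking both deletion branches, and (ii) keeping track of \emph{which} snapshot of $\tilde{A}$ the vertex $a$ from \Cref{lem:findmin2fo} belongs to, so that the emptiness conclusion from round $t$ is applied to the correct set. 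Note the argument never uses that $b$ or $b'$ lies on a particular chain, nor any relation between $i$ and $j$, and it automatically covers the case $b'=b$.
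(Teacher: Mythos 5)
Your proof is correct and follows essentially the same argument as the paper's: both invoke the $A$-first property (\Cref{lem:findmin2fo}) at the later round to produce a vertex $a \in A_{\succ b'} \subseteq A_{\succ b}$ still in $\tilde{A}$, contradicting the deletion of $A_{\succ b}$ when $\findb_i$ happened on $b$. Your version merely spells out the monotonicity of $\tilde{A}$ more explicitly, which the paper leaves implicit.
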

}
\begin{proof}
	If $\findb_j$ does happen on $b' \succeq b$ in later rounds after $\findb_i$ happens on $b$, according to \Cref{lem:findmin2fo}, $\exists a\in A_{\succ b'}\subseteq A_{\succ b}$ which is still in $\Tilde{A}$, which contradicts to the fact we have deleted $A_{\succ b}$ from $\Tilde{A}$.
\end{proof}

\begin{lemma}
\label{lem:comBpre2}
When $\findb_i$ happens on vertex $b_{ij}$ on chain $i$, $\finda_i$ must have happened to all vertices $a_{ij'}$ for $j'\leq j$.
\end{lemma}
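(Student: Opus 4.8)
The plan is to prove \Cref{lem:comBpre2} by combining the minimality guarantee of \findmin (\Cref{lem:findmin2}) with the monotonicity of the $\findb_i$-events from \Cref{lem:comBpre}. Fix the round in which $\findb_i$ fires on $b_{ij}$, and let $\Tilde A$ denote the value of the set $\Tilde A$ of \Cref{alg:findlarge} at the start of that round; this is exactly the first argument $A'$ passed to the \findmin call that returns $b_{ij}$. Since $b_{ij}\in B$, \Cref{lem:findmin2} says every vertex of $\Tilde A$ is incomparable with $b_{ij}$ or strictly larger than $b_{ij}$; in particular no $a\in\Tilde A$ has $a\prec b_{ij}$.

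First I would treat the indices $j'<j$. On the recovered path $\pivot=b_{i0}\prec a_{i0}\prec b_{i1}\prec a_{i1}\prec\cdots$ we have $a_{ij'}\prec b_{i,j'+1}\preceq b_{ij}$, so $a_{ij'}\prec b_{ij}$, and also $a_{ij'}\succ\pivot$, hence $a_{ij'}\in A_{\succ\pivot}$. By the previous paragraph $a_{ij'}\notin\Tilde A$ at the start of the current round, and since $\Tilde A$ is initialized to $A_{\succ\pivot}$ and only ever shrinks, $a_{ij'}$ must have left $\Tilde A$ in an earlier round. In \Cref{alg:findlarge} a vertex leaves $\Tilde A$ in exactly one of two ways: it is itself the vertex $\vmin$ returned by that round's \findmin (the ``$\vmin\in A$'' branch), or it belongs to a batch $A_{\succ b}$ deleted following a $\findb$-event on some vertex $b$. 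In the first case the returned vertex is $a_{ij'}$, which lies on $C_i$, so this is precisely an occurrence of $\finda_i$ on $a_{ij'}$, as desired. The second case I would rule out: membership in $A_{\succ b}$ forces $a_{ij'}\succ b$, hence $b\prec a_{ij'}\prec b_{ij}$, i.e.\ $b\prec b_{ij}$; moreover $b$ lies in some $\bunion\subseteq B_{\succ\pivot}$ so that $\findb$-event is a genuine $\findb_{i''}$-event, and it precedes the current $\findb_i$-event on $b_{ij}$ while $b_{ij}\succ b$, contradicting \Cref{lem:comBpre}. Therefore $a_{ij'}$ was the subject of a $\finda_i$-event before the current round.

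The remaining case is the endpoint $j'=j$, i.e.\ the vertex $a_{ij}$ with $b_{ij}\prec a_{ij}\prec b_{i,j+1}$, and this is where the real work lies, since $a_{ij}\succ b_{ij}$ is not excluded from $\Tilde A$ by minimality. My approach here would be to use that \findmin can return $b_{ij}$ only if $b_{ij}$ already lies in $\bunion$, so some earlier $\finda$-event on a vertex $a^{\ast}\prec b_{ij}$ inserted $B_{\succ a^{\ast}}$ — which contains $b_{ij},b_{i,j+1},\dots$ — into $\bunion$; then, tracing the smaller chain of the current \findmin call through $C_i$, invoking \Cref{lem:findmin2fo} to locate an $A$-vertex of $\Tilde A$ above $b_{ij}$, and re-running the monotonicity argument of the previous paragraph applied to $b_{i,j+1}$, one shows $a_{ij}$ could not have survived to the current round via a $\findb$-batch deletion and hence was returned by a $\finda_i$-event. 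I expect this endpoint bookkeeping — reconciling the recovered-path indices with which vertex \findmin actually returns, and certifying that $a_{ij}$ was visited rather than batch-deleted — to be the main obstacle; everything else reduces cleanly to \Cref{lem:findmin2} together with the monotonicity in \Cref{lem:comBpre}.
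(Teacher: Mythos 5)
Your argument for the indices $j' < j$ is exactly the paper's: minimality of the vertex returned by \findmin (\Cref{lem:findmin2}) forces every $a_{ij'}$ with $a_{ij'}\prec b_{ij}$ out of $\Tilde A$, and \Cref{lem:comBpre} rules out batch deletion, leaving a $\finda_i$-event as the only exit. Your hesitation about the endpoint $j'=j$ is well-founded — but the situation is worse than ``bookkeeping'': the claim for $j'=j$ is simply false, so the route you sketch cannot close it. Since $a_{ij}\succ b_{ij}$ on the recovered chain, minimality places no constraint on $a_{ij}$, and it is easy to realize a run in which $a_{ij}$ is still in $\Tilde A$ when $\findb_i$ fires on $b_{ij}$. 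For instance with two independent chains $\pivot\prec a_1\prec b_1\prec a_2$ and $\pivot\prec a_3\prec b_2\prec a_4$: round one returns $a_1$ (so $\bunion=\{b_1\}$), round two's \findmin starts at $a_2$, compares it to $b_1$, moves to $b_1$, and returns $b_1$ because $B'$ is exhausted — so $\findb_1$ fires on $b_{11}=b_1$ while $a_{11}=a_2$ has never been a $\finda_1$-event and is then batch-deleted via $A_{\succ b_1}$.

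The paper's own proof contains the same gap: it asserts that ``$a_{ij'}$ for $j'\leq j$ must have been deleted as a result of \Cref{lem:findmin2},'' but minimality only excludes vertices strictly below $b_{ij}$, i.e.\ $a_{ij'}$ with $j'<j$. The lemma should be weakened to $j'<j$, which your first two paragraphs already prove. The downstream use in \Cref{lem:comB} then survives with a one-index shift: charge $\findb_i$ on $b_{ij}$ to $\finda_i$ on $a_{i,j-1}$. This is well-defined because $b_{i0}=\pivot$ never enters $\bunion$, so any $\findb_i$-event has $j\geq 1$; injectivity of $j\mapsto j-1$ together with \Cref{lem:comBpre} (each $b_{ij}$ triggers $\findb_i$ at most once) yields the same conclusion that $\findb_i$-rounds are at most $\finda_i$-rounds.
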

\begin{proof}
First, when $\findb_i$ happens on $b_{ij}$, $a_{ij'}$ for $j'\leq j$ must have been deleted as a result of \Cref{lem:findmin2}.
Second, $\forall a_{ij'},\ j'=1,...,j$ are not deleted by some $B$-side vertices $b$ in line 11, otherwise $\findb_i$ won't happen to $b_{ij} \succeq b$ (\Cref{lem:comBpre}). They can only be deleted by themselves due to $\finda_i$ happening on $a_{ij'}$. Therefore, when $\findb_i$ happens to $b_{ij}$, $\finda_i$ must have happened on vertices $a_{ij'}$ for all $j'\leq j$.
\end{proof}

{
\begin{lemma}
\label{lem:comB}
For each chain $i$, the number of rounds where $\findb_i$ happens is at most the number of rounds where $\finda_i$ happens. 
\end{lemma}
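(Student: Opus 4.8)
The plan is to prove \Cref{lem:comB} by a purely combinatorial accounting argument, carried out separately for each chain. Fix a chain $C_i$, recovered as a directed path $\pivot = b_{i0}\prec a_{i0}\prec b_{i1}\prec a_{i1}\prec\cdots$ in the complete bipartite graph, and write $R_A^{(i)}$ (resp.\ $R_B^{(i)}$) for the number of rounds of \findkmin in which $\mathcal{E}_i^A$ (resp.\ $\mathcal{E}_i^B$) occurs. If $R_B^{(i)}=0$ there is nothing to prove, so I would assume $R_B^{(i)}\ge 1$ and aim to exhibit an integer $j^\star$ with $R_B^{(i)}\le j^\star\le R_A^{(i)}$.

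First I would establish two bookkeeping facts directly from the code of \findmin and \findkmin. (a) Each vertex $a_{ij}$ triggers $\mathcal{E}_i^A$ in at most one round: as soon as \findmin returns $a_{ij}$, it is deleted from $\tilde A$ (the ``$\tilde A\gets\tilde A\setminus\{\vmin\}$'' step), and since $\bunion\subseteq B$ it never re-enters the input of any later call, so it is never returned again; consequently $R_A^{(i)}$ equals the number of distinct $a_{ij}$ that are ever returned. (b) By \Cref{lem:comBpre} (applied with the two chain indices both equal to $i$), once $\mathcal{E}_i^B$ occurs on $b_{ij}$, no later round has $\mathcal{E}_i^B$ on any $b_{ij'}$ with $j'\ge j$; hence the ranks on which $\mathcal{E}_i^B$ occurs, listed in the temporal order of the rounds, are strictly decreasing, so each $b_{ij}$ triggers $\mathcal{E}_i^B$ at most once and $R_B^{(i)}$ equals the number of distinct such ranks. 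Moreover every such rank is at least $1$, since $b_{i0}=\pivot$ never enters $\bunion$ and so is never an input to \findmin, hence is never returned.

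Now let $j^\star$ be the largest rank on which $\mathcal{E}_i^B$ ever occurs; by fact (b) this is the rank returned in the \emph{first} such round, and the $R_B^{(i)}$ ranks involved are distinct integers in $\{1,\dots,j^\star\}$, so $R_B^{(i)}\le j^\star$. For the reverse inequality I would invoke \Cref{lem:comBpre2} on the round where $\mathcal{E}_i^B$ occurs on $b_{ij^\star}$: it guarantees that $\mathcal{E}_i^A$ has already occurred on each of $a_{i0},\dots,a_{i,j^\star-1}$ (these lie strictly below $b_{ij^\star}$ on the chain, so \Cref{lem:findmin2} forces them out of $\tilde A$ by then, and \Cref{lem:comBpre} rules out their having been removed by the ``delete $A_{\succ\vmin}$'' step, leaving only the self-deletion caused by their own $\mathcal{E}_i^A$ event). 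Since an $A$-vertex and a $B$-vertex cannot be returned in the same round, these are $j^\star$ distinct rounds, so by fact (a) $R_A^{(i)}\ge j^\star$. Chaining the two bounds yields $R_B^{(i)}\le j^\star\le R_A^{(i)}$, which is the assertion of the lemma.

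I do not expect a genuine obstacle here; the whole argument is bookkeeping layered on top of \Cref{lem:comBpre,lem:comBpre2}. The one place demanding care is the last step: one must verify that every deletion invoked to conclude ``$\mathcal{E}_i^A$ occurred on $a_{ij'}$'' really takes place \emph{before} the round in which $\mathcal{E}_i^B$ occurs on $b_{ij^\star}$, and one must exclude the alternative that some $a_{ij'}$ was removed via the $A_{\succ\vmin}$-deletion rather than by its own $\mathcal{E}_i^A$ event --- which is precisely where \Cref{lem:comBpre} is used, and which also accounts for why only $a_{i0},\dots,a_{i,j^\star-1}$ (and not $a_{ij^\star}$) are guaranteed.
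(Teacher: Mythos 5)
Your proof is correct and rests on the same ingredients as the paper's own argument (\Cref{lem:comBpre} and \Cref{lem:comBpre2}), differing only in the bookkeeping: the paper charges the round where $\findb_i$ happens on $b_{ij}$ injectively to the round where $\finda_i$ happens on a fixed $A$-vertex of the same chain, whereas you bound both counts through the rank $j^\star$ of the first $B$-side return, via $R_B^{(i)}\le j^\star\le R_A^{(i)}$. Your accounting is in fact slightly more careful on the indexing: since $a_{ij}\succ b_{ij}$ in the chain $b_{i0}\prec a_{i0}\prec b_{i1}\prec a_{i1}\prec\cdots$, the minimality of a returned $b_{ij}$ (\Cref{lem:findmin2}) only forces $a_{i0},\dots,a_{i,j-1}$ to have been deleted, so the guarantee of \Cref{lem:comBpre2} really concerns $j'<j$; your two-sided bound absorbs this off-by-one cleanly, while the paper's charging of $b_{ij}$ to $a_{ij}$ implicitly needs to be read as charging to $a_{i,j-1}$ (which is still injective, so the paper's conclusion stands).
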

}

\begin{proof}
According to \Cref{lem:comBpre2}, if $\findb_i$ happens to $b_{ij}$, $\finda_i$ must have happened to $a_{ij}$. Since $\findb_i$ happens to a vertex $b_{ij}$ at most once according to \Cref{lem:comBpre}, we can charge $\findb_i$ happens to $b_{ij}$ to $\finda_i$ happens to $a_{ij}$, while promising $a_{ij}$ is charged at most once. 
Therefore the number of rounds where $\findb_i$ happens is at most the number of rounds where $\finda_i$ happens.
\end{proof}

Therefore, the times of $\findb_i$ happens is no larger than the times of $\finda_i$ happens. The query complexity and the time complexity is both bounded in $O(nk\log n)$ in expectation, which concludes the proof of \Cref{lem:bipartitemain}.      \section{Generalized Poset Sorting with Comparable Edges}

In this section, we discuss the GPSC problem and prove \Cref{thm:gpsc}. 

\thmgpsc* 

The algorithmic framework is similar to the augmentation framework in GPS. However, instead of using a correct linear extension for help, we apply an ingredient that can make a rough prediction for every edge's direction, to guide our augmentation. We remark that in GPSC, a linear extension already suffices to reveal the poset in GPSC. First, we introduce how we construct the predictor. 

\subsection{Direction Predictor} 

The direction predictor is a directed graph $\tilde{G}=(V,\tilde{E})$, where each edge in $\tilde{E}$ has a predicted direction. We call an edge in $\tilde{E}$ wrong if its direction is different from $\vec{E}$. We claim that we can use $\tilde{O}(n\sqrt{n})$ queries to construct a predictor such that for every vertex $v$, the number of wrong edges adjacent to $v$ is bounded. 
This subroutine is inspired by the in-degree predictor in \cite{DBLP:conf/focs/HuangKK11}. 
In \cite{DBLP:conf/focs/KuszmaulN21}, the authors propose a predictor with a bounded number of wrong edges. However, these predictors do not directly imply our predictor. Remark that the construction we present is not efficient in the sense of running time. We omit to discuss this computational challenge since we mainly focus on query complexity. But we believe it is possible to make it efficient by the same average rank technique in \cite{DBLP:conf/focs/HuangKK11}.

\begin{lemma}
\label{lem:wrongedge}
With high probability, by using $\tilde{O}(n\sqrt{n})$ queries, we can construct a predictor $\tilde{G}=(V,\tilde{E})$, the number of wrong edges for each vertex is at most $\tilde{O}(\sqrt{n})$.
\end{lemma}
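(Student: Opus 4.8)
The plan is to adapt the average-rank estimator of \cite{DBLP:conf/focs/HuangKK11} to the poset setting and then read the edge predictions off the estimated positions. First I would draw a random sample $S\subseteq V$ that includes each vertex independently with probability $p=\Theta(\log n/\sqrt n)$, so that $|S|=\Theta(\sqrt n\log n)$ with high probability, and query every edge of $G$ incident to a vertex of $S$; since each such vertex has at most $n-1$ incident edges, this costs at most $|S|\cdot n=\tilde O(n^{1.5})$ queries. From these answers I would compute, for every $v$, an estimate $\hat\phi(v)$ of an order-preserving potential $\phi$ on $\mathcal P$ (that is, $u\prec v\Rightarrow \phi(u)<\phi(v)$), the natural choice being the average rank $\phi(v)=\mathbb{E}_{L}[\mathrm{rank}_L(v)]$ over uniformly random linear extensions $L$ of $\mathcal P$, which is strictly monotone along $\prec$ and which the average-rank machinery of \cite{DBLP:conf/focs/HuangKK11} can estimate, using only the queried edges, to additive error $\tilde O(\sqrt n)$ for every vertex simultaneously with high probability. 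To make the predictor well defined, I would attach to each vertex an independent uniform random tag $\xi(v)$, and for every $(u,v)\in E$ orient it as $u\to v$ in $\tilde G$ iff $(\hat\phi(u),\xi(u))$ precedes $(\hat\phi(v),\xi(v))$ lexicographically.

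The estimation part is a concentration argument followed by a union bound over the $n$ vertices: conditioned on $S$ being typical (in particular a hitting set for every sufficiently long directed path of $\vec G$, which holds with high probability by a union bound over path start points), $\hat\phi(v)$ is an average of $\Theta(\sqrt n\log n)$ nearly-independent contributions and hence concentrates around $\phi(v)$ to within $\tilde O(\sqrt n)$. The one delicate point here is that a relation $s\prec v$ with $s\in S$ must be recoverable from the queried edges (those incident to $S$): a directed witness path whose $S$-vertices form a vertex cover of the path has all its edges incident to $S$, and sufficiently long paths meet $S$ repeatedly, so up to an extra $\tilde O(\sqrt n)$ slack in the estimate this recovery succeeds; if this turns out to be lossy one can instead query all edges incident to the $S$-neighbourhood, which only changes constants in the exponent of the query bound.

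The main obstacle, and the place where the ``everywhere'' strengthening over \cite{DBLP:conf/focs/HuangKK11,DBLP:conf/focs/KuszmaulN21} really lives, is converting the uniform $\tilde O(\sqrt n)$ error of $\hat\phi$ into an $\tilde O(\sqrt n)$ bound on the number of wrongly predicted edges at \emph{every} vertex. Fix $v$ and let $T=\tilde O(\sqrt n)$ be the error bound. An edge $(u,v)\in E$ with $u\prec v$ can be predicted wrongly only if $\hat\phi(u)\ge\hat\phi(v)$, which forces $\phi(v)-\phi(u)\le 2T$, and symmetrically if $v\prec u$; so the wrong edges at $v$ are confined to the neighbours $u$ of $v$ with $|\phi(u)-\phi(v)|\le 2T$, and it suffices to bound the size of that window-neighbourhood $Z$. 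On any one chain $\phi$ increases by at least $1$ per step, so each chain contributes at most $2T$ elements to $Z$; and if $w$ neighbours in $Z$ that are all $\prec v$ form an antichain, then $v$ lies above a $w$-antichain, which forces $\phi(v)$ to exceed their average $\phi$-value by $\Omega(w)$, hence $w=O(T)$. The subtlety is that the crude combination of these two facts (Dilworth on $Z$) only yields $|Z|=O(T^2)=\tilde O(n)$, not $\tilde O(\sqrt n)$, so the heart of the proof must be a finer argument — exploiting that the same average ranks, the poset width, and the randomness of $S$ and of the tags $\xi$ jointly prevent $v$'s neighbourhood from ``bunching up'' in a short $\phi$-window — together with directly resolving, from the already-queried $S$-incident edges, any residual set of neighbours that $\hat\phi$ genuinely cannot separate. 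I expect this window-counting step to be where essentially all the difficulty of the lemma is concentrated; once it is in place, a final union bound over all $v$ completes the proof.
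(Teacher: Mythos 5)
Your proposal takes a genuinely different route from the paper, and unfortunately the route has a gap that you yourself identify but do not close, and which I believe is fatal to the approach as stated. The entire burden of the lemma lands on your ``window-counting step'': after establishing that every wrongly oriented edge at $v$ joins $v$ to a neighbour $u$ with $|\phi(u)-\phi(v)|\le 2T$ for $T=\tilde O(\sqrt n)$, you need that window-neighbourhood to have size $\tilde O(\sqrt n)$, but your own Dilworth-style accounting only gives $O(T^2)=\tilde O(n)$, and the appeal to ``the randomness of $S$ and of the tags $\xi$ jointly prevent[ing] bunching up'' is not an argument. Worse, the quantitative claim is false in general for any potential-based orientation with additive error $T$: take a poset of width $\Theta(\sqrt n)$ in which $v$ sits immediately above $\Theta(n)$ elements arranged as $\Theta(\sqrt n)$ chains of length $\Theta(\sqrt n)$, all with $\phi$-values within $O(\sqrt n)$ of $\phi(v)$, and let all of them be neighbours of $v$ in $G$. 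An estimator guaranteed only to within additive error $\tilde O(\sqrt n)$ can misorder $v$ against a constant fraction of them, so no post-processing of $\hat\phi$ alone can deliver the everywhere-$\tilde O(\sqrt n)$ bound. This is exactly the inadequacy of the HKK-style in-degree/rank estimator that the paper flags as the reason a stronger predictor is needed.

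The paper's proof avoids potentials entirely. Its predictor is the majority vote over feasible linear extensions: given the set of already-revealed directions, orient each unqueried edge the way that is consistent with more of the remaining feasible linear extensions. The key mechanism is a halving potential: whenever a queried edge turns out to be wrongly predicted, the number of feasible linear extensions drops by at least a factor of $2$, and since there are at most $n!$ extensions initially this can happen only $O(n\log n)$ times over the whole run. The per-vertex guarantee then comes from a per-vertex test: sample $\sqrt n$ random incident edges of $v$ and query them; if none is wrong then w.h.p.\ at most $\tilde O(\sqrt n)$ of $v$'s incident edges are wrong in the current prediction (a sampling/union-bound argument), and if some edge is wrong the halving potential pays for the $\sqrt n$ wasted queries, giving $O(n\log n)\cdot\sqrt n=\tilde O(n^{1.5})$ total on failed tests. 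A final reconciliation step merges the (mutually inconsistent) per-vertex predictors $\tilde E_v$ into one global $\tilde E$, again driven by the halving potential, losing only a constant factor in the per-vertex error. If you want to salvage your approach you would need to replace the uniform additive-error guarantee on $\hat\phi$ by something structurally stronger; as written, the proof does not go through.
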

\begin{proof}
    The basic prediction idea is inspired by \cite{DBLP:conf/focs/HuangKK11}. Consider we already know some directions in $E$ and focus on the kinds of linear extensions that are still feasible. In particular, we call a linear extension $\{v_1,v_2,\dots,v_n\}$ feasible for a known directed edge set $\bar{E}$ if we do not have $v_i$ can reach $v_j$ in $\bar{E}$ but $i>j$. Then, how to predict an unknown edge? The two different answers of the edge ($u\prec v$ or $u \succ v$, where we do not have $u \nsim v$ in GPSC) should correspond to different linear extensions that are still feasible. 
    We simply enumerate all the possibilities of linear extensions and predict the direction with more feasible linear extensions. We remark that this task may take exponential time to complete. In \cite{DBLP:conf/focs/HuangKK11}, they design an efficient way to approximately realize this prediction idea by an average rank technique. However, our paper mainly focuses on query complexity, so we omit this computational challenge. 

    Building on this basic prediction idea, we move to a vertex testing subroutine that brings us a good predictor for each vertex. For a specific vertex $v$, we randomly sample $\sqrt{n}$ edges adjacent to the vertex and query them. We either find a wrong prediction or say the vertex passes the test. If we find a wrong edge, we re-predict everything and randomly sample $\sqrt{n}$ edges again. We have that 
    \begin{itemize}
        \item If we pass the test, with high probability, the number of wrong edges adjacent to $v$ is at most $O(\sqrt{n}\log n)$ because its degree is at most $n$. We use $\tilde{E}_v$ to record the current prediction and call it the vertex predictor for $v$. It will not change in later rounds. 
        \item In each unpassed test, at least one wrong edge is queried and the number of feasible linear extensions is decreased by at least a half. There are at most $O(n\log n)$ unpassed tests because we only have $n!$ possible linear extensions at the beginning. At most $O(n \sqrt{n} \log n)$ queries are spent on these unpassed tests. 
\end{itemize}

    The good property from the vertex testing is that we have a good vertex predictor for each $v$. In particular, with high probability, we can use $\beta = \Theta(\sqrt{n}\log n)$ to bound the number of wrong edges in all $\tilde{E}_v$.
    However, these predictors may not be consistent. For example, we may predict $u\rightarrow v$ in $\tilde{E}_u$ but $v \rightarrow u$ in $\tilde{E}_v$. 

    The final task is to construct a global predictor $\tilde{E}$ that is good for each $v$. First, we simply fix $\tilde{E}$ as the final predictor after we test the last vertex and compare it to every $\tilde{E}_v$. If the difference between $\tilde{E}$ and any $\tilde{E}_v$ is larger than $2\beta$, we will sample that vertex to adjust $\tilde{E}$. Because $\tilde{E}_v$ at most have $\beta$ wrong edges, so at least half of these different edges are wrong in $\tilde{E}$. We keep querying one of these different edges randomly, and we can find one wrong edge after $O(\log n)$ queries with high probability. Every time we find a wrong edge, we re-predict $\tilde{E}$ by the basic prediction. As a result, because we can at most query $O(n\log n)$ wrong edges in $\tilde{E}$, we will not enter this case after $O(n\log n)$ queries. On the other hand, if the difference between $\tilde{E}$ and each $\tilde{E}_v$ is already bounded in $2\beta$, we are done because, for each $v$, we have at most $3\beta = \tilde{O}(\sqrt{n})$ wrong edges. $\tilde{E}$ is a good predictor that satisfies the lemma. 
\end{proof}

Notice that the property holds for high probability. In the following section, we assume the property always holds when we use the predictor.

\subsection{Sorting with Predictor}

Next, we solve $\mathcal{P}=\mathcal{P}(\vec{G})$ in an augmentation fashion with the help of the predictor. We keep inserting new vertex $v$ into a maintained vertex subset $A$. After that, we should reveal the real direction of all predicted incoming edges of $u$ in $\tilde{E}$, by performing some queries. Finally, when $A$ becomes $V$, we are done. The most critical idea is to show we can always find a proper new vertex $u$ that we can augment with  small number of queries.

Assume the current vertex set is $A$. The naive idea is to find a minimal vertex in $V \setminus A$ w.r.t. $\mathcal{P}(\tilde{G})$. Ideally, if $\tilde{G}$ is precise, we can prove that the width of $\mathcal{P}(\vec{G}[A])$ is always bounded by $k$. Therefore, when we insert a new vertex, its predicted incoming edges are all in $A$, which can be decomposed into $k$ chains. We can finish this augmentation in $O(k \log n)$ queries by applying a binary search of the new vertex on each chain. However, the challenge is that $\tilde{G}$ may contain wrong edges, which makes the width of $\mathcal{P}(\vec{G}[A])$ no longer bounded by $k$.

Let $\tilde{I}_u$ be the set of predicted incoming vertices (w.r.t. $\tilde{G}$) of a specific vertex $u$. The cost of inserting $u$ into $A$ depends on the minimum number of chains where we can decompose $\tilde{I}_u$ based on the already revealed information. Remark that now we have already known all the real directions of the predicted incoming edges to vertices in $A$, which means we at least already know $\mathcal{P}(\vec{G}[A])$. Next, we prove that there must be a good vertex $u \in V \setminus A$, s.t. we can decompose $\tilde{I}_u$ into $k + \tilde{O}(\sqrt{n})$ chains. The two different terms ($k$ and $\tilde{O}(\sqrt{n})$) in the lower bound come from the chain decomposition for two different subsets of $\tilde{I}_u$. We define them as follows, 
$$
\tilde{I}_u^{\top} = \{v \in V \mid (v\rightarrow u) \in \tilde{E}, (u \rightarrow v) \in \vec{E}\},
\quad \tilde{I}_u^{\bot} = \{v \in V \mid (v\rightarrow u) \in \tilde{E}, (v \rightarrow u) \in \vec{E}\}.
$$

\begin{lemma}
    $\forall u \in V \setminus A$, we can decompose $\tilde{I}_u^{\bot}$ into $\tilde{O}(\sqrt{n})$ chains. 
\end{lemma}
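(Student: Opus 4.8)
The plan is to reduce the decomposition statement to a pure cardinality bound: I will show that $|\tilde{I}_u^{\bot}| = \tilde{O}(\sqrt{n})$, and then the chain decomposition is immediate, since any set of $m$ elements is trivially the disjoint union of $m$ singleton chains, and $\tilde{O}(\sqrt{n})$ singleton chains is exactly a decomposition into $\tilde{O}(\sqrt{n})$ chains (in the insertion step each such chain costs only one query, so nothing is lost by using singletons here). Thus the whole content of the lemma is the size estimate on $\tilde{I}_u^{\bot}$.

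The size estimate follows directly from Lemma~\ref{lem:wrongedge}. Recall that $\tilde{I}_u^{\bot}$ is the part of the predicted in-neighborhood $\tilde{I}_u$ of $u$ on which the predictor's orientation disagrees with the true orientation: every $v \in \tilde{I}_u^{\bot}$ carries an edge $\{u,v\}$ that $\tilde{E}$ orients as $v \to u$ while $\vec{E}$ orients it the other way, so $\{u,v\}$ is a wrongly predicted edge incident to $u$, and distinct vertices $v$ give distinct such edges. Hence $|\tilde{I}_u^{\bot}|$ is at most the number of wrongly predicted edges adjacent to $u$, which by Lemma~\ref{lem:wrongedge} — whose ``everywhere'' guarantee we assume holds, as noted right after its statement — is $\tilde{O}(\sqrt{n})$ for every vertex, in particular for $u$. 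Combining with the previous paragraph completes the proof, and note that the bound is uniform over all $u \in V \setminus A$, so the ``$\forall u$'' quantifier is free.

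I do not expect a real obstacle in this half: it is essentially a transcription of the per-vertex wrong-edge bound of the predictor into a statement about $\tilde{I}_u^{\bot}$. The genuine work of the section sits in the complementary claim, which I would state and prove next: the remaining part of $\tilde{I}_u$ (the vertices predicted consistently with $\vec{E}$) is contained in $V_{\prec u}$, hence is a subposet of width at most $k$, and for the key vertex $u$ chosen in the insertion step this part lies entirely inside $A$, so a $k$-chain decomposition of it can be read off from the already-revealed poset $\mathcal{P}(\vec{G}[A])$; adding the $\tilde{O}(\sqrt{n})$ chains from the present lemma then yields a $\bigl(k+\tilde{O}(\sqrt{n})\bigr)$-chain decomposition of $\tilde{I}_u$, which drives the $\tilde{O}(nk + n^{1.5})$ query bound. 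The correctness of the insertion process using only predicted information (in the spirit of~\cite{DBLP:conf/sosa/LuRSZ21}) is the other place where care is needed, but it is orthogonal to the statement at hand.
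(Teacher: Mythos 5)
Your proof is correct and essentially identical to the paper's: bound $|\tilde{I}_u^{\bot}|$ by the per-vertex (``everywhere'') wrong-edge guarantee of \Cref{lem:wrongedge} and then use one singleton chain per element. One remark: both you and the paper's own proof read $\tilde{I}_u^{\bot}$ as the \emph{wrongly} predicted part of $\tilde{I}_u$, which is the opposite of the displayed definition (as written, $\tilde{I}_u^{\bot}$ requires $(v\rightarrow u)\in\vec{E}$, i.e.\ a correct prediction); the $\top/\bot$ superscripts in the definitions appear to be swapped relative to the two lemmas, but the intended content is unambiguous and your argument matches it.
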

\begin{proof}
    This lemma directly follows from \Cref{lem:wrongedge}. With high probability, the prediction error is bounded for every vertex, so $|\tilde{I}_u^{\bot}| = \tilde{O}(\sqrt{n})$. We can make one chain for each of them. 
\end{proof}
\begin{lemma}
\label{lem:truechaincover}
$\exists u\in V\setminus A$, such that we can decompose $\tilde{I}_u^{\top}$ into $k$ chains w.r.t. to $\mathcal{P}(\vec{G}[A])$.
\end{lemma}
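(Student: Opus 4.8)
The plan is to exhibit a concrete good vertex rather than argue abstractly. I would take $u$ to be a \emph{minimal} element of the true poset restricted to the uncommitted vertices, i.e.\ a minimal element of $(V\setminus A,\prec)$; such a vertex exists since $V\setminus A\neq\emptyset$. (The orientation is symmetric: a maximal element works if one reads $\tilde{I}_u^{\top}$ as sitting in the order filter of $u$ rather than its ideal.) The purpose of this choice is that the principal ideal $D_u:=\{v\in V:v\prec u\}$ is then contained in $A$: if some $v\prec u$ were outside $A$, it would be an element of $V\setminus A$ strictly below $u$, contradicting minimality. Since every vertex of $\tilde{I}_u^{\top}$ is a predicted-incoming vertex of $u$ that genuinely lies below $u$ (this is exactly what separates $\tilde{I}_u^{\top}$ from the mispredicted part $\tilde{I}_u^{\bot}$ that \Cref{lem:wrongedge} handles), we get $\tilde{I}_u^{\top}\subseteq D_u\subseteq A$, so speaking of its chain decomposition ``w.r.t.\ $\mathcal{P}(\vec{G}[A])$'' is well posed.

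The second step is to show that $\vec{G}[A]$ faithfully represents $\mathcal{P}$ on $D_u$, namely that $\mathcal{P}(\vec{G}[A])$ restricted to $D_u$ equals $(D_u,\prec)$. This is the exact analogue of \Cref{lem:LE-prefix} and \Cref{lem:induced-subgraph-issue}. One inclusion is trivial: reachability in $\vec{G}[A]$ implies reachability in $\vec{G}$, hence comparability in $\mathcal{P}$. For the other, take $a,b\in D_u$ with $a\prec b$; since $\mathcal{P}=\mathcal{P}(\vec{G})$ there is a directed path $a=x_0\to x_1\to\cdots\to x_m=b$ in $\vec{G}$, and every vertex on it satisfies $x_i\preceq b\prec u$, so $x_i\in D_u\subseteq A$. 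Thus the entire path lies in $\vec{G}[A]$ and $a$ reaches $b$ there.

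Granting this, the lemma follows immediately: $(D_u,\prec)$ is a subposet of the width-$k$ poset $\mathcal{P}$, hence has width at most $k$; therefore $\mathcal{P}(\vec{G}[A])$ restricted to $D_u$ has width at most $k$, and so does its further restriction to $\tilde{I}_u^{\top}\subseteq D_u$. By Dilworth's theorem $\tilde{I}_u^{\top}$ is covered by at most $k$ chains of $\mathcal{P}(\vec{G}[A])$, which is the claim.

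The delicate point — and where I expect the real work of the surrounding argument to sit — is that this lemma is purely existential: the extremal vertex $u$ it produces is defined through the (unknown) true order, not through anything the algorithm can read off from $\tilde{G}$ directly. To turn it into an algorithm one argues separately that the \emph{cost} of inserting a candidate $w$, namely the minimum number of chains needed to cover $\tilde{I}_w$ using the already-revealed poset $\mathcal{P}(\vec{G}[A])$ together with one singleton chain per vertex of $\tilde{I}_w\setminus A$, is computable from known information, and that for the extremal $u$ above this cost is $O(k)+\tilde{O}(\sqrt{n})$ — the $\tilde{O}(\sqrt{n})$ coming from \Cref{lem:wrongedge} applied both to $\tilde{I}_u^{\bot}$ and to $\tilde{I}_u\setminus A$ (for a minimal $u$ the latter consists only of mispredicted edges at $u$). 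Picking the cheapest candidate then yields the per-step bound feeding into the final $\tilde{O}(nk+n^{1.5})$ count.
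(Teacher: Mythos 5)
Your proof is correct and follows essentially the same route as the paper: choose a minimal element $u$ of $(V\setminus A,\prec)$, observe its down-set $D_u$ lies entirely inside $A$, and conclude via Dilworth's theorem on a width-$\le k$ subposet. Where the paper argues informally from the figure that the blue prefix of each chain of a fixed chain decomposition is still a chain of $\mathcal{P}(\vec{G}[A])$, you instead prove cleanly that $\mathcal{P}(\vec{G}[A])$ restricted to $D_u$ coincides with $(D_u,\prec)$ (the same path-containment argument used in \Cref{lem:LE-prefix} and \Cref{lem:induced-subgraph-issue}), which is a slightly tighter and more self-contained way to get the width bound. One thing worth flagging: you correctly read $\tilde{I}_u^{\top}$ as the \emph{correctly} predicted incoming vertices, i.e.\ those genuinely $\prec u$; the paper's displayed definitions of $\tilde{I}_u^{\top}$ and $\tilde{I}_u^{\bot}$ appear to have the conditions $(u\to v)\in\vec{E}$ and $(v\to u)\in\vec{E}$ swapped relative to how the two sets are subsequently used, and your reading is the one that makes both \Cref{lem:truechaincover} and the preceding lemma about $\tilde{I}_u^{\bot}$ true.
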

\begin{proof}
At first, we know that $\vec{G}$ can be decomposed into $k$ chains. Refer to \Cref{fig:insertv}, we mark vertices in $A$ in blue and the others in grey. We emphasize the first grey vertex on each chain and call them boundary vertices. Let us consider a minimal grey vertex $u^*$, which should be one of the boundary vertices. Because $u^*$ is minimal (i.e., no other grey vertices can be smaller than $u^*$), all vertices smaller than $u^*$ should be on the left-hand side of the boundary vertices. Therefore, on each chain in the figure, we already have a directed path between every two vertices in $\vec{G}[A]$, which means any subset of the chain should also be a chain w.r.t. $\mathcal{P}(\vec{G}[A])$. Therefore, we can partition $\tilde{I}_u^{\top}$ into at most $k$ chains w.r.t. $\mathcal{P}(\vec{G}[A])$.
\end{proof}

\begin{figure}[H]
    \centering
    \includegraphics[width=.7\textwidth]{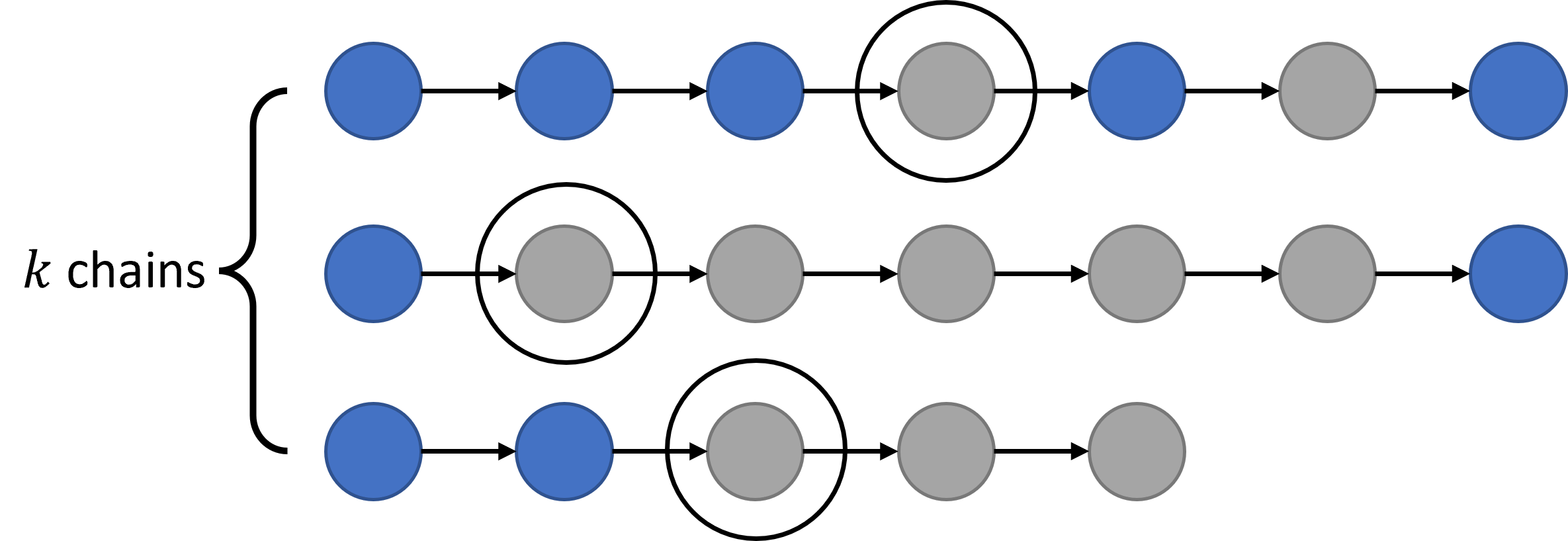}
    \caption{The $k$ chain cover on $\vec{G}$. Blue vertices represent the vertex set $A$.}
    \label{fig:insertv}
\end{figure}

By combining the two lemmas above, we know the existence of a good choice. We can try to decompose $\tilde{I}_u$ for every $u$ and select the best one. Noticed that it can be done efficiently and without any extra queries. Our algorithm is formalized below:

\begin{algorithm}[H]
\caption{Solving $\mathcal{P}$ in GPSC}
\label{alg:gpsc}
\begin{algorithmic}[1]
\Procedure{\sortpartial}{$G$}
\State underlying input: $\mathcal{P}$ and $\vec{G}$, where $\mathcal{P}=\mathcal{P}(\vec{G})$
\State construct $\tilde{G}$ by $\tilde{O}(n\sqrt{n})$ queries
\State $A \gets \emptyset$
\While{$|A| < |V|$}
\State decompose $\tilde{I}_u$ into chains by already known $\mathcal{P}(\vec{G}[A])$ for every $u$
\State select the best $u$ where we can use the minimized number chains to decompose $\tilde{I}_u$
\State apply binary search on edges between all decomposed chains and $u$
\State $A = A \cup  \{u\}$
\EndWhile
\EndProcedure
\end{algorithmic}
\end{algorithm}

Finally, the proof of \Cref{thm:gpsc} follows straightforwardly: with high probability, the predictor is good so that we can finish every augmentation in $\tilde{O}(k+\sqrt{n})$ queries. Thus, our algorithm totally needs $\tilde{O}(nk + n\sqrt{n})$ queries, including the queries for constructing the predictor.

     \section{Weighted Generalized Sorting}

In this section, we move to the weighted setting. We use $\{w_1<w_2<w_3\dots\ w_W\}$ to denote the set of different weights that appeared in the input, where $W$ stands for the number of types of weights. 
As stated in \Cref{thm:weighted}, we prove that we can achieve a competitive ratio of $\tilde{O}(n^{1-1/(2W)})$ (restated below).

\thmweighted*

In particular, 
we use $(v_1,v_2,\dots,v_n)$ to denote the directed Hamiltonian path, and we use $\opt = \sum_{i=2}^n w(v_{i-1},v_{i})$ to denote the cost of the optimal solution. The competitive ratio is defined to be the worst case ratio between the algorithm's query cost to \opt. In particular, to prove this theorem, we show that our query cost is bounded in $\tilde{O}(n^{1-1/(2W)}) \cdot \opt$. In this section, to simplify notations, we use $\exp_n(x)$ to mean $n^x$ later. 

Our weighted algorithm uses the doubling trick. In particular,
we guess \opt from the smallest positive edge weight. Then we run our algorithm called \sortconstant with $\eopt$ (the estimated \opt) as an input. We restrict the query cost we use to $\beta = \tilde{O}\left(\exp_n(1-\frac{1}{2W})\right)\cdot \eopt$ (we will terminate the subroutine when the cost of queries exceeds $\beta$). If \sortconstant fails to sort the input, we double $\eopt$ and run $\sortconstant(\eopt)$ again. We prove that $\sortconstant(\eopt)$ can successfully sort the input with $\beta = \tilde{O}(\exp_n\left(1-\frac{1}{2W})\right)\cdot \eopt$ cost when $\eopt\geq 2 \cdot \opt$. Therefore, \sortconstant can successfully sort the input at the first time $\eopt \geq 2 \cdot \opt$, and the blow-up of the competitive ratio of this doubling process is only a constant factor. Thus, the total query cost we use is controlled in $O(\beta) = \tilde{O}(\exp_n\left(1-\frac{1}{2W})\right)\cdot \eopt.$

In the algorithm \sortconstant, we will fix a threshold $\tau$. Then, we use \sortpartial on the edges at most $w_{\tau}$ (denoted by $E^{\leq w_{\tau}}$) to get poset $\mathcal{P}_\tau$. Let $\vec{E}^{\leq w_{\tau}}$ be the directed counterpart of the small cost edges in the underlying graph. We remark that $\mathcal{P}_\tau$ is an induced poset by the underlying subgraph $\vec{G}_\tau = (V, \vec{E}^{\leq w_{\tau}})$. We use $k_\tau$ to denote its width, and we know we can decompose vertices into $k_\tau$ chains w.r.t. $\mathcal{P}_\tau$. Next, we use a subroutine called \sortchain to sort these chains and get the final order. The cost of \sortchain depends on $k_{\tau}$. The algorithm is formalized in \Cref{alg:sortconstant}, and the subroutine \sortchain is introduced in the next subsection. 

\begin{algorithm}[H]  
\caption{Sorting Algorithm with An Estimated \opt}
\label{alg:sortconstant}
    \begin{algorithmic}[1]
        \Procedure{$\sortconstant(G,\eopt)$}{}
            \State use $\findj(\eopt)$ to calculate the threshold $\tau$.
            \State $G_{\tau} \gets (V, E^{\leq w_{\tau}})$, where $E^{\leq w_{\tau}}$ is the set of edges with cost at most $w_{\tau}$.
            \State $\mathcal{P_\tau} \gets \sortpartial(G_{\tau})$.
            \State decompose $V$ into $k_\tau$ chains w.r.t. $\mathcal{P_\tau}$. 
            \State $\sortchain(G,\mathcal{C}$).
        \EndProcedure
    \end{algorithmic}
\end{algorithm}

\subsection{Sorting $d$ Chains}

In this section, we present the subroutine \sortchain that aims to sort all vertices under the input of $d$ sorted chains $\mathcal{C}=\{C_1,C_2,...,C_d\}$. The cost is bounded by $O(d\log n) \cdot \opt$. 

In general, the subroutine \sortchain determines vertices one by one in the ascending order of their final rank. Recall that we use $v_1 \prec v_2 \prec \ldots \prec v_n$ to mean the correct order. In particular, the algorithm detects who is $v_1$ first, then $v_2,v_3,\ldots,v_n$. 

To control the query cost, we query edges from small cost to large cost. We maintain a \emph{weight level} of each vertex $u$ as $\ell(u)$ to show how large edges adjacent to $u$ we have already queried. We equip with a subroutine called \textsc{Probe($u$)}, where we push $u$ to the next weight level and determine the direction of all new-level edges adjacent to $u$ in $O(d\log n)$ comparisons. Notice that when a vertex $u$ reaches weight level $\ell(u)$, it means we have called $\ell(u)$ times of \textsc{Probe($u$)} and know the direction of all edges adjacent to $u$ with weight at most $2^{\ell(u)}$. The subroutine is formally presented in \Cref{alg:probe}.

\begin{algorithm}[H]
	\caption{Raise $u$'s Weight Level}
	\label{alg:probe}
    \begin{algorithmic}[1]
        \Procedure{Probe}{$u$}
            \State $\ell(u) \gets \ell(u)+1$
            \State $w \gets 2^{\ell(u)}$
            \For{each chain $C_i$} 
\State apply the binary search for edges with costs at most $w$ between $u$ and $C_i$
\EndFor
            \State update $I_u$ \Comment{$I_u$ maintains all incoming vertices known to our algorithm}
        \EndProcedure
\end{algorithmic}
\end{algorithm} 

The main part of the algorithm is how we determine $v_j$ with $v_1,v_2,\cdots, v_{j-1}$. Assume $A=\{v_1,v_2,\ldots,v_{j-1}\}$, a vertex $u$ is the rank-$j$ vertex $v_j$ if and only if the all incoming vertices to $u$ is in $A$. 
However, when running the algorithm, we may only know the direction of a part of the adjacent edges of $u$ (because of the weight level), so we only know a current incoming vertex set of $u$, called $I_u$. There may be more than one vertex $u$ such that $I_u$ is a subset of $A$. Intuitively, they are all candidates of the real $v_j$. Then, we keep calling \textsc{Probe($u$)} to the candidate with the lowest weight level to shrink the size of the candidate set. Finally, if there is only one candidate left, it must be the real $v_j$. The process can be done in finite steps because there must leave only one candidate after all vertices are pushed to the top weight level. The subroutine \sortchain is formalized in Algorithm \ref{alg-total}.

\begin{algorithm}[H]   
\caption{Sort $d$ Chains}
\label{alg-total}
    \begin{algorithmic}[1]
        \Procedure{SortChain}{$\mathcal{C}=\{C_1,C_2,\ldots,C_d\}$}
            \State $A \gets \emptyset$
            \State $w' \gets$ the minimum positive weight
            \State $\ell(u) \gets \lceil \log {w'}\rceil -1$ for all $u$
\For{$j=1$ to $n$}
            	\State $\Delta = \{u \mid I_u\text{ is a subset of }$A$\}$
                \While{$|\Delta|>1$}
                    \State $u$ = $\arg\min_{\{ u \in \Delta\}} \ell(u)$
                    \State \textsc{Probe($u$)}
                    \State update $\Delta$
                \EndWhile
                \State fix $v_k$ to be the only candidate in $\Delta$
            \EndFor
        \EndProcedure
    \end{algorithmic}
\end{algorithm}

Finally, we prove the query cost of \sortchain is upper bounded by $O(d\log n)\cdot\opt$. 
In our analysis, we charge all query costs to vertices and prove the total budget of each $v_j$ required is $O(w(v_{j-1},v_j)\cdot d\log n)$. It concludes the desired cost bound because $\opt=\sum^n_{i=2}w(v_{i-1},v_i)$. 

There are two kinds of charging. 
\begin{enumerate}
    \item In the $i$-th round where $i < j$, the cost of \textsc{Probe}$(v_j)$ is charged to $v_j$. 
    \item In the $j$-th round, the cost of \textsc{Probe}$(v_j)$ is charged to the last eliminated candidate $u$ in this round. 
\end{enumerate}

Therefore, for a specific vertex $v_j$, it will be only charged in rounds before $j$, and in two different ways: 1) by \textsc{Probe}$(v_j)$ itself, and 2) by other \textsc{Probe}$(v_{j'})$ at $j'$-th round. We discuss the two cases in \Cref{lem:sort-subset-bound-1} and \Cref{lem:sort-subset-bound-2}. Before that, we state the query cost of \textsc{Probe}$(u)$.

\begin{lemma}
\label{lem:probeu_cost}
We have the following two properties of \textsc{Probe}$(u)$.
\begin{enumerate}
    \item \textsc{Probe}$(u)$ costs at most $O(2^{\ell(u)}\cdot d \log n)$, where $\ell(u)$ is the weight level of $u$ after \textsc{Probe}$(u)$.
    \item When $u$'s level grows to $\ell(u)$, the sum of cost of \textsc{Probe}$(u)$ occurs is  $O(2^{\ell(u)}\cdot d \log n)$
\end{enumerate}

\end{lemma}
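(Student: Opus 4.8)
The plan is to prove both items by a direct counting argument on the work done inside a single call to \textsc{Probe}$(u)$ and then summing a geometric series for the second item.

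First I would analyze one call. By construction, a call to \textsc{Probe}$(u)$ increments $\ell(u)$ by $1$, sets $w \gets 2^{\ell(u)}$, and then for each of the $d$ chains $C_i$ performs a binary search over the edges between $u$ and $C_i$ whose weight is at most $w$. A binary search on a chain of length at most $n$ uses $O(\log n)$ queries, and each such query has weight at most $w = 2^{\ell(u)}$ by the weight-level invariant (we only look at edges of cost at most $w$ at this level). Hence the cost contributed by chain $C_i$ is $O(2^{\ell(u)} \cdot \log n)$, and summing over the $d$ chains gives total cost $O(2^{\ell(u)} \cdot d \log n)$ for the single call. This establishes item~1. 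I should double-check the bookkeeping on which edges are ``new'': the binary search at level $\ell(u)$ only needs to resolve edges of weight in $(2^{\ell(u)-1}, 2^{\ell(u)}]$, since edges of smaller weight were already resolved in earlier \textsc{Probe}$(u)$ calls; but even charging the full weight-$\le 2^{\ell(u)}$ set, the bound $O(2^{\ell(u)} d\log n)$ still holds, so this subtlety does not affect the statement.

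Next, for item~2, I would sum the per-call costs over the successive calls that raised $u$'s level from its initial value up to the current $\ell(u)$. The initial level is $\lceil \log w' \rceil - 1$ where $w'$ is the minimum positive weight, so the levels visited are $\lceil\log w'\rceil, \lceil\log w'\rceil + 1, \ldots, \ell(u)$, and the $j$-th such call costs $O(2^{j} d \log n)$. Summing the geometric series $\sum_{j \le \ell(u)} 2^{j} = O(2^{\ell(u)})$, the total is $O(2^{\ell(u)} \cdot d \log n)$, which is item~2. The key point making the sum collapse is that consecutive levels double, so the last term dominates.

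I do not expect a serious obstacle here; the only thing to be careful about is the precise relationship between the stored ``weight level'' $\ell(u)$ and the actual edge weight $2^{\ell(u)}$, i.e.\ making sure that a query issued during a level-$\ell(u)$ probe indeed has cost at most $2^{\ell(u)}$ (so that we may bill $O(\log n)$ queries at cost $2^{\ell(u)}$ each per chain), and that the binary search length is $O(\log n)$ even when many chain-edges incident to $u$ are present. Both follow from the invariant maintained by the algorithm: \textsc{Probe} only queries edges of weight at most the current threshold $w = 2^{\ell(u)}$, and each chain has at most $n$ vertices so each binary search terminates in $O(\log n)$ steps. With these invariants in hand the two bounds are immediate.
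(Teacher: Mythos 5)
Your proposal is correct and matches the paper's own argument: item~1 is obtained by bounding each of the $d$ binary searches at $O(\log n)$ queries of cost at most $2^{\ell(u)}$, and item~2 by summing the resulting geometric series over successive levels, whose last term dominates. No gaps.
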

\begin{proof}
At first, the number of edges we query is at most $O(\log n)$ on each chain, so we have $O(d\log n)$ in total. Then, the lemma holds because we only query edges with weight at most $2^{\ell(u)}$. 
Then, we use the fact that each \textsc{Probe}$(u)$ will push $u$ to the next weight level. Therefore, when $u$'s level is $\ell(u)$, the total cost is 
$$
O\left(2^{\log w'} d\log n + 2^{\log w'+1} d\log n +\ldots + 2^{\ell(u)} d\log n\right) 
= O\left(2^{\ell(u)}\cdot d\log n\right)
$$
\end{proof}

\begin{lemma}
\label{lem:sort-subset-bound-1}
For every $v_j$, the summation of the cost charged to $v_j$ by \textsc{Probe}$(v_j)$ can be bounded by $O(w(v_{j-1},v_j)\cdot d\log n)$.
\end{lemma}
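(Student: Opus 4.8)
The plan is to bound the total cost that the probes on $v_j$ itself ever incur, and then show this is within the claimed budget. First I would observe that the only probes charged to $v_j$ in this first category are calls \textsc{Probe}$(v_j)$ that happen in rounds $i < j$. Now, when does the algorithm ever call \textsc{Probe}$(v_j)$ in a round $i < j$? Only when $v_j$ is a candidate in $\Delta$ for rank $i$, i.e., when $I_{v_j}$ (the currently known incoming set) is a subset of $A = \{v_1,\dots,v_{i-1}\}$. In particular $v_{j-1}$, which is a true incoming vertex of $v_j$, is not yet in $I_{v_j}$ at that moment — otherwise $v_j$ could not be a candidate for any rank $i \le j-1$. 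The key point is that once the weight level $\ell(v_j)$ is high enough that $\lceil \log w(v_{j-1},v_j)\rceil \le \ell(v_j)$, the edge $(v_{j-1},v_j)$ will have been queried during some \textsc{Probe}$(v_j)$ (since that edge is incident to $v_j$, lies on the chain containing $v_{j-1}$, and has weight $\le 2^{\ell(v_j)}$), so $v_{j-1}\in I_{v_j}$ afterwards and $v_j$ can never again be a candidate for a round $i < j$. Hence the largest weight level that \textsc{Probe}$(v_j)$ ever reaches due to first-category charging is at most $\lceil \log w(v_{j-1},v_j)\rceil$.

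Next I would invoke \Cref{lem:probeu_cost}, part 2: the total cost of all \textsc{Probe}$(v_j)$ calls up to the point where $v_j$ has weight level $\ell$ is $O(2^{\ell}\cdot d\log n)$. Plugging in $\ell \le \lceil \log w(v_{j-1},v_j)\rceil$, we get $2^\ell = O(w(v_{j-1},v_j))$, so the total first-category charge to $v_j$ is $O(w(v_{j-1},v_j)\cdot d\log n)$, which is exactly the claimed bound. One small caveat to handle cleanly: if $w(v_{j-1},v_j)$ equals the minimum positive weight $w'$ or is smaller than any positive weight seen, the initial level $\lceil\log w'\rceil - 1$ already suffices or the bound is immediate; and the geometric-sum argument in \Cref{lem:probeu_cost} already absorbs this base case, so no separate treatment is really needed.

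The main obstacle — really the only subtle point — is justifying rigorously that $v_j$ stops being a candidate for earlier rounds once the edge $(v_{j-1},v_j)$ is revealed. This requires being careful about two things: (a) that $v_{j-1}$ is genuinely an incoming vertex of $v_j$ in the underlying total order (immediate, since $v_{j-1}\prec v_j$ and they are adjacent — or at least $v_{j-1}\prec v_j$ so $v_{j-1}$ precedes $v_j$ among all vertices, and if the edge $(v_{j-1},v_j)\in E$ it is revealed as an incoming edge), and (b) handling the case where the edge $(v_{j-1},v_j)$ is \emph{not} present in the query graph $G$. In case (b) one must instead argue via some other already-revealed incoming vertex: since in GPSC all edges are comparable, and since a probe at level $\ell(v_j) = \lceil \log \opt \rceil$ or so would query all adjacent edges up to the relevant weights, there is always \emph{some} true incoming vertex of $v_j$ that gets revealed before the weight level exceeds $O(\log \opt)$; and because $\opt \ge w(v_{j-1},v_j)$ is false in general (we need the reverse!), here one should instead directly use that $v_{j-1}$'s \emph{contribution} $w(v_{j-1},v_j)$ to \opt is what bounds the relevant edge weight. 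I would therefore phrase the argument as: let $w^\star$ be the weight of the cheapest incoming edge to $v_j$ in $G$; this edge is revealed once $\ell(v_j) \ge \lceil \log w^\star\rceil$, at which point $v_j$ leaves the candidate pool forever; and $w^\star \le w(v_{j-1},v_j)$ is \emph{not} guaranteed, but $w^\star$ is the weight of some edge on the optimal path's ``incoming side'' only if $v_{j-1}$ is adjacent — so the honest statement the lemma needs is that $w^\star \le w(v_{j-1},v_j)$ whenever $(v_{j-1},v_j)\in E$, and the paper's model (Hamiltonian path base graph / consecutive edges present) guarantees $(v_{j-1},v_j)\in E$, making $w^\star \le w(v_{j-1},v_j)$ hold. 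I would make this model assumption explicit in the proof and then the geometric sum closes everything out.
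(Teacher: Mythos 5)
Your argument is correct and takes the same route as the paper's proof: the first-category charges to $v_j$ stop as soon as the direction of $(v_{j-1},v_j)$ becomes known, which happens by the time $\ell(v_j)$ reaches $\lceil\log w(v_{j-1},v_j)\rceil$, and \Cref{lem:probeu_cost} then gives the geometric-sum bound $O(2^\ell d\log n)=O(w(v_{j-1},v_j)\,d\log n)$. The $w^\star$ detour in your final paragraph is unnecessary and slightly misleading --- revealing the cheapest incoming edge to $v_j$ does not by itself eject $v_j$ from the candidate pool if that cheapest neighbor already lies in $A$ --- but your eventual resolution, namely that the model guarantees $(v_{j-1},v_j)\in E$ so the edge to the immediate predecessor (not the cheapest edge) is the one that does the work, is exactly the fact the paper's proof implicitly relies on.
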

\begin{proof}
The first fact is that $v_j$ won't be a candidate in $\Delta$ if the direction of $(v_{j-1},v_j)$ is already known in the rounds before $j$. 
Therefore, the charging stops when $v_j$'s weight level grows to $\ell$ above $(v_{j-1},v_j)$, i.e., $w(v_{j-1},v_{j}) \leq 2^{\ell} \leq 2w(v_{j-1},v_{j})$. By \Cref{lem:probeu_cost}, the sum of costs of \textsc{Probe}$(v_j)$ before $j$-th round is totally 
\begin{align*}
    O\left(2^{\ell} \cdot d\log n\right)
    = O\left(w(v_{j-1},v_j)\cdot d\log n\right)
\end{align*}
\end{proof}

\begin{lemma}
\label{lem:sort-subset-bound-2}
For every $v_j$, the summation of the charged cost by \textsc{Probe}$(u)$ from vertices $u$ other than $v_j$ is bounded by $O(w(v_{j-1},v_j)\cdot d \log n)$.
\end{lemma}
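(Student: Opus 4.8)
The plan is to fix the target vertex $v_j$ and enumerate the rounds in which $v_j$ is the ``last eliminated candidate''; call them $j_1<j_2<\dots<j_t$. By the second charging rule, the cost charged to $v_j$ by probes of other vertices is exactly $\sum_{a=1}^{t}\bigl(\text{cost of the }\textsc{Probe}(v_{j_a})\text{ calls made during round }j_a\bigr)$, since in round $j_a$ the vertex $v_{j_a}$ is the one finally fixed. Because $v_j$ is the last candidate removed in round $j_a$, the terminal iteration of that round's while-loop is a call $\textsc{Probe}(v_j)$ that removes $v_j$; I will write $\ell_a$ for the level of $v_j$ right after round $j_a$. That terminal probe reveals an incoming edge $(v_{i_a},v_j)$ of $v_j$ coming from outside $\{v_1,\dots,v_{j_a-1}\}$, so $j_a\le i_a\le j-1$.

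I would then establish three facts. (i) \emph{Per-round cost.} Right before the terminal probe, $v_j$ has level $\ell_a-1$ and, since \textsc{Probe} always acts on a candidate of minimum current level, $v_{j_a}$ has level $\ge\ell_a-1$ at that moment. A short case analysis on the level $\ell'$ at which $v_{j_a}$ enters round $j_a$ --- if $\ell'\ge \ell_a$ then $v_{j_a}$ is never the minimum while $v_j$ is still alive, hence never probed in round $j_a$; otherwise $v_{j_a}$ is probed only up to some level $\le\ell_a$ before $v_j$ is removed --- together with the per-level cost bound of \Cref{lem:probeu_cost}, shows that the $\textsc{Probe}(v_{j_a})$ cost spent in round $j_a$ is $O(2^{\ell_a}\,d\log n)$. (ii) \emph{Weight lower bound.} Since $v_j$ is still a candidate at level $\ell_a-1$, no incoming edge of $v_j$ from outside $\{v_1,\dots,v_{j_a-1}\}$ can have weight $\le 2^{\ell_a-1}$ (such an edge would already have been uncovered by $\textsc{Probe}(v_j)$, contradicting $I_{v_j}\subseteq\{v_1,\dots,v_{j_a-1}\}$). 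As $j-1\ge j_a$, the edge $(v_{j-1},v_j)$ is one of these edges, so $w(v_{j-1},v_j)>2^{\ell_a-1}$, i.e.\ $2^{\ell_a}<2\,w(v_{j-1},v_j)$. (iii) \emph{Strictly increasing levels.} Between rounds $j_a$ and $j_{a+1}$ the vertex $v_j$ is probed at least once more (the terminal iteration of round $j_{a+1}$ is itself a probe of $v_j$) and levels never decrease, hence $\ell_1<\ell_2<\dots<\ell_t$.

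Combining (i), (iii) and (ii) in that order,
\[
\sum_{a=1}^{t}O\!\bigl(2^{\ell_a}\,d\log n\bigr)
\;=\;O(d\log n)\sum_{a=1}^{t}2^{\ell_a}
\;\le\;O(d\log n)\cdot 2^{\ell_t}\sum_{m\ge 0}2^{-m}
\;=\;O\!\bigl(2^{\ell_t}\,d\log n\bigr)
\;=\;O\!\bigl(w(v_{j-1},v_j)\,d\log n\bigr),
\]
which is the claimed bound. (If $(v_{j-1},v_j)\notin E$ then $\opt=\infty$ and the statement is vacuous, so assuming the edge exists is harmless.) Together with \Cref{lem:sort-subset-bound-1}, this shows every probe is charged to some $v_j$ for a total of $O(d\log n)\sum_j w(v_{j-1},v_j)=O(d\log n)\cdot\opt$ queries, completing the cost analysis of \sortchain.

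I expect the main obstacle to be fact (i): bounding how high the level of the eventual survivor $v_{j_a}$ can climb during round $j_a$ while the while-loop may also be processing several other candidates, and making the ``a probe never raises a vertex more than one level above the current minimum'' argument fully rigorous, including tie-breaking in the choice of the minimum-level candidate. Fact (iii) is conceptually clean but relies on the equally important observation that being the last eliminated candidate \emph{forces} the terminal while-loop iteration to be a probe of $v_j$; I would isolate that observation as a small standalone claim and reuse it in both (i) and (iii).
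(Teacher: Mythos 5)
Your proposal is correct and follows essentially the same route as the paper's proof: bound the cost of the probes of the round's survivor by $O(2^{\ell(v_j)}d\log n)$ using the minimum-level selection rule, observe that each charging event forces $v_j$'s level to advance so each level is charged at most once, cap the level via $2^{\ell}\le 2\,w(v_{j-1},v_j)$, and close with a geometric sum. Your version merely makes explicit (as facts (i)--(iii)) some steps the paper states more tersely.
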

\begin{proof}
    We have that the cost of \textsc{Probe}$(u)$ is charged to $v_j$ only at the $i$-th round where $i$ is $u$'s rank, and $v_j$ is the last eliminated vertex. Define $\ell_i(v_j),\ell_i(u)$ to be the weight level of $v_j$ and $u$ when round $i$ ends. We have that $\ell_i(u) \leq \ell_i(v_j)$ because we always select to push the lowest weight level. Even assume all the cost of \textsc{Probe}$(u)$ is charged to $v_j$, it can still be bounded by 
    $$
    O\left(2^{\ell_i(u)} \cdot d\log n\right)
    = O\left(2^{\ell_i(v_j)} \cdot d\log n\right), 
    $$
    which is at the same order as the cost of \textsc{Probe}$(v_j)$ at level $\ell_i(v_j)$. To be more precise, we say these costs are charged to $v_j$ at level $\ell_i(v_j)$. Another fact is that whenever this charging happens, 
    where $v_j$ is a candidate and then eliminated, the weight level should be pushed at least once. Therefore, each weight level of $v_j$ should be charged by at most one other vertex. As a result, letting $\ell$ be the last weight level of $v_j$ just before $j$-th round, even assuming every weight level of $v_j$ before $j$-th round is charged, the total cost is bounded by
    $$
    O\left(2^{\ell}\cdot d\log n\right)
    = O\left(w(v_{j-1},v_j)\cdot d\log n\right).
    $$
    Again it is because $2^{\ell} \leq 2w(v_{j-1},v_{j})$ since $v_j$'s level can not been push above $w(v_{j-1},v_{j})$ before $j$-th round.     
\end{proof}

\subsection{Find A \emph{good} Threshold}
Finally, we prove that we can find a good threshold $\tau$, such that the following lemma can be proved. 

\begin{lemma}
    \sortconstant (\Cref{alg:sortconstant}) can output the total order with query cost at most $\tilde{O}\left(\exp_n(1-\frac{1}{2\W})\right)\cdot \eopt$ when $\eopt \geq 2\cdot \opt$.
\end{lemma}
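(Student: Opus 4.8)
The plan is to split the cost of $\sortconstant(G,\eopt)$ into its three components --- the threshold computation $\findj(\eopt)$, the call $\sortpartial(G_\tau)$, and the call to $\sortchain$ on the $k_\tau$ chains --- bound each, and then argue that the threshold $\tau=\findj(\eopt)$ can be chosen so that the three bounds balance out to $\tilde{O}(n^{1-1/(2\W)})\cdot\eopt$. For the first component, $\findj$ only inspects $\eopt$, the (known) weight values, and $n$, so it makes no queries. For the second, every edge of $G_\tau=(V,E^{\le w_\tau})$ has weight at most $w_\tau$, and by \Cref{thm:gpsc} the call $\sortpartial(G_\tau)$ makes $\tilde{O}(nk_\tau+n^{1.5})$ queries with high probability, so this step pays at most $\tilde{O}\big((nk_\tau+n^{1.5})\cdot w_\tau\big)$. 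For the third, each chain of a chain decomposition of $\mathcal{P}_\tau$ is also a chain of the underlying total order, so $\sortchain$ is run on a valid input of $d=k_\tau$ sorted chains and, by \Cref{lem:sort-subset-bound-1} and \Cref{lem:sort-subset-bound-2}, pays $O(k_\tau\log n)\cdot\opt=\tilde{O}(k_\tau\opt)$. Hence the total cost is $\tilde{O}\big((nk_\tau+n^{1.5})w_\tau+k_\tau\opt\big)$.

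The next step is a structural bound on $k_\tau$, the width of $\mathcal{P}_\tau$. Let $m_\tau$ be the number of edges of the Hamiltonian path $v_1\prec\cdots\prec v_n$ whose weight exceeds $w_\tau$. Deleting these $m_\tau$ edges breaks the path into at most $m_\tau+1$ runs, and any two vertices of a single run are joined by a directed path of light edges, hence comparable in $\mathcal{P}_\tau$; so every run lies in one chain of $\mathcal{P}_\tau$ and $k_\tau\le m_\tau+1$. Since every heavy path edge has weight at least $w_{\tau+1}$ we also have $\opt\ge w_{\tau+1}m_\tau$, and therefore $k_\tau\le 1+\opt/w_{\tau+1}$; when $m_\tau\ge 1$ this yields the cleaner estimates $nk_\tau w_\tau\le 2n(w_\tau/w_{\tau+1})\opt$ and $k_\tau\opt\le 2\opt^2/w_{\tau+1}$. (If $m_\tau=0$ then $G_\tau$ already contains the whole Hamiltonian path, $\mathcal{P}_\tau$ is the total order, $k_\tau=1$, and $\sortpartial$ alone --- cost $\tilde{O}(n^{1.5}w_\tau)$ --- solves the instance.)

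It remains to choose $\tau$, which is a pigeonhole over the $\W$ weights, with $\eopt$ used as a stand-in for $\opt$ (legitimate since $\opt\le\eopt$, with $\eopt\le 2\opt$ also available for the relevant run of the doubling loop). If every consecutive ratio satisfies $w_{i+1}/w_i\le n^{1/(2\W)}$, then $w_\W\le w_1 n^{(\W-1)/(2\W)}$; running $\sortpartial$ directly on all of $G$ (a total order, width $1$) costs $\tilde{O}(n^{1.5}w_\W)$, and since $\opt\ge(n-1)w_1$ one gets $w_\W\le\opt\cdot n^{-1/2-1/(2\W)}$, so this is $\tilde{O}(n^{1-1/(2\W)})\opt$. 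Otherwise some ratio exceeds $n^{1/(2\W)}$; take $\tau$ to be the smallest such ``big gap'' index, so all earlier ratios are small and $w_\tau\le w_1 n^{(\W-1)/(2\W)}\le\opt\cdot n^{-1/2-1/(2\W)}$. Then $n^{1.5}w_\tau\le n^{1-1/(2\W)}\opt$; $nk_\tau w_\tau\le 2n(w_\tau/w_{\tau+1})\opt\le 2n^{1-1/(2\W)}\opt$ by the big gap; and $k_\tau\opt\le\opt+\opt^2/w_{\tau+1}$, where the big gap $w_{\tau+1}>n^{1/(2\W)}w_\tau$ together with the bound on $w_\tau$ (or, when $w_1=0$ so that $\opt\ge(n-1)w_1$ is vacuous, with a lower bound on $w_{\tau+1}$ in terms of $\eopt$ supplied by how $\findj$ uses $\eopt$) forces $w_{\tau+1}\gtrsim n^{-1/2}\opt$ and hence $\opt^2/w_{\tau+1}\lesssim n^{1/2}\opt\le n^{1-1/(2\W)}\opt$. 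Summing, the cost is $\tilde{O}(n^{1-1/(2\W)})\opt\le\tilde{O}(n^{1-1/(2\W)})\eopt$.

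Finally, with $\beta$ in \Cref{alg:sortconstant} set to $c\cdot n^{1-1/(2\W)}\eopt$ for a suitable $\poly\log n$ factor $c$, the above shows that when $\eopt\ge 2\opt$ the (unbudgeted) execution of $\sortconstant$ finishes within total cost $\beta$; therefore the budget is never triggered and the algorithm returns the correct total order. I expect the threshold-selection/pigeonhole step to be the main obstacle: $\findj$ must make one choice that is simultaneously good across all weight configurations --- tightly clustered weights, a vanishing smallest weight, or a single dominant gap --- and one has to verify that every term of the cost bound, in particular the additive $n^{1.5}$ term inherited from \Cref{thm:gpsc}, collapses to $\tilde{O}(n^{1-1/(2\W)})$ times $\eopt$.
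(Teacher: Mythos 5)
Your cost decomposition and the structural inequality are exactly the paper's: the cost is $\tilde{O}\bigl((nk_\tau+n^{1.5})w_\tau+k_\tau\cdot\opt\bigr)$, and the bound $k_\tau\le 1+\opt/w_{\tau+1}$ (the paper writes it as $\opt\ge w_{\tau+1}(k_\tau-1)$) is the right key fact. The gap is in the threshold-selection rule, which you yourself flag as the main obstacle. Choosing $\tau$ to be the \emph{smallest} big-gap index does not work, and the step ``the big gap $w_{\tau+1}>n^{1/(2\W)}w_\tau$ together with the (upper) bound on $w_\tau$ forces $w_{\tau+1}\gtrsim n^{-1/2}\opt$'' is logically backwards: a lower bound on the ratio $w_{\tau+1}/w_\tau$ combined with an \emph{upper} bound on $w_\tau$ yields no lower bound on $w_{\tau+1}$. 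Concretely, take $\W=2$, $w_1=\opt/n^{2}$, $w_2=\opt/n^{0.9}$, with $n^{0.9}$ Hamiltonian-path edges of weight $w_2$ and the rest of weight $w_1$ (so $\opt$ is as stated and $k_1$ can be as large as $n^{0.9}+1$ when the query graph has no helpful light non-path edges). The only ratio is $w_2/w_1=n^{1.1}>n^{1/4}$, so your rule picks $\tau=1$, and the term $k_1\cdot\opt\approx n^{0.9}\opt$ exceeds the target $n^{3/4}\opt$. Your parenthetical about $w_1=0$ hints at the problem but does not fix it: the failure occurs whenever $w_\tau$ is far below its upper bound, not only when it vanishes.

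The paper's \findj avoids this by first computing a cap $\hat\tau=\min\{j: w_j>n^{-1/2-1/(2\W)}\cdot\eopt\}$ (and falling back to $\tau=\W$, $k_\W=1$, when no weight exceeds the cap --- which is what handles the instance above, at cost $n^{1.5}w_2=n^{0.6}\opt$), and then scanning \emph{downward from $\hat\tau-1$} for the first small ratio $w_\tau/w_{\tau+1}\le n^{-1/(2\W)}$. The point of scanning downward below the cap is that all ratios between $\tau+1$ and $\hat\tau$ are then \emph{not} small, which chains into a genuine lower bound $w_{\tau+1}\ge n^{-1+1/(2\W)}\cdot\eopt$ (up to the usual slack), and that is exactly what turns $k_\tau\le\eopt/w_{\tau+1}$ into $k_\tau\le n^{1-1/(2\W)}$. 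So the direction of the pigeonhole matters: you need the run of controlled ratios to sit \emph{above} $\tau$ (anchored at the cap $\hat\tau$, which is where $\eopt$ enters), not below it (anchored at $w_1$). Your case-1 analysis (all ratios small, run \sortpartial on all of $G$) is fine as far as it goes, but it does not coincide with the paper's trivial case ($w_\W$ small relative to $\eopt$), and the instances that fall through the crack are precisely the ones above.
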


First, let us formally state the cost of \sortconstant by simply combing the cost of \sortpartial and \sortchain.
\begin{equation}
    \label{eqn:sortconstantcost}
    \tilde{O}\left((nk_{\tau} + n^{1.5})\cdot w_{\tau} + k_{\tau}\cdot \opt\right).
\end{equation}

We first ruin a trivial case when $w_\W \leq \exp_n(-\frac{1}{2\W}-0.5)\cdot \eopt$, by choosing $\tau = \W$

\begin{lemma}
\label{lem:sortconstantsimple}
    If $w_\W \leq \exp_n(-\frac{1}{2\W}-0.5)\cdot \eopt$, the query cost is $\tilde{O}\left(\exp_n(1-\frac{1}{2\W})\right)\cdot \eopt$ if we choose $\tau=\W$. 
\end{lemma}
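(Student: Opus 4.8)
The plan is to exploit the fact that the choice $\tau = \W$ collapses \sortconstant into an essentially degenerate run, after which only a single comparison of exponents remains.

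First I would unpack what $\tau = \W$ means: then $E^{\leq w_\tau} = E$, so $G_\tau = G$ and $\vec{G}_\tau = \vec{G}$, hence $\mathcal{P}_\tau = \mathcal{P}(\vec{G}_\tau) = \mathcal{P}$ is the full total order and its width is $k_\tau = 1$. Since $\mathcal{P}$ is a total order, all edges of $G$ are comparable, so $G$ is a legitimate GPSC instance and \Cref{thm:gpsc} applies: \sortpartial, run on $G_\tau$, uses $\tilde{O}(n k_\tau + n^{1.5}) = \tilde{O}(n^{1.5})$ queries, each to an edge of $G$ and therefore of weighted cost at most $w_\tau = w_\W$. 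After this step the chain decomposition of $\mathcal{P}_\tau$ consists of $k_\tau = 1$ chain (the whole sorted sequence), so \sortchain contributes at most $O(k_\tau \log n)\cdot\opt = O(\log n)\cdot\opt$. Matching the general cost expression \eqref{eqn:sortconstantcost} with $k_\tau = 1$, the total query cost of \sortconstant is therefore
$$\tilde{O}\Big((n\cdot 1 + n^{1.5})\cdot w_\W + 1\cdot\opt\Big) = \tilde{O}\big(n^{1.5}\, w_\W + \opt\big).$$

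Next I would bound the two summands against the target. Using the hypothesis $w_\W \leq \exp_n(-\tfrac{1}{2\W} - 0.5)\cdot\eopt$,
$$n^{1.5}\, w_\W \;\leq\; \exp_n(1.5)\cdot\exp_n\!\Big(-\tfrac{1}{2\W} - 0.5\Big)\cdot\eopt \;=\; \exp_n\!\Big(1 - \tfrac{1}{2\W}\Big)\cdot\eopt,$$
and since $\eopt \geq 2\,\opt$ we have $\opt \leq \eopt \leq \exp_n(1 - \tfrac{1}{2\W})\cdot\eopt$, the last step because $\exp_n(1 - \tfrac{1}{2\W}) = n^{1-1/(2\W)} \geq 1$ for every integer $\W \geq 1$. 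Adding the two bounds gives the claimed total cost $\tilde{O}(\exp_n(1 - \tfrac{1}{2\W}))\cdot\eopt$, which stays within the budget $\beta$, so \sortconstant succeeds.

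There is no genuinely hard step here; the two things to handle with care are (i) checking that handing the (unweighted view of the) whole query graph to \sortpartial is indeed a valid GPSC input of width $1$, so that \Cref{thm:gpsc} may be invoked, and (ii) keeping track that every query issued inside \sortpartial has weighted cost at most $w_\W$, which is exactly what converts the query-count bound into the weighted-cost bound \eqref{eqn:sortconstantcost}. The exponent arithmetic $1.5 + (-\tfrac{1}{2\W} - 0.5) = 1 - \tfrac{1}{2\W}$ is precisely why the cutoff $\exp_n(-\tfrac{1}{2\W} - 0.5)$ is the right threshold to isolate this trivial case.
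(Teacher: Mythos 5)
Your proposal is correct and follows the same route as the paper: set $\tau=\W$ so that $k_\tau=1$, substitute into the cost expression \eqref{eqn:sortconstantcost}, and use the hypothesis $w_\W \leq \exp_n(-\tfrac{1}{2\W}-0.5)\cdot\eopt$ to get $n^{1.5}w_\W \leq \exp_n(1-\tfrac{1}{2\W})\cdot\eopt$, with the $k_\tau\cdot\opt$ term absorbed since $\opt\leq\eopt$. The extra checks you flag (validity of the GPSC instance and the per-query cost bound $w_\tau$) are implicit in the paper's derivation of \eqref{eqn:sortconstantcost} and are sound.
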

\begin{proof}
    In this case, we have $\tau = \W$ and $k_{\tau}=1$. Putting it into \Cref{eqn:sortconstantcost}, the cost is
    $$
    \tilde{O}\left(\exp_n(1.5 -\frac{1}{2\W}-0.5)\right) \cdot \eopt = \tilde{O}\left(\exp_n(1-\frac{1}{2\W})\right)\cdot \eopt
    $$ 
\end{proof}

Then, we move toward the case $w_\W > \exp_n(-\frac{1}{2\W}-0.5)\cdot \eopt$. We need to select $\tau<m$, with three constraints listed below.
\begin{align}
    \label{eqn:sortconstantbound1-1}
    &\frac{w_{\tau}}{w_{\tau+1}} \leq \exp_n(-\frac{1}{2\W}).\\
    \label{eqn:sortconstantbound1-2}
    &w_{\tau} \leq \exp_n(-0.5-\frac{1}{2\W}) \cdot \eopt.\\
    \label{eqn:sortconstantbound2}
    &k_{\tau} \leq \exp_n(1-\frac{1}{2\W})
\end{align}

Then, we have two tasks. 1) Prove the three conditions are sufficient. 2) Show we can find a $\tau$ satisfying the three conditions. We prove the first task in the following lemma. 

\begin{lemma}
    \label{lem:weightcondition}
    If we choose $\tau$ that satisfies \Cref{eqn:sortconstantbound1-1}, \Cref{eqn:sortconstantbound1-2}, and \Cref{eqn:sortconstantbound2}, the query cost is $\tilde{O}\left(\exp_n(1-\frac{1}{2\W})\right)\cdot \eopt$. 
\end{lemma}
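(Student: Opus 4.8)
\textbf{Proof plan for \Cref{lem:weightcondition}.}
The plan is to plug the three constraints directly into the cost formula \Cref{eqn:sortconstantcost}, namely $\tilde{O}\left((nk_{\tau} + n^{1.5})\cdot w_{\tau} + k_{\tau}\cdot \opt\right)$, and bound each of the three summands separately by $\tilde{O}\left(\exp_n(1-\frac{1}{2\W})\right)\cdot \eopt$. The key point I expect to use repeatedly is the doubling assumption $\eopt \geq 2\opt$, so that $\opt \leq \eopt$ and it suffices to bound everything in terms of $\eopt$.

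First I would handle the term $k_\tau \cdot \opt$. By \Cref{eqn:sortconstantbound2} we have $k_\tau \leq \exp_n(1-\frac{1}{2\W})$, and since $\opt \leq \eopt$, this term is at most $\exp_n(1-\frac{1}{2\W})\cdot\eopt$, which is already of the desired form. Next I would handle $n^{1.5}\cdot w_\tau$: by \Cref{eqn:sortconstantbound1-2}, $w_\tau \leq \exp_n(-0.5-\frac{1}{2\W})\cdot\eopt$, so $n^{1.5} w_\tau \leq \exp_n(1.5)\cdot\exp_n(-0.5-\frac{1}{2\W})\cdot\eopt = \exp_n(1-\frac{1}{2\W})\cdot\eopt$, again of the desired form. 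Finally, for the term $n k_\tau w_\tau$, I would multiply the two bounds: using \Cref{eqn:sortconstantbound2} for $k_\tau$ and \Cref{eqn:sortconstantbound1-2} for $w_\tau$ gives $n k_\tau w_\tau \leq n\cdot\exp_n(1-\frac{1}{2\W})\cdot\exp_n(-0.5-\frac{1}{2\W})\cdot\eopt = \exp_n(1.5 - \frac{1}{\W})\cdot\eopt$, which is even smaller than $\exp_n(1-\frac{1}{2\W})\cdot\eopt$ whenever $\frac{1}{\W}\geq \frac{1}{2\W}+0.5$... wait, that inequality fails for large $\W$, so I would instead bound $k_\tau w_\tau$ more carefully: combining \Cref{eqn:sortconstantbound2} and \Cref{eqn:sortconstantbound1-2} the product $\exp_n(1-\frac{1}{2\W})\cdot\exp_n(-0.5-\frac{1}{2\W}) = \exp_n(0.5 - \frac{1}{\W})$, so $n k_\tau w_\tau \leq \exp_n(1.5 - \frac{1}{\W})\cdot\eopt$. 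Since $1.5 - \frac{1}{\W} \leq 1 - \frac{1}{2\W}$ is equivalent to $0.5 \leq \frac{1}{2\W}$, i.e. $\W \leq 1$, this naive bound is insufficient for $\W \geq 2$; instead one should use only \emph{one} of the two constraints on this term. The right move is to bound $nk_\tau w_\tau \leq n\cdot\exp_n(1-\frac{1}{2\W})\cdot w_\tau$ using \Cref{eqn:sortconstantbound2} on $k_\tau$ alone and then apply \Cref{eqn:sortconstantbound1-2} only to turn one factor of $n\cdot w_\tau$ into something small, but since $w_\tau \leq \exp_n(-0.5-\frac{1}{2\W})\eopt$ forces $n w_\tau \leq \exp_n(0.5-\frac{1}{2\W})\eopt \leq \eopt$, we get $nk_\tau w_\tau \leq \exp_n(1-\frac{1}{2\W})\cdot\eopt$ directly. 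Summing the three bounds and absorbing the $\poly\log n$ factors into $\tilde{O}$ completes the proof.

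The main obstacle, such as it is, is purely bookkeeping: making sure the exponent arithmetic lines up so that each of the three summands in \Cref{eqn:sortconstantcost} lands at or below $\exp_n(1-\frac{1}{2\W})$, and recognizing that the product term $nk_\tau w_\tau$ should be split as $(n w_\tau)\cdot k_\tau$ rather than naively multiplying both worst-case bounds. Notice that \Cref{eqn:sortconstantbound1-1}, the ratio condition $\frac{w_\tau}{w_{\tau+1}} \leq \exp_n(-\frac{1}{2\W})$, plays no role in \emph{this} lemma — it is only needed in the companion argument (the pigeonhole/telescoping over the $\W$ weight classes) that shows a $\tau$ satisfying all three constraints actually exists. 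So for \Cref{lem:weightcondition} itself I would only invoke \Cref{eqn:sortconstantbound1-2} and \Cref{eqn:sortconstantbound2} together with $\eopt \geq 2\opt$.
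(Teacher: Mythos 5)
Your treatment of the terms $n^{1.5}w_\tau$ and $k_\tau\cdot\opt$ matches the paper. But the handling of the term $nk_\tau w_\tau$ contains a genuine gap, and the closing claim that \Cref{eqn:sortconstantbound1-1} ``plays no role in this lemma'' is wrong. Your proposed fix rests on the assertion that $n w_\tau \leq \exp_n(0.5-\tfrac{1}{2\W})\cdot\eopt \leq \eopt$, but $\exp_n(0.5-\tfrac{1}{2\W})\leq 1$ requires $0.5\leq\tfrac{1}{2\W}$, i.e.\ $\W\leq 1$. For every $\W\geq 2$ your ``split as $(nw_\tau)\cdot k_\tau$'' reproduces exactly the naive product $\exp_n(1.5-\tfrac{1}{\W})\cdot\eopt$ that you had already (correctly) rejected earlier in the same paragraph. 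Indeed, with only \Cref{eqn:sortconstantbound1-2} and \Cref{eqn:sortconstantbound2} the bound on $nk_\tau w_\tau$ cannot be pushed below $\exp_n(1.5-\tfrac{1}{\W})\cdot\eopt$ in the worst case, which exceeds the target whenever $\W>1$.

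The paper closes this gap with two ingredients you omitted. First, the structural lower bound $\eopt\geq\opt\geq w_{\tau+1}\cdot(k_\tau-1)$: since $\mathcal{P}_\tau$ has width $k_\tau$, the optimal Hamiltonian path must use at least $k_\tau-1$ edges of weight at least $w_{\tau+1}$. Second, precisely the ratio constraint \Cref{eqn:sortconstantbound1-1}: substituting the lower bound gives
\[
\frac{n(k_\tau-1)w_\tau}{\eopt}\;\leq\;\frac{n(k_\tau-1)w_\tau}{w_{\tau+1}(k_\tau-1)}\;=\;\frac{nw_\tau}{w_{\tau+1}}\;\leq\;n\cdot \exp_n\!\left(-\tfrac{1}{2\W}\right)\;=\;\exp_n\!\left(1-\tfrac{1}{2\W}\right),
\]
so the $k_\tau$ factor cancels against the lower bound on $\eopt$ rather than being multiplied into the estimate. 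You should restructure the argument for this term accordingly; the other two terms can stay as you have them.
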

\begin{proof}
    Recall the cost in \Cref{eqn:sortconstantcost}. We prove 
    $$
    \frac{(nk_{\tau} + n^{1.5})\cdot w_{\tau}}{\eopt} + k_{\tau} = O(\exp_n(1- \frac{1}{2\W}))
    $$ 
    by partitioning the LHS into three different terms.
    
    At first, because $k_{\tau}$ is the width of $\mathcal{P}_{\tau}$. We can lower bound $\eopt$ by 
    $$
        \eopt \geq \opt \geq w_{\tau+1} \cdot (k_{\tau}-1).
    $$
    Therefore, we bound the first term. 
    $$
        \frac{n(k_{\tau}-1)w_{\tau}}{\eopt} \leq \frac{n(k_{\tau}-1)w_{\tau}}{w_{\tau+1} \cdot (k_{\tau}-1)} \leq \exp_n(1 - \frac{1}{2\W}).
    $$
    Remark that the final inequality holds by the constraint in \Cref{eqn:sortconstantbound1-1}. Then, 
    We bound the second term by the constraint in \Cref{eqn:sortconstantbound1-2}.
    $$
    \frac{(n + n^{1.5}) \cdot w_{\tau}}{\eopt} \leq 2\exp_n(1- \frac{1}{2\W}).
    $$
    Finally, we bound the third term directly by \Cref{eqn:sortconstantbound2}.
    $$
    k_{\tau} \leq \exp_n(1-\frac{1}{2\W}).
    $$
    Combining the three terms together, we conclude the lemma. 
\end{proof}

Finally, it remains to complete the second task, i.e., construct an algorithm called \findj to find a good $\tau$. We start from the first weight that exceeds $\exp_n(-0.5-\frac{1}{2\W}) \cdot \eopt$, called $\hat{\tau}$. Then, we enumerate $\tau$ from $\hat{\tau}-1$ downto $1$. We stop at the first time when $\frac{w_{\tau}}{w_{\tau+1}} \leq \exp_n(-\frac{1}{2\W})$ and select it. We formally present the subroutine in \Cref{alg:findj}, combining with the simple case in \Cref{lem:sortconstantsimple}

\begin{algorithm}[H]   
    \caption{Finding A Good Threshold}
    \label{alg:findj}
    \begin{algorithmic}[1]
        \Function{\findj}{$\eopt$}
        \If{$w_\W \leq \exp_n(-\frac{1}{2\W}-0.5) \cdot \eopt$}
            \State \textbf{return} $\tau = \W$.
        \Else 
\State $\hat{\tau}$ = $\min\{j \mid s.t.~ w_j>\exp_n(-\frac{1}{2\W}-0.5)\cdot \eopt\}$
\For{$\tau$ from $\hat{\tau}-1$ down to $1$} 
                \If{$\frac{w_{\tau}}{w_{\tau+1}}\leq \exp_n(-\frac{1}{2\W})$}
\State \textbf{return} $\tau$
                \EndIf
\EndFor
        \EndIf
        \EndFunction
    \end{algorithmic}
\end{algorithm}  

 We prove that \findj gives us a good $\tau$ that satisfies all three constraints. 
\begin{lemma}
    If $w_\W> \exp_n(-\frac{1}{2\W}-0.5)\cdot \eopt$, \Cref{alg:findj} must output $\tau \geq 1$ that satisfies \Cref{eqn:sortconstantbound1-1}, \Cref{eqn:sortconstantbound1-2}, and \Cref{eqn:sortconstantbound2}. 
\end{lemma}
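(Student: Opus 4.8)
The plan is to verify the three required inequalities for the index $\tau$ that \Cref{alg:findj} returns in its \textbf{else} branch, and along the way to confirm that the \textbf{for}-loop returns some index rather than running empty. Throughout I would work in the regime $\eopt\ge 2\opt$ (the only run whose correctness we need, by the doubling argument), take $n\ge 2$ (smaller instances being trivial), and use the elementary bound $\opt\ge (n-1)w_1$, since the total order contributes $n-1$ terms to $\opt$, each of weight at least the smallest weight $w_1$. Let $\hat\tau=\min\{\,j:w_j>n^{-1/(2\W)-1/2}\eopt\,\}$ as in the algorithm. Under the hypothesis $w_\W>n^{-1/(2\W)-1/2}\eopt$ this $\hat\tau$ is well defined and $\hat\tau\le\W$; moreover $w_1\le\opt/(n-1)\le\eopt/(2(n-1))\le n^{-1}\eopt\le n^{-1/(2\W)-1/2}\eopt$ (the last step uses $\W\ge 1$), so $\hat\tau\ge 2$ (hence also $\W\ge 2$) and the loop ranges over the non-empty set $\{1,\dots,\hat\tau-1\}$, with $w_{\hat\tau-1}\le n^{-1/(2\W)-1/2}\eopt$ by minimality of $\hat\tau$.

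Next I would show the loop returns a valid $\tau$. Suppose not: $w_\tau/w_{\tau+1}>n^{-1/(2\W)}$ for every $\tau\in\{1,\dots,\hat\tau-1\}$. Telescoping, $w_1/w_{\hat\tau}=\prod_{\tau=1}^{\hat\tau-1}w_\tau/w_{\tau+1}>n^{-(\hat\tau-1)/(2\W)}\ge n^{-(\W-1)/(2\W)}$, so $w_1>n^{-(\W-1)/(2\W)}w_{\hat\tau}>n^{-(\W-1)/(2\W)}\cdot n^{-1/(2\W)-1/2}\eopt=n^{-1}\eopt$, which contradicts the bound $w_1\le n^{-1}\eopt$ from the previous paragraph. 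Hence the loop returns some $\tau\ge 1$, and \Cref{eqn:sortconstantbound1-1} holds directly by the loop's stopping test. For \Cref{eqn:sortconstantbound1-2}: since $\tau\le\hat\tau-1$ we get $w_\tau\le w_{\hat\tau-1}\le n^{-1/(2\W)-1/2}\eopt=\exp_n(-\tfrac12-\tfrac1{2\W})\cdot\eopt$.

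The substantive step is \Cref{eqn:sortconstantbound2}, $k_\tau\le n^{1-1/(2\W)}$, and here I would first extract a lower bound on $w_{\tau+1}$. Because the loop returns the \emph{largest} valid index, every $\tau'\in\{\tau+1,\dots,\hat\tau-1\}$ has $w_{\tau'}/w_{\tau'+1}>n^{-1/(2\W)}$, so $w_{\tau+1}/w_{\hat\tau}\ge n^{-(\hat\tau-1-\tau)/(2\W)}\ge n^{-(\W-2)/(2\W)}$ (the first bound is an equality when $\tau=\hat\tau-1$, where the product is empty; the second uses $\tau\ge1$, $\hat\tau\le\W$). Together with $w_{\hat\tau}>n^{-1/(2\W)-1/2}\eopt$ and the exponent identity $\tfrac{\W-2}{2\W}+\tfrac1{2\W}+\tfrac12=1-\tfrac1{2\W}$, this gives $w_{\tau+1}>n^{-(1-1/(2\W))}\eopt$. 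On the combinatorial side I would show $\opt\ge(k_\tau-1)w_{\tau+1}$: list an antichain of $\mathcal P_\tau$ of size $k_\tau$ along the total order as $u_1\prec\cdots\prec u_{k_\tau}$; for each consecutive pair the stretch of the total order from $u_j$ to $u_{j+1}$ cannot consist entirely of edges of weight $\le w_\tau$, for otherwise $u_j$ would reach $u_{j+1}$ in $\vec G_\tau$, contradicting their incomparability in $\mathcal P_\tau$; so that stretch contains an edge of weight $>w_\tau$, i.e.\ of weight $\ge w_{\tau+1}$, and these $k_\tau-1$ edges are distinct terms of $\opt=\sum_i w(v_{i-1},v_i)$. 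Combining with $\eopt\ge2\opt$: $k_\tau-1\le\opt/w_{\tau+1}<(\opt/\eopt)\,n^{1-1/(2\W)}\le\tfrac12\,n^{1-1/(2\W)}$, hence $k_\tau\le n^{1-1/(2\W)}=\exp_n(1-\tfrac1{2\W})$.

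The hard part is this last bound: it is the only place where the maximality of the chosen $\tau$ and the bound $\W$ on the number of distinct weights both have to be used, and the exponent bookkeeping must land exactly at $1-\tfrac1{2\W}$ so that, paired with $\opt\ge(k_\tau-1)w_{\tau+1}$ and $\eopt\ge2\opt$, it closes with essentially no slack; getting the telescoping range $\{\tau+1,\dots,\hat\tau-1\}$ and the index arithmetic $\hat\tau-1-\tau\le\W-2$ right is the main point to be careful about. A secondary but necessary ingredient is the edge-case check that the \textbf{for}-loop is non-empty and must find a valid ratio, which I handle via $\hat\tau\ge2$ and the telescoping contradiction above.
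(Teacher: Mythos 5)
Your proof is correct and follows essentially the same route as the paper's: the same telescoping contradiction (via $w_1\le n^{-1}\eopt$) to show the loop must return, and the same lower bound $w_{\tau+1}\gtrsim n^{-(1-1/(2\W))}\eopt$ combined with $\opt\ge (k_\tau-1)w_{\tau+1}$ to bound the width. The only addition is that you explicitly justify the antichain inequality $\opt\ge(k_\tau-1)w_{\tau+1}$, which the paper asserts without proof.
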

\begin{proof}
We first prove that \Cref{alg:findj} can always find a feasible $\tau$ between $1$ and $\hat{\tau}-1$. Assume the contradiction, we know that $\forall 1\leq j\leq \hat{\tau}-1$, $\frac{w_{j}}{w_{j+1}} > \exp_n(-\frac{1}{2W})$. Thus, we have 
$$
    w_1 > \exp_n\left(-\frac{1}{2\W}-0.5-\frac{\W-1}{2\W}\right)\cdot \eopt > \exp_n(-1) \cdot \eopt.
$$
If we have zero weight edges, where $w_1=0$, there is a direct contradiction. If we do not have zero weight edges, it contradicts the condition of $\eopt \geq 2 \cdot \opt \geq \frac{n}{n-1}\cdot \opt \geq n w_1$ when $n>1$. We remark that we also have $\hat{\tau}>1$ because of the proof above. Thus far, we already know we can output $\tau$ that satisfies \Cref{eqn:sortconstantbound1-1} and \Cref{eqn:sortconstantbound1-2}.

Next, we prove that $k_{\tau}$ is small with our selection of $\tau$. Because $\tau$ is the first one satisfies \Cref{eqn:sortconstantbound1-1}, so we have $\frac{w_{j}}{w_{j+1}} > \exp_n(-\frac{1}{2W})$ for all $\tau + 1 \leq j \leq \hat{\tau}$. Therefore
\begin{align*}
    \frac{w_{\tau+1}}{\eopt} 
    & \geq \frac{w_{\hat{\tau}}}{\eopt} \cdot \exp_n\left(-\frac{1}{2\W}\cdot(\hat{\tau}-1-\tau)\right) \\
    & \geq \exp_n\left(-\frac{1}{2\W}-0.5-\frac{1}{2\W}\cdot (\hat{\tau}-1-\tau)\right)  \\
    & \geq \exp_n\left(-\frac{1}{2\W}-0.5-\frac{1}{2\W}\cdot (\W-2)\right) \\
    & \geq \exp_n(-1 + \frac{1}{2\W})
\end{align*}
By the bound of $\eopt > \opt \geq w_{\tau+1}\cdot k_\tau$. We have $k_\tau \leq \exp_n(1-\frac{1}{2\W})$, that satisfies \Cref{eqn:sortconstantbound2}, which concludes the lemma. 
\end{proof}

Finally, we notice that our method can lead to better bounds if the weights have large gaps. 
We capture this in the following corollary.
\begin{corollary}
    \label{cor:weighted_fast_inc}
If $\forall 1\leq i \leq W$, $\frac{w_i}{w_{i+1}} \leq \exp_n(-3/4)$. We can find a good $\tau$ with query cost $\tilde{O}(\exp_n(3/4))\cdot \eopt$.
\end{corollary}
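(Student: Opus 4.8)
The plan is to reuse the threshold machinery behind \Cref{thm:weighted} almost verbatim, replacing only the search inside $\findj$ by a direct choice of threshold that the uniform gap hypothesis makes available. Throughout I assume $\eopt \ge 2\opt$, as in the preceding lemma, and I take as given both the cost expression \Cref{eqn:sortconstantcost} for \sortconstant and the inequality $\eopt \ge \opt \ge (k_\tau-1)\,w_{\tau+1}$ established in the proof of \Cref{lem:weightcondition}. First I would dispose of the trivial regime $w_\W \le \exp_n(-3/4)\cdot\eopt$ by taking $\tau=\W$, so that $k_\tau = 1$ and \Cref{eqn:sortconstantcost} collapses to $\tilde{O}\big((n+n^{1.5})\,w_\W + \opt\big) \le \tilde{O}\big(n^{1.5}\exp_n(-3/4)\,\eopt + \eopt\big) = \tilde{O}(\exp_n(3/4))\cdot\eopt$.

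For the remaining regime $w_\W > \exp_n(-3/4)\cdot\eopt$, I would set $\hat\tau := \min\{\,j : w_j > \exp_n(-3/4)\cdot\eopt\,\}$ and choose $\tau := \hat\tau-1$. A short preliminary step checks $\hat\tau\ge2$ (hence $\tau\ge1$): if $w_1=0$ this is immediate, and if $w_1>0$ then $\eopt\ge2\opt\ge2(n-1)w_1\ge nw_1$ for $n\ge2$, so $w_1\le\exp_n(-1)\eopt\le\exp_n(-3/4)\eopt$. The point is that since \emph{every} consecutive ratio already satisfies $w_i/w_{i+1}\le\exp_n(-3/4)$, no search is needed to meet a ratio constraint; the only thing to engineer is ``$w_\tau$ not too large'', and the choice of $\hat\tau$ gives exactly $w_\tau=w_{\hat\tau-1}\le\exp_n(-3/4)\,\eopt$ by minimality.

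It then remains to bound each of the three summands of \Cref{eqn:sortconstantcost}, namely $nk_\tau w_\tau$, $n^{1.5}w_\tau$, and $k_\tau\opt$, by $\tilde{O}(\exp_n(3/4))\cdot\eopt$. The term $n^{1.5}w_\tau$ is at most $\exp_n(3/4)\,\eopt$ directly. For $k_\tau\opt$, plugging $w_{\tau+1}=w_{\hat\tau}>\exp_n(-3/4)\eopt\ge\exp_n(-3/4)\opt$ into $\opt\ge(k_\tau-1)w_{\tau+1}$ yields $k_\tau-1<\exp_n(3/4)$, hence $k_\tau\opt\le\tilde{O}(\exp_n(3/4))\cdot\eopt$. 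For $nk_\tau w_\tau$, I would split it as $n(k_\tau-1)w_\tau+nw_\tau$; in the first piece write $(k_\tau-1)w_\tau=(k_\tau-1)w_{\tau+1}\cdot(w_\tau/w_{\tau+1})\le\opt\cdot\exp_n(-3/4)$ using the gap hypothesis on the single ratio $w_\tau/w_{\tau+1}$, so $n(k_\tau-1)w_\tau\le\exp_n(1/4)\eopt$, while $nw_\tau\le\exp_n(1/4)\eopt$ as well. Summing the three bounds gives the claimed $\tilde{O}(\exp_n(3/4))\cdot\eopt$ query cost.

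I do not expect a real obstacle here: the corollary is a specialization of the analysis already carried out for \Cref{thm:weighted}, and the main work is bookkeeping — confirming that the single threshold $\tau=\hat\tau-1$ simultaneously satisfies all the inequalities required by \Cref{lem:weightcondition} (which it does precisely because the uniform gap trivializes the ratio constraints), and dispatching the degenerate corners $\opt=0$, $w_1=0$, and $n=1$ exactly as in the main proof.
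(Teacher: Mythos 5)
Your proof is correct and follows essentially the same route as the paper's: handle the trivial case $w_W \le \exp_n(-3/4)\eopt$ by taking $\tau=W$, otherwise set $\hat\tau$ as the least index with $w_{\hat\tau} > \exp_n(-3/4)\eopt$ and take $\tau=\hat\tau-1$, using the uniform gap to make the ratio constraint free. The paper invokes the calculation of \Cref{lem:weightcondition} as a black box (``view it as the case $W=2$'') whereas you re-derive the three summand bounds directly, but this is a difference of presentation rather than of approach.
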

\begin{proof}
    First, if $w_W \leq \exp_n(-3/4) \cdot \eopt$, then the query cost is $\exp_n(3/4)$, by fixing $\tau=\W$.
    In the remaining case when $w_W$ is large, 
    We can prove stronger versions of  \Cref{eqn:sortconstantbound1-1}, \Cref{eqn:sortconstantbound1-2}, and \Cref{eqn:sortconstantbound2}.
    By the condition of the lemma, we naturally have the stronger version of \Cref{eqn:sortconstantbound1-1}
    $$
    \frac{w_\tau}{w_{\tau+1}} \leq \exp_n(-3/4).
    $$
    We also find $\hat{\tau}$ to be the minmizied index such that $w_{\hat{\tau}}>\exp_n(-3/4)\cdot \eopt$, where we can can prove $\hat{\tau}>1$ because $w_1 \leq n^{-1} \cdot \eopt$. Therefore, we can fix $\tau = \hat{\tau}-1\geq 1$ such that
    
    $$
    k_{\hat{\tau}-1} \leq \frac{\eopt} {w_{\hat{\tau}}} \leq \exp_n(3/4), 
    \quad w_{\hat{\tau}-1} \leq \exp_n(-3/4)\cdot \eopt. 
    $$ 
    The two inequality above is exactly the stronger versions of \Cref{eqn:sortconstantbound1-2} and \Cref{eqn:sortconstantbound2}. Finally, by these stronger conditions, we directly prove the cost is bounded in $\tilde{O}(\exp_n(3/4))\cdot \eopt$ by the same calculation in \Cref{lem:weightcondition} (We can view it as the same case as $W=2$).
\end{proof}

    \bibliography{ref.bib}
    \bibliographystyle{alphaurl}

    \appendix

    \newpage

    \section{Proof of \Cref{lem:chernoff-non-independent}}
\label{sec:proof_lem:chernoff-non-independent}

\ChernoffVariant*

\begin{proof}
Let $Z_1,Z_2,\ldots ,Z_N$ be i.i.d. Bernoulli random variables with $\Pr[Z_i=1]=\mu$ for every $1\leq i\leq N$. Let $L = (1-\delta)N\mu$. 
For every $1\leq i\leq N$, we have
$$\Pr[Y_1+Y_2+\ldots +Y_i+Z_{i+1}+\ldots Z_N < L] \leq \Pr[Y_1+Y_2+\ldots +Y_{i-1}+Z_i+\ldots Z_N < L]$$

Hence we have $\Pr[Y_1+Y_2\ldots +Y_N< L]\leq \Pr[Z_1+Z_2\ldots +Z_N< L]$. 
By Chernoff bound, $\Pr[Z_1+Z_2\ldots +Z_N<L] \leq \exp(-N\delta^2\mu/2)$. This concludes our proof. 
\end{proof}

\section{Proof of \Cref{lem:CAnc}}
\label{sec:proof_lem:CAnc}

\CAncUB*

\begin{proof}

Recall the recursive definition of ternary search tree. By \Cref{lem:LE-recur-half}, with probability at least $1/2$, the pivot vertex $p$ partitions $X(p)$ into (relatively) even sets. 

\begin{lemma}\label{lem:LE-recur-half}
    For every $X\subseteq V$, $\Pr_{p\sim X}\left[|X_{\prec p}| \leq \frac{3}{4}|X|, |X_{\succ p}| \leq \frac{3}{4}|X|\right] \geq 1/2$. 
\end{lemma}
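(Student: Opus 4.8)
The whole statement should follow from examining a single linear extension of the induced poset $(X,\prec)$. First I would fix a linear extension $x_1,x_2,\dots,x_\ell$ of $(X,\prec)$, where $\ell=|X|$ (such an extension exists for every finite poset). Because a linear extension respects the poset order, for every index $i$ we have $X_{\prec x_i}\subseteq\{x_1,\dots,x_{i-1}\}$ and $X_{\succ x_i}\subseteq\{x_{i+1},\dots,x_\ell\}$, and hence $|X_{\prec x_i}|\le i-1$ and $|X_{\succ x_i}|\le \ell-i$.

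Next I would isolate the ``balanced'' pivots: call an index $i$ good if $\ell/4\le i\le 3\ell/4+1$. For a good $i$ we get $|X_{\prec x_i}|\le i-1\le \frac34\ell=\frac34|X|$ and $|X_{\succ x_i}|\le \ell-i\le\ell-\frac14\ell=\frac34|X|$, so picking $p=x_i$ for a good $i$ already witnesses the event in the lemma. It then remains only to count good integer indices: a short floor/ceiling computation shows that $[\ell/4,\,3\ell/4+1]$ contains at least $\ell/2$ integers, and each such value is a legal index in $\{1,\dots,\ell\}$. Since $p$ is drawn uniformly from the $\ell$ elements of $X$, the probability that $p=x_i$ for some good $i$ is at least $(\ell/2)/\ell=1/2$, which is exactly the claimed bound.

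I do not expect a genuine obstacle here: the statement is just the elementary fact that the ``middle half'' of any linear extension consists of pivots that split the poset into parts of size at most $\frac34|X|$ each, and the slack in the constant $\frac34$ makes the integer rounding harmless. The only point worth double-checking is notational --- that $X_{\prec p}$ and $X_{\succ p}$ are taken inside the induced poset $(X,\prec)$ (as in the preliminaries), so that the linear extension invoked is one of $(X,\prec)$ rather than of the ambient poset $\mathcal{P}$.
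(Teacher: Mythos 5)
Your proof is correct and follows essentially the same route as the paper's: fix a linear extension of $(X,\prec)$, use the containments $X_{\prec x_i}\subseteq\{x_1,\dots,x_{i-1}\}$ and $X_{\succ x_i}\subseteq\{x_{i+1},\dots,x_\ell\}$, and observe that the middle half of the indices gives balanced pivots. The paper states the counting step more tersely, but the argument is identical.
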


\begin{proof}
    Let $x_1,x_2,\ldots ,x_\ell$ be a linear extension of $X$. When $x_i$ is chosen as $p$, we have $X_{\prec x_i} \subseteq \{ x_1,x_2,\ldots ,x_{i-1}\}$, $X_{\succ x_i}\subseteq \{x_{i+1},x_{i+2},\ldots ,x_\ell\}$. Hence we have
    $$\Pr_{p\sim X}\left[|X_{\prec p}| \leq \frac{3}{4}|X|, |X_{\succ p}| \leq \frac{3}{4}|X|\right] \geq \Pr_{i\sim [\ell]}\left[\frac{1}{4}|X| \leq i \leq \frac{3}{4}|X|\right] \geq 1/2.$$
    This concludes the proof of \Cref{lem:LE-recur-half}. 
\end{proof}

    Fix some $v$. Let $\Troot \to p_1\to p_2\to \ldots \to p_\ell \to v$ be the path from $\Troot$ to $v$ in $T$. Let $p_{i_1},p_{i_2},\ldots ,p_{i_t}$ ($i_1<i_2<\ldots <i_t,t=|\CAnc(v)|$) be the comparable ancestors of $v$. 
    Let $\ell^*=60\log n$ be the upper bound for $\ell$. In the following discussion, we will show that $\ell \leq \ell^*$ holds with high probability. 
    Define $\ell^*$ indicator variables $Y_1,Y_2,\ldots ,Y_{\ell^*}$ as below. 
    $$Y_j=\begin{cases}
        I(|X(p_{i_j})_\prec| \leq \frac{3}{4}|X(p_{i_j})|\text{ and } |X(p_{i_j})_\succ| \leq \frac{3}{4}|X(p_{i_j})|) & j \leq \ell \\
        1 & j > \ell
    \end{cases}$$
    
    Recall that for every $2\leq j\leq t$, $X(p_{i_j})$ can be uniquely determined by $p_1,p_2,\ldots ,p_{i_j-1}$ according to the recursive definition of ternary search tree. 
    By \Cref{lem:LE-recur-half}, for every $Y_1,Y_2,\ldots ,Y_{i-2}$, $\Pr[Y_{i-1}=1\mid Y_1,Y_2,\ldots ,Y_{i-2}] \geq 1/2$. 
When $Y_{i-1}=1$ holds, we have $|X(p_{i_j})| \leq \frac{3}{4}|X(p_{i_{j-1}})|$ with probability 1. 
    
    Let $\hat{\mu} = 3\log n \geq \log_{4/3} n$, $Y_1+Y_2+\ldots +Y_{\ell^*} \geq \hat{\mu}$ implies $\ell \leq \ell^*$ with probability 1. 

    By \Cref{lem:chernoff-non-independent}, $\Pr[\ell \leq \ell^*] \geq 1 - \Pr[Y_1+Y_2+\ldots +Y_{\ell^*} < \hat{\mu}] \geq 1-n^{-10}$.
    This finishes the proof of \Cref{lem:CAnc}.
\end{proof}

\section{Proof of \Cref{lem:sum-max}}
\label{sec:proof_lem:sum-max}

\SumMax*

\begin{proof}
    Suppose $A=\{a_1,a_2,\ldots ,a_n\}$ ($a_1<a_2<\ldots <a_n$). A uniform permutation of $A$ can be generate in the following way. We iteratively determine $p$ in the order $p_n,p_{n-1},\ldots ,p_1$. 
    For every $1\leq i\leq n$, $p_i$ is randomly distributed among $A\setminus \{p_{i+1},p_{i+2},\ldots n\}$. 

    Define $n$ indicator variables $X_1,X_2,\ldots ,X_n$, where 
    $$X_i = I\left(p_i > \max_{1\leq i'<i} p_{i'}\right) = I(p_i = \max (A\setminus \{p_{i+1},p_{i+2},\ldots n\})).$$

    From our generation method, we know that $X_1,X_2,\ldots ,X_n$ are mutually independent and $\Pr[X_i=1] = 1/i$ for every $1\leq i\leq n$. Let $\mu = \sum_{i=1}^n E[X_i]$, we have $\ln n<\mu < \ln n + 1$. 
    By Chernoff bound, 
    $$\Pr\left[\sum_{i=1}^n X_i > 3\ln(1/\varepsilon) \right] \leq \varepsilon.$$
This finished the proof of \Cref{lem:sum-max}.
\end{proof}

\section{Proof of \Cref{lem:count-adj}}\label{sec:proof-count-adj}

\CountAdj*

\begin{proof}
    Let $s = |S|$. 
    Let $R_S = \{ (u,v) \mid (u,v)\notin \uEbase, u\in S \}$ be a set of unordered vertex pairs.
    For every $(u,v)\in R_S$, we have $\Pr[E(u,v)] = p$. 

    By Chernoff bound, $\Pr[\sum_{(u,v)\in R_S}\mathbb{I}((u,v) \in E) > \max\{4np|S|, 8\log N\} ] \leq N^{-6}$. With probability of at least $1-N^{-6}$, we have 
    $$\sum_{v\in S}|\gAdj_G(v)| = \sum_{v\in S}|\gAdj_{\uGbase}(v)| + \sum_{(u,v)\in R_S}\mathbb{I}((u,v) \in E) \leq ks + \max\{3nps, 6\log N\}.$$

    Notice that when $np < k$, $G$ only contains $O(nk)$ edges in expectation and we can simply query all edges. Hence we assume $np > k$, which implies $ks + \max\{3nps, 6\log N\} \leq \max\{4nps, 8\log N\}$. 

\end{proof} 
\end{document}